\title{Preprocessing for Outerplanar Vertex Deletion: An Elementary Kernel of Quartic Size}
\author{Huib Donkers}{Eindhoven University of Technology, The Netherlands}{h.t.donkers@tue.nl}{https://orcid.org/0000-0002-2767-8140}{}
\author{Bart M.\,P. Jansen 
}{Eindhoven University of Technology, The Netherlands}{b.m.p.jansen@tue.nl}{https://orcid.org/0000-0001-8204-1268}{}
\author{Micha\l{} W\l{}odarczyk
}{Eindhoven University of Technology, The Netherlands}{m.wlodarczyk@tue.nl}{https://orcid.org/0000-0003-0968-8414}{}
\authorrunning{H. Donkers, B.\,M.\,P. Jansen, M. W\l{}odarczyk}
\keywords{fixed-parameter tractability, kernelization, outerplanar graphs}
\newtheorem{reduction}{Reduction Rule}
\newtheorem*{thmstar}{Theorem}
\newtheorem*{corstar}{Corollary}
\newtheorem{prop}{Proposition}
\newcommand{\brb}[1]{\langle#1\rangle}
\newcommand{\eps}{\varepsilon}
\newcommand{\Oh}{\mathcal{O}}
\newcommand{\setN}{\ensuremath{\mathbb{N}}}
\newcommand{\F}{\ensuremath{\mathcal{F}}\xspace}
\newcommand{\opvdfull}{\textsc{Outerplanar Deletion}\xspace}
\newcommand{\op}{outerplanar\xspace}
\newcommand{\opd}{\mathsf{opd}}
\newcommand{\bcc}{6288}
\newcommand{\NP}{\ensuremath{\mathsf{NP}}\xspace}
\newcommand{\NPhard}{\NP-hard\xspace}
\newcommand{\coNPpoly}{\ensuremath{\mathsf{coNP/poly}}}
\newcommand{\notcontainment}{\ensuremath{\NP \not\subseteq \coNPpoly}\xspace}
\newcommand{\containment}{\ensuremath{\NP \subseteq \coNPpoly}\xspace}
\theoremstyle{claimstyle}
\newtheorem{countclaim}{Claim}
\newtheorem{countobs}[countclaim]{Observation}
\ifdefined\DEBUG{}
\def\rem#1{{\marginpar{\raggedright\scriptsize #1}}}
\newcommand{\mic}[1]{{\color{blue}{#1}}}
\newcommand{\micr}[1]{\rem{\textcolor{blue}{\(\bullet \) #1}}}
\newcommand{\bmp}[1]{{\color{purple}{#1}}}
\newcommand{\bmpr}[1]{\rem{\textcolor{purple}{\(\bullet \) #1}}}
\newcommand{\hui}[1]{{\color{orange}{#1}}}
\newcommand{\huir}[1]{\rem{\textcolor{orange}{\(\bullet \) #1}}}
\newcommand{\mic}[1]{#1}
\newcommand{\bmp}[1]{#1}
\newcommand{\hui}[1]{#1}
\newcommand{\micr}[1]{}
\newcommand{\bmpr}[1]{}
\newcommand{\huir}[1]{}
\begin{document}



\maketitle{}

\begin{abstract}
In the {\sc \F-Minor-Free Deletion} problem 
one is given an undirected graph $G$, an integer $k$, and the task is to determine whether there exists a vertex set $S$ of size at most $k$, so that $G-S$ contains no graph from the finite family \F as a minor.
It is known that whenever \F contains at least one planar graph, then {\sc \F-Minor-Free Deletion} admits a polynomial kernel, that is, there is a polynomial-time algorithm that outputs an equivalent instance of size $k^{\Oh(1)}$ [Fomin, Lokshtanov, Misra, Saurabh; FOCS 2012].
However, this result relies on non-constructive arguments based on well-quasi-ordering and does not provide a concrete bound on the kernel size.

We study the \textsc{Outerplanar Deletion} problem,
in which we want to remove at most $k$ vertices from a graph to make it outerplanar.
This is a special case of {\sc \F-Minor-Free Deletion} for
 the family $\F = \{K_4, K_{2,3}\}$.
The class of outerplanar graphs is arguably the simplest class of graphs for which no explicit kernelization size bounds are known. 
By exploiting the combinatorial properties of outerplanar graphs we present elementary reduction rules decreasing the size of a graph.
This yields a constructive kernel with $\Oh(k^4)$ vertices and edges. As a corollary, we derive that any minor-minimal obstruction to having an outerplanar deletion set of size~$k$ has~$\Oh(k^4)$ vertices and~edges.

\includegraphics[scale=0.1]{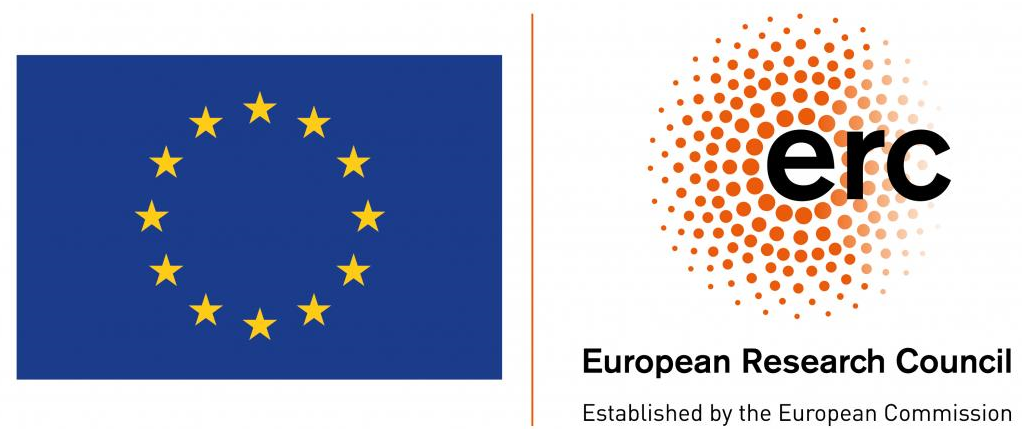}
\end{abstract}

\section{Introduction}
\subparagraph*{Background and Motivation}
Kernelization~\cite{FominLSZ19} is a subfield of parameterized complexity~\cite{cygan2015parameterized,DowneyF13} that investigates the complexity of preprocessing \NPhard problems. A parameterized problem includes in its input an integer~$k$ which we call the parameter. This parameter \hui{can be seen as} a measure of complexity of the problem input. A common choice is to treat the size of the desired solution as the parameter. A kernelization is a polynomial-time preprocessing algorithm that converts a problem instance with parameter~$k$ into an equivalent parameterized instance of the same problem such that both the size and the parameter value of the new instance are bounded by a function~$f$ of~$k$. The function~$f$ is called the size of the kernel. 
It is known that a decidable parameterized problem has a kernel if and only if it is fixed-parameter tractable~\cite[Lemma 2.2]{cygan2015parameterized}. A major challenge is to determine which parameterized problems admit a kernel of polynomial size.

One class of problems that received much attention~\cite{FominLMPS16,FominLMS12,Giannopoulou2017,JansenP20,JoretPSST14} is {\sc \F-Minor-Free Deletion}. For a~fixed finite family of graphs~\F, the {\sc \F-Minor-Free Deletion} problem asks, given a graph~$G$ and parameter~$k$, whether a vertex set~$S \subseteq V(G)$ of size~$k$ exists such that the graph~$G-S$, obtained from~$G$ by removing the vertices in~$S$, does not contain any graph~$F \in \F$ as a minor. This class of problems includes a large variety of well-studied problems such as {\sc Vertex Cover}, {\sc Feedback Vertex Set}, and {\sc Planarization}, which are obtained by taking~\F equal to (respectively)~$\{K_2\}$, $\{K_3\}$, and~$\{K_5, K_{3,3}\}$. All of the {\sc \F-Minor-Free Deletion} problems are fixed-parameter tractable~\cite{SauST20}, but it is unknown whether they all admit a polynomial kernel~\cite{FominLMS12}. If each graph in \F contains at least one edge, it follows from the general results of Lewis and Yannakakis~\cite{LewisY80} that {\sc \F-Minor-Free Deletion} is \NPhard.

If \F is restricted to only families containing a planar graph we speak of {\sc Planar-\F Deletion}. \bmp{Since the family of~$\mathcal{F}$-minor-free graphs has bounded treewidth if and only if~$\mathcal{F}$ includes a planar graph~\cite{RobertsonS86}, this restriction ensures that removing a solution to the problem yields a graph of constant treewidth. Hence any solution is a treewidth-$\eta$ modulator for some~$\eta \in \mathbb{N}$ depending on~$\mathcal{F}$.} For this more restricted class Fomin et al.~\cite{FominLMS12} have shown that polynomial kernels exist for each choice of~\F. However, the running time of this kernelization algorithm is described by the authors as ``horrendous'' and regarding the size the authors state the following:

\begin{quote}
The size of the kernel, however, is not explicit. Several of the constants that go into the proof of Lemma 29 depend on the size of the largest graph in certain antichains in a well-quasi-order and thus we don't know what the (constant) exponent bounding the size of the kernel is. We leave it to future work to make also the size of the kernel explicit.
\end{quote}

For some specific {\sc Planar-\F Deletion} problems kernels with explicit size are known. Most famous are {\sc Vertex Cover} and {\sc Feedback Vertex Set} which admit kernels with respectively a linear and quadratic number of vertices~\cite{ChenKJ01,Iwata17,Thomasse10}. Additionally, if~$\theta_c$ denotes the graph with two vertices and~$c \geq 1$ parallel edges, then {\sc $\{\theta_c\}$-Minor-Free Deletion} admits a kernel with~$\Oh(k^2 \log^{3/2} k)$ vertices and edges~\cite[Theorem 1.2]{FominLMPS16}; note that the cases~$c=1$ and~$c=2$ correspond to {\sc Vertex Cover} and {\sc Feedback Vertex Set}. 
Another \bmp{problem for which an explicit kernel size \bmp{bound} is known is \textsc{Pathwidth-one Deletion}}, where the goal is to obtain a graph of pathwidth one, i.e, each connected component is a caterpillar. First a kernel of quartic size was obtained~\cite{PhilipRV10} which was later improved to a quadratic kernel~\cite{CyganPPW10}.
If we want to remove at most $k$ vertices to obtain a graph of treedepth at most $\eta$, we obtain the {\sc Treedepth-$\eta$ Deletion} problem.
Since this property can be characterized by forbidden minors and bounded treedepth implies bounded treewidth,
this problem is also a special case of {\sc Planar-\F Deletion}.
Giannopoulou et al.~\cite{Giannopoulou2017} have shown
 that for every $\eta$, there is a kernel \bmp{with} $2^{\Oh(\eta^2)}\cdot k^6$ \bmp{vertices} for {\sc Treedepth-$\eta$ Deletion}.
They have also proven that in general there is no hope for a universal constant in the kernel exponent and the
 degree of the polynomial \bmp{which bounds the kernel size} must increase as a function of~\F unless \notcontainment.

In this paper we investigate \opvdfull, which asks for a graph~$G$ and parameter~$k$ whether a set~$S \subseteq V(G)$ of size~$k$ exists such that~$G-S$ is \op. A graph is \op if it admits a planar embedding for which all vertices lie on the outer face, or equivalently, if it contains neither~$K_4$ nor~$K_{2,3}$ as a minor. Outerplanar graphs form a rich superclass of forests and are frequently studied in graph theory~\cite{ChartrandH67,CoudertHS07,DingD16,FleischnerGH74,Syslo79}, graph drawing~\cite{Biedl11,Frati12,MchedlidzeS11}, and optimization~\cite{GiacomoLM16,Leydold1998MinimalCB,MorganF07,Poranen05}. 

Since outerplanarity can be characterized as being~$\{K_4,K_{2,3}\}$-minor-free~\cite{ChartrandH67}, the problem belongs to the class of {\sc Planar-\F Deletion} problems. It is \hui{arguably} the easiest problem in the class for which no \hui{explicit} polynomial kernel is known. This makes \opvdfull a well-suited starting point to deepen our understanding of {\sc Planar-\F Deletion} problems in the search for explicit kernelization bounds.

\subparagraph*{Results}
Let~$\opd(G)$ denote the minimum size of a vertex set~$S \subseteq V(G)$ such that~$G-S$ is outerplanar. Our main result is the following theorem:

\newcommand{\thmKernelCommand}{
The \opvdfull{} problem admits a polynomial-time kernelization algorithm that, given an instance $(G,k)$, outputs an equivalent instance $(G',k')$, such that $k' \le k$, graph~$G'$ is a minor of $G$, and $G'$ has $\Oh(k^4)$ vertices and edges. Furthermore, if~$\opd(G) \leq k$, then~$\opd(G') = \opd(G) - (k - k')$.
}

\begin{theorem} \label{thm:kernel}
\thmKernelCommand
\end{theorem}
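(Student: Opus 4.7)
The plan is to combine a constant-factor approximation with a sequence of elementary reduction rules, each of which either deletes vertices/edges of $G$ or contracts edges, so that $G'$ is automatically a minor of $G$. First, I would compute in polynomial time a vertex set $X \subseteq V(G)$ with $|X| = \Oh(k)$ such that $G - X$ is \op; since the forbidden minors $K_4$ and $K_{2,3}$ have constant size, any of the standard constant-factor approximations for treewidth-$2$ modulators suffices, and if $|X|$ exceeds a fixed multiple of $k$ we may immediately return a trivial \textsc{No}-instance. The remainder of the argument then reduces the size of the \op ``core'' $G - X$ and its interface with $X$.

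The second step is to mark a set of \emph{relevant} vertices of $G - X$: those that are cut vertices of $G-X$, that are adjacent to $X$, or that otherwise witness some minimal $K_4$- or $K_{2,3}$-obstruction passing through $X$. Via pigeonhole together with exchange arguments, I would bound the number of relevant vertices by $\Oh(k^3)$: for each $x \in X$, the neighbours of $x$ in $G-X$ can be classified by an ``access profile'' describing how they connect to other vertices of $X$ through the \op core, and once too many neighbours share the same profile all but a constant number become irrelevant, since any minor model of $K_4$ or $K_{2,3}$ using one of them can be rerouted through the others.

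The core reduction rules then prune the featureless regions of $G - X$ lying between relevant vertices. Long induced paths of degree-two vertices are contracted to constant length; biconnected \op blocks attached to the relevant portion through only a constant number of vertices are replaced, by a bounded number of deletions and contractions, by a canonical constant-size block that preserves their role in any minor model using that block as an internal piece; and subtrees of the block-cut tree of $G - X$ containing no relevant vertex are deleted outright, since they can safely be left untouched by some minimum \op deletion set. Each rule is proved safe by a direct exchange argument that rests on the minor-closedness of outerplanarity and the exact equality $\opd(G') = \opd(G) - (k - k')$ required by the theorem.

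The main obstacle is carrying out this plan with \emph{explicit} constants, thereby avoiding the well-quasi-ordering machinery behind the general \textsc{Planar-\F Deletion} kernel. The technical heart must be a combinatorial analysis of how $K_4$- and $K_{2,3}$-models can route through \op pieces of small boundary, which is what justifies that many similar neighbours of a modulator vertex are genuinely interchangeable and that constant-size canonical blocks suffice. Once this analysis is in place, the overall bound of $\Oh(k^4)$ should follow by combining $\Oh(k)$ modulator vertices, $\Oh(k^3)$ relevant vertices (or an analogous product that multiplies to $k^4$), and $\Oh(1)$-sized reduced gadgets between them, with edge counts controlled by the fact that \op subgraphs on $n$ vertices carry at most $2n - 3$ edges.
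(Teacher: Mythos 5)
Your overall architecture (constant-factor approximation, then explicit local reduction rules that only delete/contract so the output is a minor) matches the paper's, but there are two concrete gaps at the points where the real work happens. First, the step where you bound the interface between a modulator vertex $x$ and the \op core by declaring neighbours with the same ``access profile'' interchangeable is exactly where a naive rerouting argument fails for $K_{2,3}$: unlike the $\theta_c$ case, a $K_{2,3}$-minor-free graph can contain a vertex $x$ with arbitrarily many edges into a \emph{single} connected component of $G-X$ (a fan), and none of these edges is individually redundant for every solution --- deleting the wrong one can turn a \textsc{No}-instance into a \textsc{Yes}-instance. The paper circumvents this by first \emph{tidying} the modulator: for each $v\in X_0$ it computes a set $R(v)$ of size $\Oh(k)$ so that $G-((X_0\setminus\{v\})\cup R(v))$ is \op, yielding an augmented modulator $(X_0,X_1)$ of size $\Oh(k^2)$ with the property that any single vertex (and most pairs) can be ``put back'' without breaking outerplanarity. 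Only with this property can one prove that an unmarked edge between $x$ and a component is safe to delete (via \cref{criterion:articulation-point}), and only then can the fan structure around $x$ be reduced by the carefully stated irrelevant-edge rule, whose safety proof needs an induced path through five consecutive neighbours and a case analysis on how a solution meets that path. Your sketch contains no substitute for this tidying step, and without it the exchange arguments you invoke do not go through.

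Second, your plan to replace each small-boundary biconnected \op block ``by a canonical constant-size block that preserves its role in any minor model'' is, as stated, protrusion replacement --- precisely the non-explicit machinery the theorem is meant to avoid. To get concrete constants the paper instead proves a purely combinatorial lemma: every biconnected \op graph on more than $\bcc$ vertices with at most four boundary terminals contains one of three explicit reducible structures (a 2-separator cutting off a terminal-free piece, a long order-respecting matching, or a fan), found via the weak dual tree, and each structure is handled by a dedicated rule (replace-component, ladder, fan) with its own safety proof. Some version of this analysis is unavoidable; ``canonical block'' is a placeholder for it, not an argument. Finally, your size accounting ($\Oh(k)$ modulator times $\Oh(k^3)$ relevant vertices) does not match what actually has to be summed: in the paper the $\Oh(k^4)$ arises from $\Oh(k^3)$ components each seeing $\Oh(k)$ modulator vertices with $\Oh(1)$ edges per pair, and the final modulator $L$ has size $\Oh(k^4)$ because it must absorb $N_G(X_0\cup X_1)$.
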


The algorithm behind \cref{thm:kernel} is elementary, consisting of a subroutine to build a decomposition of the input graph~$G$ using marking procedures in a tree decomposition, together with a series of explicit reduction rules. In particular, we avoid the use of protrusion replacement (summarized below). Concrete bounds on the hidden constant in the~$\Oh$-notation follow from our arguments. 
The size bound depends on the approximation ratio of an approximation algorithm that bootstraps the decomposition phase, for which the current state-of-the-art is 40. We will therefore present a formula to obtain a concrete bound on the kernel size, rather than its value using the current-best approximation (which would exceed~$10^5$). 

\bmp{\Cref{thm:kernel} presents} the first concrete upper bound on the degree of the polynomial that bounds the size of kernels for \opvdfull{}. We hope that it will pave the way towards obtaining explicit size bounds for all {\sc Planar-\F Deletion} problems and give an impetus for research on the kernelization complexity of the \textsc{Planar Deletion} problem, which is one of the major open problems in kernelization today~\cite[4:28]{OpenProblems},\cite[Appendix A]{FominLSZ19}.

Via known connections~\cite{FominLMS12} between kernelizations that reduce to a minor of the input graph and bounds on the sizes of obstruction sets, we obtain the following corollary.

\newcommand{\corollaryBoundsCommand}{If~$G$ is a graph such that~$\opd(G) > k$ but each proper minor $G'$ of~$G$ satisfies~$\opd(G') \leq k$, then~$G$ has~$\Oh(k^4)$ vertices and edges.}

\begin{corollary} \label{cor:obstruction:bounds}
\corollaryBoundsCommand
\end{corollary}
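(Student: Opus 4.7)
The plan is to derive the $\Oh(k^4)$ bound by applying \Cref{thm:kernel} to $(G,k)$ and arguing that the kernelization output equals $G$. Let $G$ be as hypothesized. A preliminary observation is that $\opd(G) = k+1$ exactly, since any single minor operation (edge deletion, vertex deletion, or edge contraction) decreases $\opd$ by at most one: a modulator of the smaller graph extends by adjoining at most one vertex to a modulator of the larger. So $\opd(G) \ge k+2$ would allow every proper minor to still have $\opd > k$, contradicting minor-minimality.

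Applying \Cref{thm:kernel} to the no-instance $(G,k)$ yields an equivalent no-instance $(G',k')$ with $G'$ a minor of $G$, $k' \le k$, and $|V(G')|, |E(G')| = \Oh(k^4)$. If $G' = G$ we immediately obtain the desired bound. Otherwise $G'$ is a proper minor of $G$, so by minor-minimality $\opd(G') \le k$, while by the equivalence of no-instances $\opd(G') > k'$, hence $k' < \opd(G') \le k$, so in particular $k' < k$.

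The main obstacle is ruling out $G' \neq G$. I would argue that the reduction rules underlying \Cref{thm:kernel} each preserve the invariant $\opd - k$, extending the equality in the ``furthermore'' clause (proved for yes-instances) to all instances: the rules are local and depend only on the graph structure, not on whether $(G,k)$ is a yes- or no-instance. Consequently $\opd(G') = \opd(G) - (k-k') = k' + 1$, and any strict drop $k' < k$ must correspond to the kernelization identifying a combinatorial structure whose existence contradicts the minor-minimality of $G$. This splits into two cases according to whether the triggering reduction preserves $k$ or not: a rule that modifies the graph without changing $k$ would, by the extended equality, yield a proper minor with $\opd = k+1 > k$, directly contradicting minor-minimality; a rule that strictly decreases $k$ must identify a vertex forced into every modulator of size at most $k+1$, which is ruled out by a case analysis exploiting that every proper minor $G - v$ has $\opd \le k$. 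Either case yields a contradiction, forcing $G' = G$ and the claimed size bound on $V(G)$ and $E(G)$.
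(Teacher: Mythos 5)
Your proposal has a genuine gap: you apply \Cref{thm:kernel} to $(G,k)$, which is a \emph{no}-instance (indeed $\opd(G)=k+1>k$), and for no-instances the theorem gives you almost nothing to work with. Concretely, the kernelization may legitimately begin by running the constant-factor approximation, observe that the returned modulator has more than $40k$ vertices, conclude $\opd(G)>k$, and output a canonical constant-size no-instance such as $(K_4,0)$ or $(K_{2,3},0)$. This output satisfies every guarantee of \Cref{thm:kernel}, yet $G'\neq G$ and you learn nothing about $|V(G)|$. Your attempt to patch this by claiming the reduction rules ``preserve the invariant $\opd-k$'' on all instances is exactly what the paper does \emph{not} prove: the safeness lemmas (e.g.\ for \cref{reduction:reduce-degree-new} and \cref{reduction:irrelevant-edge}) only guarantee $\opd(G')>k$ when $\opd(G)>k$, and the ``furthermore'' clause of \Cref{thm:kernel} is explicitly conditioned on $\opd(G)\le k$. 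Moreover, the parameter-decreasing rule (\cref{lem:modulator:compute-augmented}) fires when some vertex lies in every solution of size at most $k$ --- a condition that is vacuously satisfiable here since no size-$k$ solution exists at all --- so your claim that such a firing ``must identify a vertex forced into every modulator of size at most $k+1$'' is off by one and cannot be ruled out by the minor-minimality of $G$ at parameter $k$.

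The paper's proof avoids all of this by applying the kernelization to $(G,k+1)$, which \emph{is} a yes-instance because $\opd(G)=k+1$ (your preliminary observation, which is correct). Then the ``furthermore'' clause applies and gives $\opd(G')=\opd(G)-(k+1-k')$. To show $G'=G$ one needs two facts: first, that for every vertex $v$ there is a solution of size $k+1$ avoiding $v$ (obtained by deleting an incident edge $uv$, taking a size-$k$ solution $S'$ of the resulting proper minor, noting $v\notin S'$ since otherwise $\opd(G)\le k$, and using $S'\cup\{u\}$), which blocks the parameter-decreasing rule at parameter $k+1$; and second, that any graph-modifying rule preserving the parameter would produce a proper minor with $\opd=k+1>k$, contradicting the hypothesis that all proper minors have $\opd\le k$. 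Your second case-analysis step is in the right spirit, but it only has force at parameter $k+1$; at parameter $k$ neither half of the dichotomy closes.
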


The existence of a polynomial bound with unknown degree follows from the work of Fomin et al.~\cite{FominLMS12}; \cref{cor:obstruction:bounds} gives the first explicit size bounds and contributes to a large body of research on minor-order obstructions (e.g.~\cite{CattellDDFL00,Dinneen97,DinneenCF01,DinneenX02,Lagergren98,RueST12,SauST21}).

\subparagraph*{Techniques}
\mic{The known kernelization algorithms~\cite{FominLMPS16,FominLMS12}
for {\sc Planar-\F Deletion} make use of (near-)protrusions}. A protrusion is a vertex set that induces a subgraph of \bmp{constant} treewidth and boundary size. \hui{Protrusion replacement is a technique where sufficiently large protrusions \bmp{are} replaced by smaller ones without changing the answer.}
Protrusion techniques were first used to obtain kernels for problems on planar and other topologically-defined graph classes~\cite{BodlaenderFLPST16}. Later Fomin at al.~\cite{FominLMPS16} described how to use protrusion techniques for problems on general graphs. \bmp{They proved~\cite[Lemma 3.3]{FominLMPS16} that any graph~$G$, which contains a modulator~$X$ to constant treewidth such that~$|X|$ and the size of its neighborhood can be bounded by a polynomial in~$k$, contains a protrusion of size~$|V(G)|/k^{\Oh(1)}$ that can be found efficiently.} For any fixed~\F containing a planar graph, they present a method to obtain a small modulator to an \F-minor-free graph, which has constant treewidth. 
This leads to a polynomial kernel for {\sc Planar-\F Deletion} on graphs with bounded degree since the size of the neighborhood of the modulator can be bounded so protrusion replacement can be used to obtain a polynomial kernel.
Specifically for {\sc $\{\theta_c\}$-Minor-Free Deletion} they give reduction rules to reduce the maximum degree in a general graph, which leads to a polynomial kernel on general graphs.

The kernel for {\sc Planar-\F Deletion} given by Fomin et al.~\cite{FominLMS12} does not rely on bounding the size of the neighborhood of the modulator followed by protrusion replacement. Instead they present the notion of a near-protrusion: a vertex set that will become a protrusion after removing any size-$k$ solution from the graph. With an argument based on well-quasi-ordering they determine that if such near-protrusions are large enough one can, in polynomial time, \bmp{reduce to a proper minor of the graph} without changing the answer.

In this paper we present a method for \opvdfull to decrease the size of the neighborhood of a modulator to outerplanarity. This relies on a process that was called ``tidying the modulator'' in earlier work~\cite{BevernMN12} and also used in the kernelization for {\sc Chordal Vertex Deletion}~\cite{JansenP18}. The result is a larger modulator~$X \subseteq V(G)$ but with the additional feature that it retains its modulator properties when omitting any single vertex, that is,~$G - (X \setminus \{x\})$ is outerplanar for each~$x \in X$. We proceed by decomposing the graph into near-protrusions, \bmp{following along similar lines as the decomposition by Fomin et al.~\cite{FominLMPS16} but exploiting the structure of outerplanar graphs at several steps to obtain such a decomposition with respect to our larger tidied modulator, without leading to worse bounds.} With the additional properties of the modulator~$X$ obtained from tidying we no longer need to rely on well-quasi-ordering, but instead are able to reduce the size of the neighborhood of the modulator in two steps. \bmp{The first reduces the number of connected components of~$G - X$ which are adjacent to any particular modulator vertex~$x \in X$}. In the case of $\{\theta_c\}$-minor-free graphs, if~$G - (X \setminus \{x\})$ is $\{\theta_c\}$-minor-free then bounding the number of components of~$G-X$ adjacent to each~$x \in X$ this is sufficient to bound~$|N_G(X)|$, since any~$x \in X$ has less than~$c$ neighbors in any component of~$G - (X \setminus \{x\})$. \bmp{One of the major difficulties we face when working with $\{K_{2,3}\}$-minor-free graphs is that in such a graph there can be arbitrarily many edges between a vertex~$x$ and a \hui{connected component} of~$G - (X \setminus \{x\})$. Therefore we present an additional reduction rule that reduces, in a second step, the number of edges between a vertex and a connected component. After these two steps we obtain a bound on the size of the neighborhood of the modulator.}
%
\bmp{At this point, standard protrusion replacement could be applied to prove the \emph{existence} of a kernel for \opvdfull with~$\Oh(k^4)$ vertices.} In order to give an explicit kernelization algorithm we present a number of additional reduction rules to \bmp{avoid} the generic protrusion replacement technique. This eventually leads to a kernel with at most~\mic{$c \cdot k^4$ vertices and edges for \opvdfull. It is conceptually simple (yet tedious) to extract the explicit value of~$c$ from the algorithm description.}

\subparagraph*{Organization}
In the next section we give \mic{basic} definitions and notation we use throughout the rest of the paper, together with structural \mic{observations} for \op graphs.  \Cref{sec:modulators} describes how we obtain small modulators to outerplanarity with progressively stronger properties, and finally we obtain a
\mic{modulator of size $\Oh(k^4)$ such that each remaining component has only 4 neighbors in the modulator, \bmp{effectively forming a decomposition into protrusions}}.
The second stage of the kernelization reduces the size \bmp{of} \hui{the connected} components \bmp{outside the modulator}. These reduction rules  are described in \cref{sec:reduction-rules}. In \cref{sec:wrapup} we finally tie everything together to obtain a kernel with~$\Oh(k^4)$ vertices and edges.

\section{Preliminaries} \label{sec:prelims}

\subparagraph*{Approximation and kernelization}

Let $A$ be a minimization problem and let $\mathsf{OPT}(I)$ denote the \bmp{minimum} cost of a solution to an instance $I$. 
For a constant $\alpha > 1$,
an~$\alpha$-approximation algorithm for $A$ is an algorithm that, given an instance $I$, outputs a solution of cost at most $\alpha \cdot \mathsf{OPT}(I)$.

A~parameterized problem is a~decision problem in which every input has an~associated positive integer parameter that captures its complexity in some well-defined way. For a~parameterized problem~$A \subseteq \Sigma^* \times \mathbb{N}$ and a~function~$f \colon \mathbb{N} \to \mathbb{N}$, a~kernelization for~$A$ of~size~$f$ is an~algorithm that, on input~$(x,k) \in \Sigma^* \times \mathbb{N}$, takes time polynomial in~$|x| + k$ and outputs~$(x',k') \in \Sigma^* \times \mathbb{N}$ such that the following holds:
\begin{enumerate}
    \item $(x,k) \in A$ if and only if~$(x',k') \in A$, and
    \item both $|x'|$ and~$k'$ are bounded by~$f(k)$.    
\end{enumerate}

\subparagraph*{Graph theory}

The set $\{1,\ldots,p\}$ is denoted by $[p]$.
We consider simple undirected graphs without self-loops. A~graph $G$ has vertex set $V(G)$ and edge set $E(G)$. We use shorthand $n = |V(G)|$ and $m = |E(G)|$. For (\bmp{not necessarily disjoint}) $A,B \subseteq V(G)$, we define $E_G(A,B) = \{uv \mid u \in A, v \in B, uv \in E(G)\}$.
 The {open} neighborhood of $v \in V(G)$ is $N_G(v) := \{u \mid uv \in E(G)\}$, where we omit the subscript $G$ if it is clear from context. {For a vertex set~$S \subseteq V(G)$ the open neighborhood of~$S$, denoted~$N_G(S)$, is defined as~$\bigcup _{v \in S} N_G(v) \setminus S$. The closed neighborhood of a single vertex~$v$ is~$N_G[v] := N_G(v) \cup \{v\}$, and the closed neighborhood of a vertex set~$S$ is~$N_G[S] := N_G(S) \cup S$.} 
\mic{The boundary of a vertex set $S \subseteq V(G)$ is the set $\partial_G(S) = N_G(V(G) \setminus S)$.}
For $A \subseteq V(G)$, the graph induced by $A$ is denoted by $G[A]$ and {we say that the vertex set $A$ is connected if the graph $G[A]$ is connected.}
We use notation $G\brb A = G[N_G[A]]$ 
\mic{and, when $H$ is an induced subgraph of $G$, we write briefly $G\brb H = G \brb {V(H)}$ or $\partial_G(H) = \partial_G(V(H))$.}
We use shorthand $G-A$ for the graph $G[V(G) \setminus A]$. For $v \in V(G)$, we write $G-v$ instead of $G-\{v\}$.
\hui{For~$A \subseteq E(G)$ we denote by~$G \setminus A$ the graph with vertex set~$V(G)$ and edge set~$E(G) \setminus A$. For~$e \in E(G)$ we write~$G \setminus e$ instead of~$G \setminus \{e\}$.}
If $e = uv$, then $V(e) = \{u,v\}$.

{A tree is a connected graph that is acyclic. A~forest is a disjoint union of trees. In tree $T$ with root $r$, we say that $t \in V(T)$ is an ancestor of $t' \in V(T)$ (equivalently $t'$ is a descendant of $t$)} {if $t$ lies on the (unique) path from $r$ to $t'$.}
For two disjoint sets $X,Y \subseteq V(G)$, we say that $S \subseteq V(G) \setminus (X \cup Y)$ is an $(X,Y)$-separator if the graph $G-S$ does not contain any~path from any $u\in X$ to any $v\in Y$. 
By Menger's theorem, if $x,y \in V(G)$ are non-adjacent in $G$ then the size of \bmp{a} minimum $(x,y)$-separator is equal to the maximum number of internally vertex-disjoint paths from $x$ to $y$.
A vertex $v \in V(G)$ is an articulation point in a connected graph $G$ if $G-v$ is not connected.
A graph is called biconnected if it has no articulation points.
A biconnected component in $G$ is an inclusion-wise maximal \mic{subgraph which is biconnected}.
\bmp{A graph~$G$ is 2-connected if it is biconnected and has at least three vertices. In a 2-connected graph, for every pair of vertices $u,v \in V(G)$ there exists a cycle going through both $u$ and~$v$.} 
The structure of the biconnected components and articulation points in a connected graph $G$ is captured by a tree called the block-cut tree.
It has a vertex for each biconnected component and for each articulation point in $G$. 
A biconnected component $B$ and an articulation point $v$
are connected by an edge if \mic{$v \in V(B)$.}

A vertex set $A \subseteq V(G)$ is an independent set in $G$ if $E_G(A,A) = \emptyset$.
A graph \bmp{$G$} is bipartite if there is a partition of \bmp{$V(G)$} into two independent sets $A, B$.
We write shortly $G = (A \cup B, E)$ to specify \bmp{a bipartite graph on vertex set~$E = E(G)$ admitting this partition.}
\begin{definition}
For a vertex set $X \subseteq V(G)$ the \emph{component graph} $\mathcal{C}(G, X)$ is a bipartite graph $(X \cup Y, E)$, where $Y$ is the set of connected components of $G - X$, and $(v,C) \in E$ if there is at least one edge between $v \in X$ and the component $C \in Y$.
\end{definition}
For an integer $q$, the graph $K_q$ is the complete graph on $q$ vertices.
For integers $p,q$, the graph $K_{p,q}$ is the bipartite graph $(A \cup B, E)$, where $|A|=p$, $|B|=q$,
and $uv \in E$ whenever $u \in A, v \in B$.

\subparagraph*{Minors}
A contraction of $uv \in E(G)$ introduces a new vertex adjacent to all of {$N_G(\{u,v\})$}, after which $u$ and $v$ are deleted. The result of contracting $uv \in E(G)$ is denoted $G / uv$. For $A \subseteq V(G)$ such that $G[A]$ is connected, we say we contract $A$ if we simultaneously contract all edges in $G[A]$ and introduce a single new vertex.
We say that $H$ is a minor of $G$, if we can turn $G$ into $H$ by a (possibly empty) series of edge contractions, edge deletions, and vertex deletions.
\mic{If this series is non-empty, then $H$ is called a proper minor of $G$.}
We can represent the result of such a process with a mapping $\phi \colon V(H) \to 2^{V(G)}$, such that subgraphs $(G[\phi(h)])_{h\in V(H)}$ are connected and vertex-disjoint{, with an edge of~$G$ between a vertex in~$\phi(u)$ and a vertex in~$\phi(v)$ for all~$uv \in E(H)$.}
The sets $\phi(h)$ are called branch sets and the family $(\phi(h))_{h\in V(H)}$ is called a minor-model of $H$~in~$G$.

\subparagraph*{Planar and outerplanar graphs}
A plane embedding of graph $G$ is given by a mapping from $V(G)$ to $\mathbb{R}^2$ and a mapping that associates with each edge $uv \in E(G)$ a simple curve on the plane connecting the images of $u$ and $v$, such that the curves given by two distinct edges \bmp{can intersect only at the image of a vertex that is a common endpoint of both edges}.
A face in a plane embedding of a graph $G$ is a subset of the plane enclosed by images of some subset of the edges.
We say that a vertex $v$ lies on a face $f$ if the image of $v$ belongs to the closure of $f$.
In every plane embedding there is exactly one face of infinite area, referred to as the outer face.
Let $F$ denote the set of faces in a plane embedding of \bmp{$G$}.
Then Euler's formula states that $|V(G)| - |E(G)| + |F| = 2$. 
Given a plane embedding of $G$ we define the dual graph $\widehat{G}$ with $V(\widehat{G}) = F$ and edges given by pairs of distinct faces that are incident to an image of a common edge from $E(G)$.
A weak dual graph is obtained from the dual graph by removing the vertex created in place of the outer face.

A graph is called planar if it admits a plane embedding.
By Wagner's theorem, a graph $G$ is planar if and only if $G$ contains neither $K_5$ nor $K_{3,3}$ as a minor.
A graph is called \op if it admits a plane embedding with all vertices lying on the outer face.
A graph $G$ is \op if and only if $G$ contains neither $K_4$ nor $K_{2,3}$ as a minor~\cite{ChartrandH67}.
If a graph $G$ is planar (resp. \op) and $H$ is a minor of $G$, then $H$ is also planar (resp. \op).
\mic{The weak dual of an embedded biconnected \op graph $G$ is either an~empty graph, if $G$ is a single edge or vertex, or a tree otherwise~\cite{FleischnerGH74}.}
A graph $G$ is planar (resp. \op) if and only if every biconnected component in $G$ induces a planar (resp. \op) graph.

\begin{observation}
\label{criterion:articulation-point}
Let $v \in V(G)$. The graph $G$ is \op if and only if
for each connected component $C$ of $G - v$ the graph $G \brb C$ is \op.
\end{observation}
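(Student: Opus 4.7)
The plan is to invoke the characterization stated earlier that a graph is outerplanar if and only if each of its biconnected components is outerplanar, together with the fact that outerplanarity is preserved under taking subgraphs (since outerplanar graphs are exactly the $\{K_4,K_{2,3}\}$-minor-free graphs).

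The forward direction is immediate: if $G$ is outerplanar, then for every connected component $C$ of $G-v$, the graph $G\brb{C} = G[N_G[C]]$ is an induced subgraph of $G$ and is therefore itself outerplanar.

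For the reverse direction, suppose $G\brb{C}$ is outerplanar for every connected component $C$ of $G-v$. I aim to show that every biconnected component of $G$ is a subgraph of some $G\brb{C}$, which by the cited characterization implies that $G$ is outerplanar. Let $B$ be a biconnected component of $G$. If $v \notin V(B)$, then $B$ lies entirely in $G-v$; being connected, $V(B)$ is contained in a single component $C$ of $G-v$, so $B$ is a subgraph of $G[C] \subseteq G\brb{C}$. If $v \in V(B)$, I use that $B-v$ is connected: either $B$ consists of a single edge incident to $v$ (so $B-v$ is a single vertex), or $B$ is 2-connected, in which case removing a single vertex preserves connectivity. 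Hence $V(B)\setminus\{v\}$ lies in a single component $C$ of $G-v$, and $v$ has a neighbor in $C$, so $V(B)\subseteq C \cup \{v\} \subseteq N_G[C]$, meaning $B$ is a subgraph of $G\brb{C}$.

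In both cases $B$ sits inside an outerplanar graph, so $B$ itself is outerplanar. Applying the biconnected-components characterization then gives that $G$ is outerplanar. The only mildly delicate point is the case analysis when $v$ lies in $B$, which is handled by the standard observation that removing any single vertex from a 2-connected graph leaves it connected.
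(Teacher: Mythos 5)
Your proof is correct and takes exactly the route the paper intends: the observation is stated immediately after the fact that a graph is outerplanar if and only if each of its biconnected components is outerplanar, and your argument that every biconnected component of $G$ is contained in $G\brb{C}$ for some component $C$ of $G-v$ (splitting on whether $v$ lies in the block, and using that a block minus one vertex stays connected) is the standard justification. The paper gives no explicit proof for this observation, so there is nothing further to compare.
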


For a graph $G$ we call $S \subseteq V(G)$ an \op deletion set if $G-S$ is \op.
The \op deletion number of $G$, denoted $\opd(G)$, is the size of a smallest \op deletion set in $G$.

\subparagraph*{Structural properties of outerplanar graphs} \label{sec:op-struct}

We present a number of structural observations of \op graphs which will be useful in our later argumentation. The first is a characterization of \op graphs similar to \cref{criterion:articulation-point}. Rather than looking at the components of a graph with one vertex removed, it considers the components of a graph with both endpoints of an edge removed. This allows us for example to easily argue about outerplanarity of graphs obtained from ``gluing'' two outerplanar graphs on two adjacent vertices. \bmp{Recall that~$G\brb C =  G[N_G(V(C))]$.}

\begin{lemma}\label{criterion:edge-removal}
Let~$G$ be a graph and $e \in E(G)$. Then~$G$ is \op if and only if both of the following conditions hold:
\begin{enumerate}
    \item \label{criterion:edge-removal:connected-comps}
    for each connected component $C$ of $G - V(e)$ the graph $G \brb C$ is \op, and
    
    \item \label{criterion:edge-removal:induced-paths}
    the graph~$G \setminus e$ does not have three induced internally vertex-disjoint paths connecting the endpoints of~$e$.
\end{enumerate}
\end{lemma}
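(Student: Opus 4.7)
The forward direction is immediate: condition~\ref{criterion:edge-removal:connected-comps} holds because each $G\brb{C}$ is an induced subgraph of the outerplanar $G$, and condition~\ref{criterion:edge-removal:induced-paths} holds because three induced internally vertex-disjoint paths connecting the endpoints of $e = uv$ in $G \setminus e$ each have length at least two (as $u,v$ are non-adjacent in $G \setminus e$) and together form a subdivision of $K_{2,3}$ in $G$ with $u,v$ on the size-2 side, contradicting outerplanarity.

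For the backward direction I prove the contrapositive: if $G$ is not outerplanar, then some condition fails. In that case $G$ contains $K_4$ or $K_{2,3}$ as a minor, so fix a minor model $\phi\colon V(H) \to 2^{V(G)}$. The key observation is that any branch set avoiding $V(e)$ lies entirely in a single connected component of $G - V(e)$, since branch sets are connected. I case-split by which branch sets contain $u$ or $v$. In most cases the minor can be compressed into a single $G\brb{C_i}$ by restricting each branch set containing $u$ or $v$ to its intersection with $\{u,v\} \cup C_i$ for a suitable component $C_i$: the restriction preserves connectivity (any path inside a branch set that leaves $C_i$ must pass through $u$ or $v$) and preserves all required adjacencies, placing the whole minor inside $G\brb{C_i}$ and violating~\ref{criterion:edge-removal:connected-comps}. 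This compression handles all $K_4$ minors (whose two $V(e)$-free branch sets are pairwise adjacent and must therefore share a component) and all $K_{2,3}$ minors except when $u$ and $v$ lie in distinct branch sets on the size-2 side of $K_{2,3}$ while the three size-3 branch sets $B_{q_1}, B_{q_2}, B_{q_3}$ are spread over components $C_{i_1}, C_{i_2}, C_{i_3}$ that are not all identical.

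In this remaining case, if $C_{i_1}, C_{i_2}, C_{i_3}$ are pairwise distinct then inside each $G[\{u,v\} \cup C_{i_j}] \setminus e$ the minor structure furnishes a $u$-$v$ path; a shortest such path is induced, and the three resulting paths are internally vertex-disjoint because their internal vertices lie in pairwise disjoint components, violating~\ref{criterion:edge-removal:induced-paths}. If all three $C_{i_j}$ coincide, the compression argument applies inside that component. The subtle sub-case is when exactly two of $B_{q_1}, B_{q_2}, B_{q_3}$ share a component, say $B_{q_1}, B_{q_2} \subseteq C_1$ and $B_{q_3} \subseteq C_2$. Here the restricted branch sets in $G\brb{C_1}$ form a $K_4$ minus the edge between $B_{q_1}$ and $B_{q_2}$. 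I plan to argue that either (a) there is a $B_{q_1}$-$B_{q_2}$ path in $G[C_1]$ avoiding the branch sets of $B_{p_1}$ and $B_{p_2}$, allowing us to extend one branch set to complete the missing adjacency and realize a $K_4$ minor inside $G\brb{C_1}$, violating~\ref{criterion:edge-removal:connected-comps}; or (b) no such path exists, in which case I exploit the Hamiltonian outer cycle of the 2-connected outerplanar graph $G\brb{C_1}$ (guaranteed by~\ref{criterion:edge-removal:connected-comps}) to force $uv$ to appear as a chord of this cycle, whose two bounding arcs yield two internally vertex-disjoint $u$-$v$ paths in $G\brb{C_1} \setminus uv$; combined with the path through $C_2$ via $B_{q_3}$ these supply three internally vertex-disjoint paths in $G \setminus e$, violating~\ref{criterion:edge-removal:induced-paths}. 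This last sub-case, where the argument must precisely balance violating condition~\ref{criterion:edge-removal:connected-comps} against~\ref{criterion:edge-removal:induced-paths} via the chord structure of the outerplanar $G\brb{C_1}$, is expected to be the main technical obstacle.
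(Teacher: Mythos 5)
Your forward direction and most of your backward case analysis are sound: the ``compression'' of a minor model into a single $G\brb{C}$ (restricting each branch set meeting $V(e)$ to $\{u,v\}\cup C$, which preserves connectivity and all needed adjacencies) correctly handles every $K_4$ minor and every $K_{2,3}$ minor except the one configuration you flag, and the three-distinct-components configuration correctly violates condition~(2). The genuine gap is exactly the sub-case you defer: $u\in B_{p_1}$, $v\in B_{p_2}$, with $B_{q_1},B_{q_2}\subseteq C_1$ and $B_{q_3}\subseteq C_2$. Your branch~(b) does not work as described. First, $G\brb{C_1}$ need not be $2$-connected, so there is no Hamiltonian outer cycle to appeal to. Second, and more fundamentally, the two paths you hope to extract cannot exist under your own hypotheses: two internally vertex-disjoint $(u,v)$-paths with nonempty interiors in $G\brb{C_1}\setminus uv$ have both interiors inside the single connected set $C_1$, so contracting a connecting path between them together with the edge $uv$ yields a $K_4$ minor in $G\brb{C_1}$ (this is exactly Lemma~\ref{lem:paths-in-different-components}). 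Hence if condition~(1) holds, branch~(b) can never supply the two paths needed to violate condition~(2); the argument is circular rather than a dichotomy, and what actually has to be shown in case~(b) is that condition~(1) fails for a \emph{different} choice of $K_{2,3}$ centers inside $G\brb{C_1}$ (your own example with $B_{p_1}=\{u,w\}$, $B_{p_2}=\{v,x\}$ shows the centers must migrate to $w$ and $x$). That re-choosing of centers is the missing content, and it is not routine with branch-set bookkeeping.

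The paper avoids this case entirely by a different first move: since $K_{2,3}$ has maximum degree three, a $K_{2,3}$ minor yields a \emph{topological} minor, i.e.\ two single vertices $u',v'$ (not necessarily the endpoints of $e$) joined by three internally vertex-disjoint paths, each with nonempty interior and each induced after deleting the edge $u'v'$ if present. The case split is then simply whether $u'v'=e$ (condition~(2) fails) or not; in the latter case the size-two separator $V(e)$ cannot separate $u'$ from $v'$ across three internally disjoint paths, and the induced-ness of the paths forces all three into a single $G\brb{C}$, violating condition~(1). I recommend you restructure your $K_{2,3}$ case around the subdivision model with single-vertex centers; your compression machinery then becomes unnecessary for $K_{2,3}$, and for $K_4$ you can either keep your compression argument or use the paper's shortcut that a $K_4$ minor in a $K_{2,3}$-minor-free graph is a $K_4$ subgraph.
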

\begin{proof}
($\Rightarrow$) Suppose~$G$ is outerplanar. Then every subgraph of~$G$ is outerplanar, showing the first condition holds. If~$G \setminus e$ has three induced internally vertex-disjoint paths connecting the endpoints of~$e = xy$, then each path has at least one interior vertex which shows that~$G$ has a~$K_{2,3}$-minor, contradicting outerplanarity of~$G$.

($\Leftarrow$) Suppose the two conditions hold, and suppose for a contradiction that~$G$ is not outerplanar. Then~$G$ contains~$K_4$ or~$K_{2,3}$ as a minor. We consider the two cases separately.

\proofsubparagraph*{$G$ has a $K_{2,3}$-minor} Suppose that~$G$ contains~$K_{2,3}$ as a minor. It is easy to see that there exist two vertices~$u,v \in V(G)$ and three disjoint connected vertex sets~$A_1, A_2, A_3$ such that~$A_i$ contains a vertex of both~$N_G(u)$ and~$N_G(v)$ for all~$i \in [3]$. Let~$G_i$ be the graph obtained from~$G[\{u,v\} \cup A_i]$ by removing the edge~$uv$, if it exists. There exists an $(u,v)$-path in~$G_i$, so by taking a shortest path there exists an \emph{induced} $(u,v)$-path~$P_i$ in~$G_i$. Since the edge~$uv$ does not belong to~$G_i$, path~$P_i$ has at least one interior vertex. The three $(u,v)$-paths~$P_1, P_2, P_3$ in~$G$ obtained in this way are internally vertex-disjoint, have at least one interior vertex, and are induced after removing the edge~$uv$ if it exists. We use this to derive a contradiction.

If the edge~$uv$ exists and is equal to~$e$, then the existence of~$P_1,P_2,P_3$ shows that the second condition is violated and leads to a contradiction. So in the remainder, we may assume that~$e \neq uv$. Hence at least one vertex of~$\{u,v\}$ lies in a connected component~$C$ of~$G - V(e)$. Assume without loss of generality that~$u \notin V(e)$ and~$u$ lies in~$V(C)$. We show that~$v \in V(G \brb C)$ in this case. Suppose that~$v$ does not belong to~$G \brb C$. Then in particular~$v \notin V(e)$ and the vertices~$V(e)$ separate~$u$ from~$v$; but since~$P_1,P_2,P_3$ are three internally vertex-disjoint paths, vertices~$u$ and~$v$ cannot be separated by the set~$V(e)$ of two vertices. It follows that~$u,v \in V(G \brb C)$.

We claim that each path~$P_i$ is a subgraph of~$G \brb C$. To see this, note that the path starts and ends in~$G \brb C$. The two vertices~$V(e)$ are the only vertices of~$G \brb C$ which have neighbors in~$G$ outside~$G \brb C$. So a path starting and ending in~$G \brb C$ has to leave~$G \brb C$ at one vertex of~$V(e)$ and enter~$G \brb C$ at the other; but then~$e$ is a chord of this path other than~$uv$. Since the paths~$P_i$ do not have such chords, it follows that each path~$P_i$ is a subgraph of~$G \brb C$.

By the above, the graph~$G \brb C$ contains three internally vertex-disjoint paths~$P_1,P_2,P_3$ with at least one interior vertex each. But then~$G \brb C$ contains~$K_{2,3}$ as a minor and is not outerplanar; a contradiction to the first condition.

\proofsubparagraph*{$G$ has a $K_4$-minor} In the remainder, we may assume that~$G$ contains~$K_4$ as a minor but does not contain~$K_{2,3}$ as a minor, as otherwise the previous case applies. Observe that this means that~$G$ contains~$K_4$ as a subgraph: any subdivision of~$K_4$ leads to a~$K_{2,3}$ minor.

So let~$H$ be a~$K_4$ subgraph in~$G$. Observe that there cannot be two connected components of~$G - V(e)$ that both contain a vertex of~$H$: any two vertices of the clique~$H$ are connected by an edge, which merges the connected components. So there is one connected component~$C$ of~$G - V(e)$ that contains all vertices of~$V(H) \setminus V(e)$. But then~$H$ is a subgraph of~$G \brb C$, proving that~$G \brb C$ is not outerplanar and contradicting the first condition.
\end{proof}

In order to more easily apply \cref{criterion:edge-removal}, we show that no two induced paths as referred to in \cref{criterion:edge-removal}\eqref{criterion:edge-removal:induced-paths} can lie in the same connected component~$C$ as referred to in \cref{criterion:edge-removal}\eqref{criterion:edge-removal:connected-comps}.

\begin{lemma}\label{lem:paths-in-different-components}
 Suppose~$G$ is \op with an edge~$uv \in E(G)$. If~$P_1,P_2$ are internally vertex-disjoint $(u,v)$-paths in~$G \setminus uv$, then the interiors of~$P_1$ and~$P_2$ lie in different connected components of~$G-\{u,v\}$.
\end{lemma}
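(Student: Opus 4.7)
The plan is to argue by contradiction: I will assume that the interiors of $P_1$ and $P_2$ lie in the same connected component $C$ of $G-\{u,v\}$, and then exhibit a $K_4$-subdivision inside $G$, contradicting outerplanarity. As a preliminary observation, the interior of each $P_i$ is itself connected inside $G-\{u,v\}$ (consecutive interior vertices are joined by an edge of $P_i$ not incident to $u$ or $v$) and therefore already lies in a single connected component of $G-\{u,v\}$; the only bad case to rule out is that these two components coincide.

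Under this assumption I would first locate a short ``bridge'' between the two paths. Since $V(P_1)\cap V(C)$ and $V(P_2)\cap V(C)$ are both nonempty subsets of the connected subgraph $G[V(C)]$, there is a path in $G[V(C)]$ between them; I would take $Q$ to be one of minimum length, with endpoints $a\in V(P_1)\cap V(C)$ and $b\in V(P_2)\cap V(C)$. Minimality of $Q$ guarantees that the interior of $Q$ is disjoint from $V(P_1)\cup V(P_2)$, and because $u,v\notin V(C)$ while $a,b\in V(C)$, the four vertices $u,v,a,b$ are pairwise distinct; moreover $a$ is an internal vertex of $P_1$ and $b$ is an internal vertex of $P_2$.

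Next I would split $P_1$ at $a$ into subpaths $P_1^u$ (from $u$ to $a$) and $P_1^v$ (from $a$ to $v$), and split $P_2$ at $b$ analogously into $P_2^u$ and $P_2^v$. The six routes given by the edge $uv$ together with the five paths $P_1^u,P_1^v,P_2^u,P_2^v,Q$ realize precisely the six edges of $K_4$ on the branch vertices $\{u,v,a,b\}$; if they are pairwise internally vertex-disjoint, then together they form a $K_4$-subdivision, and hence a $K_4$-minor, in $G$, yielding the desired contradiction with outerplanarity.

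The main technical point to verify is the internal vertex-disjointness of these six routes. The interiors of $P_1^u$ and $P_1^v$ are disjoint because $P_1$ is a path, and similarly for the two halves of $P_2$; the interiors of the $P_1$-halves and the $P_2$-halves are disjoint by the hypothesis that $P_1$ and $P_2$ are internally vertex-disjoint; the edge $uv$ has no interior; and $Q$'s interior avoids both $V(P_1)$ and $V(P_2)$ by the shortest-path choice. Once this routine check is in place, the $K_4$-subdivision is exhibited and the lemma follows.
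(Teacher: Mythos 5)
Your proof is correct and follows essentially the same route as the paper: both assume for contradiction that the two interiors share a component of $G-\{u,v\}$, pick a connecting path between them, and extract a $K_4$-minor contradicting outerplanarity. The only cosmetic difference is that you exhibit an explicit $K_4$-subdivision on the branch vertices $\{u,v,a,b\}$, whereas the paper contracts each interior (and the bridge path) into single vertices to obtain $K_4$ directly as a minor.
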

\begin{proof}
 Suppose for contradiction that the interiors of~$P_1$ and~$P_2$ are in the same connected component of~$G-\{u,v\}$, and let~$P$ be a path from~$V(P_1)$ to~$V(P_2)$ in~$G - \{u,v\}$. Let~$G'$ be the graph obtained from~$G$ by contracting the interiors of~$P_1$ and~$P_2$ into a single vertex~$p_1$ and~$p_2$ respectively and contracting~$P$ to realize the edge~$p_1p_2$. Clearly~$G'$ is a minor of~$G$ so then~$G'$ doesn't contain a $K_4$-minor. Observe however that~$\{u,v,p_1,p_2\}$ induce a~$K_4$ subgraph in~$G'$. Contradiction.
\end{proof}

We now give a condition under which an edge can be added to an outerplanar without violating outerplanarity. Intuitively, this corresponds to adding an edge between two vertices that lie on the same interior face.

\begin{lemma}\label{lem:induced-cycle}
Suppose $G$ is \op and vertices $x,y$ lie on an induced cycle~$D$ with~$xy \notin E(G)$.
Then adding the edge $xy$ to $G$ preserves outerplanarity.
\end{lemma}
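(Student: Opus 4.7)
The plan is to argue directly with plane embeddings rather than chase $K_4$ or $K_{2,3}$ minor models. I would fix an outerplanar embedding of $G$—one exists by the definition of outerplanarity, and in it every vertex of $G$ lies on the outer face. In this embedding the induced cycle $D$ is drawn as a simple closed curve, which by the Jordan curve theorem partitions the plane into a bounded interior region $R$ and the unbounded exterior.

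The core step is to establish that $R$ is itself a face of the embedding, i.e., that $R$ contains no vertex and no edge of $G$ in its interior. If some $v \in V(G)$ were drawn strictly inside $R$, then every curve from $v$ to infinity would have to cross $D$, so $v$ could not lie on the outer face, contradicting outerplanarity. Knowing that no vertex lies inside $R$, any edge of $G$ drawn inside $R$ would have both endpoints on $D$ and would hence form a chord of $D$, contradicting the assumption that $D$ is an induced cycle. Therefore $R$ is a single face whose boundary is exactly $D$.

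Since $x$ and $y$ both lie on the boundary of the face $R$, I would then route the new edge $xy$ as a simple curve through $R$ connecting $x$ to $y$, producing a plane embedding of $G + xy$. The outer face of the original embedding is left untouched by this modification (the new arc lies entirely inside a bounded face), so every vertex of $G + xy$ still lies on the outer face of the new embedding. Hence $G + xy$ admits an outerplanar embedding and is outerplanar.

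I do not expect any serious obstacle; the subtlest point is verifying that no vertex of $G$ can be drawn strictly inside $D$ in any outerplanar embedding, but this is immediate from the outer-face characterization of outerplanarity combined with the Jordan curve theorem. One could alternatively try to prove the lemma combinatorially via \cref{criterion:edge-removal} applied to $G+xy$ with edge $xy$: condition~\eqref{criterion:edge-removal:induced-paths} would follow from the fact that three induced internally vertex-disjoint $(x,y)$-paths in $G$ would contract to a $K_{2,3}$-minor, but condition~\eqref{criterion:edge-removal:connected-comps} seems to force an induction on $|V(G)|$, which is more cumbersome than the planar embedding argument above.
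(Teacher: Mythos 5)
Your proof is correct, but it takes a genuinely different route from the paper's. You argue topologically: fix an outerplanar embedding, use the Jordan curve theorem to show that the bounded region enclosed by the induced cycle~$D$ is a face (no vertex inside, since all vertices must lie on the outer face; no edge inside, since it would be a chord of the induced cycle), and then draw~$xy$ through that face without disturbing the outer face. The paper instead stays entirely combinatorial: it first shows that the two arcs of~$D - \{x,y\}$ lie in different components of~$G - \{x,y\}$ (a connecting path would yield a $K_{2,3}$-minor), and then verifies the two conditions of \cref{criterion:edge-removal} for the edge~$xy$ in~$G+xy$. Notably, the difficulty you anticipated with condition~\eqref{criterion:edge-removal:connected-comps} --- that it ``seems to force an induction on~$|V(G)|$'' --- does not arise: since each component~$C$ of~$(G+xy)-\{x,y\}$ contains at most one arc of~$D$, the graph~$(G+xy)\brb{C}$ is obtained as a minor of~$G$ by contracting the complementary arc of~$D$ to realize the edge~$xy$, so it is outerplanar with no induction needed. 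Your embedding argument is shorter and self-contained (it does not need \cref{criterion:edge-removal}), at the cost of reasoning about plane embeddings, which the paper otherwise systematically avoids in favour of the $\{K_4,K_{2,3}\}$-minor-free characterization; the one step worth spelling out slightly more in your write-up is that an edge curve cannot dip into the region bounded by~$D$ while having an endpoint off~$D$, because edge curves may only meet the image of~$D$ at images of shared endpoints.
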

\begin{proof}
Let~$D_1, D_2$ be the two parts of the cycle~$D - \{x,y\}$. We claim that~$D_1$ and~$D_2$ belong to different connected components of~$G - \{x,y\}$. Suppose not, and let~$P$ be a path from~$V(D_1)$ to~$V(D_2)$ in~$G - \{x,y\}$ that intersects~$V(D_1)$ and~$V(D_2)$ in exactly one vertex~$v_1$ and~$v_2$, respectively. The path~$P$ has at least one interior vertex since the cycle~$D$ is induced. But then~$P$ together with the two induced~$(v_1,v_2)$-paths along~$D$ give a~$K_{2,3}$-minor; a contradiction to the assumption that~$G$ is outerplanar.

Hence~$D_1$ and~$D_2$ belong to different connected components of~$G - \{x,y\}$. Let~$G'$ be the graph obtained from~$G$ by adding the edge~$xy$. We show that for each connected component~$C$ of~$G' - \{x,y\}$ the graph~$G' \brb C$ is a minor of~$G$ and therefore outerplanar. This follows from the fact that, by the argument above,~$C$ contains at most one segment of the cycle~$D$ and therefore we can contract the remaining segment to realize the edge~$xy$.

Using the above, we prove that~$G'$ is \op by applying \cref{criterion:edge-removal} to edge~$xy$. The preceding argument shows that the first condition is satisfied. To see that the second condition is satisfied as well, note~$G$ is \op and therefore~$G' \setminus xy = G$ does not contain three internally vertex-disjoint paths connecting the endpoints of~$e$.
\end{proof}

Finally, we observe that if an \op graph~$G$ has a cycle~$C$, then any component of~$G-V(C)$ is adjacent to at most two vertices of the cycle (else there would be a $K_4$ minor), and these must be consecutive on the cycle (else there would be a $K_{2,3}$ minor).

\begin{lemma} \label{lem:attaching:to:cycle}
If~$C$ is a cycle in an outerplanar graph~$G$, then each connected component of~$G - V(C)$ has at most two neighbors in~$C$, and they must be consecutive along the cycle.
\end{lemma}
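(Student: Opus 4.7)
The plan is to derive a contradiction with outerplanarity in two cases, using forbidden-minor characterization. Let $H$ be a connected component of $G - V(C)$ and let $N = N_G(V(H)) \cap V(C)$ be its neighbors on the cycle. I will show that $|N| \geq 3$ forces a $K_4$-minor in $G$, and that $|N| = 2$ with non-consecutive neighbors forces a $K_{2,3}$-minor in $G$.

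First, suppose for contradiction that $N$ contains three vertices $v_1, v_2, v_3$ appearing in this cyclic order on $C$. The cycle $C$ decomposes into three internally vertex-disjoint arcs $A_{12}, A_{23}, A_{31}$, where $A_{ij}$ is the $(v_i, v_j)$-subpath of $C$ avoiding the third vertex. Since $H$ is connected and each $v_i$ has a neighbor in $V(H)$, there is a subtree of $G[V(H) \cup \{v_1, v_2, v_3\}]$ whose leaves are exactly $\{v_1, v_2, v_3\}$; contract this subtree (minus the $v_i$'s) into a single vertex $h$ adjacent to $v_1, v_2, v_3$. Contract the interior of each arc $A_{ij}$ to realize the edge $v_iv_j$. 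The resulting minor contains $K_4$ on $\{h, v_1, v_2, v_3\}$, contradicting outerplanarity of $G$.

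Second, suppose $N = \{v_1, v_2\}$ with $v_1$ and $v_2$ not consecutive on $C$. Then $C \setminus \{v_1v_2\}$ (or $C$ itself if $v_1v_2 \notin E(C)$) decomposes into two $(v_1, v_2)$-arcs, each containing at least one interior vertex by the non-consecutiveness assumption. Together with a $(v_1, v_2)$-path whose interior lies in $V(H)$ (which exists since $H$ is connected and both $v_i$ have a neighbor in $V(H)$), we obtain three internally vertex-disjoint $(v_1, v_2)$-paths each with at least one interior vertex. Contracting the interior of each of these three paths to a single vertex yields a $K_{2,3}$-minor in $G$, again contradicting outerplanarity.

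The two cases together establish the claim; the only potential subtlety is to make sure the three paths in the second case are genuinely internally vertex-disjoint, which follows because the two cycle arcs are internally disjoint from each other by construction and disjoint from $V(H)$ since $H$ is a component of $G - V(C)$.
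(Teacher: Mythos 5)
Your proof is correct and follows essentially the same route as the paper: three neighbors on the cycle yield a $K_4$-minor (the paper uses the three covering arcs plus the component as branch sets, you contract the component to a single vertex and the arc interiors to realize the cycle edges, which is equivalent), and two non-consecutive neighbors yield a $K_{2,3}$-minor from the two arcs plus a path through the component.
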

\begin{proof}
Suppose for a contradiction that some component~$D$ of~$G - V(C)$ has two neighbors~$x,y$ which are not consecutive along~$C$. Then the cycle provides two vertex-disjoint $(x,y)$-paths with at least one interior vertex each, and component~$D$ provides a third $(x,y)$-path with an interior vertex. This yields a~$K_{2,3}$-minor where~$\{x\}$ and~$\{y\}$ are the branch sets of the degree-3 vertices, contradicting outerplanarity.

Now suppose that some component~$D$ of~$G - V(C)$ has three or more neighbors on~$C$. Let~$P_1,P_2,P_3$ be three paths that cover he entire cycle~$C$ such that each path contains a neighbor of~$D$ and observe that~$V(P_1), V(P_2), V(P_3), V(D)$ form the branch sets of a $K_4$-minor in~$G$, contradicting outerplanarity.
\end{proof}

\subparagraph*{Treewidth and the LCA closure}
A tree decomposition of graph $G$ is a pair $(T, \chi)$ where~$T$ is a rooted tree, and~$\chi \colon V(T) \to 2^{V(G)}$, such that:
\begin{enumerate}
    \item For each~$v \in V(G)$ the nodes~$\{t \mid v \in \chi(t)\}$ form a {non-empty} connected subtree of~$T$. 
    \item For each edge~$uv \in E(G)$ there is a node~$t \in V(G)$ with~$\{u,v\} \subseteq \chi(t)$.
\end{enumerate}
The width of a tree decomposition is defined as~$\max_{t \in V(T)} |\chi(t)| - 1$. The treewidth of a graph~$G$ is the minimum width of a tree decomposition of~$G$.
If \bmp{$w$} is a constant, then there is a \bmp{linear}-time algorithm that given a graph $G$ either outputs a tree decomposition of width at most $w$ or correctly concludes that treewidth of $G$ is larger than $w$~\cite{Bodlaender96}.
If a graph is \op, then its treewidth is at most 2~\cite[Lem. 78]{Bodlaender98}.
Since $n$-vertex graphs of treewidth~$w$ can have at most~$w\cdot n$ edges~\cite[Lem. 91]{Bodlaender98} we obtain the following.

\begin{observation}\label{lem:prelim:op-edges}
 If \bmp{$G$} is an \op graph, then \bmp{$|E(G)| \le 2 \cdot |V(G)|$}.
\end{observation}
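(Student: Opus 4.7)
The plan is to simply chain the two facts stated immediately before the observation. First, since $G$ is outerplanar, its treewidth satisfies $\mathrm{tw}(G) \le 2$ by the cited result of Bodlaender~\cite[Lem.~78]{Bodlaender98}. Second, any $n$-vertex graph of treewidth $w$ has at most $w \cdot n$ edges, again by a cited result of Bodlaender~\cite[Lem.~91]{Bodlaender98}. Applying the second fact with $w = 2$ and $n = |V(G)|$ gives $|E(G)| \le 2 \cdot |V(G)|$, which is exactly what is claimed.

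There is really no obstacle here: the observation is stated as a direct corollary of two black-box ingredients the paper has already invoked in the preceding paragraph. Structurally, the proof would be a single sentence. I would not attempt a sharper self-contained bound (for instance, the classical $|E(G)| \le 2|V(G)| - 3$ for biconnected outerplanar graphs), since the weaker inequality $|E(G)| \le 2|V(G)|$ is all that is needed later and it follows with no case analysis.

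If one wanted to avoid quoting the treewidth-edge bound, an alternative elementary route would be to induct on $|V(G)|$ using the fact that every outerplanar graph contains a vertex of degree at most $2$ (a standard consequence of Euler's formula combined with the fact that every edge of an outerplanar graph is incident to the outer face in an outerplanar embedding): removing such a vertex decreases $|V(G)|$ by $1$ and $|E(G)|$ by at most $2$, and the result follows by induction with base case being the empty graph. However, given that the paper has already set up the treewidth machinery, the first approach is cleaner and matches the style of the surrounding exposition.
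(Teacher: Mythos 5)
Your proof is correct and is exactly the paper's argument: the observation is derived by combining the two facts stated immediately before it (outerplanar graphs have treewidth at most $2$, and graphs of treewidth $w$ have at most $w \cdot n$ edges). No further commentary is needed.
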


Let $T$ be a rooted tree and $S \subseteq V(T)$ be a set of vertices in $T$. 
We define the least common ancestor of (not necessarily distinct) $u$ and $v$, denoted as \bmp{$\mathsf{LCA}(u, v)$}, to be the deepest node $x$ which is an ancestor of both $u$ and $v$.
The LCA closure of $S$ is the set
\[
\overline{\mathsf{LCA}}(S) = \{\mathsf{LCA}(u, v): u, v \in S\}.
\]

\begin{lemma}\cite[Lem. 9.26, 9.27, 9.28]{fomin2019kernelization}\label{lem:lca:basic}
Let $T$ be a rooted tree, $S \subseteq V(T)$, and $M = \overline{\mathsf{LCA}}(S)$. \bmp{All of the following hold.}
\begin{enumerate}
    \item Each connected component $C$ of $T - M$ satisfies $|N_T(C)| \le 2$.
    \item $|M| \le 2\cdot |S| - 1$.
    \item $\overline{\mathsf{LCA}}(M) = M$.
\end{enumerate}
\end{lemma}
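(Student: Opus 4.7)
My plan is to prove the three properties in the order (3), (1), (2): property (3) is algebraic and will feed into the proof of (1), while property (2) is a clean induction that can be run independently. Throughout I will use the standard tree identity $\mathsf{LCA}(\mathsf{LCA}(u,v),w) = \mathsf{LCA}(u,v,w)$, and the fact that $S \subseteq M$ because $\mathsf{LCA}(s,s) = s$.

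For property (3), take arbitrary $m_1, m_2 \in M$, written as $m_1 = \mathsf{LCA}(s_1, s_2)$ and $m_2 = \mathsf{LCA}(s_3, s_4)$ with $s_i \in S$. The value $r := \mathsf{LCA}(m_1, m_2)$ is then the deepest common ancestor of the set $\{s_1,s_2,s_3,s_4\}$. If $r = s_i$ for some $i$, then $r \in S \subseteq M$. Otherwise $r$ is a strict ancestor of all four vertices, and by definition of least common ancestor at least two of the $s_i$ must lie in distinct child-subtrees of $r$; the pairwise LCA of such a pair is exactly $r$, placing $r$ in $M$.

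For property (1), let $C$ be a connected component of $T - M$; as a connected subgraph of the tree $T$, it is itself a subtree. All neighbors of $C$ in $T$ lie in $M$, and at most one of them can be the parent of the topmost vertex of $C$. I will show that at most one further neighbor exists, necessarily a child in $T$ of some vertex of $C$. Suppose towards a contradiction that two such neighbors $m_1, m_2 \in M$ exist, children of (possibly equal) vertices $c_1, c_2 \in C$. Since $c_1, c_2$ lie in the same component of $T - M$, their unique path in $T$ stays inside $C$, so $\mathsf{LCA}(c_1, c_2) \in C$. Moreover, $c_2$ cannot be a descendant of $m_1$ (otherwise the $c_1$-to-$c_2$ path would pass through $m_1 \in M$), and symmetrically, so one verifies $\mathsf{LCA}(m_1, m_2) = \mathsf{LCA}(c_1, c_2) \in C$. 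By property (3), this LCA lies in $M$, contradicting $C \cap M = \emptyset$.

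For property (2), I induct on $|S|$. The base case $|S| = 1$ gives $|M| = 1 = 2|S| - 1$. For $|S| = k \ge 2$, fix any $s \in S$ and set $S' = S \setminus \{s\}$, $M' = \overline{\mathsf{LCA}}(S')$; the inductive hypothesis yields $|M'| \le 2k - 3$. Splitting pairs in $S$ according to whether they contain $s$ gives
\[
M = M' \cup \{s\} \cup A, \qquad A := \{\mathsf{LCA}(s, s') : s' \in S'\}.
\]
The elements of $A$ are all ancestors of $s$ and are therefore linearly ordered by depth; let $a^* = \mathsf{LCA}(s, s^*)$ denote its deepest element. For any other $a = \mathsf{LCA}(s, s') \in A$ I will argue $a \in M'$ by showing $a = \mathsf{LCA}(s', s^*)$: the vertex $s'$ sits in the subtree hanging off $a$ that avoids $s$, whereas $s^*$ is a descendant of $a^*$ and therefore of $a$ on the side containing $s$, so these two subtrees merge for the first time exactly at $a$. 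Hence $A \setminus \{a^*\} \subseteq M'$, giving $|M| \le |M'| + 2 \le 2k - 1$.

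The main obstacle is the last step in property (2): justifying that adding one new source vertex $s$ to $S'$ contributes only two genuinely new elements to the LCA-closure, which hinges on the identification $\mathsf{LCA}(s, s') = \mathsf{LCA}(s', s^*)$ for every non-deepest element of $A$. Once this is in place the induction closes at the sharp bound $2|S|-1$, while properties (3) and (1) reduce to routine manipulations with the ancestor relation.
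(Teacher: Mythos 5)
The paper does not actually prove this lemma: it imports all three statements from the kernelization textbook of Fomin, Lokshtanov, Saurabh and Zehavi (Lemmas 9.26--9.28), so there is no in-paper proof to match against. Your argument is a correct, self-contained proof of all three parts. The verification of (3) via the identity $\mathsf{LCA}(m_1,m_2)=\mathsf{LCA}(s_1,s_2,s_3,s_4)$ and the two-subtree case analysis is sound; the deduction of (1) from (3) correctly isolates the single ``parent-type'' neighbor of the topmost vertex of $C$ and rules out two ``child-type'' neighbors $m_1,m_2$ by showing $\mathsf{LCA}(m_1,m_2)=\mathsf{LCA}(c_1,c_2)\in C$, which would contradict (3); and the induction for (2) closes because every non-deepest element $a=\mathsf{LCA}(s,s')$ of $A$ is recovered as $\mathsf{LCA}(s',s^*)\in M'$, so each new source contributes at most the two elements $s$ and $a^*$. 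The one cosmetic remark is that your induction for (2) differs from the textbook's usual route, which orders $S$ as $s_1,\dots,s_\ell$ by a left-to-right DFS traversal and shows directly that $\overline{\mathsf{LCA}}(S)=S\cup\{\mathsf{LCA}(s_i,s_{i+1}) : 1\le i<\ell\}$, yielding $2|S|-1$ in one step; your inductive version proves the same sharp bound and has the minor advantage of not requiring an embedding or traversal order, at the cost of the small case analysis around $a^*$. (As stated, part (2) implicitly assumes $S\neq\emptyset$, which both your base case and the paper's usage respect.)
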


%
%

\begin{lemma}\label{lem:lca:fewneighbors}
If~$(T, \chi)$ is a tree decomposition of width at most~$c$ of a graph~$G$, and~$B \subseteq V(T)$ is a set of nodes of~$T$ closed under taking lowest common ancestors (i.e., $\overline{\mathsf{LCA}}(B) = B$), then for~$M = \bigcup _{t \in B} \chi(t)$ and any connected component~$C$ of~$G - M$ we have~$|N_G(C) \cap M| \leq 2c$.
\end{lemma}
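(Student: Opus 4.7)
The plan is to exploit the standard fact that for any connected subgraph $C$ of $G$, the set $T_C := \{t \in V(T) : \chi(t) \cap V(C) \neq \emptyset\}$ is a connected subtree of $T$. I would first observe that every bag indexed by $B$ is contained in $M$, so $T_C$ is disjoint from $B$ and therefore lies inside a single connected component $C'$ of $T-B$. Applying \cref{lem:lca:basic} to the LCA-closed set $B$ gives $|N_T(C')| \le 2$; call these (at most two) boundary neighbors $t_1, t_2 \in B$. Because $T$ is a tree, each $t_i$ has exactly one neighbor in $C'$ (otherwise, combining two such neighbors with a connecting path inside $C'$ and $t_i$ would yield a cycle), so the path from $t_i$ into $T_C$ enters $T_C$ at a unique node $r_i \in T_C$.

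Next I would prove the key inclusion $N_G(C) \cap M \subseteq \chi(r_1) \cup \chi(r_2)$. Fix $v \in N_G(C) \cap M$ and pick an edge $uv$ of $G$ with $u \in V(C)$. That edge is covered by some bag $\chi(t^\star)$, and since $u \in V(C)$ we have $t^\star \in T_C \subseteq C'$ and $v \in \chi(t^\star)$. Because $v \in M$, the subtree $T_v := \{t : v \in \chi(t)\}$ also contains some node of $B$. Thus $T_v$ is a connected subtree meeting both $C'$ and $B$, and since every tree-edge leaving $C'$ passes through $t_1$ or $t_2$, the subtree $T_v$ contains at least one of these, say $t_i$. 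The unique path in $T$ from $t^\star$ to $t_i$ then lies inside $T_v$; as this path exits $T_C$ at $r_i$ on its way to $t_i$, we obtain $v \in \chi(r_i)$.

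Finally, since $r_i \in T_C$ the bag $\chi(r_i)$ contains at least one vertex of $C$, which is disjoint from $M$. Combined with the width bound $|\chi(r_i)| \le c+1$, this yields $|\chi(r_i) \cap M| \le c$, so
\[
|N_G(C) \cap M| \le |\chi(r_1) \cap M| + |\chi(r_2) \cap M| \le 2c.
\]

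The main obstacle is the key inclusion in the second paragraph: one has to recognize that every neighbor $v$ of $C$ in $M$ is forced to lie in the bag of a node of $T_C$, not merely in $\chi(t_1) \cup \chi(t_2)$. Without this refinement, only the naive bound $|N_G(C) \cap M| \le 2(c+1)$ is available, and the sharp constant $2c$ is lost; the saving comes precisely from routing through a node whose bag is forced to spend one of its $c+1$ slots on a vertex of $C$.
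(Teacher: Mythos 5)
Your proof is correct and follows essentially the same route as the paper's: both identify the subtree $T_C$, use \cref{lem:lca:basic} to get at most two boundary nodes, and show every vertex of $N_G(C)\cap M$ must appear in the bag of one of the (at most two) entry nodes of $T_C$. The only cosmetic difference is in extracting the constant $c$ rather than $c+1$: you intersect $\chi(r_i)$ with $M$ directly (using that $\chi(r_i)$ spends a slot on a vertex of $C$), while the paper bounds $|\chi(f(z))\cap\chi(g(z))|\le c$ for the adjacent pair of bags straddling the boundary of $T_C$ — the same underlying observation.
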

\begin{proof}
Let~$T_C$ denote the subgraph of~$T$ induced by the nodes whose bag contains a vertex of~$C$. Since~$C$ is a connected component of~$G - M$, we have~$V(T_C) \cap B = \emptyset$ and~$T_C$ is a connected tree rather than a forest. Hence there exists a tree~$T'$ in the forest~$T - B$ such that~$T_C$ is a subtree of~$T'$. Since~$B$ is closed under taking lowest common ancestors, it follows from \cref{lem:lca:basic} that for~$Z := N_T(V(T'))$ we have~$|Z| \leq 2$. For each~$z \in Z$, let~$f(z)$ denote the first node outside~$V(T_C)$ on the unique shortest path in~$T$ from~$V(T_C)$ to~$z$. Note that we may have~$z = f(z)$. Let~$g(z)$ denote the unique neighbor in~$T$ of node~$f(z)$ among~$V(T_C)$. Observe that both~$f(z)$ and~$g(z)$ lie on each path in~$T$ connecting a node of~$T_C$ to~$z$. 

By definition of~$T_C$ we have that each bag of~$T_C$ intersects~$V(C)$ while~$\chi(f(z)$ does not. Hence~$\chi(f(z)) \neq \chi(g(z)$. As each bag has size at most~$c+1$, it follows that~$|\chi(f(z)) \cap \chi(g(z))| \leq c$ for each~$z \in Z$. To prove the desired claim that~$|N_G(C) \cap M| \leq 2c$, it therefore suffices to argue that~$N_G(C) \cap M \subseteq \bigcup_{z \in Z} (\chi(f(z)) \cap \chi(g(z)))$.

Consider a vertex~$v \in N_G(C) \cap M$. We argue that~$v \in \chi(f(z)) \cap \chi(g(z))$ for some~$z \in Z$, as follows. Since~$v \in M$, there exists a node~$b^* \in B$ such that~$v \in \chi(b^*)$. Since~$v \in N_G(C)$ there exists~$u \in V(C)$ such that~$\{u,v\} \in E(G)$. Hence there is a bag in the tree decomposition containing both~$u$ and~$v$, and as vertices of~$V(C)$ only occur in bags of the subtree~$T_C$, we find that~$v$ occurs in at least one bag of~$T_C$. Since the occurrences of~$v$ form a connected subtree of~$T$, and~$v$ {appears} in at least one bag of~$T_C$ and at least one bag of~$B$, while the only neighbors in~$B$ of the supertree~$T'$ of~$T_C$ are the nodes in~$Z$, it follows that~$v$ occurs in at least one bag~$\chi(z)$ for some~$z \in Z$. But since all paths from~$T_C$ to~$z$ pass through~$f(z)$ and~$g(z)$ as observed above, this implies~$z \in \chi(f(z)) \cap \chi(g(z))$; this concludes the proof.
\end{proof}

\section{Splitting the graph into pieces} \label{sec:modulators}
\bmp{In this section we show how to reduce any input of \opvdfull{} to an equivalent instance which admits a decomposition into a modulator of bounded size along with a bounded number of outerplanar components containing at most four neighbors of the modulator.}

\subsection{The augmented modulator}

The starting point for both our kernelization algorithm and the one from \bmp{Fomin et al.}~\cite{FominLMS12} is \bmp{to employ} a constant-factor approximation algorithm.
We however begin with a different approximation algorithm, which has two advantages.
First, the algorithm is constructive: it relies only on separating properties of bounded-treewidth graphs and rounding a fractional solution from a~linear programming relaxation.
Second, the approximation factor can be pinned down to a~concrete value.

\begin{theorem}\label{thm:apx}\cite{gupta2019losing}
There is a polynomial-time deterministic 40-approximation algorithm for \opvdfull.
\end{theorem}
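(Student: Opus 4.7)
The plan is to invoke the linear-programming rounding framework of Gupta et al.~\cite{gupta2019losing}, which yields constant-factor approximations for $\mathcal{F}$-Minor-Free Deletion whenever $\mathcal{F}$ contains a planar graph. Both forbidden minors for outerplanarity, $K_4$ and $K_{2,3}$, are planar, so the framework applies directly; obtaining the concrete ratio $40$ amounts to instantiating its parameters for this particular~$\mathcal{F}$.

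First I would formulate a covering linear program whose variables $x_v \in [0,1]$ for $v \in V(G)$ indicate (fractionally) membership of $v$ in the deletion set, with constraints $\sum_{v \in V(H)} x_v \geq 1$ for every subgraph $H$ of $G$ that contains $K_4$ or $K_{2,3}$ as a minor. Although there are infinitely many such $H$, separation can be performed in polynomial time: given a candidate point $x^*$, one searches for a minimum-weight $K_4$- or $K_{2,3}$-minor-model under the weights $x^*_v$, which reduces to a constant-terminal disjoint-paths problem and is polynomial-time solvable. The ellipsoid method then produces a fractional optimum $x^*$ of value $\tau^* \leq \opd(G)$.

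Next I would round $x^*$ via the recursive ``separate and recurse'' strategy of~\cite{gupta2019losing}. The key structural fact is that every outerplanar graph has treewidth at most $2$, so a graph whose LP value is small must admit a balanced separator whose total $x^*$-weight is a constant fraction of $\tau^*$; adding this separator to the deletion set and recursing on the resulting pieces yields a solution of size at most $\alpha \cdot \tau^*$. Tracking the multiplicative blow-up at each recursion level, together with the treewidth-$2$ bound for outerplanar graphs (tighter than the bound used in generic applications of the framework), brings the overall constant down to $\alpha \leq 40$. Each recursive step runs in polynomial time and the recursion depth is $\Oh(\log n)$, so the total running time is polynomial.

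The hard part is pinning down the concrete constant $40$ rather than some larger universal value. The generic analysis in~\cite{gupta2019losing} would give a looser constant in the worst case, and sharpening it for outerplanarity would require exploiting specific structural properties such as \cref{lem:attaching:to:cycle}, which limits how components outside a cycle attach to the cycle, in order to tighten both the balanced-separator lemma and the LP integrality-gap argument. Since the subsequent kernelization uses the ratio $40$ merely as a number in closed-form size formulas, any further improvement would translate directly into the hidden constant of the $\Oh(k^4)$ kernel-size bound.
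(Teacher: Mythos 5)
There is a genuine gap here: the entire content of \cref{thm:apx} is the concrete constant $40$, and your proposal never actually derives it. You describe a plausible LP-rounding/recursive-separation pipeline and then assert that ``tracking the multiplicative blow-up \ldots brings the overall constant down to $\alpha \le 40$,'' before conceding in your final paragraph that pinning down $40$ is ``the hard part'' and would require new structural work (you suggest tightening the balanced-separator lemma and the integrality gap using facts like \cref{lem:attaching:to:cycle}). That is precisely the step that cannot be waved away, and your proposed route to it is not the one that works: no new outerplanarity-specific sharpening of the separator or integrality-gap analysis is needed.

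The paper instead obtains $40$ by a direct parameter chase through two existing results. Theorem~1.1 of \cite{gupta2019losing} states that the approximation factor for deletion to a class of treewidth at most $\eta$ is $2\cdot\alpha(\eta+1)$, where $\alpha(k)$ is the cost ratio of any polynomial-time $(\alpha(k),\Oh(1))$-bicriteria approximation for $k$-\textsc{Subset Vertex Separator}; since outerplanar graphs have treewidth at most $2$, one needs $\alpha(3)$. Lee's analysis \cite{lee2019partitioning} (Lemma~2 therein, instantiated with $\eps=\tfrac14$) yields a deterministic $(8\cdot H_{2k},2)$-bicriteria approximation, so the overall factor is $2\cdot 8\cdot H_6 = 16\cdot(1+\tfrac12+\tfrac13+\tfrac14+\tfrac15+\tfrac16) = 39.2 < 40$. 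I would also flag that your proposed separation oracle (finding a minimum-$x^*$-weight $K_4$- or $K_{2,3}$-minor model) is not obviously polynomial-time solvable as stated, but this is secondary to the missing derivation of the constant.
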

\begin{proof}
The article \cite{gupta2019losing} only states that the approximation factor is constant. However, it also provides a~recipe to retrieve its value. From \cite[Theorem 1.1]{gupta2019losing} we get that the approximation factor for \opvdfull is $2\cdot \alpha(3)$, for a function $\alpha$ satisfying the following: the problem $k$-\textsc{Subset Vertex Separator} admits a polynomial-time $(\alpha(k), \Oh(1))$-bicriteria approximation algorithm.
Without going into details, one can check that such an algorithm has been given by Lee~\cite{lee2019partitioning}: by examining the proof of Lemma 2 therein for $\eps = \frac 1 4$  we see that one can construct a polynomial-time $(8\cdot H_{2k}, 2)$-bicriteria approximation algorithm, where $H_k$ is the $k$-th harmonic number.
We check that $2 \cdot 8 \cdot H_6 < 40$.
Both algorithms in question are deterministic.
\end{proof}

In our setting, for a given graph $G$ and integer $k$, we want to determine whether $G$ admits an \op deletion set of size at most $k$.
Thanks to the theorem above, we can assume that we are given an \op deletion set $X$ (also called a modulator to outerplanarity) of size at most $40 \cdot k$.
As a next step, we would like to augment this set to satisfy a stronger property.
This step is inspired by the technique of tidying the modulator from \bmp{van Bevern, Moser, and Niedermeier}~\cite{BevernMN12}.
For each vertex $v \in X$ we would like to be able to ``put it back'' into $G-X$ while maintaining outerplanarity.
In order to do so, we look for a set of vertices from $V(G) \setminus X$ that needs to be removed if $v$ is put back.
Since $G-X$ is \op and hence has treewidth at most \bmp{two}, we can construct such a set of moderate size by a greedy approach.
We scan a tree decomposition in a bottom-up manner and look for maximal subgraphs that are \op when considered together with $v$.
When such a subgraph cannot be further extended we mark one bag of a decomposition, which gives 3 vertices to be removed. 
We show that this idea leads to a 3-approximation algorithm. \bmp{While this approach based on covering/packing duality is well-known, we present the proof for completeness.}

\begin{lemma}\label{lem:apx-undeletable}
There is a polynomial-time algorithm that, given a graph~$G$, an integer $k$, and a vertex~$v$ such that~$G - v$ is outerplanar, either finds an \op deletion set $S \subseteq V(G) \setminus \{v\}$ in $G$ of size of most $3k$ or correctly concludes that there is no \op deletion set $S \subseteq V(G) \setminus \{v\}$ in $G$ of size of most $k$.
\end{lemma}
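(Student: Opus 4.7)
The plan is to employ a bottom-up marking procedure on a tree decomposition of $G - v$ of width at most $2$, which exists by~\cite[Lem. 78]{Bodlaender98} and can be computed in linear time. Root such a decomposition $(T,\chi)$ arbitrarily. The central observation is that $v$ acts as a universal articulation point: by \cref{criterion:articulation-point}, pairwise vertex-disjoint outerplanar graphs glued only at the common vertex $v$ together induce an outerplanar graph.

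The algorithm maintains a set $S \subseteq V(G)\setminus\{v\}$, initially empty, and processes nodes of $T$ in post-order. For each node $t$ let the \emph{active subtree of $t$} be the connected component containing $t$ in the forest obtained from the subtree of $t$ by deleting every node marked \emph{triggered} earlier, and let $R_t$ denote the union of the bags of this active subtree minus $S$. We test whether $G[R_t \cup \{v\}]$ is outerplanar; if not, we mark $t$ as triggered and add $\chi(t)$ to $S$. Let $t_1,\ldots,t_p$ be the triggered nodes in processing order.

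To verify that $G-S$ is outerplanar, consider the forest obtained from $T$ by deleting all triggered nodes, with connected components $T_1'',\ldots,T_\ell''$. For the topmost node $\tau_j$ of $T_j''$, the active subtree computed at $\tau_j$ coincides with $T_j''$, and since $\tau_j$ did not trigger, $G[R_{\tau_j}\cup\{v\}]$ is outerplanar. The sets $R_{\tau_j}$ partition $V(G)\setminus(S\cup\{v\})$, and any edge between $R_{\tau_j}$ and $R_{\tau_{j'}}$ for $j\neq j'$ would force a common bag containing both endpoints; by the separator property of $(T,\chi)$ such a bag lies in a triggered node, so both endpoints would belong to $S$, a contradiction. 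Hence each connected component of $G-S-v$ is contained in some $R_{\tau_j}$, and applying \cref{criterion:articulation-point} at $v$ shows $G-S$ is outerplanar.

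For the lower bound I claim the regions $R_{t_1},\ldots,R_{t_p}$ are pairwise disjoint subsets of $V(G)\setminus\{v\}$. Fix $i<j$; in post-order $t_j$ is not a descendant of $t_i$, while the active subtree of $t_j$ at time $j$ excludes the entire subtree of $t_i$ because $t_i$ is already triggered. If $u\in R_{t_i}\cap R_{t_j}$, then $u$ occurs in a bag of the subtree of $t_i$ and in a bag of the active subtree of $t_j$; by connectivity of bag-occurrences in $T$ the path between these bags passes through $t_i$, so $u\in\chi(t_i)\subseteq S$ at time $j$, contradicting $u\in R_{t_j}$. Since each $G[R_{t_i}\cup\{v\}]$ is non-outerplanar, any outerplanar deletion set $S^\star\subseteq V(G)\setminus\{v\}$ must contain a vertex of each $R_{t_i}$, and disjointness yields $|S^\star|\ge p$. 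Thus $|S|\le 3p\le 3|S^\star|$: if $|S|\le 3k$ we return $S$, and otherwise $p>k$ certifies that no outerplanar deletion set of size at most $k$ avoiding $v$ exists. The main subtlety is the clearing of active subtrees after each trigger; without it, triggers at ancestor-descendant pairs could reuse vertices from the deeper region and inflate the approximation factor beyond $3$.
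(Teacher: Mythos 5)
Your proposal is correct and follows essentially the same route as the paper: a bottom-up (post-order) greedy over a width-$2$ tree decomposition of $G-v$ that triggers at lowest nodes whose remaining region together with $v$ is not outerplanar, adds the size-$\le 3$ bag to $S$, packs the pairwise disjoint non-outerplanar-with-$v$ regions to certify the lower bound, and verifies $G-S$ via \cref{criterion:articulation-point} at $v$. Your "active subtree" bookkeeping yields exactly the same regions as the paper's sets $Y_i = U(t_i) \setminus \bigcup_{j<i} Y_j$, so the two proofs coincide up to presentation.
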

\begin{proof}
Since $G-v$ is \op, its treewidth is at most \bmp{two}.
A tree decomposition  $(T, \chi)$ of $G-v$ of this width can be computed in \bmp{linear} time~{\cite{Bodlaender96}}.

Consider a process in which we scan the tree decomposition in a bottom-up manner and mark some nodes of $T$.
In the $i$-th step we will mark a node $t_i \in V(T)$ and maintain a family $Y_1, \dots, Y_i$ of disjoint subsets of $V(G) \setminus \{v\}$, so that for each $j \in [i]$ the graph $G[Y_i \cup \{v\}]$ is not \op.
We begin with no marked vertices and an empty family of vertex sets.
Let $U(t)$ be the set of vertices \bmp{appearing in a bag} in the subtree of $T$ rooted at $t \in V(T)$. 
In the $i$-th step we choose \bmp{a} lowest node $t_i \in V(T)$ (breaking ties arbitrarily), so that $U(t_i) \cup \{v\} \setminus \bigcup_{j=1}^{i-1}Y_j$ induces a non-\op subgraph of $G$.
If there is no such node, we terminate the process.
Otherwise we set $Y_i = U(t_i) \setminus \bigcup_{j=1}^{i-1}Y_j$ and continue the process.

By the definition, the sets $Y_1, \dots, Y_i$ are disjoint and each of them, when considered together with $v$, induces a subgraph which is not \op.
Suppose that the procedure has executed for at least $k+1$ steps.
Then for any set $S \subseteq V(G) \setminus \{v\}$ of size of most $k$, there is some $i \in [k+1]$ such that $Y_i \cap S = \emptyset$.
Since $G[Y_i \cup \{v\}]$ is not \op, we can conclude that $S$ is not an \op deletion set. \bmp{Hence we can conclude that no set as desired exists and terminate.}

Suppose now that the procedure has terminated at \bmp{the} $k'$-th step, where $k' \le k$.
Since $t_i$ is chosen as \bmp{a} lowest node among \bmp{those} satisfying the given condition, we get that $U(t_i) \cup \{v\} \setminus (\chi(t_i) \cup \bigcup_{j=1}^{i-1}Y_j)$ induces an \op subgraph of $G$.
Observe that $S = \bigcup_{j=1}^{k'} \chi(t_j)$ separates $Y_i$ from $Y_j$ in $G-v$ for each pair $1 \le i < j \le k'$, because in particular $\chi(t_i) \subseteq S$.
Let $Y_0 = V(G) \setminus \bigcup_{j=1}^{k'}Y_j$.
Then also $G[Y_0 \cup \{v\}]$ is \op and $S$ separates $Y_0$ from any $Y_i$ in $G-v$.
We apply \cref{criterion:articulation-point} to $G-S$ with articulation point $v$
and check that any connected component $C$ of $G-S-v$ is contained in some set $Y_i$, so $G\brb C$ is \op, 
and thus $G-S$ is \op.
The size of each bag in $(T,\chi)$ is at most 3, hence $|S| \le 3k' \le 3k$.
The claim follows.
\end{proof}

Observe that if it is impossible to remove $k$ vertices from $G-(\hui{X \setminus \{v\}})$ to make it \op, then any \op deletion set in $G$ of size at most $k$ must contain $v$.
In this situation it suffices to solve the problem on $G-v$.
Otherwise, we identify a set $R(v)$ of at most $3k$ vertices whose removal allows $v$ to be put back in $G-X$ without spoiling outerplanarity.
After inserting $R(v)$ into the set $X$,
\bmp{we} could put $v$ back ``for free''.
Let us formalize this idea of augmenting the modulator.

\begin{definition} \label{def:augmentedmod}
A $(k,c)$-\emph{augmented modulator} in graph $G$ is a pair of disjoint sets $X_0, X_1 \subseteq V(G)$ such that:
\begin{enumerate}
    \item \label{def:augmentedmod:mod} $G - X_0$ is \op,
    \item \label{def:augmentedmod:aug} for each $v \in X_0$, there is a set $R(v) \subseteq X_1$, such that $|R(v)| \le 3k$ and $G-((X_0 \setminus \{v\}) \cup R(v))$ is \op, and 
    \item \label{def:augmentedmod:size} $|X_0| \le c\cdot k$, $X_1 = \bigcup_{v\in X_0} R(v)$, which implies $|X_1| \le 3c\cdot k^2$.
\end{enumerate}
We classify the pairs of vertices within $X_0 \cup X_1$.
A pair $(u,v) \colon u, v \in X_0 \cup X_1$ is of type:
\begin{enumerate}
    \item [A:] if $u, v \in X_0$ or $(u \in X_0, v \in R(u))$ or $(v \in X_0, u \in R(v))$,
    \item [B:] if $(u,v)$ is not of type $A$ and $\{u,v\} \cap X_0 \ne \emptyset$,
    \item [C:] if $u, v \in X_1$.
\end{enumerate}
We note that the number of type-A pairs is at most $c(3+c) \cdot k^2$, the number of type-B pairs is at most $3c^2 \cdot k^3$,
and the number of type-C pairs is at most $9c^2 \cdot k^4$.
\end{definition}

The downside of the augmented modulator is that its size can be as large as $\Oh(k^2)$.
However, in return we obtain an even stronger property than previously sketched.
For most of the pairs of vertices $u, v$ from the augmented modulator $(X_0,X_1)$, putting them back into $G-(X_0 \cup X_1)$ at the same time still does not break outerplanarity.
This property will come in useful for bounding the size of the kernel.

\begin{observation}\label{lem:augmentedmod:properties}
Let $(X_0, X_1)$ be a $(k,c)$-augmented modulator in a graph $G$. Then for each $v \in X_0 \cup X_1$, the graph $G - (X_0 \cup X_1 \setminus \{v\})$ is \op.
Furthermore, if  $u,v \in X_0 \cup X_1$
and the pair $(u,v)$ is of type B or C, then 
the graph $G - (X_0 \cup X_1 \setminus \{u,v\})$ is \op.
\end{observation}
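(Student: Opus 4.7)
The plan is to derive both claims from two simple ingredients: (i) any induced subgraph of an outerplanar graph is outerplanar, and (ii) the defining properties~\ref{def:augmentedmod:mod} and~\ref{def:augmentedmod:aug} of a $(k,c)$-augmented modulator. In each case I would identify an already-known outerplanar subgraph of~$G$ whose vertex set contains that of the graph we wish to analyze, so that outerplanarity transfers.

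For the first statement, split into two cases on the location of~$v$. If~$v \in X_1$, then $X_0 \cup X_1 \setminus \{v\} \supseteq X_0$, so $G - (X_0 \cup X_1 \setminus \{v\})$ is an induced subgraph of $G - X_0$, which is outerplanar by~\ref{def:augmentedmod:mod}. If~$v \in X_0$, then the set $R(v)$ from \ref{def:augmentedmod:aug} is contained in $X_1$, hence $(X_0 \setminus \{v\}) \cup R(v) \subseteq X_0 \cup X_1 \setminus \{v\}$; therefore $G - (X_0 \cup X_1 \setminus \{v\})$ is an induced subgraph of the outerplanar graph $G - ((X_0 \setminus \{v\}) \cup R(v))$.

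For the second statement I would do the analogous case analysis on the type of the pair $(u,v)$. The type-C case is immediate: $u,v \in X_1$ means $X_0 \cup X_1 \setminus \{u,v\} \supseteq X_0$, so we again sit inside $G - X_0$. The essential case is type B: without loss of generality~$u \in X_0$, and the definition of ``not type A'' together with $\{u,v\} \cap X_0 \neq \emptyset$ forces $v \in X_1 \setminus R(u)$. This last subtlety is exactly what I expect to be the main (and only) obstacle: one must use $v \notin R(u)$ in order to conclude $R(u) \subseteq X_1 \setminus \{v\}$, which then yields $(X_0 \setminus \{u\}) \cup R(u) \subseteq X_0 \cup X_1 \setminus \{u,v\}$. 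Applying property~\ref{def:augmentedmod:aug} to~$u$ and taking an induced subgraph finishes the argument.

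Overall the proof is pure bookkeeping against \cref{def:augmentedmod}; no combinatorial structure of outerplanar graphs beyond hereditariness under induced subgraphs is needed. The only place where the classification into types plays a role is in ruling out the case $v \in R(u)$ (or symmetrically $u \in R(v)$), which is precisely what the exclusion of type~A encodes.
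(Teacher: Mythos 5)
Your proof is correct and is exactly the bookkeeping argument the paper has in mind: the statement is given as an unproved Observation precisely because it follows from hereditariness of outerplanarity together with the inclusions $(X_0 \setminus \{v\}) \cup R(v) \subseteq X_0 \cup X_1 \setminus \{v\}$ and, for type-B pairs, $v \notin R(u)$. Your identification of the exclusion of type~A as the one point where the classification matters is also the right reading of the definition.
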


Let us summarize what we can compute so far.
We say that instances $(G,k)$ and $(G',k')$ are equivalent if $\opd(G) \le k \Leftrightarrow \opd(G') \le k'$.

\begin{lemma}\label{lem:modulator:compute-augmented}
There is a polynomial-time algorithm that, given an instance $(G,k)$, either correctly concludes that $\opd(G) > k$ or outputs an equivalent instance $(G',k')$, where $k' \le k$ and $G'$ is a subgraph of $G$, along with a $(k',40)$-augmented modulator in $G'$.
\mic{If~$\opd(G) \leq k$ then it holds that $\opd(G') = \opd(G) - (k - k')$.}
\bmp{Moreover, if for every vertex~$v \in V(G)$ there is an outerplanar deletion set~$S \subseteq V(G) \setminus \{v\}$ in~$G$ of size at most~$k$, then~$k' = k$.}
\end{lemma}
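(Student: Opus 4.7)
The plan is to combine the 40-approximation from Theorem~\ref{thm:apx} with repeated applications of Lemma~\ref{lem:apx-undeletable}. Given $(G,k)$, the algorithm first runs the approximation to obtain an outerplanar deletion set $X$. If $|X| > 40k$, the 40-approximation guarantee implies $\opd(G) > k$ and we halt with this conclusion. Otherwise $|X| \le 40k$ and $G-X$ is outerplanar, so setting $X_0 := X$ already satisfies condition~\eqref{def:augmentedmod:mod} and the bound on $|X_0|$ of Definition~\ref{def:augmentedmod}.

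To satisfy condition~\eqref{def:augmentedmod:aug}, for every $v \in X_0$ we would apply Lemma~\ref{lem:apx-undeletable} to the graph $G^* := G - (X_0 \setminus \{v\})$ with distinguished vertex $v$; note that $G^* - v = G - X_0$ is outerplanar as required. The lemma either returns a set $R(v) \subseteq V(G^*) \setminus \{v\} = V(G) \setminus X_0$ with $|R(v)| \le 3k$ such that $G - ((X_0 \setminus \{v\}) \cup R(v))$ is outerplanar, or it certifies that no such set of size at most $k$ exists. If it succeeds for every $v \in X_0$, then $(X_0, X_1)$ with $X_1 := \bigcup_{v \in X_0} R(v)$ is the desired $(k,40)$-augmented modulator, and we output $(G',k') := (G,k)$ together with it.

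Otherwise, for some $v \in X_0$ the lemma fails, and the key claim is that $v$ must lie in every outerplanar deletion set of $G$ of size at most $k$. Suppose for contradiction that $T \subseteq V(G)$ is such a set with $v \notin T$. Then $T' := T \setminus (X_0 \setminus \{v\})$ is a subset of $V(G^*) \setminus \{v\}$ of size at most $k$, and $G^* - T' = G - ((X_0 \setminus \{v\}) \cup T)$ is an induced subgraph of the outerplanar graph $G - T$, hence itself outerplanar; this contradicts the output of Lemma~\ref{lem:apx-undeletable}. Thus we may replace $(G,k)$ by $(G - v, k-1)$ and restart: the instances are equivalent because every outerplanar deletion set of $G$ of size at most $k$ contains $v$ and corresponds bijectively (by removing $v$) to an outerplanar deletion set of $G-v$ of size at most $k-1$.

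Since $k$ strictly decreases in the recursive branch, the procedure halts after at most $k+1$ iterations, and each iteration runs in polynomial time by Theorem~\ref{thm:apx} and Lemma~\ref{lem:apx-undeletable}; clearly $G'$ is an induced subgraph of $G$. The identity $\opd(G') = \opd(G) - (k-k')$ under $\opd(G) \le k$ follows by induction on the number of recursive steps, each of which removes a vertex belonging to every optimal solution. For the final ``moreover'' claim, the hypothesis provides, for every $v \in V(G)$, a size-$\le k$ outerplanar deletion set avoiding $v$, which precisely rules out the forced case; hence the algorithm never recurses and $k' = k$. The only nontrivial point is the argument translating a negative answer of Lemma~\ref{lem:apx-undeletable} into the conclusion that $v$ appears in every size-$\le k$ solution; everything else amounts to bookkeeping.
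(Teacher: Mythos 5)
Your proof is correct and follows essentially the same route as the paper: run the 40-approximation, reject if the set exceeds $40k$, tidy each $v \in X_0$ via Lemma~\ref{lem:apx-undeletable} on $G - (X_0 \setminus \{v\})$, and when the subroutine certifies failure, delete $v$, decrement $k$, and restart. The only difference is that you spell out explicitly why a negative answer forces $v$ into every size-$\le k$ solution (via the restriction $T' = T \setminus (X_0 \setminus \{v\})$), a step the paper leaves implicit.
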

\begin{proof}
We run the 40-approximation algorithm from \cref{thm:apx} to obtain an \op deletion set $X_0$.
If $|X_0| > 40\cdot k$, we conclude that $\opd(G) > k$.
Otherwise, we iterate over $v \in X_0$ and execute the subroutine from \cref{lem:apx-undeletable} with respect to the graph $G_v = G - (X_0 \setminus \{v\})$.
If for any vertex $v$ we have concluded that
$G_v$ does not admit any \op deletion set $S \subseteq V(G_v) \setminus \{v\}$ of size at most $k$, then the same holds for $G$.
This implies that any \op deletion set in $G$ of size at most $k$ (if there is any) must include the vertex $v$
and the instance $(G-v, k-1)$ is equivalent to $(G,k)$.
\mic{Furthermore, in this case $\opd(G') = \opd(G) - 1$ as long as $\opd(G) \le k$.}
We can thus remove the vertex $v$ from $G$, decrease the value of parameter $k$ by 1, and start the process from scratch.
If during this process we reach an~instance $(G',0)$, then $(G,k)$ is satisfiable if and only if $G'$ is \op.
\mic{Observe that if for every vertex~$v \in V(G)$ there is an outerplanar deletion set~$S \subseteq V(G) \setminus \{v\}$ in~$G$ of size at most~$k$, then 
this holds also for the graph $G_v$ and thus we will not apply the reduction rule decreasing the value of $k$.}

Suppose now that for each $v \in X_0$
we have obtained a set $S_v \subseteq V(G_v) \setminus \{v\}$
of size at most $3k$ such that $G - ((X_0 \setminus \{v\}) \cup S_v)$ is \op.
Then setting $R(v) = S_v$ and $X_1 = \bigcup_{v\in X_0} R(v)$ satisfies the requirements of \cref{def:augmentedmod}.
\end{proof}

\mic{The reduction step above is the only one in our algorithm that may decrease the value of~$k$.
Moreover, no further reduction will modify the \op deletion number as long as $\opd(G) \le k$.
This observation will come in useful for bounding the size of minimal minor obstructions to having an \op deletion set of size $k$.
}

As the next step, we would like to bound the number of connected components in~$G-(X_0 \cup X_1)$ and the number of connections between the components and the modulator vertices.
We show that if vertices  $u, v \in X_0 \cup X_1$ are adjacent to sufficiently many components, then at least one of $u,v$ must be removed in any solution of size at most~$k$.
Together with the ``putting back'' property of the augmented modulator,
this allows us to forget some of the edges without modifying the space of solutions of size at most~$k$.
We formalize this idea with the following marking scheme.

\begin{figure}[bt]
  \centering
  \includegraphics{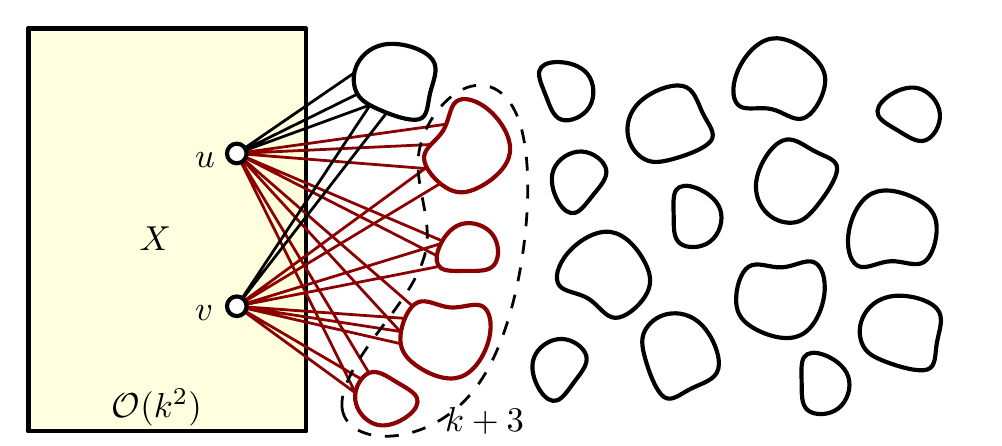}
  \caption{Illustration of \cref{reduction:reduce-degree-new}.
  For each pair $u, v \in X = X_0 \in X_1$ we choose up to $k+3$ components of $G-X$ with edges to both $u$ and $v$ and mark the corresponding edges in the component graph $\mathcal{C}(G, X)$. If a pair $(v,C)$ is not marked in the end, all the edges between $v$ and $C$ are removed.
  }
  \label{fig:reduce-degree-new}
\end{figure}

\begin{reduction}\label{reduction:reduce-degree-new}
Let $G$ be a graph, $k \in \mathbb{N}$,
and $(X_0, X_1)$ be a $(k,c)$-augmented modulator in $G$.
Consider the component graph $\mathcal{C}(G, X_0 \cup X_1)$.
For each pair $u,v \in X_0 \cup X_1$ choose up to $k+3$ components $C_i$ with edges to both $u$ and $v$, and mark the edges $(u,C_i), (v,C_i)$ in $\mathcal{C}(G, X_0 \cup X_1)$.
If an edge $(v,C)$ is unmarked in the end, remove all the edges between $v$ and $C$ in $G$.
If some component $C$ of $G - (X_0 \cup X_1)$ or a vertex $v \in X_0 \cup X_1$ becomes isolated, remove it from $G$.
\end{reduction}

\begin{lemma}[Safeness]
Let $G$ be a graph, $k \in \mathbb{N}$,
and $(X_0, X_1)$ be a $(k,c)$-augmented modulator in $G$.
Let $G'$ be obtained from $G$ by applying 
\cref{reduction:reduce-degree-new} with respect to $(X_0, X_1, k)$.
\mic{If $\opd(G) > k$ then $\opd(G') > k$ and
if $\opd(G) \le k$ then $\opd(G') = \opd(G)$.}
\end{lemma}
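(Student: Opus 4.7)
The inequality $\opd(G') \le \opd(G)$ is automatic because $G'$ is obtained from $G$ by deleting edges (and isolated vertices), so any outerplanar deletion set for $G$ also works for $G'$. Thus the two stated conclusions both reduce to a single claim: for every $S \subseteq V(G)$ with $|S| \le k$, if $G' - S$ is outerplanar then so is $G - S$. From this, $\opd(G) > k$ forces $\opd(G') > k$ (by contrapositive applied to a putative small deletion set of $G'$), while $\opd(G) \le k$ combined with $G' \subseteq G$ gives $\opd(G') = \opd(G)$.

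To prove this key claim, fix such $S$ and suppose for contradiction that $G - S$ contains a $K_4$ or $K_{2,3}$ minor with model $\phi$. Write $X = X_0 \cup X_1$. The only edges in $E(G) \setminus E(G')$ are those between a vertex $v \in X$ and a component $C$ of $G - X$ for which $(v, C)$ is unmarked in $\mathcal{C}(G, X)$, so because $G' - S$ is outerplanar the model $\phi$ must use at least one such restored edge $vw$ with $v \in X \setminus S$, $w \in V(C) \setminus S$, and $(v, C)$ unmarked. Since $K_4$ and $K_{2,3}$ both have at most five branch sets and maximum degree three, only a bounded number of branch sets meet $V(C)$ nontrivially, and the set $U \subseteq X$ of modulator vertices that participate in branch-set edges directed at $V(C)$ in the minor is of bounded cardinality.

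The crux is to exhibit an alternative component $C^*$ that can replace $C$ in $\phi$ yet lives entirely inside $G' - S$. By the marking procedure, for each $u \in U \cup \{v\}$, the pair $(u,v)$ (or the pair itself, when $u=v$ is excluded) selected $k+3$ components each fully connected to both $u$ and $v$ in $G'$. Since $|S| \le k$, at least three of these chosen components for each $u$ avoid $S$. Combining the safe alternatives across the constant-size set $U$ through a counting argument (the $|U|$ pairs in question forbid at most $|U| \cdot k = \Oh(k)$ alternatives in total, while the marking supplies $k+3$ options per pair and $|U|$ is bounded), we obtain a single component $C^*$ with $V(C^*) \cap S = \emptyset$ that is adjacent in $G'$ both to $v$ and to every $u \in U$. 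Substituting $C^*$ for $C$ in $\phi$ yields a minor model of the same $K_4$ or $K_{2,3}$ in $G' - S$, contradicting outerplanarity of $G' - S$.

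The main obstacle is the swap itself: a branch set of $\phi$ may meet $V(C)$ in several vertices arranged in multiple $G[V(C)]$-pieces and tied together through modulator vertices of the same branch set, so precisely describing what \emph{replacing $C$ by $C^*$} means requires care. I would first normalize $\phi$ (for instance by choosing a model that minimizes $\sum_h |\phi(h)|$ or the number of restored edges used) to ensure each branch set meets $V(C)$ in a single connected piece and enters $C$ at a unique attachment per neighboring branch set, and then build each replacement piece inside $C^*$ as a connected subgraph realizing the required adjacencies to $v$ and $U$. The bound on the number of branch sets meeting $V(C)$, coming from the maximum degree three in $K_4$ and $K_{2,3}$, is what keeps $|U|$ small enough for the counting to close.
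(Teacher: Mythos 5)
Your reduction to the key claim (any size-$\le k$ solution of $G'$ is also a solution of $G$) matches the paper, but the core of your argument has a genuine gap. Reduction Rule~\ref{reduction:reduce-degree-new} marks components \emph{per pair}: for the pair $(u,v)$ it selects up to $k+3$ components adjacent to both $u$ and $v$, but for two distinct vertices $u_1,u_2\in U$ these are two unrelated families of components. Nothing in the marking scheme guarantees a single component adjacent to $v$ and to \emph{every} $u\in U$ simultaneously, so the counting argument you invoke cannot produce the component $C^*$; the $k+3$ options per pair only let you dodge $S$ within each pair's own family, and the intersection of the families may be empty. A second problem is the swap itself: even if $C^*$ existed, the minor model may realize an adjacency between two branch sets via an internal edge of $C$, and $C^*$ is an arbitrary outerplanar component whose internal structure need not support the same pattern; normalizing $\phi$ does not remove this obstruction. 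Finally, your argument never uses the defining property of the \emph{augmented} modulator (that $G-((X_0\cup X_1)\setminus\{v\})$ is outerplanar), which is what actually makes the rule safe.

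The paper's proof uses the per-pair marking exactly as given: it considers one removed pair $(v,C)$ at a time, assumes $v\notin S$, and shows that every other modulator vertex $u$ with an edge to $C$ must lie in $S$ --- otherwise, among the $k+3$ marked components for $(u,v)$ at least three avoid $S$, and together with $u$ and $v$ they form a $K_{2,3}$ minor already present in $G'-S$, a contradiction. Hence $v$ is the unique neighbor of $C$ in $G-S$; since $G[V(C)\cup\{v\}]$ is outerplanar by \cref{lem:augmentedmod:properties}, the articulation-point criterion (\cref{criterion:articulation-point}) yields outerplanarity of $G-S$. You would need to replace your rerouting step with an argument of this kind rather than attempting to relocate the forbidden minor.
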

\begin{proof}
\mic{It suffices to show that any solution in $G'$ of size at most $k$ is also valid in $G$.}
Removing an \op connected component is always safe so it suffices to argue for the correctness of the edge removal rule.
Consider a single step of the reduction in a graph $G$, in which we have removed the edges between vertex $v \in (X_0 \cup X_1)$ and a connected component $C$ of $G-(X_0 \cup X_1)$.
Let $G'$ be the graph after this modification and $S$ be an \op deletion set of size at most $k$ in $G'$.
If $v \in S$, then $G' - S = G - S$ so let us assume that $v \not\in S$.

Suppose there is another $u \in X_0 \cup X_1$ with an edge to $C$ in $G$.
Since the pair $(v,C)$ was not marked, there are $k+3$ components $C_i$, different from $C$, of $G-(X_0 \cup X_1)$ with edges to both $u$ and $v$.
These pairs were marked, so they cannot be removed in any previous reduction step.
By {a} counting argument, at least 3 of these components have empty intersections with $S$.
If $u \not\in S$, then these components together with $\{u,v\}$ form a minor model of $K_{2,3}$ in $G'-S$, which is not possible.
Therefore, $u \in S$.

It follows that $v$ is the only neighbor of $C$ in $G - S$.
By \cref{lem:augmentedmod:properties} we can ``put back'' $v$ into $G-(X_0 \cup X_1)$ without spoiling the outerplanarity and so the graph $(G-S)\brb C$ being the subgraph of $G[C \cup \{v\}]$ is \op.
The graph $G - S - C$ is a subgraph of $G' - S$, so it is also \op.
The intersection of their vertex sets is exactly $\{v\}$ so
from \cref{criterion:articulation-point} we obtain that $G - S$ is \op.
\end{proof}

Now we show that after application of \cref{reduction:reduce-degree-new} the component graph $\mathcal{C}(G, X_0 \cup X_1)$ cannot be too large.
This will come in useful for proving further upper bounds.
We could trivially bound the number of its edges by $|X_0 \cup X_1|^2\cdot(k+3) = \Oh(k^5)$ but, thanks to the properties of the augmented modulator, we can be more economical.
First, we need a~simple observation about bipartite \op graphs.

\begin{prop}\label{lem:outerplanar-bipartite}
Consider an \op bipartite graph $(X \cup Y, E)$ such that all the vertices in $Y$ have {degree} at least two. Then $|Y| \le 4\cdot|X|$ and $|E| \le 10\cdot|X|$.
\end{prop}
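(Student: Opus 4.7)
The plan is to bound $|Y|$ first via an auxiliary outerplanar graph on vertex set $X$ that encodes two chosen neighbors of each $y \in Y$, and then derive the bound on $|E|$ from the generic edge bound in \cref{lem:prelim:op-edges}.

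Since each $y \in Y$ has at least two neighbors in $X$, I can pick two distinct ones $a_y, b_y \in X$ for every $y \in Y$. Let $H$ be the simple graph on vertex set $X$ with edge set $\{\{a_y, b_y\} : y \in Y\}$. My first claim is that $H$ is a minor of $G$: contracting the edge $ya_y$ in $G$, for each $y \in Y$, absorbs $y$ into $a_y$ and turns the former edge $yb_y$ into $\{a_y, b_y\}$. After performing all these contractions the resulting graph has vertex set $X$ and contains every edge of $H$, so deleting any surplus edges (and parallel copies) yields $H$. Since outerplanarity is preserved under minors, $H$ is outerplanar, and \cref{lem:prelim:op-edges} gives $|E(H)| \le 2|X|$.

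Different $y$-vertices may select the same pair, so $|Y|$ need not equal $|E(H)|$. However, if three distinct $y_1, y_2, y_3 \in Y$ were all assigned to the same pair $\{x, x'\}$, then $\{x, x'\} \subseteq N_G(y_i)$ for each $i \in [3]$, so $\{x, x', y_1, y_2, y_3\}$ would span a copy of $K_{2,3}$ in $G$, contradicting outerplanarity. Hence each edge of $H$ arises from at most two vertices of $Y$, giving $|Y| \le 2|E(H)| \le 4|X|$.

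The second bound then follows by applying \cref{lem:prelim:op-edges} to $G$ itself: $|E| \le 2|V(G)| = 2(|X| + |Y|) \le 2(|X| + 4|X|) = 10|X|$. The only subtle step is the minor argument; the contracted graph may contain additional edges arising from $y \in Y$ with degree greater than two in $G$, but those play no role in the proof since $H$ already appears as a subgraph of the resulting graph and hence is itself a minor of $G$.
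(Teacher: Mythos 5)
Your proof is correct and follows essentially the same route as the paper: both arguments contract each degree-$\ge 2$ vertex of $Y$ into one of its two chosen neighbors to obtain an outerplanar minor on vertex set $X$ with at most $2|X|$ edges, use a $K_{2,3}$ argument to show each resulting edge is produced by at most two vertices of $Y$, and then apply the $|E| \le 2|V|$ bound for outerplanar graphs to the whole bipartite graph. No issues.
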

\begin{proof}
Remove part of the edges so that each vertex in $Y$ has degree exactly two.
Now contract each vertex from $Y$ to one of its neighbors.
The constructed graph is a minor of $(X \cup Y, E)$ with a vertex set $X$, so it is \op and the number of edges is at most $2\cdot|X|$ by \cref{lem:prelim:op-edges}.
Each edge could have been obtained by at most 2 different contractions, as otherwise  $(X \cup Y, E)$ would contain $K_{2,3}$ as a minor.
Therefore $|Y| \le 4\cdot|X|$.
Again by \cref{lem:prelim:op-edges}, the number of edges in  $(X \cup Y, E)$ is at most $2\cdot(|X| + |Y|) \le 10\cdot|X|$.
\end{proof}

Recall the types of pairs from \cref{def:augmentedmod} and their properties from
\cref{lem:augmentedmod:properties}.
We know that the number of type-A pairs is at most $c(3+c) \cdot k^2$ and the number of type-B pairs is at most $3c^2 \cdot k^3$.
Moreover, pairs of type $B$ can be inserted back \bmp{into} $G-(X_0 \cup X_1)$ without affecting its outerplanarity.

\begin{lemma}\label{lem:component-graph-x0-x1-edge-bound}
After the application of Rule~\ref{reduction:reduce-degree-new} with respect to a $(k,c)$-augmented modulator ($X_0, X_1)$, the number of vertices and edges in $\mathcal{C}(G, X_0 \cup X_1)$ is at most $f_1(c) \cdot (k+3)^3$, where $f_1(c) = 14c^2 + 60c$.
\end{lemma}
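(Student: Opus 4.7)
The plan is to bound both the number of vertices and the number of edges of $\mathcal{C}(G, X_0 \cup X_1)$ by quantities of order $c^2 (k+3)^3$ via a three-step argument. First, note that after \cref{reduction:reduce-degree-new} is applied, every remaining component $C$ of $G - (X_0 \cup X_1)$ has degree at least $2$ in $\mathcal{C}(G, X_0 \cup X_1)$: a component with a unique neighbor $v$ cannot have any edge marked, since marking requires a pair $(u, v)$ with $u \ne v$ both adjacent to $C$. All its edges are therefore removed, eliminating the component.

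The crux of the argument is the observation that the bipartite subgraph $B$ of $\mathcal{C}(G, X_0 \cup X_1)$ between $X_1$ and the set $\mathcal{C}$ of remaining components is outerplanar. Contracting each $C \in \mathcal{C}$ inside the outerplanar graph $G - X_0$ yields an outerplanar minor whose edge set contains $B$ as a subgraph. Splitting $\mathcal{C}$ into $\mathcal{C}_{\ge 2}$ (components with at least two neighbors in $X_1$) and $\mathcal{C}_{\le 1}$ (the rest), \cref{lem:outerplanar-bipartite} applied to the subgraph of $B$ induced on $X_1 \cup \mathcal{C}_{\ge 2}$ gives $|\mathcal{C}_{\ge 2}| \leq 4|X_1|$ together with at most $10|X_1|$ edges between $\mathcal{C}_{\ge 2}$ and $X_1$. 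Since each component in $\mathcal{C}_{\le 1}$ contributes at most one edge to $X_1$, the total number of edges between $X_1$ and $\mathcal{C}$ is at most $10|X_1| + |\mathcal{C}_{\le 1}|$.

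It remains to bound the number of edges between $X_0$ and $\mathcal{C}$, as well as $|\mathcal{C}_{\le 1}|$; both follow from a per-vertex count for $v \in X_0$. The degree of $v$ in the component graph is at most the total number of components selected by pairs involving $v$. For the at most $(|X_0| - 1) + |R(v)| \leq (c+3)(k+3)$ type-A pairs involving $v$, the rule caps the selection at $k + 3$. For the at most $|X_1| \leq 3c(k+3)^2$ type-B pairs $(u, v)$ involving $v$, \cref{lem:augmentedmod:properties} implies that $G - (X_0 \cup X_1 \setminus \{u,v\})$ is outerplanar; three components adjacent to both $u$ and $v$ would yield three internally vertex-disjoint $(u, v)$-paths and hence a forbidden $K_{2,3}$-minor, so each such pair selects at most $2$ components. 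This yields degree at most $(7c+3)(k+3)^2$, and summing over $|X_0| \le c(k+3)$ gives at most $(7c^2 + 3c)(k+3)^3$ edges between $X_0$ and $\mathcal{C}$. Since each component in $\mathcal{C}_{\le 1}$ has at least one $X_0$-neighbor, $|\mathcal{C}_{\le 1}|$ is bounded by the same quantity. Combining all contributions and absorbing lower-order terms into $(k+3)^3$ yields the claim with $f_1$ of the required form.

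The main difficulty to overcome is that a naive per-vertex count for $X_1$, analogous to the one used for $X_0$, would multiply $|X_1| = \Theta(k^2)$ by a per-vertex bound containing a type-C contribution of $\Theta(k^2)$, giving a total of $\Theta(k^4)$ edges and thus exceeding the target bound. It is precisely the outerplanarity of $B$, leveraged through \cref{lem:outerplanar-bipartite}, that replaces this contribution with $O(|X_1|) = O(k^2)$ for the dominant edge count between $X_1$ and $\mathcal{C}$.
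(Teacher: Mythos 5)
Your proof is correct and follows essentially the same route as the paper's: both bound the surviving (i.e., marked) edges using the $k+3$ cap for type-A pairs, the at-most-two-common-components property of type-B pairs via \cref{lem:augmentedmod:properties}, and \cref{lem:outerplanar-bipartite} applied to the outerplanar bipartite graph between $X_1$ and the components. The only difference is bookkeeping: you partition edges by whether they touch $X_0$ or $X_1$ and charge $\mathcal{C}_{\le 1}$ to its $X_0$-neighbors, whereas the paper partitions the marked edges by the type of pair that marked them (applying the outerplanarity argument only to the type-C-marked edges); both organizations yield constants within $f_1(c)=14c^2+60c$.
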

\begin{proof}
For pairs of type A we have marked at most $(k+3)\cdot c(3+c) \cdot k^2$ edges.
If $(u,v)$ is of type B, then by \cref{lem:augmentedmod:properties} the graph $G - (X_0 \cup X_1 \setminus \{u,v\})$ is \op and there can be at most 2 components adjacent to both $u,v$ as otherwise we would obtain a $K_{2,3}$-minor.
Hence, for pairs of type B we have marked at most $2\cdot 3c^2 \cdot k^3$ edges.

Next, we argue that the total number of edges marked due to pairs of type C is $30c \cdot k^2$.
Let $E_C \subseteq E(\mathcal{C}(G, X_0 \cup X_1))$ denote the set of these edges.
Let $Y_C \subseteq V(\mathcal{C}(G, X_0 \cup X_1))$ be the set of these connected components of  $G - (X_0 \cup X_1)$ which are incident to at least one edge from $E_C$ in $\mathcal{C}(G, X_0 \cup X_1)$.
By the definition of the marking scheme, if $C \in Y_C$ then $C$ is in fact incident to at least 2 edges from $E_C$, and their other endpoints belong to $X_1$. Consider the subgraph $(X_1 \cup Y_C, E_C)$ of $\mathcal{C}(G, X_0 \cup X_1)$.
It is a minor of $G-X_0$, therefore it is \op.
By \cref{lem:outerplanar-bipartite}, we get that $|E_C| \le 10 \cdot |X_1| = 10 \cdot 3c \cdot k^2.$

We can thus estimate the number of edges in $\mathcal{C}(G, X_0 \cup X_1)$ by $(7c^2 + 3c) \cdot (k+3)^3 + 30c \cdot k^2 \le (7c^2 + 30c) \cdot (k+3)^2$.
Finally, since $\mathcal{C}(G, X_0 \cup X_1)$ contains no isolated vertices, the number of vertices is at most twice the number of edges.
\end{proof}

\subsection{The outerplanar decomposition}

We proceed by enriching the augmented modulator further.
We would like to provide additional properties \bmp{at the expense} of growing the modulator size to $\Oh(k^3)$.
For two vertices $u,v$ in an augmented modulator $(X_0, X_1)$ ideally we would like to ensure that no two components of $G-(X_0 \cup X_1 \cup Z)$ are adjacent to both $u$ and~$v$, where $Z$ is some vertex set of size $\Oh(k^3)$.
This is not always possible, but we will guarantee that in such a case any \op deletion set of size at most $k$ must contain either $u$ or $v$.

\begin{definition}
Let $Y \subseteq V(G)$ be a vertex subset in a graph $G$.
We say that $u,v \in Y$ are $Y$-separated if no connected component of $G - Y$ is adjacent to both $u$ and~$v$.
\end{definition}

In \cref{lem:modulator:outerplanar-separator} we are going to show that when $G$ is \op and $X \subseteq V(G)$, then there always exists a small set $Y \subseteq V(G)$ so that every pair from $X$ is $(X \cup Y)$-separated. \bmp{Towards that goal, the need the following proposition.}

\begin{prop}\label{prop:modulator:single-separation}
Let $X \subseteq V(G)$ be an independent set in an \op graph $G$.
Then there exists $v \in X$ and $S \subseteq V(G) \setminus X$ of size at most \bmp{four}, so that $S$ is a $(v, X \setminus v)$-separator in $G$.
\end{prop}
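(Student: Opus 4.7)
The plan is to argue by induction on $|V(G)|$. In the base case $|V(G) \setminus X| \le 4$, the whole set $S = V(G) \setminus X$ is itself a valid $(v, X \setminus v)$-separator of size at most $4$, for any choice of $v \in X$.

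For the inductive step, the key observation is that every outerplanar graph has a vertex of degree at most $2$, for instance an interior vertex of a leaf inner face in the weak dual of some $2$-connected block. If any $v \in X$ already satisfies $\deg_G(v) \le 4$, then since $X$ is independent, $S := N_G(v)$ lies in $V(G) \setminus X$, has size at most $4$, and separates $v$ from $X \setminus v$. Hence we may assume every $X$-vertex has degree at least $5$, which forces the low-degree vertex $w$ to lie outside $X$. The plan is then to delete $w$---possibly after installing a shortcut edge---to obtain a strictly smaller outerplanar graph $G'$ that still contains $X$ as an independent set, and to apply induction. If $\deg_G(w) \le 1$, this deletion is immediate. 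If $\deg_G(w) = 2$ with neighbors $u_1, u_2$ lying in the same biconnected component of $G$, then a shortest $(u_1, u_2)$-path in $G - w$ together with $u_1 - w - u_2$ forms an induced cycle through $u_1$ and $u_2$, so \cref{lem:induced-cycle} lets us add the edge $u_1 u_2$ (if absent) without breaking outerplanarity; then $w$ is deleted. Any $(v, X \setminus v)$-path through $w$ in $G$ becomes a path through the new edge $u_1 u_2$ in $G'$, so the separator returned by induction lifts. Articulation points one of whose sides is $X$-free are pruned by removing that side.

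The principal obstacle, and the place where the proof is most intricate, is the subcase where $\deg_G(w) = 2$ and both neighbors $u_1, u_2$ lie in $X$, since introducing the edge $u_1 u_2$ would violate $X$-independence. The saving observation is that $w$ cannot appear as an interior vertex of any path from $v' \in X$ to $y \in X \setminus v'$ whose intermediate vertices all lie in $V(G) \setminus X$, because both of $w$'s neighbors are in $X$. Consequently, the separator $(v', S')$ returned by applying induction to $G - w$ is still valid in $G$ whenever $v' \notin \{u_1, u_2\}$. The only remaining complication is when induction picks $v' \in \{u_1, u_2\}$, at which point the short path $u_1 - w - u_2$ reappears as a witness against $S'$; this case is resolved by choosing $v \in \{u_1, u_2\}$ explicitly and performing a more refined reduction (for example, removing $w$ together with a chosen $u_i$ from $G$ and re-examining the structure around the remaining $X$-vertex) so that the remaining budget of at most three non-$w$ separator vertices suffices. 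Orchestrating this last case, where the $X$-independence constraint fights hardest against the reduction toolbox, is where the proof requires the most care.
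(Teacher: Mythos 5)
Your overall strategy (induction on $|V(G)|$ via a low-degree vertex of the outerplanar graph) is genuinely different from the paper's, which is non-inductive: the paper takes a width-two tree decomposition, picks the vertex $v\in X$ whose topmost bag $B_v$ is deepest, observes that every $(v,X\setminus v)$-path must meet $B_v\setminus v$ (a set of at most two vertices), and then bounds a minimum separator by Menger plus $K_{2,3}$-freedom. That choice of $v$ is exactly what lets the paper control the interface to two vertices and avoid any case analysis on which vertex the argument ``happens to pick.'' Your first two cases (small $V(G)\setminus X$; some $v\in X$ of degree at most four, where $N_G(v)$ itself is the separator by independence of $X$) are correct, as is the shortcut-and-delete step for a degree-two vertex $w$ with at most one neighbor in $X$.

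However, the final case --- $\deg_G(w)=2$ with both neighbors $u_1,u_2\in X$, and the inductive call on $G-w$ returning $v'\in\{u_1,u_2\}$ --- is a genuine gap, not just a deferred detail. The natural repair, adding $w$ to the returned separator $S'$, gives size $|S'|+1\le 5$, which overshoots the budget; the inductive hypothesis only guarantees $|S'|\le 4$, and it guarantees a separator for only \emph{one} vertex of $X$, which may be exactly the bad one. Your proposed refinement, ``removing $w$ together with a chosen $u_i$ and re-examining the structure around the remaining $X$-vertex,'' does not obviously close this: after deleting $u_i$ the set $X$ changes, the inductive statement applies to $X\setminus\{u_i\}$, and the vertex it returns need not be $u_{3-i}$ nor remain separable once $w$ and $u_i$ are restored. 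Note also that the generic Menger bound the paper uses in its hardest case relies on the target set having only two elements (five disjoint paths to a two-element set force three to a single vertex, hence a $K_{2,3}$); for your $v'\in\{u_1,u_2\}$ the target is all of $X\setminus v'$, so no analogous pigeonhole bound of four is available for free. To make the induction go through you would need a strictly stronger hypothesis (e.g., that \emph{at least two} vertices of $X$ admit such separators, in the spirit of ``every triangulated polygon has two ears''), or an argument that some degree-$\le 2$ vertex with at most one neighbor in $X$ always exists under your standing assumptions; as written, neither is established. A smaller, related loose end: when $w$ is a cut vertex of degree two and both sides of $G-w$ contain vertices of $X$, your pruning step does not apply and the lifting of the separator from $G-w$ to $G$ is not justified.
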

\begin{proof}
Consider a tree decomposition~\bmp{$(T,\chi)$ of $G$ of width \bmp{two} where~$T$ is rooted at an arbitrary node~$r$.}
For a vertex \bmp{$v \in V(G)$} let $t_v \in V(T)$ be the node which is closest to the root~$r$, among those whose \bmp{bag} contain $v$.
Consider $v \in X$ for which $t_v$ is furthest from the root (if there are many, pick any of them) and let \bmp{$B_v := \chi(t_v)$}.
\bmp{By standard properties of tree decompositions,} any path from $v$ to $X \setminus v$ either goes through $B_v \setminus v$ or ends at $B_v \setminus v$.

If $(B_v \setminus v) \cap X = \emptyset$, set $S = B_v \setminus v$.
If $B_v \setminus v = \{u_1, u_2\}$, where $u_1 \in X$, $u_2 \not\in X$, consider a minimal $(v,u_1)$-separator $S'$ and set $S = S' \cup \{u_2\}$.
There cannot be \bmp{three} vertex-disjoint paths connecting $v,u_1$ as $vu_1 \not\in E(G)$ and this would give a minor model of $K_{2,3}$ in $G$.
Therefore \bmp{by Menger's theorem} we have $|S'| \le 2$ and $|S| \le 3$.
Finally, suppose $(B_v \setminus v) \subseteq X$.
In this case, let $S$ be a minimal $(v,B_v \setminus v)$-separator.
If there were \bmp{five} vertex-disjoint paths connecting $v$ and $B_v \setminus v$ then in particular there would be \bmp{three} vertex-disjoint paths connecting $v$ and some $u \in B_v \setminus v$, which would again give a minor model of $K_{2,3}$ in $G$.
Therefore $|S| \le 4$.

Suppose there is a path in $G \setminus S$ connecting $v$ with some $x \in X \setminus v$.
It contains a subpath connecting $v$ with some $u \in B_v \setminus v$.
If $u \not\in X$, then $u \in S$, so suppose that $u \in X$.
But $S$ contains a $(v,u)$-separator, so such path cannot exist in $G - S$.
\end{proof}

\begin{lemma}\label{lem:modulator:outerplanar-separator}
There is a polynomial-time algorithm that, given
a vertex set $X \subseteq V(G)$ in an \op graph $G$, finds a vertex set $Y \subseteq V(G) \setminus X$ of size {at most} $4\cdot|X|$, so that every pair $u,v \in X$ \bmp{with~$\hui{u} \neq \hui{v}$} is $(X \cup Y)$-separated.
\end{lemma}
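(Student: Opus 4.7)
The plan is to reduce the problem to the independent-modulator setting handled by \cref{prop:modulator:single-separation} and then iterate. The first step is to pass from $G$ to the subgraph $H := G - E(G[X])$, which is still outerplanar (as a subgraph of $G$) and in which $X$ is now independent. The crucial observation is that for any $Y \subseteq V(G) \setminus X = V(H) \setminus X$, the vertex sets of the components of $G - (X \cup Y)$ and of $H - (X \cup Y)$ coincide, since the edges deleted when forming $H$ have both endpoints in $X$ and are therefore absent from both $G - (X \cup Y)$ and $H - (X \cup Y)$. In particular, $(X \cup Y)$-separation of a pair from $X$ holds in $G$ if and only if it holds in $H$, so it suffices to construct $Y$ for $H$.

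For the construction I would proceed by induction on $|X|$, treating $|X| \le 1$ as the trivial base case with $Y = \emptyset$. For the inductive step, apply \cref{prop:modulator:single-separation} to $(H, X)$ to obtain $v \in X$ and $S \subseteq V(H) \setminus X$ with $|S| \le 4$ such that $S$ is a $(v, X \setminus v)$-separator in $H$. Then recursively solve the problem on the outerplanar graph $H - v$ with the independent set $X \setminus v$ to obtain a set $Y'$ of size at most $4(|X| - 1)$, and return $Y := S \cup Y'$, which satisfies $|Y| \le 4|X|$ and $Y \subseteq V(H) \setminus X$.

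For correctness, consider distinct $u, u' \in X$. If $u = v$ (the symmetric case is analogous), then a component $C$ of $H - (X \cup Y)$ adjacent to both would contain vertices $c, c'$ with $cv, c'u' \in E(H)$, and the concatenation of the edge $vc$, a path from $c$ to $c'$ inside $C$, and the edge $c'u'$ would form a $v$-to-$u'$ path in $H - S$, contradicting the separator property of $S$. If instead $u, u' \in X \setminus v$, then every component of $H - (X \cup Y)$ is contained in a component of $(H - v) - ((X \setminus v) \cup Y')$, so any $(X \cup Y)$-adjacency to both $u$ and $u'$ lifts to a $((X \setminus v) \cup Y')$-adjacency to both in $H - v$, contradicting the inductive hypothesis.

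Finally, the algorithm runs in polynomial time because \cref{prop:modulator:single-separation} is constructive (its proof only uses a width-two tree decomposition computable in linear time and minimum $(v, \cdot)$-separator computations via Menger's theorem), and it is invoked at most $|X|$ times. I expect the main conceptual point to verify carefully is that the $(X \cup Y)$-separation property really does transfer between $G$ and $H$ and between recursion levels; once that bookkeeping is in place, the size bound and correctness follow immediately from the inductive set-up.
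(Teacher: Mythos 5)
Your proof is correct and follows essentially the same route as the paper: reduce to the case where $X$ is independent by discarding the edges inside $X$ (noting this does not affect which components of $G-(X\cup Y)$ are adjacent to which modulator vertices), then repeatedly apply \cref{prop:modulator:single-separation} to peel off one vertex $v\in X$ together with a separator of size at most four. The paper states this in three sentences; your write-up merely makes the transfer of the separation property between $G$ and $H$ and between recursion levels explicit, which is a faithful elaboration rather than a different argument.
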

\begin{proof}
We can assume that $X$ is an independent set in $G$ because
\bmp{removing edges} between vertices in $X$ does not affect the neighborhood of a connected component in~$G - (X \cup Y)$.
Initialize $Y = \emptyset$.
By \cref{prop:modulator:single-separation} we can find a vertex~$v \in X$ that can be separated from~$X \setminus v$ by at most 4 vertices.
Add these vertices to $Y$ and repeat this operation recursively on~$X \setminus v$.
\end{proof}

Given an augmented modulator $(X_0, X_1)$, we would like to find a set $Z$ of moderate size so that for each pair $(u,v)$ from $X_0 \cup X_1$ either $u,v$ are $(X_0 \cup X_1 \cup Z)$-separated or there exist $k+4$ \bmp{internally vertex-disjoint} paths, with non-empty interior, connecting $u$ and~$v$ in $G$.
If the latter case occurs, then any \op deletion set of size bounded by $k$, can intersect at most $k$ of these paths' interiors.
Therefore, this solution must remove either $u$ or $v$ in order to get rid of \hui{all} $K_{2,3}$-minor\hui{s}.
We remark that this property already holds if we request $k+3$ disjoint $(u,v)$-paths, but in this stronger form it also holds for a graph obtained from $G$ by an edge removal.
This fact will be crucial for the safeness proof for \cref{reduction:irrelevant-edge}.

In order to find the set $Z$, we could consider all pairs $(u,v)$ from $X_0 \cup X_1$ and, if there exists an $(u,v)$-\bmp{separator} of size at most $k+3$, add it to $Z$.
This however would make $Z$ as large as $\Oh(k^5)$.
We \hui{can} make this process more economical \hui{by} analyzing what happens for
different types of pairs from \cref{def:augmentedmod}.
Recall that the number of type-A pairs is at most $c(3+c) \cdot k^2$ and the number of type-B pairs is at most $3c^2 \cdot k^3$.

\begin{lemma}\label{lem:modulator:decomposition-xz}
There is a polynomial-time algorithm that, given an instance $(G,k)$ with $(k,c)$-augmented modulator $(X_0, X_1)$, returns a set $Z \subseteq V(G) \setminus (X_0 \cup X_1)$ of size at most  $f_2(c) \cdot (k+3)^3$, where $f_2(c) = 4c^2 + 15c$, such that for each pair $u,v \in X_0 \cup X_1$ \bmp{of distinct vertices} one of the following holds:
\begin{enumerate}
    \item vertices $u,v$ are $(X_0 \cup X_1 \cup Z)$-separated, or
    \item there are $k+4$ vertex-disjoint paths, with non-empty interior, connecting $u$ and $v$ in $G$.\label{lem:modulator:decomposition-xz:item:many-paths}
\end{enumerate}
\end{lemma}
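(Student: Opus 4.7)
The plan is to iterate over pairs $(u,v) \in X_0 \cup X_1$, partitioned into the three types A, B, C from \cref{def:augmentedmod}, and to construct $Z$ incrementally using a different tool for each type. The key reduction that unifies the cases is that, writing $G_{uv} := (G \setminus uv) - (X_0 \cup X_1 \setminus \{u,v\})$, a direct unfolding of the definitions shows that $u, v$ are $(X_0 \cup X_1 \cup Z)$-separated in $G$ if and only if $Z$ contains a $(u,v)$-separator in $G_{uv}$. By Menger's theorem, if the minimum such separator has size at least $k+4$, then $G_{uv}$ contains $k+4$ internally vertex-disjoint $(u,v)$-paths, each with non-empty interior (since $u,v$ are non-adjacent in $G_{uv}$), which are paths in $G$ giving the second alternative of the lemma.

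For the at most $c(c+3)k^2$ \emph{type-A} pairs I would compute a minimum $(u,v)$-separator in $G_{uv}$ via max-flow; if its size is at most $k+3$ the separator goes into $Z$, otherwise the second alternative already holds. This contributes at most $(c^2+3c)(k+3)^3$ vertices. For the at most $3c^2 k^3$ \emph{type-B} pairs, \cref{lem:augmentedmod:properties} makes $G - (X_0 \cup X_1 \setminus \{u,v\})$ outerplanar, so $G_{uv}$ is outerplanar with $u,v$ non-adjacent; three internally vertex-disjoint $(u,v)$-paths with non-empty interior would create a $K_{2,3}$-minor, so the minimum separator has size at most $2$, and adding it to $Z$ for each such pair costs at most $6c^2(k+3)^3$ vertices overall.

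The main trick is for the $\Theta(c^2 k^4)$ \emph{type-C} pairs, which are too numerous to handle individually but whose endpoints all lie in the outerplanar graph $G - X_0$. I would apply \cref{lem:modulator:outerplanar-separator} once to $G - X_0$ with $X := X_1$, obtaining $Y \subseteq V(G) \setminus (X_0 \cup X_1)$ of size at most $4|X_1| \le 12ck^2$ such that every pair of distinct vertices in $X_1$ is $(X_1 \cup Y)$-separated in $G - X_0$; a short argument—any component of $G - (X_0 \cup X_1 \cup Y)$ lies inside some component of $(G - X_0) - (X_1 \cup Y)$—lifts this to $(X_0 \cup X_1 \cup Y)$-separation in $G$, simultaneously resolving all type-C pairs. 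Adding all of $Y$ to $Z$ contributes at most $12c(k+3)^3$ vertices, and summing the three contributions gives $|Z| = \Oh((c^2+c)(k+3)^3)$, of the claimed form; matching the precise constant $f_2(c) = 4c^2 + 15c$ is a matter of sharper bookkeeping (for instance, counting unordered pairs, or folding the type-B step into one per-$u \in X_0$ application of \cref{lem:modulator:outerplanar-separator} inside the outerplanar graph $G - ((X_0 \setminus \{u\}) \cup R(u))$). The main conceptual obstacle is that for type-A pairs no single outerplanar subgraph of $G$ contains both endpoints—this is precisely why the $k+3$ max-flow bound cannot be improved there—whereas the ``insertion back'' property of the augmented modulator rescues types B and C, and this contrast is what forces the three-way case analysis.
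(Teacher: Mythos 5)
Your proposal is correct and follows essentially the same route as the paper: minimum $(u,v)$-separators of size at most $k+3$ (via Menger) for type-A pairs, the outerplanarity-forced size-$2$ separators for type-B pairs, and a single application of \cref{lem:modulator:outerplanar-separator} to $G-X_0$ with $X_1$ for all type-C pairs at once. The only divergence is in the final bookkeeping, where your honest tally of roughly $7c^2+15c$ does not match the stated $f_2(c)=4c^2+15c$ (the paper's own summation arguably suffers the same discrepancy on the type-B term), but this affects only the constant in $f_2$ and nothing downstream.
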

\begin{proof}
Initialize $Z_0 = \emptyset$.
Consider all the pairs $(u,v)$ from the augmented modulator.
If $(u,v)$ is of type A or B, compute \bmp{a} minimum $(u,v)$-separator $S_{u,v}$ \bmp{with~$u,v \notin S_{u,v}$} in $G - (X_0 \cup X_1 \setminus \{u,v\}) \setminus uv$, that is, we remove the edge $uv$ if it exists.
If $|S_{u,v}| \le k+3$, add $S_{u,v}$ to $Z_0$.
Recall {from \cref{lem:augmentedmod:properties}} that if $(u,v)$ is of type B, then the graph $G - (X_0 \cup X_1 \setminus \{u,v\})$ is \op, so $|S_{u,v}| \le 2$, as otherwise we could construct a $K_{2,3}$ minor.
For pairs of type A we add at most $(k+3)\cdot c(3+c) \cdot k^2$ elements, and for pairs of type B at most $2\cdot 3c^2 \cdot k^3$ elements.
\mic{If the pair $(u, v)$ does not satisfy condition (\ref{lem:modulator:decomposition-xz:item:many-paths}), then the set $Z_0$ contains a set $S_{u,v}$ which forms a $(u,v)$-separator in $G - (X_0 \cup X_1 \setminus \{u,v\}) \setminus uv$.
Therefore $u, v$ belong to different connected components of $G - (X_0 \cup X_1 \cup Z_0 \setminus \{u,v\}) \setminus uv$ and so they are $(X_0 \cup X_1 \cup Z_0)$-separated.} 

To cover pairs of type $C$, consider the \op graph $G-X_0$.
By \cref{lem:modulator:outerplanar-separator} we can find a vertex set $Z_1 \subseteq V(G) \setminus (X_0 \cup X_1)$ of size $4 \cdot |X_1| \le 12c \cdot k^2 $ so that all pairs $u,v\in X_1$ are $(X_1 \cup Z)$-separated in $G - X_0$.
We return the set $Z_0 \cup Z_1$, which has no more than
$(k+3)\cdot c(3+c) \cdot k^2 + 2\cdot 3c^2 \cdot k^3 + 12c \cdot k^2 \le (4c^2 + 15c) \cdot (k+3)^3$ elements.
\end{proof}

We would like to simplify the interface between a connected component $C$ of $G - (X_0 \cup X_1 \cup Z)$ and the rest of the graph.
Since~$G - X_0$ is \op it has treewidth at most two, which implies there is a tree decomposition in which each pair of distinct bags intersects in at most~$2$ vertices. When constructing a separator~$Z' \supseteq Z$ via the LCA closure, the neighborhood of each connected component~$C$ of~$G-Z'$ within the set~$Z'$ is contained in at most two bags of the decomposition. This allows us to guarantee that~$|N_G(C) \cap Z'| \leq 4$. 

\begin{lemma}\label{lem:modulator:lca}
There is a polynomial-time algorithm that, given an \op graph $G$ and $Z \subseteq V(G)$, returns a set $Z' \supseteq Z$ of size at most $6 \cdot |Z|$ such that each connected component of $G - Z'$ has at most \bmp{four} neighbors in $Z'$.
\end{lemma}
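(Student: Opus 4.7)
The plan is to invoke the LCA-closure machinery of \cref{lem:lca:basic} and \cref{lem:lca:fewneighbors} on a tree decomposition of $G$ of width at most two, which exists because $G$ is outerplanar. First I would compute such a decomposition $(T,\chi)$ in polynomial time using the algorithm cited in the preliminaries, and root $T$ at an arbitrary node. For each $z \in Z$, pick one node $t_z \in V(T)$ such that $z \in \chi(t_z)$, and let $S = \{t_z : z \in Z\}$, so $|S| \le |Z|$. Then take $M := \overline{\mathsf{LCA}}(S)$ and define
\[
Z' \;:=\; \bigcup_{t \in M} \chi(t).
\]
By construction $z \in \chi(t_z) \subseteq Z'$ for every $z \in Z$, so $Z \subseteq Z'$ as required.

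For the size bound, \cref{lem:lca:basic} gives $|M| \le 2|S| - 1 \le 2|Z|$, and since each bag in a width-two decomposition has at most three vertices, we obtain $|Z'| \le 3 \cdot |M| \le 6|Z|$.

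For the neighborhood bound, \cref{lem:lca:basic} also gives that $M$ is closed under taking lowest common ancestors, i.e.\ $\overline{\mathsf{LCA}}(M) = M$. We can therefore apply \cref{lem:lca:fewneighbors} directly to the decomposition $(T,\chi)$ of $G$ with node set $B = M$ and width parameter $c = 2$, which yields $|N_G(C) \cap Z'| \le 2c = 4$ for every connected component $C$ of $G - Z'$.

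There is essentially no obstacle here beyond correctly assembling the ingredients: the width-two decomposition comes from outerplanarity, the size bound is a direct consequence of the general LCA-closure estimate, and the four-neighbor guarantee is exactly what \cref{lem:lca:fewneighbors} delivers for $c = 2$. The whole procedure runs in polynomial time since computing a width-two tree decomposition, computing lowest common ancestors, and forming the corresponding union of bags are all polynomial-time operations.
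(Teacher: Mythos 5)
Your proof is correct and follows essentially the same approach as the paper: a width-two tree decomposition, an LCA closure of representative nodes, the union of the closed set's bags, and an appeal to \cref{lem:lca:basic} and \cref{lem:lca:fewneighbors}. The only cosmetic difference is that the paper picks the topmost bag containing each $z \in Z$ while you pick an arbitrary one, which does not affect any of the properties used.
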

\begin{proof}
Consider a tree decomposition $(T,\chi)$ of width \bmp{two} of the graph $G$, rooted at a node~$r \in V(T)$.
It can be found in \bmp{linear} time~{\cite{Bodlaender96}}.
For a vertex $v$ let $t_v$ be the node which is closest to the root among those whose bags contain $v$.
Consider the set of nodes $T_Z = \{t_v \mid v \in Z\}$.
Let $T'_Z = \overline{\mathsf{LCA}}(T_Z)$ be the LCA closure of $T_Z$. 
Finally, let $Z'$ be union of all bags in $T'_Z$.
We have $|T'_Z| \le 2 \cdot |T_Z|$ and $|Z'| \le 6\cdot |Z|$.
{By \cref{lem:lca:fewneighbors} we obtain that
each connected component of $G - Z'$ has at most \bmp{four} neighbors.}
\end{proof}

In order to keep the kernel size in check, we need to analyze the number of connected components of $G - (X_0 \cup X_1 \cup Z)$.
We have managed to bound the size of $Z$ by $\Oh(k^3)$ and, in \cref{lem:component-graph-x0-x1-edge-bound}, we have also bounded by $\Oh(k^3)$ the number of edges in the component graph $\mathcal{C}(G,X_0 \cup X_1)$.
These two properties suffice to also bound the number of  connected components of $G - (X_0 \cup X_1 \cup Z)$ that have at least two neighbors in $X_0 \cup X_1 \cup Z$. It will be easier to deal with the remaining ones later.

\begin{lemma}\label{lem:modulator:decomposition-bounded}
Let $(X_0, X_1)$ be a $(k,c)$-augmented modulator in $G$,
so that the component graph $\mathcal{C}(G,X_0 \cup X_1)$ has at most $s$ vertices and $s$ edges, and let $Z \subseteq V(G) \setminus (X_0 \cup X_1)$.
Then \hui{there are at most~$3\cdot s + 4\cdot |Z|$ components of~$G - (X_0 \cup X_1 \cup Z)$ that have two or more neighbors in~$X_0 \cup X_1 \cup Z$.}
\end{lemma}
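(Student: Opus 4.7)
Write $X := X_0 \cup X_1$, let $\mathcal{D}$ denote the connected components of $G-X$, and for each $D \in \mathcal{D}$ write $z_D := |Z \cap V(D)|$ and $\delta(D) := |N_G(D) \cap X|$. Recall that $\sum_D \delta(D) = |E(\mathcal{C}(G,X))| \le s$, $\sum_D z_D = |Z|$, and each component of $G-(X \cup Z)$ is contained in a unique $D$. My plan is to split the rich components $\mathcal{C}_Z^{\ge 2}$ according to how many of their $\ge 2$ required neighbors lie in $Z$.

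First, for the components $C$ with $|N_G(C) \cap Z| \ge 2$, I would use that $G-X$, being an induced subgraph of the outerplanar graph $G-X_0$, is itself outerplanar. Contracting each $C \in \mathcal{C}_Z$ inside $G-X$ produces an outerplanar graph on $Z \cup \mathcal{C}_Z$, and restricting to the bipartite subgraph between $Z$ and $\mathcal{C}_Z$ preserves outerplanarity; \cref{lem:outerplanar-bipartite} then caps the number of $\mathcal{C}_Z$-vertices of bipartite degree at least two by $4|Z|$. The remaining rich components have $|N_G(C) \cap Z| \le 1$ and hence $|N_G(C) \cap X| \ge 1$. If $z_{D_C}=0$ then $\mathcal{C}_Z(D_C)=\{D_C\}$, so $C=D_C$ is rich iff $\delta(D_C) \ge 2$; by a handshake argument on $\mathcal{C}(G,X)$ the number of such $D$ is at most~$s$. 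Otherwise $z_{D_C} \ge 1$ and, because $D_C$ is connected, $C$ has exactly one $Z$-neighbor $z(C) \in Z_{D_C}$; choose any $v(C) \in N_G(C)\cap X \subseteq N_G(D_C) \cap X$. By the augmented-modulator property, $G[V(D_C) \cup \{v\}]$ is outerplanar for every single $v \in N_G(D_C) \cap X$; after contracting the $\mathcal{C}_Z(D_C)$-components one obtains an outerplanar graph on $\{v\} \cup Z_{D_C} \cup \mathcal{C}_Z(D_C)$ in which no three components can be commonly adjacent to a fixed pair $(v,z)$ without producing a $K_{2,3}$ subgraph.

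The main obstacle is upgrading this per-pair $K_{2,3}$-obstruction into a global bound of roughly~$2s$ for the $z_{D_C}\ge 1$ subcase, rather than the naive $2\sum_D z_D\,\delta(D)$, which can be as large as $\Theta(|Z|\cdot s)$ and would ruin the kernel size. The planned strategy is to read each such component $C$ as a subdivided edge, of multiplicity at most two, of a bipartite multigraph $M_D$ on $(N_G(D_C) \cap X) \cup Z_{D_C}$, and to argue that the subdivision of $M_D$ embeds into an outerplanar graph by handling $v(C) \in X_1$ via the contraction of $G-X_0$ and $v(C) \in X_0$ via the single-vertex put-back property $G-(X_0 \setminus \{v\})$; the outerplanar-bipartite edge bound then gives $|E(M_D)|=O(\delta(D)+z_D)$, which sums to $O(s+|Z|)$ and yields the desired $2s$ bound for this last subcase. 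Combining $4|Z|$ from the Type~1 group, $s$ from the $z_{D_C}=0$ subcase, and $2s$ from the $z_{D_C}\ge 1$ subcase gives the total $3s + 4|Z|$.
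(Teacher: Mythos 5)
Your first and third groups match the paper's proof exactly: the components with at least two neighbours in $Z$ are bounded by $4|Z|$ via \cref{lem:outerplanar-bipartite}, and the components living in a $Z$-free component of $G-(X_0\cup X_1)$ are bounded by $s$. The gap is in the remaining case, which you yourself flag as ``the main obstacle'': components $C$ with exactly one neighbour in $Z$ and at least one in $X:=X_0\cup X_1$. Your plan keeps the individual vertices of $Z_{D}$ as separate vertices of a bipartite multigraph $M_D$ on $(N_G(D)\cap X)\cup Z_D$, and the per-pair ``no three components adjacent to a fixed pair $(v,z)$'' observation then only yields $2\,\delta(D)\,z_D$, the product bound you are trying to avoid. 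The proposed rescue --- that the subdivision of $M_D$ embeds in an outerplanar graph --- is not substantiated and does not follow from the tools you cite: the put-back property lets you reinsert a \emph{single} vertex of $X_0$ (or one type-B/C pair) into $G-X$, so there is no single outerplanar graph containing all of $N_G(D)\cap X_0$ together with $D$; the graph $M_D$ is a minor of $G$ itself, which is not outerplanar. Even applying the put-back property one vertex $v$ at a time and invoking \cref{lem:outerplanar-bipartite} on $\{v\}\cup Z_D$ versus the adjacent components gives $4(1+z_D)$ per $v$, which again sums to a $\delta(D)\cdot z_D$ term.

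The paper's resolution is to stop pairing $v$ with individual $Z$-vertices. Fix a component $H$ of $G-X$ with $Z\cap V(H)\neq\emptyset$ and let $H'$ be $H$ minus all its dangling components; $H'$ is connected and nonempty (it contains the $Z$-vertices of $H$), and every dangling component attaches to $H'$ through its unique $Z$-neighbour. Now, for a fixed $x\in N_G(H)\cap X$, \emph{three} dangling components of $H$ adjacent to $x$ would form, together with $\{x\}$ and the single branch set $V(H')$, a $K_{2,3}$-minor in $G-(X\setminus\{x\})$, contradicting \cref{lem:augmentedmod:properties}. Hence each $x$ accounts for at most two dangling components of $H$, each dangling component is counted for at least one such $x$, and summing $2\cdot|N_G(H)\cap X|$ over all $H$ gives the desired $2s$ for this case. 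Contracting the whole remainder $H'$ into one branch set --- rather than distinguishing which $z\in Z_D$ a component attaches to --- is the missing idea.
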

\begin{proof}
Let $X = X_0 \cup X_1$.
We analyze the number of connected components of~$G - (X \cup Z)$ by splitting them into three categories.
\begin{enumerate}
    \item Components with at least two neighbors in~$Z$.
    Consider a subgraph $(Z \cup Y, E)$ of $\mathcal{C}(G, X \cup Z)$
    given by restricting the vertex-side to $Z$ and the component-side to those components~$Y$ that have at least two neighbors in~$Z$.
    This graph is a minor of $G-X$, so it is \op.
    By \cref{lem:outerplanar-bipartite}, we get $|Y| \le 4 \cdot |Z|$. 
    
    \item Components with exactly one neighbor in~$Z$ \hui{and at least one in~$X$}.
    We call such a component a dangling component.
    For a connected component~$H$ of~$G - X$,
    consider the collection of dangling components $(C_i)_{i=1}^\ell$ within $H$.
    Since each dangling component has exactly one neighbor in $H$, removing it does not affect connectivity of $H$.
    Therefore the graph $H' = H - \bigcup_{i=1}^\ell C_i$ is connected.
    \hui{Note that~$H'$ cannot be empty since it must contain at least one vertex in~$Z$.}
    For each vertex~$x \in X$, there are at most two dangling components in~$H$ which are adjacent to~$x$:
    if there were three~$C_1, C_2, C_3$,  they would form a minor model of~$K_{2,3}$ together with~$x$ and~$V(H')$.
    By \cref{lem:augmentedmod:properties} this would contradict outerplanarity of~$G - (X \cup \{x\})$.
    
    Hence the number of dangling components within~$H$ is at most twice as large as $|N_G(H) \cap X|$, which is the degree of $H$ in $\mathcal{C}(G,X)$.
    The total number of dangling components is thus at most 2 times the
    sum of degrees of the component-nodes
    in $\mathcal{C}(G,X)$, which equals the number of edges in $\mathcal{C}(G,X)$.
    We obtain a bound $2s$ on the total number of dangling components.
    
    \item Components without any neighbors in~$Z$.
    These are also components of $G-X$, so 
    there are at most $s$ of them.
    \qedhere
\end{enumerate}
\end{proof}

\hui{The previous lemma gives us a bound on the number of components outside the modulator with at least two neighbors. To bound the total number of components outside the modulator, we employ the following reduction rule to remove the remaining components with at most one neighbor.}

\begin{reduction}\label{reduction:articulation-point}
If for some~$C \subseteq V(G)$ the graph~$G\brb C$ is \op and it holds that $|N_G(C)| \leq 1$, then remove the vertex set~$C$.
\end{reduction}

Safeness of this rule follows from \cref{criterion:articulation-point}, which implies~$\opd(G-C) = \opd(G)$.

With these properties at hand, we are able to construct the desired extension of the augmented modulator.
The decomposition below is inspired by the notion of a near-protrusion~\cite{FominLMS12}, combined
with the idea of the augmented modulator, and with an $\Oh(k^3)$ bound on the number of leftover connected components.

\begin{definition} \label{def:op-decomp}
For~$k,c,d\in \mathbb{N}$ a $(k,c,d)$-\op decomposition of a graph $G$ is a triple $(X_0,X_1,Z)$ of disjoint vertex sets in $G$, such that:
\begin{enumerate}
    \item \label{def:op-comp:augmod} $(X_0, X_1)$ is a $(k,c)$-augmented modulator for $(G,k)$,
    \item for each pair $u,v \in X_0 \cup X_1$ of distinct vertices one of the following holds:\label{item:decomp:pairs}
    \begin{enumerate}
        \item vertices $u,v$ are $(X_0 \cup X_1 \cup Z)$-separated, or
        \item there are $k+4$ vertex-disjoint $(u,v)$-paths in $G$, each with non-empty interior.
    \end{enumerate}
    \item for each connected component $C$ of $G - (X_0 \cup X_1 \cup  Z)$ it holds that  $|N_G(C) \cap Z| \le 4$,\label{item:decomp:neighbors}
    \item \label{def:op-decomp:size} $|Z| \le d\cdot (k+3)^3$ and there are at most $d\cdot (k+3)^3$ connected components in $G-(X_0 \cup X_1 \cup Z)$.
\end{enumerate}
\end{definition}

\begin{lemma}\label{lem:modulator:summary} 
There is a constant $c$, a function $f_3 \colon \setN \rightarrow \setN$, and a polynomial-time algorithm that, given an instance $(G,k)$, either returns an equivalent instance $(G',k')$, where $k' \le k$ and $G'$ is subgraph of $G$, along with a $(k',c,f_3(c))$-\op decomposition of~$G'$, or concludes that $\opd(G) > k$.
\mic{If~$\opd(G) \leq k$ then it holds that $\opd(G') = \opd(G) - (k - k')$.}
Furthermore, $c = 40$ and $f_3(c) = 3\cdot f_1(c) + 24\cdot f_2(c)$ (see \cref{lem:component-graph-x0-x1-edge-bound,lem:modulator:decomposition-xz}).
\end{lemma}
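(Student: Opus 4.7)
The plan is to chain together the algorithmic building blocks developed earlier in this section. First, I would invoke \cref{lem:modulator:compute-augmented} with $c = 40$ to either conclude $\opd(G) > k$ or obtain an equivalent instance $(G',k')$ with $k' \le k$ and a $(k',40)$-augmented modulator $(X_0,X_1)$ in $G'$; this step also supplies the relation between $\opd(G')$ and $\opd(G)$.

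Next, I would exhaustively apply \cref{reduction:reduce-degree-new} with parameter $k'$ to the current graph. Since only edges and isolated vertices get deleted, the augmented-modulator properties of $(X_0,X_1)$ are preserved, and \cref{lem:component-graph-x0-x1-edge-bound} then bounds the number of vertices and edges of the component graph $\mathcal{C}(G',X_0\cup X_1)$ by $s := f_1(c)(k'+3)^3$. I would then invoke \cref{lem:modulator:decomposition-xz} on the resulting graph to obtain a set $Z_0 \subseteq V(G')\setminus (X_0 \cup X_1)$ of size at most $f_2(c)(k'+3)^3$ satisfying item~\ref{item:decomp:pairs} of \cref{def:op-decomp}.

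To realize item~\ref{item:decomp:neighbors}, I would apply \cref{lem:modulator:lca} to the outerplanar graph $G' - X_0$ with input $Z_0$, producing a superset $Z \supseteq Z_0$ of size at most $6 f_2(c)(k'+3)^3$ such that every connected component of $(G'-X_0)-Z$ has at most four neighbors in $Z$. Since every component of $G'-(X_0 \cup X_1 \cup Z)$ is contained in some component of $(G'-X_0)-Z$, it inherits this boundary bound. The separation/many-paths dichotomy from item~\ref{item:decomp:pairs} is preserved when passing from $Z_0$ to $Z$, because being $(X_0 \cup X_1 \cup Z)$-separated is monotone in $Z$, while the $k'+4$ internally vertex-disjoint paths live inside all of $G'$.

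Finally, I would exhaustively apply \cref{reduction:articulation-point} to remove every component of $G' - (X_0 \cup X_1 \cup Z)$ with at most one neighbor in $X_0 \cup X_1 \cup Z$; its closed neighborhood is outerplanar either by \cref{lem:augmentedmod:properties} (when that neighbor lies in $X_0 \cup X_1$) or because it sits inside the outerplanar graph $G' - X_0$. Deleting such a component changes neither the modulator, nor $Z$, nor the neighborhoods of the remaining components, nor the $k'+4$ internally vertex-disjoint paths between two modulator vertices, since a component with at most one entry point cannot appear in the interior of any simple $(u,v)$-path with $u,v$ outside the component. After this pass, every surviving component has at least two neighbors in $X_0 \cup X_1 \cup Z$, so \cref{lem:modulator:decomposition-bounded} bounds their number by $3s + 4|Z| \le (3 f_1(c) + 24 f_2(c))(k'+3)^3$. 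Setting $f_3(c) := 3 f_1(c) + 24 f_2(c)$ this quantity simultaneously upper-bounds $|Z|$, verifying all four conditions of \cref{def:op-decomp}. The point I expect to be most delicate is verifying that items~\ref{item:decomp:pairs} and~\ref{item:decomp:neighbors} are not invalidated by the successive enlargement from $Z_0$ to $Z$ and the subsequent pruning via \cref{reduction:articulation-point}, which is why the monotonicity and ``entry point'' arguments above are needed.
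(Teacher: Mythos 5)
Your proposal is correct and follows essentially the same route as the paper: compute the augmented modulator via \cref{lem:modulator:compute-augmented}, apply \cref{reduction:reduce-degree-new}, obtain $Z_0$ from \cref{lem:modulator:decomposition-xz}, close it under LCAs via \cref{lem:modulator:lca}, prune single-neighbor components with \cref{reduction:articulation-point}, and count the rest with \cref{lem:modulator:decomposition-bounded}. The only (harmless) deviation is applying \cref{lem:modulator:lca} to $G'-X_0$ rather than to $G'-(X_0\cup X_1)$ as the paper does, which you correctly patch with the containment argument for components; the resulting bounds are identical.
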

\begin{proof}
Begin with \cref{lem:modulator:compute-augmented} to either conclude $\opd(G) > k$ or find an equivalent instance $(G'',k')$, where $k' \le k$ and $G''$ is a subgraph of $G$, along with an $(k',c)$-augmented modulator.
Next, apply \cref{reduction:reduce-degree-new} to obtain an equivalent instance $(G',k')$ that satisfies the conditions in the statement, along with an $(k',c)$-augmented modulator $(X_0,X_1)$, so that the number of vertices and edges in $\mathcal{C}(G,X_0 \cup X_1)$ is at most $f_1(c) \cdot (k'+3)^3$ (see \cref{lem:component-graph-x0-x1-edge-bound}).
This reduction rule may remove edges and vertices from the graph, so $G'$ is a subgraph of $G$.

We find a set $Z_0$ of size at most $f_2(c) \cdot (k'+3)^3$ satisfying the Condition~\ref{item:decomp:pairs} with \cref{lem:modulator:decomposition-xz}.
Next, apply \cref{lem:modulator:lca} to graph $G-(X_0 \cup X_1)$ and set $Z_0$ to compute $Z \supseteq Z_0$, $|Z| \le 6\cdot f_2(c) \cdot (k'+3)^3$,
which satisfies the Condition~\ref{item:decomp:neighbors}.
Observe that that Condition~\ref{item:decomp:pairs} is preserved for any superset of $Z_0$, and hence for $Z$.

\hui{Now identify all connected components of~$G-(X_0 \cup X_1 \cup Z)$ with only one neighbor and apply \cref{reduction:articulation-point} to remove \bmp{them}. Note that this removed only vertices disjoint from~$X_0$, $X_1$, and~$Z$, so Conditions~\ref{def:op-comp:augmod}, \ref{item:decomp:pairs}(a), and~\ref{item:decomp:neighbors} remain satisfied. Since such a connected component only has one neighbor in~$X_0 \cup X_1$, the number of~$(u,v)$-paths in~$G$ cannot have been decreased for any distinct~$u,v \in X_0 \cup X_1$, hence Condition~\ref{item:decomp:pairs}(a) also remains satisfied. We complete the proof by showing that now Condition~\ref{def:op-decomp:size} holds. 

First note that~$|Z| \leq 6\cdot f_2(c) \cdot (k'+3)^3 \leq f_3(c) \cdot (k'+3)^3$. It remains to show that the number of components in~$G-(X_0 \cup X_1 \cup Z)$ is at most~$f_3(c) \cdot (k'+3)^3$. Any such connected component has at least two neighbors since otherwise we would have applied \cref{reduction:articulation-point} to remove it. Any other connected component has at least two neighbors in~$X_0 \cup X_1 \cup Z$, so by \cref{lem:modulator:decomposition-bounded} there are at most~$3 \cdot s + 4 \cdot |Z|$ of these components, where~$s$ denotes the number of edges in~$\mathcal{C}(G,X_0 \cup X_1)$ which is upper bounded by~$f_1(c) \cdot (k'+3)^3$ (see Lemma~\ref{lem:component-graph-x0-x1-edge-bound}). Hence in total there are at most~$3\cdot f_1(c) \cdot (k'+3)^3 + 4\cdot 6\cdot f_2(c) \cdot (k'+3)^3 = f_3(c) \cdot (k'+3)^3$ components.}
%
\end{proof}

As the last property of the $(k,c,d)$-outerplanar decomposition, we formulate the bound on the total number of connections between $X_0 \cup X_1 \cup Z$ and the leftover components, which will lead to the total kernel size $\Oh(k^4)$.

\begin{lemma}\label{lem:modulator:k4}
Let $(X_0,X_1,Z)$ be a $(k,c,d)$-\op decomposition of a graph $G$.
Then the number of edges in the component graph $\mathcal{C}(G, X_0 \cup X_1 \cup Z)$ is at most $f_4(c,d) \cdot (k+3)^4$,
where $f_4(c,d) = cd+6c+4d$.
\end{lemma}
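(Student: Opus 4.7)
The plan is to split the edges of $\mathcal{C}(G, X_0 \cup X_1 \cup Z)$ into those incident to a vertex of $X_0$ and those incident only to $X_1 \cup Z$, and bound each part separately. Let $N$ denote the number of connected components of $G - (X_0 \cup X_1 \cup Z)$; by Condition~\ref{def:op-decomp:size} of \cref{def:op-decomp} we have $N \leq d\cdot (k+3)^3$, and by Condition~\ref{def:augmentedmod:size} of \cref{def:augmentedmod} we have $|X_0| \leq c\cdot k \leq c(k+3)$ and $|X_1| \leq 3c\cdot k^2 \leq 3c(k+3)^2$.

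First I would handle edges in $\mathcal{C}(G, X_0 \cup X_1 \cup Z)$ incident to $X_0$. Since each such edge has its other endpoint among the $N$ components, there are at most $|X_0| \cdot N \leq c(k+3) \cdot d(k+3)^3 = cd(k+3)^4$ of them. This is a completely crude bound that absorbs the $cd$ contribution in $f_4(c,d)$.

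For the remaining edges, i.e.\ those with both endpoints in $(X_1 \cup Z) \cup \{\text{components}\}$, the key observation is that they form exactly the bipartite graph $\mathcal{C}(G - X_0, X_1 \cup Z)$: the components of $(G - X_0) - (X_1 \cup Z) = G - (X_0 \cup X_1 \cup Z)$ coincide with those used to define $\mathcal{C}(G, X_0 \cup X_1 \cup Z)$, and the presence of an edge between $v \in X_1 \cup Z$ and such a component is unaffected by whether we deleted $X_0$. Now $\mathcal{C}(G - X_0, X_1 \cup Z)$ is obtained from $G - X_0$ by contracting each of the above components to a single vertex and then deleting any edges inside $X_1 \cup Z$, so it is a minor of $G - X_0$. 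By Condition~\ref{def:op-comp:augmod} of \cref{def:op-decomp} the graph $G - X_0$ is outerplanar, hence so is $\mathcal{C}(G - X_0, X_1 \cup Z)$. Applying \cref{lem:prelim:op-edges} gives that its edge count is at most
\[
2 \cdot (|X_1| + |Z| + N) \leq 2\bigl(3c(k+3)^2 + d(k+3)^3 + d(k+3)^3\bigr) \leq (6c + 4d)(k+3)^4.
\]

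Summing the two contributions yields $cd(k+3)^4 + (6c + 4d)(k+3)^4 = f_4(c,d) \cdot (k+3)^4$, as desired. There is essentially no hard step here: the whole argument rests on the single insight that removing $X_0$ (whose size is already only linear in $k$) restores outerplanarity, so \cref{lem:prelim:op-edges} applied to the contracted graph does all the work on the $X_1 \cup Z$ side, while the $X_0$ side can be treated by the trivial product bound $|X_0| \cdot N$.
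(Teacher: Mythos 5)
Your proof is correct and follows essentially the same route as the paper: edges incident to $X_0$ are bounded by the trivial product $|X_0|$ times the number of components, and the remaining edges are bounded by observing that the restriction of the component graph to $X_1 \cup Z$ is a minor of the outerplanar graph $G - X_0$, then invoking \cref{lem:prelim:op-edges}. The only cosmetic difference is that you spell out the identification with $\mathcal{C}(G - X_0, X_1 \cup Z)$ explicitly, which the paper leaves implicit.
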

\begin{proof}
By \cref{def:op-decomp}\eqref{def:op-decomp:size} there are at most~$d\cdot (k+3)^3$ components of~$G-(X_0 \cup X_1 \cup Z)$ and each can have at most
$|X_0| \le c\cdot k$ neighbors from $X_0$.
It remains to bound the total number of edges from $X_1 \cup Z$.
The graph given by restricting the vertex-side of $\mathcal{C}(G, X_0 \cup X_1 \cup Z)$ to $X_1 \cup Z$ is a minor of $G - X_0$, hence it is \op and, by \cref{lem:prelim:op-edges}, the number of edges is at most \bmp{twice} 
the number of vertices, that is,
$2 \cdot (|X_1 \cup Z| + d\cdot (k+3)^3) \le 6c \cdot k^2 + 4d \cdot (k+3)^3$.
\end{proof}

\subsection{Reducing the size of the neighborhood} \label{sec:fan-reduct}

Given a~$(k,c,d)$-\op decomposition~$(X_0,X_1,Z)$, we will now present the final reduction rule to reduce the size of the neighborhood~$N_G(X_0 \cup X_1)$ to~$\Oh(k^4)$. As the size of~$Z$ is already bounded by~$\Oh(k^3)$ we focus on reducing the size of~$N_G(X_0 \cup X_1) \setminus Z$. We have already shown the number of edges in the component graph~$\mathcal{C}(G,X_0 \cup X_1 \cup Z)$ is bounded by~$\Oh(k^4)$, so it suffices to reduce the number of edges between a single modulator vertex~$x \in X_0 \cup X_1$ and a connected component~$C$ of~$G-(X_0 \cup X_1 \cup Z)$ to a constant. For this, we first show in the following lemma where the neighbors of~$x$ occur in~$C$.

\begin{lemma}\label{lem:neighborhood}
Suppose~$G$ is \op, $x \in V(G)$, and~$G-x$ is connected. Then the vertices from~$N_G(x)$ lie on an induced path~$P$ in~$G-x$ such that for each biconnected component~$B$ of~$G-x$ and each pair of distinct vertices~$u,v \in V(P) \cap V(B)$ we have that~$uv \in E(G-x)$. We can find such a path in polynomial time.
\end{lemma}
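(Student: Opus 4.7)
The plan is to leverage the block-cut tree of $G-x$ along with structural restrictions that outerplanarity of $G$ imposes on how the neighbors of $x$ are distributed among blocks. Let $T$ denote the block-cut tree of $G-x$, and let $S \subseteq V(T)$ consist of the cut-nodes corresponding to articulation vertices of $G-x$ that lie in $N_G(x)$, together with the block-nodes of the blocks that contain at least one non-articulation neighbor of $x$. The desired path $P$ will essentially trace a path in $T$ connecting all elements of $S$, picking at most two vertices from each block along the way.

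I would first establish two structural facts. (a) For every biconnected component $B$ of $G-x$, $|N_G(x) \cap V(B)| \leq 2$, and any two neighbors of $x$ inside $B$ are adjacent. This follows from the Hamiltonian cycle of a $2$-connected \op block: a third neighbor together with $x$ would produce a $K_4$ minor in $G$, while a non-adjacent pair on the cycle would produce a $K_{2,3}$ minor. (b) The minimal subtree $T^*$ of $T$ spanning $S$ is a simple path. A branching node in $T^*$ --- whether a cut-vertex or a block node --- would allow one to route three internally vertex-disjoint paths from $x$ through three different subtrees of $T^*$ to a common ``separator'' in $G$, producing a $K_{2,3}$ minor (or, in corner cases where articulation-vertex neighbors collapse branches, a $K_4$ minor).

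Writing $T^* = B_1 - a_1 - B_2 - \cdots - a_{\ell-1} - B_\ell$, I would then prove two refinements along this path. (i) For each interior block $B_i$ the articulation vertices $a_{i-1}$ and $a_i$ are adjacent in $B_i$: otherwise, $2$-connectivity of $B_i$ provides two internally vertex-disjoint $(a_{i-1},a_i)$-paths through $B_i$, and a third route going from $a_{i-1}$ through $B_{i-1}$ to a neighbor of $x$, through $x$, and back via a neighbor of $x$ in $B_{i+1}$ to $a_i$ yields three internally vertex-disjoint $(a_{i-1},a_i)$-paths with non-empty interior, i.e.\ a $K_{2,3}$ minor in $G$. (ii) Every neighbor of $x$ inside an interior block $B_i$ lies in $\{a_{i-1}, a_i\}$; otherwise a $K_4$ minor can be built on $\{x, y, a_{i-1}, a_i\}$, using the two arcs of the Hamiltonian cycle of $B_i$ incident to $y$ together with the subtrees attached at $a_{i-1}$ and $a_i$ to realize the missing edges. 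An analogous $K_4$-minor argument shows that each endpoint block $B_1$ (respectively $B_\ell$) contains at most one non-articulation neighbor of $x$ besides $a_1$ (respectively $a_{\ell-1}$), and that this neighbor is adjacent to the unique path-articulation.

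With these claims in place, I would define $P$ as the sequence $s_1, a_1, a_2, \ldots, a_{\ell-1}, s_\ell$, where $s_1$ and $s_\ell$ are the (at most one each) non-articulation neighbors of $x$ in the endpoint blocks, omitted when absent. Consecutive vertices of $P$ lie in a common block and are adjacent by (i) and the endpoint claim, while non-consecutive vertices of $P$ lie in disjoint blocks along $T^*$, share no common block in the block-cut tree, and are therefore non-adjacent in $G-x$; hence $P$ is induced. For every biconnected component $B$, $V(P) \cap V(B)$ consists of at most two pairwise adjacent vertices, which verifies the lemma's requirement. All ingredients --- the block-cut tree, the set $S$, the Steiner subtree $T^*$, and the final extraction of $P$ --- are computable in polynomial time. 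The main obstacle I anticipate is clause (i): forcing $a_{i-1}a_i$ to be an edge is the sharpest use of outerplanarity in the argument, and the delicate part is choosing the two paths inside $B_i$ so that they are vertex-disjoint from the third route through $x$, which in turn needs the $B_{i-1}$- and $B_{i+1}$-paths to avoid the interior of $B_i$.
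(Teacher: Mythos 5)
Your proof is essentially correct but follows a genuinely different route from the paper. The paper starts from a spanning tree of~$G-x$, iteratively prunes leaves outside~$N_G(x)$, argues the result is a path (a degree-$3$ node would give a~$K_{2,3}$-minor through~$x$), and then makes it induced by shortcutting chords that do not skip over a neighbor of~$x$ (a chord that does skip one yields a~$K_4$-minor on~$\{x,u,v,w\}$); the block condition is then verified separately, using $2$-connectivity to produce two internally disjoint paths inside a block plus a third through~$x$. You instead work in the block-cut tree, show the Steiner subtree spanning the relevant nodes is a path, and assemble~$P$ explicitly from the articulation vertices along it plus at most one extra vertex per endpoint block. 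Your construction buys a stronger structural description --- $P$ meets every block in at most two vertices by construction, so the block condition and inducedness are nearly immediate --- at the cost of more case analysis (cut-node versus block-node leaves and branchings, interior versus endpoint blocks). The underlying obstruction arguments are the same~$K_4$/$K_{2,3}$ minors in both proofs. One small imprecision: in your claim~(b), when the branching node of~$T^*$ is a block-node, the three routes from~$x$ end at three \emph{distinct} cut vertices of that block rather than at a common separator, so the contradiction there is a~$K_4$-minor (contract the arcs of the block's Hamiltonian cycle between consecutive cut vertices), not a~$K_{2,3}$; your parenthetical acknowledges a~$K_4$ fallback but misattributes when it is needed. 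This does not affect the validity of the argument, and the remaining corner cases (cut-node endpoints of~$T^*$, degenerate single-block situations) are routine.
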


\begin{proof}
 If~$|N_G(x)| = 1$ this is trivially true, so we assume~$|N_G(x)| \geq 2$ in the remainder of the proof.
 
 Consider a tree~$T$ obtained from a spanning tree of~$G-x$ by iteratively removing leaves that are not in~$N_G(x)$. We show~$T$ is a path. If~$T$ contains a vertex~$y$ of degree at least~$3$ then~$T-y$ contains three components containing a neighbor of~$x$ and, since~$T$ is connected, neighbors of~$y$. This forms a~$K_{2,3}$-minor in~$G$ contradicting outerplanarity of~$G$. Hence~$T$ is a path, and by construction both its leaves are a neighbor of~$x$. We now describe how to obtain the desired induced path~$P$ from~$T$. If~$T$ is not an induced path in~$G$, there are two nonconsecutive vertices $u,v$ in~$T$ with $uv \in E(G)$. If there is no vertex~$w \in N_G(x)$ between~$u$ and~$v$ on~$T$, then the path~$T'$ obtained from~$T$ by replacing the subpath between~$u$ and~$v$ with the edge~$uv$ is a shorter path containing all of~$N_G(x)$. Exhaustively repeat this shortcutting step and call the resulting path~$P$. Since this procedure does not affect the first an last vertices we know the first and last vertex of~$P$ are both neighbors of~$x$. All operations to obtain~$P$ can be performed in polynomial time.
 
 If the path~$P$ obtained after shortcutting is not an induced path in~$G$, there are two nonconsecutive vertices $u,v$ in~$P$ with $uv \in E(G)$. By construction of~$P$ we know that there is a vertex~$w \in N_G(x)$ between~$u$ and~$v$ on~$P$. Now~$G$ contains a~$K_4$-minor since~$\{x,u,v,w\}$ are pairwise connected by internally vertex-disjoint paths, contradicting outerplanarity of~$G$. So~$P$ is an induced path.
 
 Let~$B$ be an arbitrary biconnected component of $G-x$ and let $u,v \in B \cap P$ be distinct. If~$uv$ is a bridge in~$G-x$, it is trivial to see that $uv \in E(G-x)$, so we can assume that~$B$ contains at least 3 vertices (so $B$ is 2-connected). We first consider the case where~$u$ is the first vertex along~$P$ that is contained in~$B$ and~$v$ is the last. Since the first and the last vertex of~$P$ are neighbor to~$x$,
 \mic{$(u,x,v)$ forms a $(u,v)$-path in $G$.}
 Since~$u,v \in B$ and~$B$ is 2-connected, there is a cycle within~$B$ containing~$u$ and~$v$. This gives us two internally vertex-disjoint paths from~$u$ to~$v$ within~$B$. \mic{If~$uv \not\in E(G-x)$, these paths have non-empty interiors. Together with the path $(u,x,v)$, this leads to a $K_{2,3}$-minor in $G$ and contradicting its outerplanarity.} 
 Hence, we can assume that~$uv \in E(G-x)$.
 
 If~$u$ and~$v$ are not the first and last vertices along~$P$ contained in~$B$, then there are two vertices~$u'$ and~$v'$ that are. Then~$u'v'$ is an edge in $G-x$ and because~$P$ is an induced path, $u'$ and~$v'$ have to be consecutive in~$P$, contradicting existence of \mic{such a pair~$(u,v)$.}
\end{proof}

We now investigate what happens when a modulator vertex~$x \in X_0 \cup X_1$ is the only vertex in~$X_0 \cup X_1$ that is adjacent to a connected component~$C$ of~$G-(X_0 \cup X_1 \cup Z)$. If~$x$ has sufficiently many edges to a part of~$C$ that is not adjacent to~$Z$, then one of these edges can be removed without affecting the \op deletion number~$\opd(G)$.
We will \mic{also} exploit this property for a reduction rule later in this paper when we reduce the number of edges within a connected component of~$G-(X_0 \cup X_1 \cup Z)$.

\begin{lemma} \label{reduction:generalized-fan-rule}
Suppose we are given a graph~$G$, a vertex~$x \in V(G)$, and five vertices~$v_1, \ldots, v_5 \in N_G(x)$ that lie, in order of increasing index, on an induced path~$P$ in~$G-x$ from~$v_1$ to~$v_5$, such that~$N_G(x) \cap V(P) = \{v_1, \ldots, v_5\}$. Let~$C$ be the component of~$G-\{v_1, v_5, x\}$ containing~$P-\{v_1, v_5\}$. If~$G\brb C$ is \op, then~$\opd(G) = \opd(G \setminus xv_3)$.
\end{lemma}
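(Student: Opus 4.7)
The forward direction $\opd(G \setminus xv_3) \leq \opd(G)$ is immediate since any op-deletion set of $G$ also works for $G \setminus xv_3$. For the reverse direction, let $S$ be a minimum op-deletion set of $G \setminus xv_3$; it suffices to show some set of size $|S|$ is an op-deletion set of $G$. If $\{x, v_3\} \cap S \neq \emptyset$ then $G - S = (G \setminus xv_3) - S$ is already outerplanar, so $S$ itself works. Otherwise I assume $x, v_3 \notin S$ and aim to show $G - S$ is outerplanar directly.

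To verify outerplanarity of $G - S$, I apply \cref{criterion:edge-removal} to $G - S$ with the edge $xv_3$. Condition~2 is immediate: $(G - S) \setminus xv_3 = (G \setminus xv_3) - S$ is outerplanar, and since outerplanar graphs contain no $K_{2,3}$-minor, no pair of vertices is joined by three internally vertex-disjoint paths, let alone three induced ones. For Condition~1, I consider each connected component $C'$ of $(G - S) - \{x, v_3\}$. If at most one of $x, v_3$ has a neighbor in $C'$ within $G - S$, the subgraph $(G-S)\brb{C'}$ avoids the edge $xv_3$ and is therefore a subgraph of the outerplanar $(G \setminus xv_3) - S$.

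The nontrivial case arises when both $x$ and $v_3$ have neighbors in $C'$. I first observe that any neighbor of $v_3$ in $C'$ lies in $V(C)$: because $v_3 \in V(C)$ and $V(C)$ is a connected component of $G - \{x, v_1, v_5\}$, the vertex $v_3$ has its non-$x$ neighbors in $V(C) \cup \{v_1, v_5\}$, and $P$ being induced forbids adjacency with $v_1, v_5$. I then verify outerplanarity of $(G-S)\brb{C'}$ via \cref{lem:induced-cycle}: since $(G-S)\brb{C'} \setminus xv_3$ is outerplanar as a subgraph of $(G \setminus xv_3) - S$, it suffices to exhibit an induced cycle through $x$ and $v_3$ in it. The natural candidate is the cycle $L \cup R$, where $L = (x, v_2, P_L, v_3)$ and $R = (x, v_4, P_R, v_3)$ use the interior $P$-segments $P_L, P_R$ between $v_2,v_3$ and $v_3,v_4$. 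Because $N_G(x) \cap V(P) = \{v_1, \ldots, v_5\}$ forbids chords incident with $x$ and cycle-interior vertices of $P_L$ or $P_R$, and $P$ being induced forbids all remaining chords, $L \cup R$ is an induced cycle of $G\brb{C} \setminus xv_3$, which persists in $(G - S) \setminus xv_3$ whenever $V(L \cup R) \cap S = \emptyset$.

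The main obstacle will be the subcase where $S$ intersects $V(L \cup R) \setminus \{x, v_3\}$, disrupting the natural cycle. I plan to exploit the spare neighbours $v_1, v_5$ of $x$ on $P$ as alternative entry points, rerouting, for instance, the prefix $(x, v_2)$ of $L$ through $(x, v_1, \ldots)$ along $P$ when the $v_2$-prefix is destroyed. A careful case analysis on which vertices among $\{v_1, v_2, v_4, v_5\}$ and the interiors of $P_L, P_R$ lie in $S$, together with the outerplanarity of $G\brb{C}$ and the induced-path structure of $P$, should always yield either an induced cycle through $x$ and $v_3$ in $(G - S) \setminus xv_3$, or a direct contradiction with the outerplanarity of $(G \setminus xv_3) - S$.
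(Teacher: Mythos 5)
There is a genuine gap. Your argument is complete only in the easy cases: Condition~2 of \cref{criterion:edge-removal}, the components $C'$ seeing at most one of $x,v_3$, and the situation where the cycle $L\cup R$ is disjoint from~$S$ (this last case is exactly the paper's case $|S\cap V(P)|=0$). The entire remaining case is a plan rather than a proof, and the plan rests on a false dichotomy. When $S$ contains a $P$-neighbor of $v_3$ (say $S=\{v_2\}$ with $C$ consisting only of the interior of $P$), the vertex $v_3$ has degree one in $(G\setminus xv_3)-S$, so it lies on \emph{no} cycle there: no induced cycle through $x$ and $v_3$ can be exhibited, and \cref{lem:induced-cycle} is inapplicable. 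At the same time nothing is contradictory about this configuration --- $G-S$ is typically still outerplanar, it just has to be certified by a different argument (e.g.\ that $v_3$ becomes a degree-two vertex hanging on the edge $xv_4$ and that this edge can absorb one more parallel path). So neither horn of your ``induced cycle or contradiction'' dichotomy applies, and the rerouting through $v_1,v_5$ that you sketch does not repair this, since rerouting still requires $v_3$ to lie on some cycle.

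The deeper issue is that you have committed to proving that the \emph{same} set $S$ works for $G$, which is more than the lemma requires and more than the paper proves. The paper's proof keeps $S$ only when $S\cap V(P)=\emptyset$; in the cases $|S\cap V(P)|\ge 2$ and $|S\cap V(P)|=1$ it exchanges the vertices of $S$ inside $C$ for $\{v_1,v_5\}$ (respectively for $v_5$ or $v_1$), obtaining a new set $S'$ with $|S'|\le|S|$, and then verifies outerplanarity of $G-S'$ via \cref{criterion:articulation-point} or \cref{criterion:edge-removal} applied to the edge $xv_1$. This exploits the slack created by $S$ already spending budget inside $C$, and avoids having to analyse how an arbitrary $S$ fragments the path. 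To salvage your approach you would either need to adopt this solution-modification step, or add a separate argument for every way $S$ can destroy all cycles through $v_3$; as written, the proof does not go through.
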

\begin{proof}
 Clearly for any~$S \subseteq V(G)$ if~$G-S$ is \op, then~$G \setminus xv_3 - S$ is also \op, hence~$\opd(G) \geq \opd(G \setminus xv_3)$. To show~$\opd(G) \leq \opd(G \setminus xv_3)$, suppose~$G \setminus xv_3 - S$ is \op for some arbitrary~$S \subseteq V(G)$. If~$x \in S$ or~$v_3 \in S$ then clearly~$G-S$ is \op, so suppose~$x,v_3 \not\in S$. We show~$G-S'$ is \op for some~$S' \subseteq V(G)$ with~$|S'| \leq |S|$. Consider the following cases:
 
 \begin{enumerate}
  \item 
  If~$|S \cap V(P)| = 0$ then~$G\setminus xv_3 - S$ contains an induced cycle formed by~$x$ together with the subpath of~$P$ from~$v_2$ to~$v_4$. This cycle includes~$x$ and~$v_3$, so by \cref{lem:induced-cycle} the graph~$G \setminus xv_3 - S$ remains \op after adding the edge~$xv_3$, hence~$G-S$ is \op.
  
  \item 
  If~$|S \cap V(P)| \geq 2$ then let~$S' := \{v_1, v_5\} \cup (S \setminus V(C))$. Since~$|S'| \leq |S|$, showing that~$G-S'$ is \op proves the claim. Let~$\overline{C} := G - V(C)$ and note that~$\overline{C} - S'$ is \op since it is a subgraph of~$G \setminus xv_3 - S$. Also note that~$G[V(C) \cup \{x\}]$ is \op since it is a subgraph of~$G[V(C) \cup \{v_1, v_5, x\}] = G \brb C$. Since for any connected component~$H$ of~$G-S'-x$ the graph~$(G-S')\brb H$ is a subgraph of~$\overline{C} - S'$ or~$G[V(C) \cup \{x\}]$ we have that~$(G-S')\brb H$ is \op. Then by \cref{criterion:articulation-point} the graph~$G-S'$ is \op.
  
  \item
  If~$|S \cap V(P)| = 1$ then let~$u \in S \cap V(P)$ and assume without loss of generality that~$u$ lies on the subpath of~$P$ from~$v_3$ to~$v_5$, so the subpath of~$P$ from~$v_1$ to~$v_3$ does not contain vertices of~$S$ (recall that~$v_3 \not\in S$). Let~$S' := \{v_5\} \cup (S \setminus V(C))$ and note that~$|S'| \leq |S|$. We shall show that~$G-S'$ is \op. Since~$x, v_1 \not\in S$, we have that also~$x,v_1 \not\in S'$, so~$xv_1 \in E(G-S')$. In order to apply \cref{criterion:edge-removal} to~$G-S'$ and~$xv_1$ we have to show that
  \begin{itemize}
    \item for each connected component~$C'$ of~$G-S'-\{v_1,x\}$ the graph~$(G-S')\brb {C'}$ is \op, and
    \item there are at most two induced internally vertex-disjoint $(v_1,x)$-paths in~$(G-S') \setminus v_1x$.
  \end{itemize}
  
  Because~$v_5 \in S'$ we have~$G-S'-\{v_1,x\} = G  - \{v_1,v_5,x\} - S'$ and since~$C$ is a connected component of~$G - \{v_1,v_5,x\}$ we have that all connected components of $G-S' - \{v_1,x\}$ are either a connected component of~$C-S' = C$ or of~$G-S'-\{v_1,x\}-V(C)$. It is given that~$C$ is connected and~$G[V(C) \cup \{v_1,v_5,x\}]$ is \op so then $G[V(C) \cup \{v_1,x\}]=(G-S')\brb C$ is also \op.
  Any other connected component~$C'$ is a connected component of~$G-S'-\{v_1,x\}-V(C)$, so we have that~$(G-S')\brb {C'}$ is a subgraph of~$G-S'-V(C)$. This is in turn, a subgraph of $G \setminus xv_3 - S$ which is \op. Hence~$(G-S')\brb {C'}$ is \op.
  
  It remains to show that there are at most two induced internally vertex-disjoint $(v_1,x)$-paths in~$(G-S') \setminus v_1x$. 
  Suppose for contradiction that~$(G-S') \setminus v_1x$ contains three induced vertex-disjoint $(v_1,x)$-paths. As shown before,~$C$ is a connected component of~$G-S'-\{v_1,x\}$ adjacent to~$v_1$ and~$x$, so there exists an induced $(v_1,x)$-path~$P_1$ in~$G-S' \setminus v_1x$ whose internal vertices all lie in~$C$. Since~$G\brb C$ is \op \bmp{ and~$C$ is connected}, by \cref{lem:paths-in-different-components} \bmp{the graph~$G \brb C$ does not contain two vertex-disjoint~$(v_1,x)$-paths with nonempty interiors.} Hence there are two induced internally vertex-disjoint $(v_1,x)$-paths~$P_2, P_3$ in~$(G-S' \setminus v_1x)-V(C)$. Observe that~$P_2$ and~$P_3$ are then disjoint from~$S \setminus S' \subseteq V(C)$ and do not contain~$xv_3$. It follows that~$P_1$, $P_2$ and~$P_3$ are three induced internally vertex-disjoint $(v_1,x)$-paths in~$G\setminus xv_3 - S$, contradicting its outerplanarity by \cref{criterion:edge-removal}. We conclude also the second condition of \cref{criterion:edge-removal} holds for~$G-S'$ and the edge~$v_1x$, hence~$G-S'$ is \op.
  %
  %
  \qedhere
 \end{enumerate}
\end{proof}

We now use the properties of the~$(k,c,d)$-\op decomposition to show that any solution of size at most~$k$ contains all but possibly one vertex from~$(X_0 \cup X_1) \cap N_G(C)$, where~$C$ is a connected component from~$G-(X_0 \cup X_1 \cup Z)$. We use this fact together with the result from \cref{reduction:generalized-fan-rule} to identify an irrelevant edge, which leads to the following reduction rule:

\begin{reduction}\label{reduction:irrelevant-edge}
Given a~$(k,c,d)$-\op decomposition~$(X_0, X_1, Z)$ of a graph~$G$, a vertex~$x \in X_0 \cup X_1$, and five vertices~$v_1, \ldots, v_5 \in N_G(x) \setminus (X_0 \cup X_1)$ that lie, in order of increasing index, on an induced path~$P$ in~$G-(X_0 \cup X_1)$ from~$v_1$ to~$v_5$, such that~$N_G(x) \cap V(P) = \{v_1, \ldots, v_5\}$. Let~$C$ be the component of~$G - (X_0 \cup X_1) - \{v_1, v_5\}$ containing~$P-\{v_1, v_5\}$. If~$V(C) \cap Z = \emptyset$ remove the edge~$xv_3$.
\end{reduction}
\begin{lemma}[Safeness]\label{lem:irrelevant-edge-safeness}
 \mic{Suppose that \cref{reduction:irrelevant-edge} removes the edge~$e = xv_3$ from a graph~$G$.
 If $\opd(G) > k$ then $\opd(G \setminus e) > k$ and
if $\opd(G) \le k$ then $\opd(G \setminus e) = \opd(G)$.}
\end{lemma}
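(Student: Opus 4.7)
The plan is to reduce the claim to \cref{reduction:generalized-fan-rule}. The direction $\opd(G \setminus e) \le \opd(G)$ is immediate, since any \op deletion set for $G$ is one for $G \setminus e$. For the opposite direction I would fix an arbitrary \op deletion set $S$ for $G \setminus e$ of size at most $k$ and construct an \op deletion set for $G$ of size at most $|S|$; applied to a minimum such $S$ this yields $\opd(G) \le \opd(G \setminus e)$ (under the assumption $\opd(G) \le k$, which forces $\opd(G \setminus e) \le k$), and applied to any size-at-most-$k$ witness it yields the contrapositive of the first statement. The case $x \in S$ or $v_3 \in S$ is trivial because $G - S = (G \setminus e) - S$ is already \op, so the interesting regime is $x, v_3 \notin S$.

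The pivotal step will be to argue that every vertex $y \in (X_0 \cup X_1) \setminus \{x\}$ adjacent to $C$ in $G$ lies in $S$. Since $V(C) \cap Z = \emptyset$, the vertex set $V(C)$ is contained in the component $D$ of $G - (X_0 \cup X_1 \cup Z)$ containing $P - \{v_1, v_5\}$; both $x$ (via $v_2 \in V(C)$) and $y$ are then adjacent to $D$, so the pair $(x,y)$ is not $(X_0 \cup X_1 \cup Z)$-separated. \cref{def:op-decomp}\eqref{item:decomp:pairs} therefore produces $k+4$ internally vertex-disjoint $(x,y)$-paths in $G$, each with nonempty interior. Vertex-disjointness forces at most one of them to use the edge $e$, and at most $|S| \le k$ of them to have an internal vertex in $S$, so at least three paths survive in $(G \setminus e) - S$. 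Were $y \notin S$, these three paths together with the branch sets $\{x\}$ and $\{y\}$ would witness a $K_{2,3}$-minor in $(G \setminus e) - S$, contradicting its outerplanarity.

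Next I would set $S_0 := (S \cap (X_0 \cup X_1)) \setminus \{x\}$ and $G' := G - S_0$. By the previous step $S_0$ contains every modulator neighbor of $C$ other than $x$, and since $V(C)$ and $V(P)$ are both disjoint from $X_0 \cup X_1$, the component of $G' - \{v_1, v_5, x\}$ containing $P - \{v_1, v_5\}$ is precisely $C$ and satisfies $N_{G'}(V(C)) \subseteq \{v_1, v_5, x\}$. Consequently $G' \brb C$ is an induced subgraph of $G[V(C) \cup \{v_1, v_5, x\}]$, which is itself a subgraph of $G - (X_0 \cup X_1 \setminus \{x\})$; the latter is \op by \cref{lem:augmentedmod:properties}, so $G' \brb C$ is \op as well.

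At this point all hypotheses of \cref{reduction:generalized-fan-rule} hold for $G'$ with the same $x$, $v_1, \ldots, v_5$ and $P$ (the induced-path and neighborhood conditions for $P$ transfer to $G'$ because $V(P) \cap S_0 = \emptyset$), so the lemma yields $\opd(G') = \opd(G' \setminus e)$. Combining this with the fact that $(G' \setminus e) - (S \setminus S_0) = (G \setminus e) - S$ is \op, I obtain an \op deletion set $T$ for $G'$ of size at most $|S \setminus S_0|$, whence $S_0 \cup T$ is an \op deletion set for $G$ of size at most $|S_0| + |S \setminus S_0| = |S|$, as required. The hard part throughout is bridging the gap between the component $C$ used in the reduction rule, defined by removing the entire modulator together with $\{v_1, v_5\}$, and the component required by \cref{reduction:generalized-fan-rule}, defined by removing only $\{v_1, v_5, x\}$; the $K_{2,3}$-minor argument driven by the path-packing property from \cref{def:op-decomp}\eqref{item:decomp:pairs} of the \op decomposition is precisely what reconciles the two.
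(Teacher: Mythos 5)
Your proposal is correct and follows essentially the same route as the paper's proof: the same $K_{2,3}$-packing argument via \cref{def:op-decomp}\eqref{item:decomp:pairs} to show every modulator neighbor of~$C$ other than~$x$ lies in~$S$ (your set~$S_0$ coincides with the paper's~$X_S$ since~$x \notin S$), the same identification of~$C$ with the component of~$G' - \{v_1,v_5,x\}$ containing~$P - \{v_1,v_5\}$, and the same invocation of \cref{reduction:generalized-fan-rule} on~$G' = G - S_0$ to recombine the solution. No gaps.
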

\begin{proof}
 \mic{Clearly~$\opd(G \setminus e) \leq \opd(G)$ so it suffices to show that~$\opd(G \setminus e) \leq k$ implies~$\opd(G \setminus e) = \opd(G)$. Suppose} $G \setminus e - S$ is \op for some~$S \subseteq V(G)$ of size \bmp{at most}~$k$; \bmp{we prove~$\opd(G) \leq |S|$}.
 If~$S$ contains~$x$ or~$v_3$ then the claim is trivial. Otherwise let~$X := X_0 \cup X_1$ and~$X_S := X \cap S$ and note that~$x \notin X_S$. Since~$C$ is a connected component of~$G - X - \{v_1, v_5\}$ we have that~$N_G(C) \subseteq X \cup \{v_1, v_5\}$. We first show that~$N_G(C) \subseteq X_S \cup \{v_1, v_5, x\}$. Suppose for contradiction that some~$u \in N_G(C)$ is not contained in~$X_S \cup \{v_1, v_5, x\}$, \bmp{so that~$u \in X \setminus (S \cup \{v\})$}. Since~$x$ and~$u$ are both neighbor to~$C$, which is connected and does not contain vertices from~$X$ or~$Z$, we have that~$x$ and~$u$ are not~$(X \cup Z)$-separated. It follows from \cref{def:op-decomp}\eqref{item:decomp:pairs} that there are~$k+4$ vertex-disjoint paths, with non-empty interior, connecting~$x$ and~$u$ in~$G$. At most one of these paths contains the edge~$e$, so in $G \setminus e$ there are at least~$k+3$ internally vertex-disjoint paths, with non-empty interior, connecting~$x$ and~$u$. Since~$x, u \not\in S$, and~$|S| \leq k$ we have that~$G \setminus e - S$ has at least~$3$ internally vertex-disjoint paths, with non-empty interior, connecting~$x$ and~$v$. This contradicts outerplanarity of~$G \setminus e - S$. Hence~$N_G(C) \subseteq X_S \cup \{v_1, v_5, x\}$.
 
 Consider the graph~$G' := G - X_S$. To prove \bmp{that~$\opd(G) \leq |S|$}, it suffices to \bmp{prove~$\opd(G') \leq |S| - |X_S|$.} Observe that~$x \in V(G')$ and~$v_1, \ldots, v_5 \in N_{G'}(x)$ lie in order of increasing index on the induced path~$P$ in~$G'-x$ from~$v_1$ to~$v_5$, such that~$N_{G'}(x) \cap P = \{v_1, \ldots, v_5\}$. Let~$C'$ be the connected component of~$G'-\{v_1, v_5, x\}$ containing~$P - \{v_1, v_5\}$. In order to apply \cref{reduction:generalized-fan-rule} we show that~$G'[V(C') \cup \{v_1, v_5, x\}]$ is \op.
 
 Since~$C$ is connected and~$N_G(C) \subseteq X_S \cup \{v_1, v_5, x\}$ we have that~$C$ is a connected component of~$G - (X_S \cup \{v_1, v_5, x\}) = G' - \{v_1, v_5, x\}$. As~$C'$ is also a connected component of~$G' - \{v_1, v_5, x\}$ and both~$C$ and~$C'$ contain~$P-\{v_1,v_5\}$ they are the same connected component. It follows that~$G[V(C') \cup \{v_1, v_5, x\}] = G[V(C) \cup \{v_1, v_5, x\}]$, which is \op by \cref{def:augmentedmod} since it only intersects with~$X_0 \cup X_1$ on the single vertex~$x$.
 
 This shows \cref{reduction:generalized-fan-rule} can be applied to~$G'$ with vertex~$x$ and the path~$P$. Since~$S \setminus X_S$ forms a \bmp{size-$(|S| - |X_S|)$} \op deletion set for~$G' \setminus e$, it follows there is a \bmp{size-$(|S| - |X_S|)$} \op deletion set~$S'$ for~$G'$. Then~$S' \cup X_S$ forms a \bmp{size-$|S|$} \op deletion set for~$G$.
\end{proof}

We now show how this reduction rule can be applied to reduce the number of edges between a vertex~$x \in X_0 \cup X_1$ and a connected component in~$G-(X_0 \cup X_1 \cup Z)$ to a constant. This leads to an~$\Oh(k^4)$ bound on~$N_G(X_0 \cup X_1)$; \mic{see Figure \cref{fig:apply-irrelevant-edge}.}

\begin{figure}[bt]
  \centering
  \includegraphics{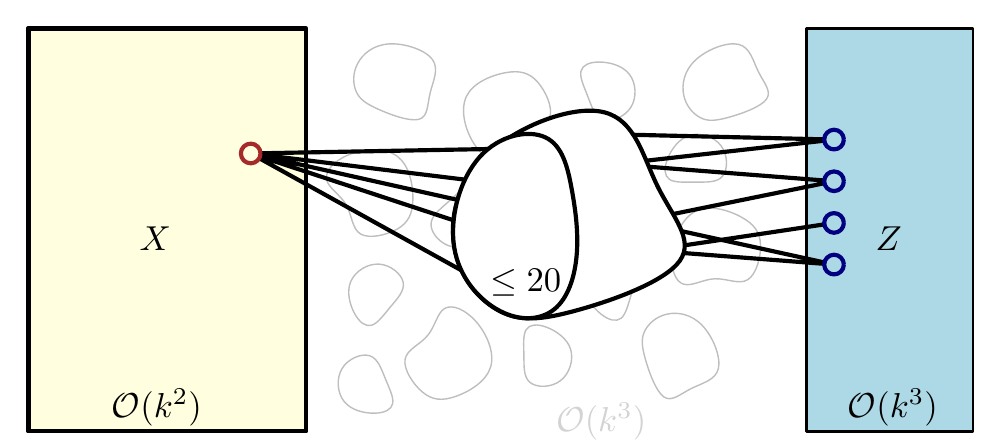}
  \caption{An illustration of \cref{lem:apply-irrelevant-edge}.
  Given a $(k,c,d)$-\op decomposition~$(X_0, X_1, Z)$ of a graph~$G$, a vertex~$x \in X = X_0 \cup X_1$ and a component~$C$ of~$G-(X_0 \cup X_1 \cup Z)$, we are guaranteed that $|N(C) \cap Z| \le 4$ and we can apply \cref{reduction:irrelevant-edge} until $|N(x) \cap V(C)| \leq 20$. \bmp{The expressions at the bottom bound the size of~$X$, the number of components of~$G - (X \cup Z)$, and size of~$Z$.}}
  \label{fig:apply-irrelevant-edge}
\end{figure}

\begin{lemma} \label{lem:apply-irrelevant-edge}
 There is a polynomial-time algorithm that, given a $(k,c,d)$-\op decomposition~$(X_0, X_1, Z)$ of a graph~$G$, a vertex~$x \in X_0 \cup X_1$ and a component~$C$ of~$G-(X_0 \cup X_1 \cup Z)$, applies \cref{reduction:irrelevant-edge} or concludes that~$|N_G(x) \cap V(C)| \leq 20$.
\end{lemma}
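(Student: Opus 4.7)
The plan is to find a 5-tuple to which \cref{reduction:irrelevant-edge} applies whenever $|N_G(x) \cap V(C)| > 20$. First, since $H := G[V(C) \cup \{x\}]$ is a subgraph of $G - (X_0 \cup X_1 \setminus \{x\})$, which is outerplanar by \cref{lem:augmentedmod:properties}, and $H - x = C$ is connected (as $C$ is a component of $G - (X_0 \cup X_1 \cup Z)$), applying \cref{lem:neighborhood} to $H$ and $x$ produces in polynomial time an induced path $P^*$ in $C$ that contains every vertex of $N_G(x) \cap V(C)$, and whose endpoints are themselves $x$-neighbors. Let $t := |N_G(x) \cap V(C)|$ and let $v_{i_1}, \ldots, v_{i_t}$ denote the $x$-neighbors in their order along $P^*$.

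For each $j \in [t-4]$, let $P^{(j)} \subseteq P^*$ be the subpath from $v_{i_j}$ to $v_{i_{j+4}}$, and let $C_0^{(j)}$ be the connected component of $C - \{v_{i_j}, v_{i_{j+4}}\}$ containing the interior of $P^{(j)}$. Because $P^*$ is induced we have $N_G(x) \cap V(P^{(j)}) = \{v_{i_j}, \ldots, v_{i_{j+4}}\}$. Since $V(C) \cap Z = \emptyset$ and $C$ is a component of $G - (X_0 \cup X_1 \cup Z)$, the component of $G - (X_0 \cup X_1) - \{v_{i_j}, v_{i_{j+4}}\}$ containing the interior of $P^{(j)}$ is disjoint from $Z$ if and only if no vertex of $V(C_0^{(j)})$ has a neighbor in $Z$: any path from $C_0^{(j)}$ to $Z$ in $G - (X_0 \cup X_1) - \{v_{i_j}, v_{i_{j+4}}\}$ has to leave $C$ via an edge from some vertex of $C_0^{(j)}$ to some vertex of $Z$. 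The algorithm iterates over all~$j$, tests this condition in polynomial time, and applies \cref{reduction:irrelevant-edge} with the tuple $(v_{i_j}, \ldots, v_{i_{j+4}})$ and the path $P^{(j)}$ as soon as a valid~$j$ is found.

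If no valid~$j$ exists, call~$j$ \emph{bad for}~$z \in N_G(C) \cap Z$ whenever $V(C_0^{(j)}) \cap N_G(z) \neq \emptyset$; then each~$j$ is bad for at least one of the at most four vertices of~$N_G(C) \cap Z$ (using \cref{def:op-decomp}\eqref{item:decomp:neighbors}). To conclude $t \leq 20$, it suffices to prove that any single~$z$ is bad for at most four values of~$j$, which gives $t - 4 \leq 16$. For this, I exploit that $G[V(C) \cup \{x, z\}]$ is also outerplanar, being a subgraph of $G - (X_0 \cup X_1 \setminus \{x\})$. Five bad indices $j_1 < \ldots < j_5$ with witnesses $w_\ell \in V(C_0^{(j_\ell)}) \cap N_G(z)$ should allow one to select a triple of indices spread far enough apart along $P^*$ and to assemble three internally vertex-disjoint $(x,z)$-paths of the form $x \to v_{i_{j_\ell + 2}} \to (\text{path inside } C_0^{(j_\ell)}) \to w_\ell \to z$, yielding a $K_{2,3}$-minor in the outerplanar graph $G[V(C) \cup \{x, z\}]$ and hence a contradiction. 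I expect the main obstacle to be the disjointness step: it should follow from \cref{criterion:edge-removal} applied within $G[V(C) \cup \{x\}]$, which forbids three internally vertex-disjoint paths between any pair of its vertices and thereby forces sufficiently separated pairs of cut vertices $\{v_{i_{j_\ell}}, v_{i_{j_\ell+4}}\}$ to place the corresponding $C_0^{(j_\ell)}$'s in different parts of~$C$, but the exact spacing threshold and choice of witnesses~$w_\ell$ needed to push the bound to $4$ per~$z$ will require careful case analysis.
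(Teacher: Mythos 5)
Your algorithm and the reduction of the problem to the claim ``each $z \in N_G(C) \cap Z$ is bad for at most four indices $j$'' are sound, and your test for whether the component in \cref{reduction:irrelevant-edge} avoids $Z$ is correctly justified. The genuine gap is in the proof of that claim. Your plan is: from five bad indices $j_1 < \dots < j_5$, ``select a triple of indices spread far enough apart'' and build three internally vertex-disjoint $(x,z)$-paths through the corresponding regions $C_0^{(j_\ell)}$, contradicting outerplanarity via a $K_{2,3}$-minor. For the three paths to be internally disjoint you need the three regions to be disjoint, which (as in the paper's disjointness argument for its windows) requires the chosen indices to be pairwise at least $4$ apart. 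Five bad indices do not guarantee such a triple: the bad set could be $\{m, m+1, m+2, m+3, m+4\}$, or more generally any set covered by two intervals of length four, and then no admissible triple exists. So the $K_{2,3}$-packing route can at best certify ``at most $8$ bad indices per $z$,'' which yields $t \le 4 + 4\cdot 8 = 36$, not the claimed $4$ per $z$ and $t \le 20$. Moreover, five consecutive bad indices can all be witnessed by neighbors of $z$ funneling through a single spot on $P^*$, so no amount of case analysis on the choice of witnesses $w_\ell$ rescues the packing argument at the threshold of five.

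The claim itself is true, but it needs a different argument: apply \cref{lem:attaching:to:cycle} to the cycle formed by $x$ together with $P^*$ in the outerplanar graph $G[V(C) \cup \{x,z\}]$. The component of that graph minus the cycle which contains $z$ attaches to at most two \emph{consecutive} cycle vertices, so every $G$-neighbor of $z$ inside $C$ is either one of (at most) two consecutive vertices of $P^*$ or lies in a piece of $C - V(P^*)$ hanging off those vertices; two consecutive vertices of $P^*$ lie in the interiors of at most four of the windows $P^{(j)}$, giving the bound of four bad indices per $z$. This ``attachments to a cycle are consecutive'' funneling is exactly the machinery the paper uses (its Claims~\ref{claim:two-consecutive-neighbors}--\ref{claim:C_i-cap-P}, applied in the slightly larger graph $C^+$ that already contains the at most four $Z$-attachment vertices, so that the five pairwise disjoint windows $C_1, C_5, C_9, C_{13}, C_{17}$ cannot each contain a $Z$-vertex). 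Your reorganization of the counting is viable, but only once this step is supplied.
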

\begin{proof}
 We first describe the algorithm and then proceed to prove its correctness.
 \proofsubparagraph*{Algorithm}
 If~$x \not\in N_G(C)$ then conclude~$|N(x) \cap V(C)| \leq 20$. Otherwise let~$C^+ := G[V(C) \cup (N_G(C) \cap Z) \cup \{x\}]$. Apply \cref{lem:neighborhood} to find an induced path~$P$ in~$C^+$ containing all of~$N_{C^+}(x)$ (we will show~$C^+$ is \op). Let the vertices~$N_{C^+}(x) = \{v_1, \ldots, v_\ell\}$ be indexed by the order in which they occur on~$P$. For all~$1 \leq i \leq \ell-4$ let~$P_i$ be the subpath of~$P$ from~$v_i$ to~$v_{i+4}$ and let~$C_i$ denote the connected component of~$C^+ - \{v_i,v_{i+4},x\}$ containing~$P_i - \{v_i, v_{i+4}\}$. If for some~$1 \leq i \leq \ell-4$ we have~$V(C_i) \cap Z = \emptyset$ then apply \cref{reduction:irrelevant-edge} with~$x$ and~$P_i$ to remove the edge~$xv_{i+2}$. Otherwise conclude~$|N(x) \cap V(C)| \leq 20$.
 
 All operations can be performed in polynomial time.
 
 \proofsubparagraph*{Correctness}
  We first show \cref{lem:neighborhood} is applicable. Clearly~$C^+ -x = G[V(C) \cup (N_G(C) \cap Z)]$ is connected since~$C$ is connected, so it remains to show that~$C^+$ is \op. This follows from the fact that~$C^+$ is a subgraph of~$G - ((X_0 \cup X_1) \setminus \{x\})$, which is \op by \cref{lem:augmentedmod:properties}. For the remainder of the proof we first establish a number of properties of the graphs defined in the algorithm.

  \begin{countclaim} \label{claim:two-consecutive-neighbors}
   Any connected component of~$C^+ - (V(P) \cup \{x\})$ has at most two neighbors in~$V(P)$ and they must be consecutive along~$P$.
  \end{countclaim}
  \begin{claimproof}
   Consider the cycle in~$C^+$ formed by the vertices~$V(P) \cup \{x\}$ (recall the first and last vertices of~$P$ are neighbor to~$x$). Since~$C^+$ is \op the claim follows directly from \cref{lem:attaching:to:cycle}.
  \end{claimproof}
  
  \begin{countobs} \label{obs:components}
   For any~$1 \leq i \leq \ell-4$, since~$C_i$ is a connected component of~$C^+ - \{v_i, v_{i+4}, x\}$ we have that any connected component of~$C_i -V(P)$ is also a connected component of~$C^+ - \{v_i, v_{i+4}, x\} - V(P) = C^+ - (V(P) \cup \{x\})$.
  \end{countobs}
  
  \begin{countclaim} \label{claim:C_i-cap-P}
   For all~$1 \leq i \leq \ell-4$ we have~$V(C_i) \cap V(P) = V(P_i - \{v_i, v_{i+4}\})$.
  \end{countclaim}
  \begin{claimproof}
  Let~$u,v \in V(P) \cap C_i$ be distinct vertices. Since~$C_i$ is connected, there exists a path in~$C_i$ connecting~$u$ and~$v$. Let~$P'$ be a shortest $(u,v)$-path in~$C_i$. If~$P'$ contains a vertex not in~$P$ then this is a vertex in a connected component~$D$ in~$C_i - V(P)$. By \cref{obs:components} and \cref{claim:two-consecutive-neighbors} we have that~$D$ has at most two neighbors in~$P$ and they are consecutive along~$P$. Since the path~$P'$ must enter and leave~$D$, the path visits both these neighbors, however since these neighbors are adjacent we can obtain a shorter $(u,v)$-path by skipping vertices in~$D$. This contradicts that~$P'$ is a shortest $(u,v)$-path. It follows that the shortest path in~$C_i$ between any two vertices from~$P$ is a subpath of~$P$. Since~$C_i$ does not contain~$v_i$ and~$v_{i+4}$ by definition, we have~$V(C_i) \cap V(P) = V(P_i - \{v_i, v_{i+4}\})$.
  \end{claimproof}
  
  \begin{countclaim} \label{claim:N(C_i)}
    For all~$1 \leq i \leq \ell-4$, if~$V(C_i) \cap Z = \emptyset$ then $N_G(C_i) \subseteq X_0 \cup X_1 \cup \{v_i, v_{i+4}\}$.
  \end{countclaim}
  \begin{claimproof}
    Suppose for some~$1 \leq i \leq \ell-4$ that~$V(C_i) \cap Z = \emptyset$ and let~$v \in N_G(C_i)$. We show~$v \in X_0 \cup X_1 \cup \{v_i, v_{i+4}\}$. If~$v \in V(C^+)$ then since~$C_i$ is a connected component of~$C^+ - \{v_i, v_{i+4}, x\}$ we have~$v \in \{v_i, v_{i+4}, x\} \subseteq (X_0 \cup X_1 \cup \{v_i, v_{i+4}\}$, so suppose~$v \not\in V(C^+) \supseteq V(C)$. Since~$v \in N_G(C_i)$ there exists a vertex~$u \in N_G(v) \cap V(C_i)$ and note that~$u \not\in Z$ because~$N_G(C_i) \cap Z = \emptyset$. Since~$u \in V(C_i) \subseteq V(C^+ - \{v_i, v_{i+4}, x\}) \subseteq V(C) \cup (N_G(C) \cap Z)$ and $u \not\in Z$ we have~$u \in V(C)$. Because~$u$ and~$v$ are neighbors we have~$v \in N_G(C)$ so~$v \in X_0 \cup X_1 \cup Z$ since~$C$ is a connected component of~$G-(X_0 \cup X_1 \cup Z)$. Clearly if~$v \in X_0 \cup X_1$ the claim holds, so suppose~$v \in Z$. However since~$v \in N_G(C)$ and~$v \in Z$ we have~$v \in N_G(C) \cap Z$ so by definition of~$C^+$ we have~$v \in V(C^+)$, a contradiction since we assumed~$v \not\in V(C^+)$.
  \end{claimproof}
 
  Suppose that for some~$1 \leq i \leq \ell -4$ we have~$V(C_i) \cap Z = \emptyset$. In order to show that~\cref{reduction:irrelevant-edge} applies to~$x$ and~$P_i$,
  first note that~$(X_0, X_1, Z)$ is a $(k,c,d)$-outerplanar decomposition of~$G$ and~$x \in X_0 \cup X_1$. The vertices~$v_i, \ldots, v_{i+4}$ lie on~$P_i$, an induced path in~$G - (X_0 \cup X_1)$ from~$v_i$ to~$v_{i+4}$ such that~$N(x) \cap V(P_i) = \{v_i, \ldots, v_{i+4}\}$. We show that~$C_i$ is the connected component of~$G-(X_0 \cup X_1) - \{v_i, v_{i+4}\}$ containing~$P_i - \{v_i , v_{i+4}\}$.
  
  Note that~$C_i$ does not contain any vertices from~$X_0 \cup X_1 \cup \{v_i, v_{i+4}\}$ so~$C_i$ is a (connected) subgraph of~$G-(X_0 \cup X_1) - \{v_i, v_{i+4}\}$. By \cref{claim:N(C_i)} we have~$N_G(C_i) \subseteq X_0 \cup X_1 \cup \{v_1, v_{i+4}\}$. We can conclude that~$C_i$ is a connected component of~$G-(X_0 \cup X_1) - \{v_i, v_{i+4}\}$, and by definition it contains~$P_i - \{v_i, v_{i+4}\}$.
  
  Finally, observe that~$G[V(C_i) \cup \{v_i, v_{i+4}, x\}]$ is \op as it is a subgraph of~$C^+$. So since~$V(C_i) \cap Z = \emptyset$ we have that \cref{reduction:irrelevant-edge} applies.
 
  Now suppose that the algorithm was unable to apply \cref{reduction:irrelevant-edge}, i.e, for all~$1 \leq i \leq \ell-4$ we have~$V(C_i) \cap Z \neq \emptyset$. We show~$|N(x) \cap V(C)| \leq 20$.
  Suppose for contradiction that~$|N(x) \cap V(C)| > 20$. Then~$N_{C^+}(x) > 20$, so the path~$P$ contains more than 20 neighbors of~$x$, i.e.,~$\ell > 20$ so~$C_1, C_5, C_9, C_{13}, C_{17}$ are defined. Since~$V(C_i) \cap Z \neq \emptyset$ for all~$1 \leq i \leq \ell-4$ we know~$C_1, C_5, C_9, C_{13}, C_{17}$ all contain a vertex from~$Z$. We show~$C_1, C_5, C_9, C_{13}, C_{17}$ are disjoint. 
  
  If~$C_1, C_5, C_9, C_{13}, C_{17}$ are not disjoint, then there exist integers~$i,j \in \{1,5,9,13,17\}$ and a vertex~$v$ such that~$i < j$ and~$v \in V(C_i) \cap V(C_j)$. Using \cref{claim:C_i-cap-P} we find that~$V(P) \cap V(C_i) \cap V(C_j) \subseteq (V(C_i) \cap V(P)) \cap (V(C_j) \cap V(P)) = V(P_i - \{v_i, v_{i+4}\}) \cap V(P_j - \{v_j, v_{j+4}\}) = \emptyset$, so~$v \not\in V(P)$. Then~$v$ is a vertex in some connected component~$D$ of~$C_i - V(P)$ and a connected component~$D'$ of~$C_j - V(P)$. By \cref{obs:components}, both~$D$ and~$D'$ are connected components of~$C^+ - (V(P) \cup \{x\})$, and since both contain~$v$, they are the same connected component.
  Since~$C_i$ is connected,~$D$ must contain a neighbor~$u_1 \in V(C_i) \cap V(P) = V(P_i - \{v_i, v_{i+4}\})$. Similarly~$D'=D$ must contain a neighbor~$u_2 \in V(C_j) \cap V(P) = V(P_j - \{v_j, v_{j+4}\})$. Since these two sets are disjoint we have~$u_1 \neq u_2$. By \cref{claim:two-consecutive-neighbors} these neighbors must be the only neighbors of~$D$ and they must be consecutive along~$P$. However the vertex~$v_{i+4}$ lies on~$P$ between~$u_1$ and~$u_2$ since~$i+4 \leq j$ so~$u_1$ and~$u_2$ are not consecutive along~$P$. By contradiction,~$C_1, C_5, C_9, C_{13}, C_{17}$ are disjoint.
  
  Since~$C_1, C_5, C_9, C_{13}, C_{17}$ are disjoint subgraphs of~$C^+$ and each subgraph contains a vertex from~$Z$, we have that~$|Z \cap V(C^+)| \geq 5$. By definition of~$C^+$ we know~$V(C^+) = V(C) \cup \{x\} \cup (N(C) \cap Z)$. Recall that~$|N(C) \cap Z| \leq 4$ by \cref{def:op-decomp}\eqref{item:decomp:neighbors}, so then~$(V(C) \cup \{x\}) \cap Z \neq \emptyset$. This is a contradiction since~$x \in X_0 \cup X_1$ and~$C$ is a connected component of~$G-(X_0 \cup X_1 \cup Z)$, hence~$|N(x) \cap V(C)| \leq 20$.
\end{proof}

\mic{We are going to apply \cref{lem:apply-irrelevant-edge} to a computed \op decomposition in order to reduce the total neighborhood size of $X_0 \cup X_1$.
This allows us to construct a final modulator~$L$ of size~$\Oh(k^4)$ with a structure referred to in previous works as a protrusion decomposition.
}
\mic{We can now proceed to proving a lemma that encapsulates application of \cref{reduction:irrelevant-edge}.}

\begin{lemma}\label{lem:final-decomposition}
There exists a function~$f_5 \colon \setN^2 \rightarrow \setN$ and a polynomial-time algorithm that, given a $(k,c,d)$-\op decomposition $(X_0, X_1, Z)$ of a graph $G$, either applies \cref{reduction:irrelevant-edge} or \cref{reduction:articulation-point}, or outputs a set $L \subseteq V(G)$ such that
\begin{enumerate}
    \item \label{lem:decomp2:size} $|L| \le f_5(c,d) \cdot (k+3)^4$,
    \item \label{lem:decomp2:edges} $|E_G(L,L)| \le f_5(c,d) \cdot (k+3)^4$,
    \item \label{lem:decomp2:comps} there are at most $f_5(c,d) \cdot (k+3)^4$ connected components in $G- L$, and
    \item \label{lem:decomp2:nbr} for each connected component $C$ of $G-L$ the graph $G \brb C$ is \op and $|N_G(C)| \le 4$.
\end{enumerate}
Furthermore, $f_5(c,d) = 24 \cdot (20 \cdot f_4(c,d) + d + c + c^2)$ (see \cref{lem:modulator:k4}).
\end{lemma}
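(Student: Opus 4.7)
The plan is to first exhaustively apply the irrelevant-edge rule, then absorb all remaining neighbors of $X_0 \cup X_1$ outside the modulator into a larger separator via an LCA-closure argument. Concretely, I would iterate over every pair $(x, C)$ with $x \in X_0 \cup X_1$ and $C$ a connected component of $G - (X_0 \cup X_1 \cup Z)$, invoking \cref{lem:apply-irrelevant-edge}; if any call triggers \cref{reduction:irrelevant-edge}, apply it and return. Otherwise $|N_G(x) \cap V(C)| \le 20$ for every such pair. Let $B := N_G(X_0 \cup X_1) \setminus (X_0 \cup X_1)$ collect all vertices outside the modulator that are adjacent to it. Combining the $20$-per-pair bound with \cref{lem:modulator:k4} gives $|B \setminus Z| \le 20\, f_4(c,d)(k+3)^4$, while trivially $|B \cap Z| \le |Z|$.

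Next apply \cref{lem:modulator:lca} to the outerplanar graph $G - (X_0 \cup X_1)$ with input set $B \cup Z$, yielding $Z^* \supseteq B \cup Z$ of size $\Oh(k^4)$ such that every connected component of $G - (X_0 \cup X_1) - Z^*$ has at most four neighbors in $Z^*$. Set $L := X_0 \cup X_1 \cup Z^*$. The inclusion $B \subseteq Z^*$ is the crucial point: no vertex of $V(G) \setminus L$ can be adjacent to $X_0 \cup X_1$, so every component $C$ of $G - L$ satisfies $N_G(C) \subseteq Z^*$ with $|N_G(C)| \le 4$, and $G\brb C$ is a subgraph of the outerplanar graph $G - (X_0 \cup X_1)$ and is therefore outerplanar. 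This yields property~(\ref{lem:decomp2:nbr}). If some component of $G - L$ still has $|N_G(C)| \le 1$, apply \cref{reduction:articulation-point} and return; otherwise output $L$.

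The remaining bounds follow by direct counting. For~(\ref{lem:decomp2:size}) combine $|X_0 \cup X_1| = \Oh(k^2)$ with $|Z^*| = \Oh(k^4)$. For~(\ref{lem:decomp2:comps}), after the articulation-point cleanup every component of $G - L$ has at least two neighbors in $Z^*$; the bipartite incidence graph between $Z^*$ and these components is a minor of the outerplanar graph $G - (X_0 \cup X_1)$, so \cref{lem:outerplanar-bipartite} caps the number of components by $4|Z^*|$. The main obstacle is property~(\ref{lem:decomp2:edges}): naively, vertices of $X_0$ could contribute $|X_0| \cdot |Z^*| = \Oh(k^5)$ edges to $L$, which is too much. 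I would split $E_G(L,L)$ into edges touching $X_0$ and edges inside $X_1 \cup Z^*$, bounding the latter by $2(|X_1|+|Z^*|)$ using outerplanarity of $G - X_0$. For the edges touching $X_0$, edges within $X_0 \cup X_1$ contribute $\Oh(k^3)$, edges to $B \setminus Z$ cost at most $20$ per (modulator, component) pair and hence sum to $20\, f_4(c,d)(k+3)^4$, and edges from each $v \in X_0$ to $B \cap Z \subseteq Z$ are capped by $2(|Z|+1)$ via outerplanarity of $G[\{v\} \cup Z] \subseteq G - ((X_0 \cup X_1) \setminus \{v\})$ from \cref{lem:augmentedmod:properties}, summing to $\Oh(k \cdot |Z|) = \Oh(k^4)$. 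Collecting constants shows all quantities fit within $f_5(c,d)(k+3)^4$.
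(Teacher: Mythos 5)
Your proposal follows essentially the same route as the paper: exhaustively invoke \cref{lem:apply-irrelevant-edge} over all pairs $(x,C)$, then apply \cref{lem:modulator:lca} to $G-(X_0\cup X_1)$ on the set $Z \cup N_G(X_0\cup X_1)$, set $L$ to be the modulator plus the resulting closure, and clean up single-neighbor components with \cref{reduction:articulation-point}; the counting arguments (the $20$-per-pair bound combined with \cref{lem:modulator:k4}, outerplanarity of $G-X_0$ for internal edges, and \cref{lem:outerplanar-bipartite} for the component count) also match. The only deviation is cosmetic: for edges between $X_0$ and $Z$ you invoke outerplanarity of $G[\{v\}\cup Z]$ per vertex $v \in X_0$, whereas the paper uses the cruder product bound $|X_0|\cdot|X_1\cup Z|$; both land at $\Oh(k^4)$, so the argument is correct as proposed.
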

\begin{proof}
 We first describe the algorithm and then proceed to prove its correctness.
 
 \proofsubparagraph*{Algorithm}
 For all~$x \in X_0 \cup X_1$ and connected components~$C$ in~$G-(X_0 \cup X_1 \cup Z)$ we run the algorithm from \cref{lem:apply-irrelevant-edge} to apply \cref{reduction:irrelevant-edge} or conclude that~$|N_G(x) \cap V(C)| \leq 20$. If \cref{reduction:irrelevant-edge} could not be applied to any~$x$ and~$C$, we take~$X = X_0 \cup X_1$ and
 apply \cref{lem:modulator:lca} on the graph~$G-X$ with vertex set~$Z \cup N_G(X)$ to obtain a set~$Z' \subseteq V(G) \setminus X$.
 We set $L = X \cup Z'$.
 If some component~$C$ of $G -  L$ has at most one neighbor, we apply \cref{reduction:articulation-point} to remove~$C$. Otherwise we return~$L$.
 
 \proofsubparagraph*{Correctness}
 It can easily be seen that \cref{lem:apply-irrelevant-edge} applies on all~$x \in X_0 \cup X_1$ and connected components~$C$ in~$G-(X_0 \cup X_1 \cup Z)$. If by calling \cref{lem:apply-irrelevant-edge} we have applied \cref{reduction:irrelevant-edge} we can terminate the algorithm. Otherwise it holds that~$|N_G(x) \cap V(C)| \leq 20$ for each~$x \in X_0 \cup X_1$ and each  connected component~$C$ of~$G-(X_0 \cup X_1 \cup Z)$.
 Let us examine~$Z'$ and $L$ given by the execution of the algorithm.
 
 Clearly~$G-X$ is \op as it is a subgraph of~$G-X_0$, which  justifies that the algorithm correctly applies \cref{lem:modulator:lca}. To show Condition~\ref{lem:decomp2:size} and~\ref{lem:decomp2:edges}, we first prove a bound on~$|E_G(X,V(G) \setminus (X \cup Z)|$.
 
 Consider the component graph~$H = \mathcal{C}(G,X \cup Z)$. For any~$x \in X$ and connected component~$C$ of~$G-(X \cup Z)$ if~$H$ does not contain an edge between~$x$ and the vertex representing~$C$, then~$|N_G(x) \cap V(C)| = 0$. If~$H$ contains an edge between~$x$ and the vertex representing~$C$, then our earlier bound applies:~$|N_G(x) \cap V(C)| \leq 20$. By \cref{lem:modulator:k4} we have that~$H$ contains at most~$f_4(c,d) \cdot (k+3)^4$ edges, so $|E_G(X,V(G) \setminus (X \cup Z))| \le 20 \cdot f_4(c,d) \cdot (k+3)^4$ and
 $|N_G(X) \setminus Z| \leq 20 \cdot f_4(c,d) \cdot (k+3)^4$.

 By \cref{lem:modulator:lca} we have~$|Z'| \leq 6 \cdot |Z \cup N_G(X)|$ and by \cref{def:op-decomp}\eqref{def:op-decomp:size} we have~$|Z| \leq d \cdot (k+3)^3$. 
 To show Condition~\ref{lem:decomp2:size} we check that
\[
 |Z'| \leq 6 \cdot (20 \cdot f_4(c,d) + d)) \cdot (k+3)^4,\text{ and}
 \]
 \[
 |L| = |Z'| + |X| \le 6 \cdot (20 \cdot f_4(c,d) + d + c)) \cdot (k+3)^4.
 \]

 Let us now bound the number of edges in $E_G(L,L)$.
 We group these edges into four categories: (a)~edges within $X_0$, (b) edges between $X_0$ and $X_1 \cup Z$, (c) edges between $X_0$ and $L \setminus (X \cup Z)$, and (d)~edges within $L \setminus X_0$.
 The number of edges in (a) is clearly at most $|X_0|^2 = c^2 \cdot k^2$.
 Similarly, in case (b) we obtain the bound $|X_0| \cdot |X_1 \cup Z| = ck \cdot (3c \cdot k^2 + d \cdot (k+3)^3) \le (3c^2+d) \cdot (k+3)^4$.
 To handle case (c), observe that $E_G(X_0, L \setminus (X \cup Z)) \subseteq E_G(X,V(G) \setminus (X \cup Z))$ and the size of this set has already been bounded by $20 \cdot f_4(c,d) \cdot (k+3)^4$.
 Finally, the subgraph of $G$ induced by $L \setminus X_0$ is \op and by \cref{lem:prelim:op-edges} we bound the number of edges in case (d) by $2\cdot |L| \le 2 \cdot 6 \cdot (20 \cdot f_4(c,d) + d + c)) \cdot (k+3)^4$.
 By collecting all summands we obtain that $|E_G(L,L)| \le 13 \cdot (20 \cdot f_4(c,d) + d + c + c^2)) \cdot (k+3)^4$ and prove Condition~\ref{lem:decomp2:edges}.
 
 To show Condition~\ref{lem:decomp2:nbr} note that~$Z \cup N_G(X) \subseteq Z'$ by \cref{lem:modulator:lca} and so $Z \cup N_G[X] \subseteq L$. Consider a~connected component~$C$ of~$G-L$. Since~$C$ does not contain neighbors of~$X$ we have~$N_G[C] \cap X = \emptyset$. So then~$G\brb C$ is a subgraph of~$G-X$, hence it is \op.
 Furthermore, by \cref{lem:modulator:lca} we know that~$|N_{G-X}(C)| \leq 4$ and so $|N_{G}(C)| \leq 4$.

 If ~$|N_G(C)| \leq 1$ for some connected component~$C$ of~$G-L$, we have applied \cref{reduction:articulation-point} and terminated the algorithm. If the algorithm is unable to apply this reduction rule, we know that all components of~$G-L$ have at least two neighbors, which must belong to $L \setminus X$. The vertices representing these components in the component graph~$\mathcal{C}(G-X, L)$ all have degree at least~2. Note also that this graph is bipartite (by definition) and \op since it is a minor of~$G-X$, which is \op. It follows from \cref{lem:outerplanar-bipartite} that~$G-L$ has at most~$4 \cdot |L| \leq 4 \cdot 6 \cdot (20 \cdot f_4(c,d) + d + c)) \cdot (k+3)^4$ components. This shows that Condition~\ref{lem:decomp2:comps} holds.
\end{proof}

\section{Compressing the outerplanar subgraphs} \label{sec:reduction-rules}

\subsection{Reducing the number of biconnected components}

Once we arrive at the decomposition from \cref{lem:final-decomposition}, it remains to compress  outerplanar subgraphs with a small boundary.
First, we present a reduction to bound the number of biconnected components in such a subgraph.
It will also come in useful later, for reducing the maximum size of a face in a biconnected \op graph with a small boundary. 
Intuitively, this reduction checks whether an \op subgraph with exactly two non-adjacent neighbors can supply one or two vertex-disjoint paths to the rest of the graph and replaces this subgraph with a minimal gadget with the same property, see also \cref{fig:reductions-2-nbrs}.

\begin{reduction} \label{reduction:replace-component}
Consider a graph $G$ and vertex set $C \subseteq V(G)$ such that $N_G(C) = \{x,y\}$, $xy \not\in E(G)$, $G[C]$ is connected, and $G\langle C \rangle$ is \op.
Let $P = (u_1, u_2, \dots, u_m)$, $u_1 = x$, $u_m = y$ be any shortest path connecting $x$ and $y$ in $G\langle C \rangle$ and $D_1, D_2, \dots, D_\ell$ be the connected components of $G\langle C \rangle - V(P)$.
We consider 3 cases:
\begin{enumerate}
    \item if there is a component $D_i$, for which $N_G(D_i)$ includes two non-consecutive elements of $P$, replace $C$ with two vertices $c_1, c_2$, each adjacent to both $x$ and~$y$, \label{item:replace-component:1}
    \item if there are two distinct components $D_i, D_j$, for which $|N_G(D_i) \cap N_G(D_j)| \ge 2$, replace $C$ with two vertices $c_1, c_2$, each adjacent to both $x$ and~$y$, \label{item:replace-component:2}
    \item otherwise replace $C$ with one vertex $c_1$ adjacent to both $x$ and~$y$. \label{item:replace-component:3}
\end{enumerate}
\end{reduction}

\begin{lemma}\label{lem:replace-component:minor}
Let $x, y \in V(G)$ and $C \subseteq V(G)$ be such that \cref{reduction:replace-component} applies and let~$G'$ be the graph obtained after application of the rule. Then $G'$ is a minor of $G$. 
\end{lemma}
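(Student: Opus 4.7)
The plan is to establish $G'$ as a minor of $G$ by exhibiting an explicit minor model $\phi \colon V(G') \to 2^{V(G)}$ in which every vertex $v \in V(G) \setminus C$ is mapped to~$\{v\}$. The only nontrivial branch sets are those for the freshly introduced vertices $c_1$ (and $c_2$, in cases 1 and 2), and these must be defined as subsets of~$V(C)$. Once these are fixed, checking that the minor operations (contracting each branch set to a point and deleting any residual vertices of $C$ together with duplicate edges) exactly yield $G'$ will be routine.

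Case~3 is immediate: because $G[C]$ is connected and $N_G(C) = \{x,y\}$, taking $\phi(c_1) := V(C)$ gives a single connected branch set adjacent in~$G$ to both $x$ and~$y$, which realises the edges $xc_1$ and $c_1y$ of $G'$.

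For cases 1 and 2 I would construct two vertex-disjoint connected subsets $B_1, B_2 \subseteq V(C)$, each adjacent in~$G$ to both $x$ and~$y$, and set $\phi(c_i) := B_i$. The construction exploits the cycle that is guaranteed to live inside $G\langle C\rangle$: in case 1 the $(u_a,u_b)$-subpath of~$P$ together with a path through $V(D_i)$ forms such a cycle, and in case 2 the two components $D_i,D_j$ supply two internally vertex-disjoint paths between the shared pair of neighbours. Intuitively, one branch set will absorb the ``detour'' through $V(D_i)$ (respectively $V(D_j)$ in case 2), augmented with whatever $P$-segments are needed to carry edges to $x$ and to~$y$, while the other uses the complementary $P$-segment (respectively $V(D_j)$) together with the remaining interior of~$P$. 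Because $P$ is a shortest $(x,y)$-path in $G\langle C\rangle$ it is induced, so the path interiors and detours glue together cleanly and both branch sets remain connected.

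The main obstacle is the case analysis needed to handle the various positions of $u_a,u_b$ on~$P$ and the distribution of $x$- and $y$-neighbours inside~$C$: one has to argue in each sub-case that the chosen $B_1$ and $B_2$ really do carry edges to both endpoints, and that any remaining components $D_h$ or chords of~$P$ can be absorbed into one of the two branch sets without breaking disjointness. The outerplanarity of $G\langle C\rangle$ together with the shortest-path (hence induced) property of~$P$ is what makes this analysis close; once it is carried through, contracting $B_1$ and $B_2$ and discarding the rest of~$C$ produces exactly~$G'$, witnessing that $G'$ is a minor of~$G$.
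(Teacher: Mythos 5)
Your overall strategy has a genuine gap: you fix $\phi(v)=\{v\}$ for every $v\in V(G)\setminus C$, in particular $\phi(x)=\{x\}$ and $\phi(y)=\{y\}$, and then try to find two \emph{disjoint} connected branch sets $B_1,B_2\subseteq V(C)$ that are each directly adjacent to both $x$ and $y$ in $G$. Such sets need not exist. Concretely, let $C=\{a,b,c,d\}$ with $P=(x,a,b,c,y)$ an induced shortest path and $d$ adjacent exactly to $a$ and $c$; this falls under case~1 of \cref{reduction:replace-component} (the component $D_i=\{d\}$ sees the non-consecutive vertices $a=u_2$ and $c=u_4$). Here $a$ is the \emph{unique} neighbour of $x$ inside $C$, so no two disjoint subsets of $C$ can both be adjacent to $x$; your fallback of ``augmenting with $P$-segments'' does not help, since once $a$ and $c$ are absorbed into the branch set containing $d$, the leftover $\{b\}$ has no edge to $x$ or to $y$. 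The same obstruction arises in case~2 whenever the two shared attachment vertices $u_j,u_h$ of $D_{i_1},D_{i_2}$ are interior vertices of~$P$.

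The paper's proof sidesteps this precisely by \emph{not} keeping $\phi(x)$ and $\phi(y)$ as singletons: it contracts the prefix $(u_1,\dots,u_j)$ of $P$ into $x$ and the suffix $(u_h,\dots,u_m)$ into $y$, so that the branch sets of $x$ and $y$ reach down to the attachment vertices $u_j,u_h$, and then takes $c_1,c_2$ to be (in case~1) the middle segment $(u_{j+1},\dots,u_{h-1})$ and $D_i$, or (in case~2) the two components $D_{i_1},D_{i_2}$; all other components of $G\brq{N[C]}-V(P)$ are simply deleted. Your case~3 is correct and identical to the paper's. To repair cases~1 and~2, drop the requirement that $x$ and $y$ map to singletons and let their branch sets absorb the appropriate subpaths of $P$; with that change the rest of your outline goes through and no further case analysis on the positions of the attachment vertices is needed.
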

\begin{proof}
In case (\ref{item:replace-component:1}), there exists a component $D_i$ adjacent to non-consecutive vertices $u_j$, $u_h$ from $V(P)$, $j<h$.
Let $P_x,P_C,P_y$ denote the non-empty subpaths: $(u_1, \dots, u_j)$  $(u_{j+1}, \dots, u_{h-1})$, $(u_{h}, \dots, u_{m})$.
First, we remove all the connected components of $G[N[C]] - V(P)$ different from $D_i$.
Next, we contract $P_x$ into $x$, $P_y$ into $y$, $P_C$ into a vertex denoted $c_1$, and
 $D_i$ into a  vertex denoted $c_2$.
 By the choice of $D_i$ we see that each of $c_1, c_2$ is adjacent to both $x, y$, therefore we have obtained $G'$ through vertex deletions and edge contractions.
 
In case (\ref{item:replace-component:2}), there exist distinct components $D_{i_1}, D_{i_2}$ both adjacent to vertices $u_j$, $u_h$ from $V(P)$, $j<h$.
Let $P_x,P_y$ denote the subpaths $(u_1, \dots, u_j)$ and $(u_{h}, \dots, u_{m})$.
Again, we begin by removing all the connected components of $G[N[C]] - V(P)$ different from $D_{i_1}, D_{i_2}$.
Next, we contract $P_x$ into $x$, $P_y$ into $y$, $D_{i_1}$ into a vertex denoted $c_1$, and
 $D_{i_2}$ into a  vertex denoted $c_2$, thus obtaining $G'$.
 
In case (\ref{item:replace-component:3}), we simply contract $C$ into a vertex $c_1$.
\end{proof}

In order to show correctness of the reduction rule, we will prove that any \op deletion set in the new instance can be turned into an \op deletion set in the original instance without increasing its size.
If we replaced the vertex set $C$ with two vertices,  we show that any \op deletion set must break all the connections between the neighbors of $C$ which go outside $C$.
In the other case, when we replaced $C$ with just one vertex, we show that we can undo the graph modification from \cref{reduction:replace-component} while preserving the outerplanarity.

\begin{lemma}\label{lem:replace-component:correctness}
Let $x, y \in V(G)$ and $C \subseteq V(G)$ be such that \cref{reduction:replace-component} applies and let~$G'$ be the graph obtained after application of the rule. If~$S' \subseteq V(G')$ is an \op deletion set in $G'$, then there exists a set $S \subseteq V(G)$ such that $|S| \leq |S'|$ and which is an \op deletion set in $G$. 
\end{lemma}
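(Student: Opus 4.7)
My plan is to construct $S$ by taking $S'' := S' \cap V(G)$ and augmenting it with a vertex set $V_s \subseteq V(C)$ whose size is capped by $q_g := |S' \setminus V(G)|$, the number of gadget vertices in $S'$ (so $q_g \in \{0,1\}$ in case~3 and $q_g \in \{0,1,2\}$ in cases~1 and~2). Let $p$ denote the maximum number of internally-vertex-disjoint $(x,y)$-paths in $G\brb C$; outerplanarity of $G\brb C$ together with $xy \notin E(G)$ forces $p \leq 2$ (else one would obtain a $K_{2,3}$-minor), and Menger's theorem provides a minimum $(x,y)$-separator $V^* \subseteq V(C)$ of size exactly~$p$. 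I let $V_s \subseteq V^*$ have size $\min(q_g, p)$ and set $S := S'' \cup V_s$; then $|S| = |S''| + |V_s| \leq |S''| + q_g = |S'|$.

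A key structural observation is that $p \leq 1$ in case~3 of the reduction. Indeed, suppose $G\brb C$ contains two internally-vertex-disjoint $(x,y)$-paths; one may be chosen as the shortest path~$P$, and the other, call it~$Q$, must have its internal vertices entirely within some components~$D_i$ of $G\brb C - V(P)$. Following~$Q$ forward from~$x$, its first internal vertex lies in some component $D_{j_1}$ adjacent to~$x$ on~$P$; since case~1 of the reduction does not apply, the neighbors of $D_{j_1}$ on~$P$ are consecutive, so $N_G(D_{j_1}) \cap V(P) \subseteq \{x, u_2\}$. Because distinct components $D_i$ have no mutual edges and $Q$'s interior avoids $V(P) \setminus \{x,y\}$, the path~$Q$ must reach~$y$ directly from within $D_{j_1}$, forcing $y \in N_G(D_{j_1})$. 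But then $x$ and~$y$ are non-consecutive neighbors of $D_{j_1}$ on~$P$ (since $xy \notin E$), contradicting the failure of case~1.

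Outerplanarity of $G - S$ then splits into two scenarios. If $q_g \geq p$, the set $V_s = V^*$ fully separates $x$ from~$y$ in $G\brb C$, so $G - S$ decomposes as the union of the outerplanar subgraphs $G\brb C - V_s$ and $G - C - S''$ (the latter a subgraph of the outerplanar $G' - S'$) glued along $\{x, y\}$, with $x$ and~$y$ lying in different connected components of the former. This gluing factors into two successive single-vertex gluings, first at~$x$ and then at~$y$, each preserving outerplanarity by Observation~\ref{criterion:articulation-point}. In the opposite scenario $q_g < p$, outerplanarity of $G' - S'$ provides the decisive bound: the gadget in $G' - S'$ contributes $p_g - q_g$ disjoint $(x,y)$-paths, where $p_g \in \{1, 2\}$ is the total gadget size (so $p_g = 2$ in cases~1 and~2 and $p_g = 1$ in case~3), hence the external part $G - C - S''$ contributes at most $2 - (p_g - q_g)$ such paths. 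Combined with the at most $p - q_g$ paths remaining in $G\brb C - V_s$, the total in $G - S$ is at most $p + 2 - p_g \leq 2$, since $p \leq p_g$ in all cases. Outerplanarity of $G - S$ is then verified by applying Lemma~\ref{criterion:edge-removal} to $(G - S) + xy$ with the virtual edge $xy$: condition~(2) follows from this path count, while condition~(1) is checked component-by-component on $(G - S) - \{x,y\}$, invoking Lemma~\ref{lem:induced-cycle} on a suitable induced cycle through $x$ and~$y$---arising from the gadget paths in $G' - S'$ for external components, or from the minimum-separator structure of $G\brb C$ for internal ones.

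The primary obstacle will be the second scenario, specifically the construction and inducedness of the required $(x,y)$-cycles so that Lemma~\ref{lem:induced-cycle} can be invoked while ruling out interfering chords. Careful exploitation of the outerplanar structure of~$G\brb C$, together with the tight selection of $V_s$ from a minimum $(x,y)$-separator, is what should allow this verification to go through uniformly across the three reduction cases.
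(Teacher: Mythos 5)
There is a genuine gap in the second scenario ($q_g < p$). Your plan is to certify outerplanarity of $G-S$ by showing the \emph{stronger} statement that $(G-S)+xy$ is \op, via \cref{criterion:edge-removal} applied to the virtual edge $xy$. That stronger statement can be false even when $G-S$ is \op. Concretely, take $G\brb C \cong K_4 - xy$ (vertices $x,y,a,b$, all edges except $xy$): here the shortest path is $P=(x,a,y)$ and $D_1=\{b\}$ is adjacent to the non-consecutive vertices $u_1=x$ and $u_3=y$, so case~(1) of \cref{reduction:replace-component} applies, $p_g=2$, and $p=2$ with the two $(x,y)$-paths' interiors lying in the \emph{same} component of $G\brb C - \{x,y\}$ (this is not excluded by \cref{lem:paths-in-different-components}, which requires $xy\in E(G)$). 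If $S'$ avoids $x,y,c_1,c_2$ then $q_g=0$, $V_s=\emptyset$, and the internal component of $(G-S)-\{x,y\}$ is all of $C$; condition~(1) of \cref{criterion:edge-removal} then demands that $G\brb C + xy = K_4$ be \op, which fails. Your fallback of invoking \cref{lem:induced-cycle} cannot rescue this: $G\brb C$ contains no \emph{induced} cycle through $x$ and $y$ in this example (the $4$-cycle has the chord $ab$), and applying \cref{lem:induced-cycle} to $G-S$ itself would presuppose the outerplanarity you are trying to establish. A similar circularity appears already when $p=1$: for the internal component $C$ the object to be certified is exactly $(G\brb C)+xy$, and the ``minimum-separator structure'' supplies no cycle through $x$ and $y$ at all.

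The situation your argument misses is handled in the paper by a different mechanism: when both gadget vertices survive in $G'-S'$, they already realize two internally vertex-disjoint $(x,y)$-paths there, so \emph{no external component can be adjacent to both $x$ and $y$} (else $K_{2,3}$ in $G'-S'$); one then never adds the edge $xy$, but instead glues each external component onto $G\brb C$ at a single articulation point via \cref{criterion:articulation-point}. In the single-gadget case the paper re-expands $c_1$ into the path $P$ by subdivision and re-attaches the components $D_i$ one at a time using \cref{criterion:edge-removal} on the edges of $P$ (not on $xy$), exploiting that cases~(1) and~(2) of the rule do not hold. Your first scenario ($q_g \ge p$, delete a full minimum separator and glue at $x$ and then $y$) is sound and is a nice alternative to the paper's treatment of $S'\cap V(C')\neq\emptyset$, but the second scenario needs to be reworked along these lines rather than through outerplanarity of $(G-S)+xy$.
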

\begin{proof}
Let $C' \subseteq V(G')$ \bmp{consist} of the vertices put in place of $C$, that is, $c_1$ and, if we replaced $C$ with two vertices, $c_2$.
We naturally identify the elements of $V(G') \setminus C'$ with $V(G) \setminus C$.
In particular, $N_{G'}(C') = N_G(C) = \{x,y\}$.
We consider \bmp{four} cases:
\begin{itemize}
    \item \bmp{$S' \cap \{x,y\} \neq \emptyset$. We show that~$G - S'$ is~\op. If~$S' \supseteq \{x,y\}$ then this is immediate since~$G[C]$ is~\op by assumption and forms a connected component of~$G-S'$, while~$(G - S') - V(C)$ is a subgraph of~$G' - S'$. 
    
    Otherwise, let~$z = S' \cap \{x,y\}$ and~$\overline{z} = \{x,y\} \setminus \{z\}$. As before,~$(G - S') - V(C)$ is \op since it is a subgraph of~$G' - S'$. The graph~$G - S'$ can be obtained from~$(G - S') - V(C)$ by attaching~$G[C]$ onto the articulation point~$\overline{z}$, and is therefore \op by \cref{criterion:articulation-point}.}
    \item $S' \cap V(C') \ne \emptyset$. We define the set $S$ as $(S' \setminus V(C')) \cup \{x\}$.
    It clearly holds that $|S| \le |S'|$.
    Furthermore, $y$ is an articulation point in $G-S$.
    The graph $G - (S \cup C \cup \{x\})$ is isomorphic with
    $G' - (S' \cup C' \cup \{x\})$, hence it is \op.
    On the other hand, $G[C \cup \{y\}]$ is \op by assumption.
    Therefore, all the components obtained by splitting $G-S$ at $y$ are \op and thus $G-S$ is \op
    by \cref{criterion:articulation-point}. 
    
     \item $S' \cap N_{G'}[C'] = \emptyset$ and $C' = \{c_1, c_2\}$.
     We can simply write $S = S'$ as we have identified elements of $V(G') \setminus C'$ and $V(G) \setminus C$. 
     Let $F_1,F_2\dots,F_\ell$ be the connected components of $G'-(S' \cup C' \cup \{x,y\})$.
     Observe that no $F_i$ can be adjacent to both $x,y$,
     as otherwise $x,y,c_1,c_2,F_i$ would form branch sets of a $K_{2,3}$-minor in $G'-S'$.
     The graph $G-S$ can be obtained from $G \brb C$ by appending the components $F_1,F_2\dots,F_\ell$ at $x$ or $y$.
     For each $i\in[\ell]$ it holds that $G \brb {F_i}$ is a subgraph of $G'-S'$, so it is \op.
     From \cref{criterion:articulation-point} we infer that $G-S$
     is \op.
  
    \item $S' \cap N_{G'}[C'] = \emptyset$ and $C' = \{c_1\}$.
    We again set $S=S'$ via vertex identification and we are going to transform $G'-S'$ into $G-S$ while preserving outerplanarity of the graph.
    Note that~the path $P$ contains at least one vertex from $C$ as $xy \not\in E(G)$.
    Subdividing a~subdivided edge multiple times preserves outerplanarity, and so does replacing $(x, c_1, y)$ with~$P$. 
    Let~$G''$ denote the resulting graph.
    
    Since $P$ is a~shortest $(x,y)$-path in $G\brb C$, there are no edges in $G\brb C$ connecting non-adjacent vertices of $P$.
    Recall that $D_1, D_2, \dots, D_\ell$ are the connected components of $G\brb C - V(P)$.
    Since~$C' = \{c_1\}$, the conditions from cases (\ref{item:replace-component:1}, \ref{item:replace-component:2}) in \cref{reduction:replace-component} are not satisfied.
    Therefore each component $D_i$ is either adjacent to one vertex from $V(P)$ or to two vertices which are consecutive.
    Furthermore, for any pair of consecutive vertices on $P$, there can be only one component $D_i$ adjacent to both of them. 
    
    For each $i \in [\ell]$ it holds that $N_G[D_i] \subseteq N_G[C]$, so $G \brb {D_i}$ is \op.
    \mic{
    If~$D_i$ has two neighbors~$u,v$ in~$P$ then any $(u,v)$-path in~$G'' \setminus uv$ includes~$x$ or~$y$ as an internal vertex, hence there cannot be two induced internally vertex-disjoint $(u,v)$-paths in~$G'' \setminus uv$.
    By \cref{lem:paths-in-different-components} appending~$D_i$ to the edge~$uv$ in~$G''$ supplies at most one more induced $(u,v)$-path  and no other~$D_j$, $j \in [\ell] \setminus \{i\}$, can supply a $(u,v)$-path in~$G'' \setminus uv$, so this preserves outerplanariy due to \cref{criterion:edge-removal} applied to the edge $uv$.}
    Next, by \cref{criterion:articulation-point} the graph obtained by appending each component adjacent to a single vertex is still \op. 
    We have replaced $c_1$ back with $C$, thus transforming $G'-S'$ into $G-S$, while preserving outerplanarity of the graph, hence $G-S$ is \op.
\end{itemize}
\bmp{As~$N_{G'}[C'] = V(C') \cup \{x,y\}$, the case distinction is exhaustive and completes the proof.}
\end{proof}

\begin{lemma}\label{lem:replace-component:safeness}
Let $G$ be a graph and $G'$ be obtained from $G$ by applying
\cref{reduction:replace-component}.
Then $\opd(G') = \opd(G)$.
\end{lemma}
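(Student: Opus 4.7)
The plan is to derive the equality by two inequalities, each of which is essentially packaged by the two preceding lemmas, so the statement of \cref{lem:replace-component:safeness} is really a bookkeeping step combining them.

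First, for the direction $\opd(G') \le \opd(G)$, I would invoke \cref{lem:replace-component:minor} to obtain that $G'$ is a minor of $G$, and fix a corresponding minor model $\phi \colon V(G') \to 2^{V(G)}$ with pairwise disjoint and connected branch sets. Given any optimal \op deletion set $S \subseteq V(G)$, set $S' := \{v \in V(G') : \phi(v) \cap S \neq \emptyset\}$. Disjointness of the branch sets immediately yields $|S'| \le |S|$. Restricting $\phi$ to $V(G') \setminus S'$ exhibits $G' - S'$ as a minor of $G - S$, and outerplanarity is closed under taking minors, so $G - S$ being \op implies $G' - S'$ is \op. Hence $\opd(G') \le |S| = \opd(G)$.

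Second, for $\opd(G) \le \opd(G')$, I would apply \cref{lem:replace-component:correctness} directly: given an optimal \op deletion set $S' \subseteq V(G')$, that lemma produces an \op deletion set $S \subseteq V(G)$ with $|S| \le |S'|$, yielding $\opd(G) \le |S'| = \opd(G')$. Combining the two inequalities gives $\opd(G') = \opd(G)$, which is the claim. The substantive work (the case distinction on whether $S'$ hits $\{x,y\}$, hits the replacement vertices $C'$, or misses $N_{G'}[C']$ altogether, along with the outerplanarity argument via \cref{criterion:articulation-point} and \cref{lem:paths-in-different-components}) has already been carried out in \cref{lem:replace-component:correctness}, so there is no additional obstacle here beyond correctly citing the two supporting results.
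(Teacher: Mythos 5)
Your proposal is correct and follows essentially the same route as the paper: the paper likewise derives $\opd(G') \le \opd(G)$ from \cref{lem:replace-component:minor} (using that outerplanarity, and hence $\opd$, is minor-monotone) and $\opd(G) \le \opd(G')$ from \cref{lem:replace-component:correctness}. Your explicit spelling-out of the minor-model argument for the first inequality is just a more detailed version of the step the paper treats as immediate.
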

\begin{proof}
By \cref{lem:replace-component:minor} we know that $G'$ is a~minor of $G$, so $\opd(G') \le \opd(G)$.
On the other hand, if $G'$ admits an \op deletion set of size at most $\ell$, then \cref{lem:replace-component:correctness} guarantees that the same holds for $G$.
\end{proof}

We are now going to make use of \cref{reduction:replace-component} to reduce the number of biconnected components in an \op graph with a small boundary.
Recall that the {block-cut tree} of a graph $H$
has a vertex for each biconnected component of $H$ and for each articulation point in $H$. 
A biconnected component $B$ and an articulation point $v$
are connected by an edge if $v \in B$.
We will show that when the block-cut tree of $H = G\brb A$ is large then we can always find either one or two articulation points that cut off an \op subgraph which can be either removed or compressed. 

\begin{lemma}\label{lem:replace-component:apply}
Consider a graph $G$ and a vertex set $A \subseteq {V(G)}$, such that $|N_G(A)| \le 4$, $G[A]$ is connected, and $G\brb A$ is \op.
There is a polynomial-time algorithm that, given $G$ and $A$ satisfying the conditions above, outputs either
\begin{enumerate}
    \item a block-cut tree of $G\brb A$ with at most 25 biconnected components, where each such biconnected component $B$ satisfies $|\partial_G(B)| \le 4$, or
    \item a vertex set $C \subseteq A$, to which either \cref{reduction:articulation-point} or \cref{reduction:replace-component} applies and decreases the number of vertices in the graph.
\end{enumerate}
\end{lemma}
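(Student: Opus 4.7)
The plan is to analyze the block-cut tree $T$ of the outerplanar graph $G\langle A \rangle$ and either find a reducible vertex set or bound $T$'s structure. Each vertex $v \in N_G(A)$ is located at a unique node of $T$: either $v$ is an articulation point of $G\langle A \rangle$ (corresponding to an articulation-point node), or $v$ is a non-articulation vertex inside a unique block (corresponding to that block node). I will mark these at most $|N_G(A)| \le 4$ nodes, calling the resulting set $M$, and then take the LCA-closure $M^* = \overline{\mathsf{LCA}}(M)$ in $T$ rooted arbitrarily. By \cref{lem:lca:basic}, $|M^*| \le 2|M| - 1 \le 7$ and each connected component of $T - M^*$ has at most two neighbors in~$M^*$.

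For each component $T'$ of $T - M^*$, I associate to each of its (at most two) neighbors in $M^*$ an \emph{interface graph vertex}: a neighbor that is an articulation point $a^*$ contributes $a^*$ itself, while a neighbor that is a block $B^*$ contributes the unique articulation point of $T'$ lying in $V(B^*)$. With $v_1,v_2$ denoting these interface vertices, I set $C = \bigl(\bigcup_{B \in T',\,B \text{ block}} V(B)\bigr) \setminus \{v_1,v_2\}$. Since $T'$ contains no marked node, a case analysis on whether a vertex is a non-articulation of some block of $T'$ or an articulation of $G\langle A \rangle$ interior to $T'$ yields $C \cap N_G(A) = \emptyset$, hence $C \subseteq A$. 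The same analysis shows $N_G(C) \subseteq \{v_1,v_2\}$, and $G\langle C \rangle$ is outerplanar as an induced subgraph of $G\langle A \rangle$.

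If $|N_G(C)| \le 1$ (in particular whenever $T'$ has only one interface), Reduction Rule~\ref{reduction:articulation-point} applies and strictly decreases the vertex count. Otherwise $N_G(C) = \{v_1,v_2\}$, and when $T'$ contains at least two blocks I argue that $v_1 v_2 \notin E(G)$: if it were, the unique block of $G$ containing the edge $v_1v_2$, together with the path through $T'$ joining $v_1$ to $v_2$, would close a cycle in the block-cut tree---impossible. Moreover $G[C]$ is connected, since consecutive blocks of $T'$ are glued at internal articulation points, all of which lie in $C$; hence Reduction Rule~\ref{reduction:replace-component} applies. The exceptional case where $T'$ is a single block $B$ whose two interfaces are joined by an edge is what I expect to be the main obstacle, because neither rule can immediately collapse it; I will handle it by showing that such a $T'$ contributes only one block to the overall count and that its boundary in $G$ has size at most~$4$.

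Finally, if no reduction applies anywhere, I bound the block-cut tree. Exhaustion of Reduction Rule~\ref{reduction:articulation-point} forces every leaf block of $T$ to contain a non-articulation $N_G(A)$ vertex (the interior $C = V(B) \setminus \{a\}$ of a leaf block would otherwise satisfy $|N_G(C)|\le 1$), so $T$ has at most four leaves and hence, by a standard tree count, at most two nodes of degree $\ge 3$. Every maximal degree-two path whose interior avoids $N_G(A)$ is collapsed to a single block via Reduction Rule~\ref{reduction:replace-component} applied to its interior; adding in the at most $|N_G(A)| \le 4$ path-interior $N_G(A)$-positions and the exceptional single-block components above, a direct accounting yields at most $25$ blocks in total. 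For each block $B$, the bound $|\partial_G(B)| \le 4$ follows from the inclusion $\partial_G(B) \subseteq (\text{articulation points of }G\langle A\rangle\text{ in }V(B)) \cup (V(B) \cap N_G(A))$ and a pigeon-hole argument: if this union had size exceeding $4$, then at least one of the subtrees hanging off $B$ in $T$ would contain no $N_G(A)$ vertex, so Rule~\ref{reduction:articulation-point} would apply to its interior, contradicting exhaustion.
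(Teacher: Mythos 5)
Your high-level strategy (block-cut tree, LCA closure of the marked nodes, case analysis on the components of $T - M^*$) is the same as the paper's, but two steps would fail as written. First, for a component $T'$ of $T - M^*$ you set $C = \bigl(\bigcup_{B \in T'} V(B)\bigr) \setminus \{v_1,v_2\}$ and claim that $G[C]$ is connected and that $v_1v_2 \notin E(G)$ whenever $T'$ contains at least two blocks. Both claims break when $T'$ branches, and in particular when both interface vertices lie in a \emph{common} block $B'$ of $T'$ (spine $v_1 - B' - v_2$ with further blocks hanging off $B'$ or off $v_1,v_2$): the edge $v_1v_2$ may then exist inside $B'$ without creating any cycle in the block-cut tree, and even when it does not, $B' - \{v_1,v_2\}$ can split into two arcs of its Hamiltonian cycle, so $G[C]$ is disconnected and \cref{reduction:replace-component} is not applicable to $C$. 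The paper avoids this by treating components of $T-M^*$ with a degree-$\ge 3$ node separately (a leaf block of such a component is not adjacent to $M^*$ and is handled by \cref{reduction:articulation-point}), so that only genuine path components remain, and by choosing the two separating vertices in \emph{disjoint} blocks at the two ends of the path, which simultaneously gives non-adjacency and connectivity of the remainder.

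Second, the lemma requires that the invoked rule \emph{decreases the number of vertices}, but \cref{reduction:replace-component} replaces $C$ by one or two vertices, so it only shrinks the graph when $|C| \ge 3$. A degree-two path of two or three single-edge blocks yields $|C| \le 2$, so your plan to ``collapse every maximal degree-two path to a single block'' via \cref{reduction:replace-component} is not sound, and the accounting that is supposed to yield $25$ (which also invokes ``path-interior $N_G(A)$-positions'' even though your marked nodes all lie in $M^*$, so no component of $T-M^*$ meets $N_G(A)$) is not actually derived. The paper's count comes from requiring at least \emph{four} block nodes on a path component before reducing (a union of four blocks has at least five vertices, hence $|C|\ge 3$), which caps each of the at most $6$ path components at $3$ blocks and, together with $|M^*| \le 7$, gives $7 + 6\cdot 3 = 25$. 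Your argument for $|\partial_G(B)| \le 4$ and for the leaf blocks is fine and matches the paper.
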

\begin{proof}
We begin with computing
the block-cut tree $T$ of $G\brb A$ and rooting it at an~arbitrary node. \bmpr{Add citation for block-cut tree computation.}
For a node $t \in V(T)$ let $\chi(t)$ denote the vertex set represented by $t$, either a biconnected component, or a single vertex of $t$ that corresponds to an articulation point.
Note that each leaf in $T$ must represent a biconnected component.
Furthermore, observe that no vertex from $N_G(A)$ can be an articulation point in $G\brb A$, because $G[A]$ is connected.
Therefore for each $v \in N_G(A)$ there is a unique biconnected component containing $v$.
Let $t_v \in V(T)$ be the node in the block-cut tree representing this component.

First suppose that there exists a biconnected component $B$ of $G\brb A$ with $|\partial_G(B)| > 4$.
The set $\partial_G(B)$ is a disjoint union of $N_G(A) \cap B$ and the articulation points of $G\brb A$ lying in $B$.
Let $d = |N_G(A) \cap B|$.
Then there are at least $5-d$ articulation points of $G\brb A$ lying in $B$, but at most $4-d$ vertices of $N_G(A) \setminus B$.
By a counting argument, there is one articulation point $v \in B$ of $G\brb A$ which separates $B \setminus \{v\}$ from a set $C \subseteq N_G[A]$ which do not contain any vertex from $N_G(A)$.
Hence, $C \subseteq A$ and it induces a connected \op subgraph of $G$ having exactly one neighbor in $G$.
Therefore, \cref{reduction:articulation-point} applies for $C$ and removes it, decreasing the number of vertices in $G$.

Suppose for the rest of the proof that there at least 26 biconnected compoenents of $G\brb A$.
Let $L \subseteq V(T)$ denote the LCA closure of the set $\{t_v \mid v \in N_G(A)\}$.
By \cref{lem:lca:basic} we know that $|L| \le 7$ and each connected subtree of $T - L$ then is adjacent to at most two nodes from $L$.
It follows that if $t \in V(T) \setminus L$, then $\chi(t) \cap N_G(A) = \emptyset$.

Suppose that some component $T_C$ of $T-L$ is adjacent to just one node from $L$.
It is either an articulation point or its neighbor in $T_C$ is an articulation point.
The set $C = \bigcup_{t \in T_C} \chi(t)$
has an empty intersection with $N_G(A)$ and
contains a vertex $v$ which separates $C \setminus \{v\}$ from the rest of the graph $G$.
Therefore we can find a subset $C' \subseteq C \setminus \{v\}$ which induces a connected subgraph of $G$,
has exactly one neighbor $v$, and $C' \cup \{v\} \subseteq N_G[A]$ induces an \op graph.
Hence, \cref{reduction:articulation-point} applies for $C'$ and removes it, decreasing the size of the graph.

A similar situation occurs when $T_C$ has a vertex of degree at least 3.
Then there exists a leaf $t$ in $T_C$ which represents a biconnected component and is not adjacent to $L$, so it is also a leaf in $T$.
Again, $\chi(t) \cap N_G(A) = \emptyset$, so
$\chi(t)$ contains a single vertex $v$ which separates $\chi(t) \setminus \{v\}$ from the rest of the graph $G$.
Analogously as above, \cref{reduction:articulation-point} applies and decreases the size of the graph.

Suppose now that the previous cases do not hold. Then each connected component of $T-L$ is adjacent to exactly two nodes from $L$ and induces a path in $T$ with the endpoints adjacent to $L$.
If we contracted each such component to an edge connecting two nodes from $L$, we would obtain a tree with vertex set $L$ and $|L|-1$ edges.
Hence, the number of such components of $T-L$ is at most 6.
Since the total number of biconnected components in $G\brb A$ is at least 26, we infer that there exists a~connected component $T_C$ of $T-L$ containing
at least $\lceil\frac{26 - 7}{6}\rceil = 4$ nodes representing biconnected components of $G\brb A$.
The set $C = \bigcup_{t \in T_C} \chi(t)$ contains two vertices $u,v$ which together separate $C - \{u,v\}$ from the rest of $G\brb A$.
Note that $u,v$ belong to disjoint biconnected components of $G\brb A$, so $uv \not\in E(G)$ and the set $C - \{u,v\}$ induces a connected subgraph of $G$.
As $C \cap N_G(A) = \emptyset$,
this implies that $C - \{u,v\}$ is a connected component of $G - \{u,v\}$.
Furthermore, a union of 4 biconnected components has at least 5 vertices (the corner case occurs when they are all single edges),
so $C - \{u,v\}$ has at least 3 vertices.
Therefore, \cref{reduction:articulation-point} applies for $C- \{u,v\}$ and replaces it with at most two new vertices, therefore
the total number of vertices in the graph decreases.
All the described operations on the block-cut tree can be implemented to run in polynomial time.
\end{proof}

\begin{figure}[bt]
  \centering
  \includegraphics[width=\linewidth]{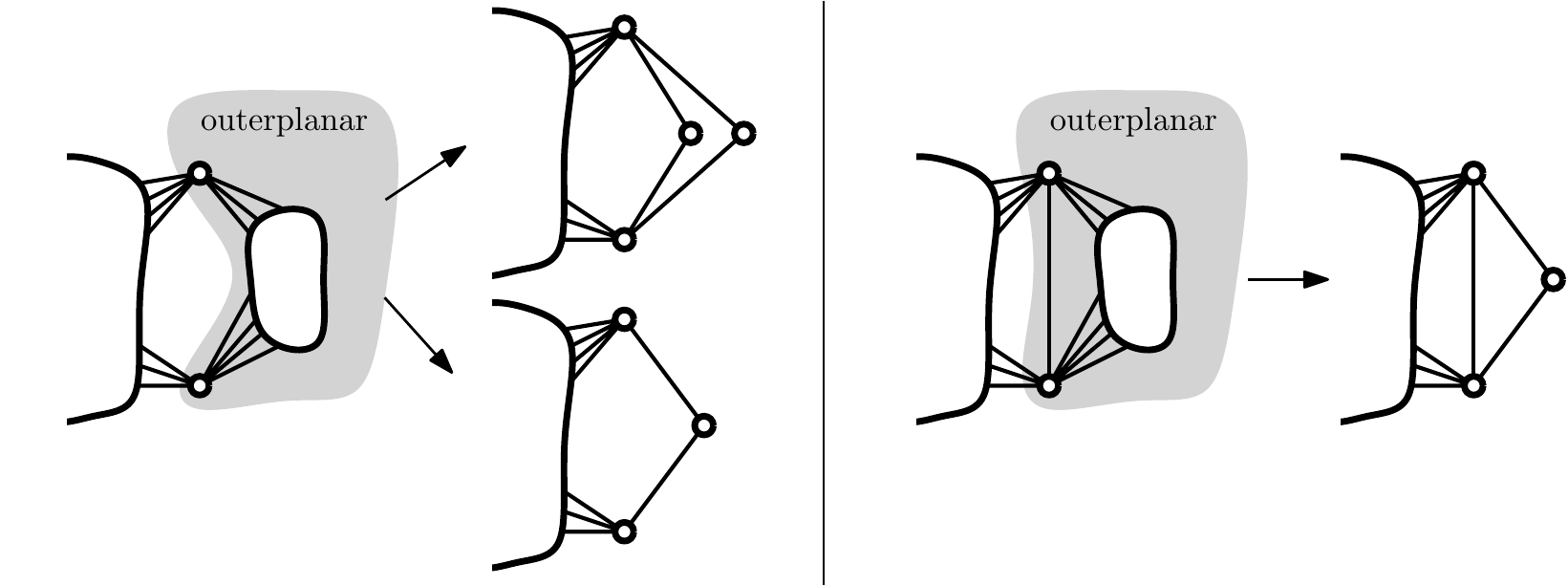}
  \caption{On the left a depiction of \cref{reduction:replace-component}, which reduces a connected subgraph to one or two vertices depending on its internal structure. On the right a depiction of \cref{reduction:contract-bump} which contracts a connected subgraph to a single vertex if it is \op together with the two adjacent vertices that form its neighborhood.}
  \label{fig:reductions-2-nbrs}
\end{figure}

\subsection{Reducing a large biconnected component}

We now give the remaining reduction rules to reduce the size a biconnected component of a protrusion. If a subgraph of a graph is \op and adjacent only to two connected vertices, we can use \cref{criterion:edge-removal} to argue that the entire subgraph can be replaced by any other \op graph that is adjacent to the same two vertices. The following reduction rule exploits this by replacing such a subgraph with a single vertex, see also \cref{fig:reductions-2-nbrs}.

\begin{reduction} \label{reduction:contract-bump}
Suppose that there is an edge $e = uv$ in a graph $G$ such that $G - V(e)$ has a connected component~$C$ such that~$G \brb C$ is \op. Then contract~$C$ into a single vertex.
\end{reduction}

\begin{lemma}[Safeness]
\label{lem:contract-bump}
Let $G$, $uv \in E(G)$, $C$ satisfy the requirements of \cref{reduction:contract-bump} and let $G'$ be obtained from $G$ by contracting $C$ to a new vertex $c$. Then $\opd(G) = \opd(G')$.
\end{lemma}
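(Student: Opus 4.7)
We prove the two inequalities $\opd(G') \le \opd(G)$ and $\opd(G) \le \opd(G')$ separately.

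For $\opd(G') \le \opd(G)$, observe that $G'$ is a minor of $G$ via the contraction of the connected set $V(C)$ to $c$. From any \op deletion set $S$ of $G$, we build $S'$ with $|S'| \le |S|$: if $S \cap V(C) = \emptyset$ take $S' := S$, so $G' - S'$ arises from the \op graph $G - S$ by contracting the untouched connected set $V(C)$; otherwise take $S' := (S \setminus V(C)) \cup \{c\}$, making $G' - S' = G - (S \cup V(C))$ a subgraph of $G - S$.

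For the converse, fix an \op deletion set $S'$ of $G'$ and put $S := S'$ if $c \notin S'$ and $S := (S' \setminus \{c\}) \cup \{u\}$ otherwise; in both cases $|S| \le |S'|$. To show $G - S$ is \op, we distinguish cases. If $\{u,v\} \cap S \neq \emptyset$, \cref{criterion:articulation-point} applies: in $G - S$, the set $V(C)$ either forms its own connected component (when both $u, v \in S$, or when $V(C)$ has no edge to the surviving endpoint) or connects to the rest only through the surviving endpoint of $\{u,v\}$, which then acts as an articulation point. The side containing $V(C)$ is a subgraph of the \op graph $G\brb C$, while the side disjoint from $V(C)$ embeds into the \op graph $G' - S'$, so both sides are \op as required. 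In the remaining case $c, u, v \notin S'$, we instead invoke \cref{criterion:edge-removal} on the edge $uv \in E(G-S)$.

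Verifying the edge-removal criterion is the main obstacle. The component condition is immediate: each component $D$ of $G - S - \{u, v\}$ either equals $V(C)$---because $G[V(C)]$ is connected and disjoint from $S$---giving $(G-S)\brb D \subseteq G\brb C$, or is disjoint from $V(C)$, so that $(G-S)\brb D \subseteq G[D \cup \{u, v\}]$ embeds into $G' - S'$. For the induced-paths condition, suppose $(G-S) \setminus uv$ contained three internally vertex-disjoint induced $(u,v)$-paths $P_1, P_2, P_3$. Each interior is connected and hence lies in a single component of $G - S - \{u,v\}$, so is either contained in $V(C)$ or disjoint from it. If two or more interiors lie in $V(C)$, then picking two of the corresponding paths yields two internally vertex-disjoint $(u,v)$-paths in $(G\brb C) \setminus uv$ whose interiors lie in the single component $G[V(C)]$ of $G\brb C - \{u,v\}$, contradicting \cref{lem:paths-in-different-components} applied to the \op graph $G\brb C$. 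Otherwise at most one path has interior in $V(C)$; replacing such a path (if any) by the length-two path $u$--$c$--$v$ of $G' - S'$ yields three internally vertex-disjoint induced $(u,v)$-paths in $(G' - S') \setminus uv$, contradicting outerplanarity of $G' - S'$ via \cref{criterion:edge-removal}.
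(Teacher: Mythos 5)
Your proof is correct and follows essentially the same route as the paper's: the same case split on whether the solution in $G'$ meets $\{u,v\}$ or $c$, with \cref{criterion:articulation-point} handling the first cases and \cref{criterion:edge-removal} combined with \cref{lem:paths-in-different-components} handling the case $S' \cap \{u,v,c\} = \emptyset$ via the same path-counting argument. The only cosmetic difference is that you spell out the direction $\opd(G') \le \opd(G)$ and the $|N_G(C)| \le 1$ situation inline, which the paper dispatches by citing minority of $G'$ and \cref{reduction:articulation-point} respectively.
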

\begin{proof}
It suffices to prove inequality $\opd(G) \le \opd(G')$.
If $|N_G(C)| = 1$, then we obtain the case already considered in \cref{reduction:articulation-point}, which is safe due to \cref{criterion:articulation-point}.
Assume for the rest of the proof that $N_G(C) = \{u,v\}$.

Let $S' \subseteq V(G')$ be any \op deletion set in $G'$; we will prove~$\opd(G) \leq |S'|$. We first deal with two easy cases.

\bmp{If~$S' \cap \{u,v\} \neq \emptyset$, then we argue that~$S = (S' \setminus \{c\})$ is an outerplanar deletion set in~$G$. This follows from the fact that~$G \brb C$ is \op while~$C$ has at most one neighbor in~$G - S'$, so that \cref{criterion:articulation-point} shows~$G-S$ is outerplanar. Hence~$\opd(G) \leq |S| \leq |S'|$.

If~$c \in S'$ but~$S' \cap \{u,v\} = \emptyset$, then~$S = (S' \setminus \{c\}) \cup \{u\}$ is not larger than~$S'$. To see that~$G - S$ is \op, we apply \cref{criterion:articulation-point} to the articulation point~$v$. Since~$G \brb C$ is outerplanar by assumption, while~$(G - S) - V(C)$ is a subgraph of~$G' - S'$ and therefore outerplanar, \cref{criterion:articulation-point} ensures~$G - S$ is \op. Hence~$\opd(G) \leq |S'|$.}

Suppose now that $S' \cap \{u,v,c\} = \emptyset$.
We check the requirements of \cref{criterion:edge-removal} for the graph $G-S'$ and edge $uv$.
First, each connected component of $G-S'-\{u,v\}$
is \op when considered together with its neighborhood.
It remains to show that $G-S' \setminus uv$ does not have three induced internally vertex-disjoint paths connecting $u$ and $v$.
Since $(u,c,v)$ already gives such a path and $G'-S'$ is \op, the graph $G' - (S' \cup \{c\}) \setminus uv = G-(S' \cup C) \setminus uv$ can have at most one $(u,v)$-path.
By \cref{lem:paths-in-different-components}, there can be also only one such path in $G\brb C \setminus uv$.
Therefore replacing $c$ back with $C$ does not increase the number of $(u,v)$-paths and so \cref{criterion:edge-removal} applies.
We have thus shown that $S'$ is an \op deletion set in $G$, which concludes the proof. 
\end{proof}

The next reduction rule addresses high degree vertices within a biconnected component. It uses the same idea as used in \cref{reduction:irrelevant-edge}, in fact, its safeness follows directly from \cref{reduction:generalized-fan-rule}. See also \cref{fig:reductions-fan-and-ladder}.

\begin{reduction}\label{reduction:fan}
Suppose we are given a graph~$G$, a vertex~$x \in V(G)$, and five vertices~$v_1, \ldots, v_5 \in N_G(x)$ that lie, in order of increasing index, on an induced path~$P$ in~$G-x$ from~$v_1$ to~$v_5$, such that~$N_G(x) \cap V(P) = \{v_1, \ldots, v_5\}$. Let~$C$ be the component of~$G-\{v_1, v_5, x\}$ containing~$P-\{v_1, v_5\}$. If~$G\brb C$ is \op, then remove the edge $xv_3$.
\end{reduction}


The final reduction rule reduces the number of ``internal'' vertices of an \op biconnected graph. These are the edges whose endpoints form a separator in the graph. The previous rule addresses the case where these edges share an endpoint. The final reduction rule focuses on the case where the edges are disjoint: they form a matching.

\begin{definition} \label{def:order-respecting}
 For a graph~$G$, a sequence of edges~$e_1, \ldots, e_\ell \in E(G)$ is an \emph{order-respecting matching} if the set of edges is a matching and if for all~$1 \leq i<j<k \leq \ell$ we have that~$e_i$ and~$e_k$ are in different connected components of~$G - V(e_j)$.
\end{definition}

We can now formulate a property of biconnected graphs containing an order-respecting matching. This allows us to identify a number of cycles in the graph that are useful in the proof of the final reduction rule.

\begin{lemma} \label{lem:paths-in-ladder}
 For an integer~$\ell > 1$, if~$G$ is biconnected and~$e_1, \ldots, e_\ell$ is an order-respecting matching in~$G$ then 
 there exists vertex-disjoint paths $P_x, P_y$ in $G$, such that each of $P_x, P_y$ intersects every set $V(e_i)$, $i \in [\ell]$, and these intersections appear in order of increasing index.
 \end{lemma}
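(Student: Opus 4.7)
The plan is to construct $P_x$ and $P_y$ directly via Menger's theorem and then exploit the separator structure from \cref{def:order-respecting} to force the paths to visit the matching edges in the correct order.

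First I would observe that since $\ell \ge 2$ the matching supplies at least four distinct vertices, so the biconnected graph $G$ has at least three vertices and is therefore 2-connected. Consequently, removing any single vertex leaves $G$ connected, and no single vertex separates $\{x_1,y_1\}$ from $\{x_\ell,y_\ell\}$. By Menger's theorem there exist two vertex-disjoint paths between these sets. Because the paths are vertex-disjoint, their four endpoints must cover $\{x_1,y_1,x_\ell,y_\ell\}$, so after relabeling both $P_x$ and $P_y$ run from a vertex of $V(e_1)$ to a vertex of $V(e_\ell)$.

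Next I would show that each of $P_x, P_y$ meets $V(e_j)$ in exactly one vertex for every $j \in [\ell]$. For $j \in \{1,\ell\}$ this is immediate: each path contributes one endpoint in $V(e_j)$, and it cannot revisit $V(e_j)$ because the remaining vertex of that edge is an endpoint of the other, disjoint, path. For $1 < j < \ell$, applying \cref{def:order-respecting} with $i = 1$ and $k = \ell$ shows that $V(e_j)$ is a $(V(e_1), V(e_\ell))$-separator in $G$. Each of $P_x, P_y$ must therefore contain at least one vertex of $V(e_j)$; since $|V(e_j)| = 2$ and the paths are vertex-disjoint, each contains exactly one, and the two chosen vertices differ.

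The hard part will be the ordering, which I would attack as follows. Write $P_x$ as a sequence $u_0, u_1, \dots, u_m$ with $u_0 \in V(e_1)$ and $u_m \in V(e_\ell)$. Fix $1 < j < \ell$ and let $u_s$ be the unique vertex of $V(e_j)$ on $P_x$. The subpaths $u_0\dots u_{s-1}$ and $u_{s+1}\dots u_m$ avoid $V(e_j)$, so $u_{s-1}$ lies in the component of $G - V(e_j)$ containing $u_0$ while $u_{s+1}$ lies in the component containing $u_m$, and these components are distinct because $V(e_j)$ separates $V(e_1)$ from $V(e_\ell)$. If the intersection $u_r$ of $P_x$ with some $V(e_i)$, $i < j$, satisfied $r > s$, then $u_r$ would be connected to $u_m$ in $G - V(e_j)$, placing $V(e_i)$ and $V(e_\ell)$ in the same component of $G - V(e_j)$ and contradicting \cref{def:order-respecting} applied with indices $i < j < \ell$. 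Hence $r < s$, which gives the desired order along $P_x$; boundary cases where $i = 1$ or $j = \ell$ follow similarly using that each path visits $V(e_1)$ (resp.\ $V(e_\ell)$) only at its endpoint. The same argument applies verbatim to $P_y$, completing the proof.
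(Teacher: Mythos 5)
Your proof is correct and follows essentially the same route as the paper's: obtain two vertex-disjoint $(V(e_1),V(e_\ell))$-paths from 2-connectivity, then use the separator property of \cref{def:order-respecting} to force each intermediate $V(e_j)$ to be hit exactly once per path and in increasing order. The only (cosmetic) difference is that the paper gets the two paths by subdividing $e_1$ and $e_\ell$ and applying the two-vertex form of Menger to the new subdivision vertices, whereas you invoke the set version of Menger directly on $\{x_1,y_1\}$ and $\{x_\ell,y_\ell\}$; both are valid.
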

\begin{proof}
 Let~$G'$ be the graph obtained from~$G$ by subdividing~$e_1$ and~$e_\ell$ with new vertices~$a$ and~$b$. Since subdividing edges preserves biconnectivity, the graph~$G'$ is biconnected. So then there are two internally vertex-disjoint $(a,b)$-paths~$P_1$ and~$P_2$. Take~$P_x := P_1 - \{a,b\}$ and~$P_y := P_2 - \{a,b\}$. Let~$x_1,x_\ell,y_1$, and~$y_\ell$ be the unique vertices in (respectively)~$V(P_x) \cap V(e_1), V(P_x) \cap V(e_\ell), V(P_y) \cap V(e_1)$, and~$V(P_y) \cap V(e_\ell)$. Observe that~$P_x$ is an $(x_1, x_\ell)$-path in~$G$ and~$P_y$ is a $(y_1, y_\ell)$-path in~$G$ and both paths are vertex disjoint.
 
 For any~$e_i \in \{e_2, \ldots, e_{\ell-1}\}$ we have by definition of order-respecting matching that~$x_1$ and~$x_\ell$ are in different connected components of~$G-V(e_i)$, hence one of the endpoints of~$e_i$ must lie on~$P_x$. Similarly one of the endpoints of~$e_i$ must lie on~$P_y$. For all~$1 < i < \ell$ let~$x_i$ denote the endpoint of~$e_i$ that lies on~$P_x$ and let~$y_i$ denote the endpoint of~$e_i$ that lies on~$P_y$.
 
 By \cref{def:order-respecting} we have for all~$1 \leq i < j < k \leq \ell$ that~$x_i$ and~$x_k$ are in different connected components of~$G-V(e_j)$. So the subpath of~$P_x$ between~$x_i$ and~$x_k$ must contain a vertex of~$V(e_j)$. Since~$y_j$ lies on~$P_y$ which is disjoint from~$P_x$ we must have that~$x_j$ lies on~$P_x$ between~$x_i$ and~$x_k$. Since~$1 \leq i < j < k \leq \ell$ are arbitrary it follows that~$\{x_1, \ldots, x_\ell\}$ occur in order of increasing index on the path~$P_x$. Similarly,~$\{y_1, \ldots, y_\ell\}$ occur in order of increasing index on the path~$P_y$.
 \end{proof}

We are now ready to formulate the final reduction rule. It applies within a biconnected outerplanar part of the graph that has seven or more disjoint ``internal'' edges. We make use of the definition of order-respecting matching to define an order on the edges. If this biconnected \op part only connects to the remainder of the graph via the endpoints of the first and the last edge, then we show that the edges in the middle can be removed without affecting the outerplanar deletion number of the graph, see also \cref{fig:reductions-fan-and-ladder}.

\begin{figure}[bt]
  \centering
  \includegraphics[width=\linewidth]{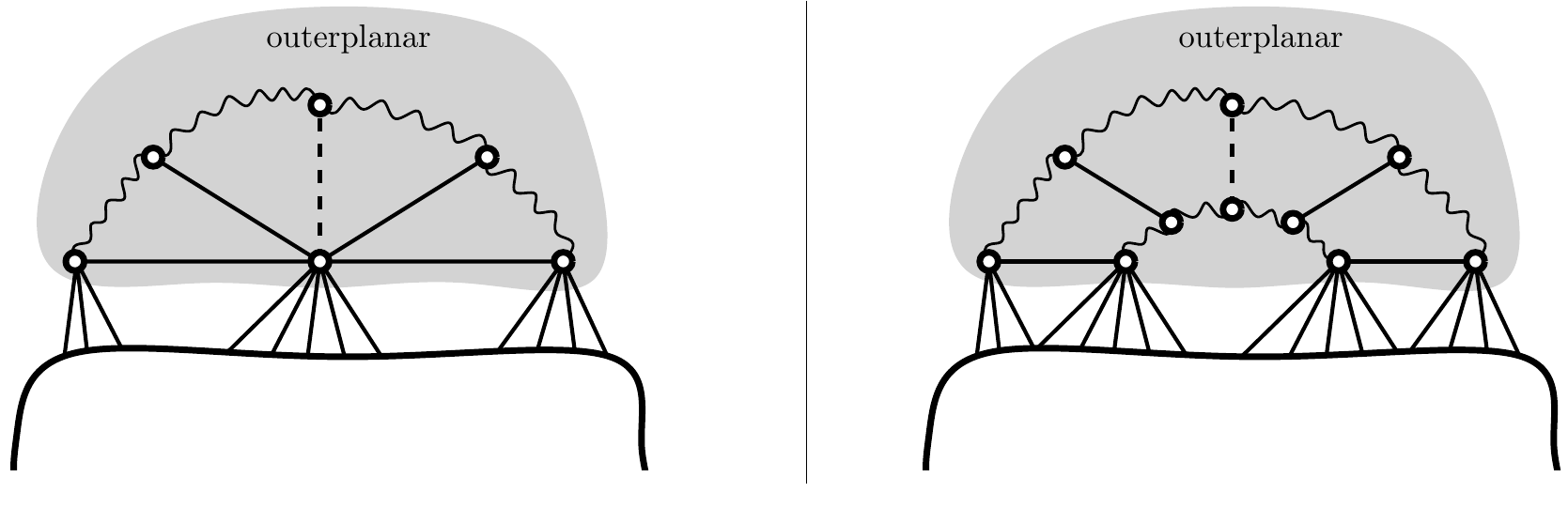}
  \caption{On the left a depiction of \cref{reduction:fan} which is able to remove the middle edge of a fan structure in an \op subgraph that is sufficiently isolated from the rest of the graph. On the right a depiction of \cref{reduction:ladder}, which removes the middle edge of an order-respecting matching in an \op subgraph that is sufficiently isolated from the rest of the graph.}
  \label{fig:reductions-fan-and-ladder}
\end{figure}

\begin{reduction}\label{reduction:ladder}
 Let~$G$ be a graph,~$e_1,\ldots,e_7$ be a matching in~$G$, and let~$C$ be a connected component of~$G - (V(e_1) \cup V(e_7))$. If~$\{e_2,\ldots,e_6\} \subseteq E(C,C)$, $N_G(C) = V(e_1) \cup V(e_7)$, $G\langle C \rangle$ is biconnected and \op, and~$e_1,\ldots,e_7$ is an {order-respecting matching} in~$G\langle C \rangle$, then remove~$e_4$.
\end{reduction}

\begin{lemma}[Safeness]\label{lem:reduction:ladder}
 Let~$e_1, \ldots, e_7$ be a matching in a graph~$G$ and let~$C$ be a connected component of~$G$.
 If \cref{reduction:ladder} applies to~$G$,~$e_1, \ldots, e_7$, and~$C$, then $\opd(G \setminus e_4) = \opd(G)$.
\end{lemma}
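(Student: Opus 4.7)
Our plan is to prove both inequalities in $\opd(G \setminus e_4) = \opd(G)$ separately. The inequality $\opd(G \setminus e_4) \leq \opd(G)$ is immediate since $G \setminus e_4$ is a subgraph of $G$, so any outerplanar deletion set for $G$ is also one for $G \setminus e_4$. For the opposite direction, let $T$ be an outerplanar deletion set for $G \setminus e_4$; we construct $T' \subseteq V(G)$ with $|T'| \leq |T|$ such that $G - T'$ is outerplanar. If $T \cap V(e_4) \neq \emptyset$ then $e_4 \notin E(G - T)$, so $G - T = (G \setminus e_4) - T$ is outerplanar and $T' := T$ suffices. Henceforth assume $x_4, y_4 \notin T$; we shall show that $G - T$ is already outerplanar, so $T' := T$ again works.

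To verify outerplanarity of $G - T$ we apply \cref{criterion:edge-removal} to the edge $e_4 = x_4 y_4 \in E(G - T)$. The second condition holds immediately because three internally vertex-disjoint induced $(x_4, y_4)$-paths in the outerplanar graph $(G - T) \setminus e_4 = (G \setminus e_4) - T$ would yield a $K_{2,3}$-minor. The first condition requires that for each connected component $D$ of $G - T - V(e_4)$, the extended subgraph $(G - T)\langle D \rangle$ is outerplanar. Components $D$ adjacent to at most one of $\{x_4, y_4\}$ in $G - T$ satisfy $(G - T)\langle D \rangle \subseteq (G \setminus e_4) - T$ and are therefore outerplanar; the critical case is that of a component $D$ adjacent to both.

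For such $D$, we invoke \cref{lem:paths-in-ladder} to obtain vertex-disjoint paths $P_x, P_y$ in $G\langle C \rangle$ whose intersections with $V(e_i)$ are singletons $\{x_i\} \subseteq V(P_x)$ and $\{y_i\} \subseteq V(P_y)$, appearing in order of increasing $i$. By the order-respecting matching property together with $N_G(C) = V(e_1) \cup V(e_7)$, the pair $\{x_4, y_4\}$ separates $V(e_1)$ from $V(e_7)$ inside $G \langle C \rangle$, so the ``left'' and ``right'' halves of $C$ in $G - V(e_4)$ can communicate only via $V(e_1) \cup V(e_7)$ outside $C$. We then show that $(G - T)\langle D \rangle$ is outerplanar by exhibiting an induced cycle in $(G \setminus e_4) - T$ containing both $x_4$ and $y_4$ and applying \cref{lem:induced-cycle} to conclude that adding $e_4$ preserves outerplanarity.

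\textbf{The main obstacle} is constructing such an induced cycle. A natural candidate has the form $x_4 \to \cdots \to x_i - e_i - y_i \to \cdots \to y_4 \to \cdots \to y_j - e_j - x_j \to \cdots \to x_4$ for some $i \in \{1,2,3\}$ and $j \in \{5,6,7\}$, where each ``$\to \cdots \to$'' arc follows a sub-path of $P_x$ or $P_y$. Having three rungs on each side of $V(e_4)$ provides enough redundancy that, through a case analysis on how $T$ intersects the ladder vertices, we can select $(i, j)$ with both rungs and the traversed sub-paths of $P_x, P_y$ avoiding $T$; degenerate cases in which some side is severed entirely reduce to simpler pendant-type configurations handled separately (where $x_4$ or $y_4$ has too few neighbors in $G - T - e_4$ to force any new minor). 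Any chord of the resulting cycle forces, via \cref{lem:paths-in-different-components} applied to an appropriate edge together with the outerplanarity of $G \langle C \rangle$, a structural constraint that can be bypassed by shifting $(i, j)$ using the three-rung redundancy, yielding the required induced cycle and completing the proof through \cref{lem:induced-cycle}.
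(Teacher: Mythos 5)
There is a genuine gap at the heart of your argument: the claim that whenever $T$ is an outerplanar deletion set for $G \setminus e_4$ with $x_4, y_4 \notin T$, the graph $G - T$ is already outerplanar. This is false, and the paper's proof is structured around exactly this failure: in two of its three cases it must construct a \emph{different} set $S'$ (namely $\{x_1,y_1,x_7\} \cup (S \setminus V(C))$ when $|N_G[C] \cap S| \ge 3$, and $V(e_1) \cup (S \setminus V(C))$ when $|N_G[C] \cap S| = 2$ with one of the end cycles untouched), and only in the remaining case does it show $G-S$ itself is outerplanar — and even there it needs the key claim that a surviving cycle among $C_{1,3}, C_{3,5}, C_{5,7}$ blocks every external $V(e_1)$--$V(e_7)$ path. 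A concrete counterexample to your claim: let $G\brb C$ be the ladder $L_7$ on $x_1,\dots,x_7,y_1,\dots,y_7$ with rungs $e_i = x_iy_i$, and let the rest of $G$ be a single vertex $z$ adjacent to $x_1$ and $x_7$. All preconditions of \cref{reduction:ladder} hold. Take $T = \{x_3, y_5\}$. In $(G \setminus e_4) - T$ the ladder falls apart into two small blocks joined only through the bridges $zx_1$ and $zx_7$, so it is outerplanar and $T$ is a valid deletion set avoiding $V(e_4)$. But $G - T$ contains the cycle $x_1, z, x_7, x_6, x_5, x_4, y_4, y_3, y_2, y_1$ (closed using $e_4$), and the vertex $x_2$ attaches to the non-consecutive cycle vertices $x_1$ and $y_2$, giving a $K_{2,3}$-minor by \cref{lem:attaching:to:cycle}. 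So $G - T$ is not outerplanar.

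The same example shows why your proposed repair of the ``degenerate cases'' cannot work: in $(G\setminus e_4) - T$ both $x_4$ and $y_4$ have degree one, so no cycle (induced or otherwise) passes through them and \cref{lem:induced-cycle} is inapplicable — yet adding $e_4$ still creates a $K_{2,3}$-minor, because the new edge closes a long cycle running through the \emph{exterior} of $C$ and the obstruction appears far away from $e_4$. Your sketch treats low degree of $x_4,y_4$ as harmless and relies on finding an induced cycle through both endpoints inside the ladder; neither holds once $T$ severs the ladder while an external $V(e_1)$--$V(e_7)$ connection survives. The correct argument must either (as in the paper) trade the vertices of $T$ inside $N_G[C]$ for boundary vertices of $C$ so as to disconnect $C$ from the exterior, or rule out the external reconnection via the surviving $C_{i,j}$ cycles; your proof does neither. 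The first two reductions of your proof (the trivial inequality, the case $T \cap V(e_4) \neq \emptyset$, and the observation that condition~2 of \cref{criterion:edge-removal} is automatic) do match the paper, but the main body of the argument does not go through.
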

\begin{proof}
 Clearly any solution to~$G$ is also a solution to~$G \setminus e_4$ so~$\opd(G \setminus e_4) \leq \opd(G)$. To show~$\opd(G \setminus e_4) \geq \opd(G)$ suppose that~$G \setminus e_4 - S$ is \op. We will show that there is a set~$S' \subseteq V(G)$ of size at most~$|S|$ exists such that~$G-S'$ is \op. We first formulate the following structural property:
 
 \begin{countclaim} \label{ladder:claim:paths}
  If~$i\in \{2,\ldots,6\}$ then for any induced path~$P$ in~$G\setminus e_i$ between the endpoints of~$e_i$ either $P$ is an induced path in~$G\brb C \setminus e_i$, or $P$ has a subpath disjoint from~$C$ connecting an endpoint of~$e_1$ to an endpoint of~$e_7$.
 \end{countclaim}
 \begin{claimproof}
  Let~$i \in \{2,\ldots,6\}$ be arbitrary. Suppose for that~$P$ is an induced path in~$G\setminus e_i$ between the endpoints of~$e_i$ that contains a vertex~$v$ outside~$G\brb C$. We show~$P$ contains a subpath disjoint from~$C$ connecting an endpoint of~$e_1$ to an endpoint of~$e_7$. Consider the subpath~$P'$ of~$P$ from the last vertex~$x$ before~$v$ to the first vertex~$y$ after~$v$, and note that~$P'$ is disjoint from~$C$. By definition~$x,y$ have neighbors outside~$G\brb C$ so~$x,y \in N_G(C) = V(e_1) \cup V(e_7)$. Since~$P$ is induced, so is~$P'$, hence there is no edge between~$x$ and~$y$. It follows that one of~$x,y$ is an endpoint of~$e_1$ and the other an endpoint of~$e_7$, hence~$P'$ is a subpath of~$P$ disjoint from~$C$ that connected an endpoint of~$e_1$ to an endpoint of~$e_7$.
 \end{claimproof}

 We apply \cref{lem:paths-in-ladder} to $G\brb C$ and the matching $(e_1, \dots, e_7)$ to obtain vertex-disjoint
 paths $P_x, P_y$ in $G\brb C$ which intersect each $V(e_i)$, $i \in [7]$, in order of increasing index. 
 For all~$1 \leq p < q \leq 7$ let~$C_{p,q}$ denote the cycle in~$G\langle C \rangle$ as obtained
 by combining the edges $e_p, e_q$ with the subpaths in $P_x, P_y$ from $V(e_p)$ to $V(e_q)$.
 Observe that whenever $p_1 < q_1 < p_2 < q_2$, then $V(C_{p_1,q_1}) \cap V(C_{p_2,q_2}) = \emptyset$.
 For brevity let~$C_i$ denote~$C_{i,i+1}$ for all~$1 \leq i \leq 6$.
 We consider the following cases:
 \begin{itemize}
  \item \label{ladder:case:geq3}
  If~$|N_G[C] \cap S| \geq 3$ then take~$S' := \{x_1,y_1,x_7\} \cup (S \setminus V(C))$ and observe that~$|S'| \leq |S|$. We show~$G-S'$ is \op using \cref{criterion:articulation-point}. We show for all connected components~$C'$ of~$G-S' -y_7$ that~$(G-S')\brb {C'}$ is \op. Since~$S' \cap V(C) = \emptyset$ we have that~$C$ is a connected component of~$G-S'-V(e_1)-V(e_7)$. Since~$(V(e_1) \cup V(e_7)) \setminus S' = \{y_7\}$ this yields that~$C$ is a connected component of~$G-S'-y_7$. Clearly~$(G-S') \brb C$ is \op since it is a subgraph of~$G\brb C$. Any other connected component~$C'$ of~$G-S'-y_7$ clearly is a connected component of~$G-S'-y_7-V(C)$, so then~$(G-S')\brb {C'}$ is a subgraph of~$G-S'-V(C)$. Since both endpoints of~$e_4$ are in~$C$ we have that~$G-S'-V(C) = G\setminus e_4 -S'-V(C)$. This is a subgraph of~$G\setminus e_4-S$ since~$S \subseteq S' \cup V(C)$. Because~$G\setminus e_4 - S$ is \op we can conclude that~$(G-S') \brb {C'}$ is \op. By \cref{criterion:articulation-point} we have that~$G-S'$ is \op.
  
  \item If~$|N_G[C] \cap S| = 2$ and at least one of~$\{C_1,C_6\}$ does not intersect~$S$, we may assume without loss of generality that~$C_6$ does not intersect~$S$. Take~$S' := V(e_1) \cup (S \setminus V(C))$ and observe that~$|S'| \leq |S|$. We show~$G-S'$ is \op using \cref{criterion:edge-removal} on the edge~$e_7$.

  We first show for all connected components~$C'$ of~$(G-S') - V(e_7)$ that~$(G-S')\brb {C'}$ is \op. Since~$S' \cap V(C) = \emptyset$ we have that~$C$ is a connected component of~$G-S'-V(e_1)-V(e_7) = (G-S')-V(e_7)$. Clearly~$(G-S') \brb C$ is \op since it is a subgraph of~$G\brb C$. Any other connected component~$C'$ of~$G-S'-V(e_7)$ is a connected component of~$G-S'-V(e_7)-V(C)$, so then~$(G-S')\brb {C'}$ is a subgraph of~$G-S'-V(C)$. Since both endpoints of~$e_4$ are in~$C$ we have that~$G-S'-V(C)$ is a subgraph of~$G\setminus e_4-S$, which is \op. Hence~$(G-S') \brb {C'}$ is \op.
  
  It remains to show that there are at most two induced internally vertex-disjoint paths in~$(G-S')\setminus e_7$ connecting the endpoints of~$e_7$. Since~$C_6$ does not intersect~$S$ or~$S'$ we have that~$C_6 \setminus e_7$ is a path connecting the endpoints of~$e_7$ in~$(G-S)\setminus \{e_4,e_7\}$ and in~$(G-S')\setminus e_7$. By shortcutting we obtain an induced path~$P_1$ in~$(G-S)\setminus \{e_4,e_7\}$ and in~$(G-S')\setminus e_7$, so that~$P_1$ connects the endpoints of~$e_7$ and its internal vertices are all contained in~$C_6$. Suppose for contradiction that~$(G-S') \setminus e_7$ contains two more induced internally vertex-disjoint paths connecting the endpoints of~$e_7$. At least one of these paths is not present in~$(G-S)\setminus \{e_4,e_7\}$ by \cref{criterion:edge-removal} since~$G\setminus e_4-S$ is \op. Let~$P_2$ be such an induced path in~$(G-S')\setminus e_7$ that is not present in~$(G-S)\setminus \{e_4,e_7\}$. We know that~$P_2$ contains a vertex from~$V(C)$, otherwise~$P_2$ would be present in~$(G-S)\setminus \{e_4,e_7\}$ as~$V(e_4) \subseteq V(C)$ and~$S \setminus S' \subseteq V(C)$. Since~$C$ is a connected component of~$(G-S')-V(e_7)$ (as shown before), we have that all internal vertices of~$P_2$ are in~$C$. Hence~$P_1$ and~$P_2$ are internally vertex-disjoint paths in~$G\brb C$ connecting the endpoints of~$e_7$. However, since~$G\brb C$ is \op, this contradicts outerplanarity of~$G\brb C$ by \cref{lem:paths-in-different-components}. By contradiction we conclude~$(G-S') \setminus e_7$ contains at most two induced internally vertex-disjoint paths connecting the endpoint of~$e_7$.
  This shows also the second condition of \cref{criterion:edge-removal} is satisfied, hence~$G-S'$ is \op.
  
  \item Otherwise we have~$|N_G[C] \cap S| \leq 1$ or~$S$ intersects both~$C_1$ and~$C_6$. We show that~$G-S$ is \op.
  
  \begin{countclaim} \label{ladder:claim:separate}
  Suppose the preconditions of \cref{reduction:ladder} hold, $S \subseteq V(G)$, $G \setminus e_4 - S$ is \op, and either $|N_G[C] \cap S| \leq 1$ or~$|N_G[C] \cap S| = 2$ and $S$ intersects both~$C_1$ and~$C_6$.
   Then any path in~$G \setminus e_4 - S$ from an endpoint of~$e_1$ to an endpoint of~$e_7$ intersects~$V(C)$.
  \end{countclaim}
  \begin{claimproof}
    First we argue that at least one of~$C_{1,3}, C_{3,5}, C_{5,7}$ is disjoint from~$S$. If~$S$ intersects~$C_1$ and $C_6$, then $S$ cannot intersect~$C_{3,5}$ since~$C_1$,~$C_6$, and~$C_{3,5}$ are disjoint. If~$C_1$ or~$C_6$ does not intersect~$S$ then by assumption ~$|N_G[C] \cap S| \leq 1$ so~$S$ cannot intersect both~$C_{1,3}$ and~$C_{5,7}$ as they are disjoint. Hence~$S$ is disjoint from at least one of~$C_{1,3}$, $C_{3,5}$, or~$C_{5,7}$.
    
    For the sake of contradiction, suppose that there is a path $P_z$ in $G \setminus e_4 - (S \cup C)$ which connects $z_1 \in V(e_1)$ to $z_7 \in V(e_7)$.
    First consider the case $V(C_{3,5}) \cap S = \emptyset$.
    Let $P'_x, P'_y$ be the subpaths of respectively $P_x, P_y$ from $x_1 \in V(e_1)$ to $x_3 \in V(e_3)$.
    If $|N_G[C] \cap S| = 2$, then $S$ intersects $C_6$, which is disjoint from $V(P'_x) \cup V(P'_y)$.
    We therefore have $|(V(P'_x) \cup V(P'_y)) \cap S| \le 1$ and by disjointedness of $P_x,P_y$ we infer that one of the paths $P'_x, P'_y$ is disjoint from $S$; assume w.l.o.g. that it is $P'_x$.
    Let $P''_x$ be $P'_x$ if $z_1 \in V(P'_x)$ or $P'_x$ concatenated with $e_1$ otherwise.
    Then $P''_x$ connects $V(e_3)$ to $z_1$, it is internally vertex disjoint from $C_{3,5}$ and, 
    since $z_1 \not\in S$, it is vertex-disjoint from $S$.
    Using a symmetric argument we can construct a path, disjoint from $S$, which connects $V(e_5)$ to $z_7$.
    By concatenating these paths with $P_z$ we obtain that there is a connected component of $(G \setminus e_4 - S) - V(C_{3,5})$ which is adjacent to at least two vertices on the cycle $C_{3,5}$: one from $V(e_3)$ and the other one from $V(e_5)$.
    Since $V(e_4)$ separates $V(e_3)$ from $V(e_5)$ in $G\brb C$, these vertices are non-consecutive on 
    the cycle $C_{3,5}$.
    By \cref{lem:attaching:to:cycle}, this contradicts the assumption that $G \setminus e_4 - S$ is \op.
    
    It remains to consider the case $V(C_{1,3}) \cap S = \emptyset$, as the case $V(C_{5,7}) \cap S = \emptyset$ is symmetric.
    We have $V(C_1) \subseteq V(C_{1,3})$ so $V(C_1) \cap S =\emptyset$ and by the assumption $|N_G[C] \cap S| \leq 1$.
    Let $P'_x, P'_y$ be the subpaths of respectively $P_x, P_y$ from $x_3 \in V(e_3)$ to $x_7 \in V(e_7)$.
    Recall that none of $P_x, P_y$ goes through $e_4$ as its endpoints lie on each of $P_x, P_y$.
    By disjointedness of $P_x,P_y$ we infer that one of the paths $P'_x, P'_y$ is disjoint from $S$; assume w.l.o.g. that it is $P'_x$.
    Similarly as before, we define~$P''_x$ to be $P'_x$ if $z_7 \in V(P'_x)$ or $P'_x$ concatenated with $e_7$ otherwise.
    Then $P''_x$ connects $V(e_3)$ to $z_7$ in $G\brb C \setminus e_4 - S$ and it is internally vertex disjoint with $C_{1,3}$ 
    By concatenating $P''_x$ with $P_z$ we obtain that there is a connected component of $(G \setminus e_4 - S) - V(C_{1,3})$ which is adjacent to two non-consecutive vertices on the cycle $C_{1,3}$ ($z_1 \in V(e_1)$ and the other one from $V(e_3)$).
    By \cref{lem:attaching:to:cycle}, this contradicts the assumption that $G \setminus e_4 - S$ is \op.
  \end{claimproof}
  
  If~$S$ intersects both~$C_1$ and~$C_6$ then take~$e_a = e_3$ and~$e_b = e_5$, otherwise we have~$|N_G[C] \cap S| \leq 1$ so at least one of~$e_a \in \{e_2,e_3\}$ is not hit by~$S$. Similarly there is an edge~$e_b \in \{e_5,e_6\}$ that does not intersect~$S$. Now note that~$\{e_1,e_a,e_4,e_b,e_7\}$ is a order-respecting matching in~$G\brb C$,~$N_G(V(e_a)) \subseteq N_G[C]$, and~$N_G(V(e_b)) \subseteq N_G[C]$.
  
  To show~$G-S$ is \op we use \cref{criterion:edge-removal} on the edge~$e_a$, i.e., we show for any connected component~$H$ of~$G-S-V(e_a)$ that~$(G-S)\brb H$ is \op and there are at most two induced internally vertex-disjoint paths in~$(G-S) \setminus e_a$ connecting both endpoints of~$e_a$.
  
  To prove the latter, observe that any induced path in~$(G-S) \setminus e_a$ is also an induced path in~$G\setminus e_a$ and by \cref{ladder:claim:separate} they cannot contain a subpath from an endpoint of~$e_1$ to an endpoint of~$e_7$. Now by \cref{ladder:claim:paths} we have any such path is an induced path in~$G\brb C \setminus e_a$. Since~$G \brb C \setminus e_a$ is \op we have by~\cref{criterion:edge-removal} that there are at most two such paths that are internally vertex-disjoint.
  
  We now show for any connected component~$H$ of~$G-S-V(e_a)$ that~$(G-S)\brb H$ is \op. Consider the case that~$H$ does not intersect~$C_{a,b}$. Then we know that~$(G-S)\brb H$ does not contain the edge~$e_4$, hence~$(G-S)\brb H$ is a subgraph of~$G\setminus e_4 - S$, which is \op. For the other case, let~$D_1$ be the graph consisting of all connected components intersecting~$C_{a,b}$. It suffices to show that~$(G-S)\brb {D_1}$ is \op.
  
  We use \cref{criterion:edge-removal} on the edge~$e_b$ in the graph~$D'_1 := (G-S)\brb {D_1}$, i.e., we show for any connected component~$H$ of~$D'_1-V(e_b)$ that~$D'_1 \brb H$ is \op and there are at most two induced internally vertex-disjoint paths in~$D'_1 \setminus e_b$ connecting the endpoints of~$e_b$. Let~$D_2$ be the graph consisting of all connected components of~$D'_1 - V(e_b)$ that intersect~$C_{a,b}$. Observe that~$D'_1 \brb {D_2}$ is a subgraph of~$G \brb C$ \hui{(in fact~$V(D_2) = V(C_{a,b})\setminus V(e_b)$)}, hence~$D'_1 \brb {D_2}$ is \op.
  Any other connected component~$H$ of~$D'_1 - V(e_b)$ that does not intersect~$C_{a,b}$ does not contain the edge~$e_4$, hence~$D'_1\brb H$ is a subgraph of~$G\setminus e_4 - S$, which \op.
  
  It remains to show that there are at most two induced internally vertex-disjoint paths in~$D'_1 \setminus e_b$ connecting both endpoints of~$e_b$. Clearly such paths cannot contain an endpoint of~$e_1$. Since such a path is also an induced path in~$G \setminus e_b$ so then by \cref{ladder:claim:paths} we have that such a path is an induced path in~$G\brb C \setminus e_b$. Since~$G\brb C \setminus e_b$ is \op it follows from \cref{criterion:edge-removal} that there are at most two such internally vertex-disjoint paths. Hence~$D'_1 = (G-S)\brb {D_1}$ is \op completing the argument that~$G-S$ is \op.
 \end{itemize}
 
 We have shown in all cases that there exists a set~$S'$ of size at most~$|S|$ such that~$G - S'$ is \op.
\end{proof}

This concludes our final reduction rule. What remains is how they can be applied in polynomial time to reduce the protrusions to a constant size. Since the number of biconnected components can be bounded by a constant (see \cref{lem:replace-component:apply}), we proceed to show how to reduce the size of these biconnected components to a constant.

\subsection{Reducible structures in biconnected outerplanar graphs}

In this section we use the weak dual of a biconnected outerplanar graph to argue that a large biconnected outerplanar graph contains a structure to which a reduction rule is applicable. We therefore need some terminology to relate objects in a biconnected outerplanar graph~$G$ with those in its weak dual~$\widehat{G}$. The following properties are well-known. 

\begin{observation}[{\cite[Corollary 6]{Syslo79}}] \label{obs:biconnected:hamiltonian} 
Any biconnected outerplanar graph on at least three vertices has a unique Hamiltonian cycle.
\end{observation}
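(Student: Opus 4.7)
The plan is to establish existence and uniqueness of the Hamiltonian cycle separately, using an outerplanar embedding of $G$ as the central tool.

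For existence, I would fix any outerplanar embedding of $G$ so that all vertices lie on the outer face. The boundary of the outer face is then a closed walk containing every vertex at least once, and biconnectivity forbids repetitions: a vertex occurring twice on this walk would be an articulation point of $G$. Hence the boundary is a simple cycle visiting every vertex, which is the desired Hamiltonian cycle $H$.

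For uniqueness, suppose that $C$ is a Hamiltonian cycle of $G$ distinct from $H$. Then $C$ must contain some edge $e = uv$ that is a chord of $H$, drawn inside the bounded disk of the outerplanar embedding. The curve of $e$, together with the two arcs $A_1, A_2$ of $H$ from $u$ to $v$, splits the bounded disk into two closed regions $\overline{R_1}, \overline{R_2}$, with $R_i$ bounded by $A_i$ and $e$. Since $e$ is a chord rather than a boundary edge, both $A_1 \setminus \{u,v\}$ and $A_2 \setminus \{u,v\}$ are nonempty. I would then examine the Hamiltonian $(u,v)$-path $P := C - e$. Because no edge of the embedding crosses $e$, every edge of $G$ drawn inside the disk lies entirely in $\overline{R_1}$ or $\overline{R_2}$, and edges of $H$ lie on one of the arcs $A_i$; in particular each edge of $P$ belongs to $\overline{R_1}$ or $\overline{R_2}$. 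Two consecutive edges of $P$ lying in different regions must share a vertex in $\overline{R_1} \cap \overline{R_2} \cap V(G) = \{u,v\}$, but $u$ and $v$ are the endpoints of $P$ and each is incident to only one edge of $P$. Hence $P$ cannot switch regions at any interior vertex, so $P$ is contained in a single closed region, say $\overline{R_1}$. But then $P$ visits only vertices in $A_1$, missing the nonempty set $A_2 \setminus \{u,v\}$, contradicting Hamiltonicity of $C$.

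The main subtlety will be making the ``one side of the chord'' argument formally rigorous by invoking the Jordan curve theorem on the simple closed curve $e \cup A_i$; this is a standard consequence of planarity but deserves an explicit line. The only degenerate case is $n = 3$, where $G$ is necessarily a triangle and the statement is immediate. Should the embedding-based argument feel heavy, an alternative would be to apply \cref{criterion:edge-removal} to the chord $e$: the two arcs of $H$ already provide two internally vertex-disjoint $(u,v)$-paths with nonempty interior in $G \setminus e$, and any further induced $(u,v)$-path through the interior would yield a third, violating outerplanarity; ruling out such a third path again reduces to the chord-separation argument above.
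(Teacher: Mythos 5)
Your main argument is correct. Note first that the paper does not prove this statement at all: it is recorded as an observation with a citation to Sys{\l}o, so there is no in-paper proof to compare against. Your existence argument (the outer-face boundary of a $2$-connected plane graph is a simple cycle, since a repeated vertex on a face-boundary walk would be an articulation point, and in an outerplanar embedding this cycle passes through every vertex) is the standard one and is sound. Your uniqueness argument is also correct: any second Hamiltonian cycle $C$ must use a chord $e=uv$ of $H$; since $e \notin E(H)$, both arcs of $H$ between $u$ and $v$ have nonempty interior; the chord splits the closed disk bounded by $H$ into two closed regions meeting only along $e$, every edge of the Hamiltonian path $C - e$ lies in one closed region, and the path can only change regions at $u$ or $v$, which are its endpoints --- so it is trapped on one side and misses the interior vertices of the other arc. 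This does rely, as you say, on the Jordan curve theorem applied to the closed curve $e \cup A_i$, which you correctly flag as the point needing an explicit line. The one weak spot is your sketched alternative via \cref{criterion:edge-removal}: the two arcs of $H$ do give two internally disjoint $(u,v)$-paths, but the path $C - e$ is Hamiltonian and therefore shares internal vertices with both arcs, so it does not directly supply a third disjoint path; as you yourself note, making that route work collapses back into the chord-separation argument, so it is not really an independent proof. Sticking with the topological argument is the right call.
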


\begin{observation}[\cite{FleischnerGH74}] \label{obs:weakdual:tree}
The weak dual of a biconnected outerplanar graph is a tree.
\end{observation}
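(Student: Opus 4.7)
The plan is to handle the trivial cases first (if $|V(G)| \le 2$ then $\widehat{G}$ is empty, which we treat as the trivial tree) and focus on $G$ with at least three vertices, where $G$ is $2$-connected. By Observation~\ref{obs:biconnected:hamiltonian}, $G$ has a unique Hamiltonian cycle $H$, which bounds the outer face in any outerplanar embedding.

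First I would count the vertices and edges of $\widehat{G}$ using Euler's formula. Let $n = |V(G)|$ and let $c$ denote the number of chords of $H$, so that $|E(G)| = n + c$. Euler's formula then yields $f = 2 + c$ faces of $G$, so $|V(\widehat{G})| = f - 1 = c + 1$. The edges of $\widehat{G}$ correspond bijectively to the edges of $G$ shared by two interior faces; since every edge of $H$ borders the outer face, these are exactly the chords of $H$, giving $|E(\widehat{G})| = c$. Hence $|V(\widehat{G})| = |E(\widehat{G})| + 1$, and it suffices to prove that $\widehat{G}$ is acyclic to conclude that it is a tree.

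To establish acyclicity I would invoke cycle--cut duality. Suppose for contradiction that $\widehat{G}$ contains a simple cycle through interior faces $F_0, F_1, \ldots, F_{k-1}$, and let $e_i$ denote the edge of $G$ shared by $F_i$ and $F_{(i+1) \bmod k}$. By planar duality (viewing the dual cycle as a Jordan curve through the face centers crossing each $e_i$), the set $\{e_0, \ldots, e_{k-1}\}$ forms a non-trivial edge cut of $G$, separating the vertices enclosed by the curve from those outside. Each $e_i$ is shared by two interior faces, and hence is a chord of $H$, so this cut consists entirely of chords. However $H$ is a Hamilton cycle of $G$, so it must cross every non-trivial vertex bipartition at least twice, forcing some edge of $H$ to belong to $\{e_0, \ldots, e_{k-1}\}$, contradicting that these are all chords of $H$.

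I expect the main obstacle to be formalizing the cycle--cut duality step, which either invokes standard planar duality as a black box or requires a topological argument via Jordan's theorem. A purely combinatorial alternative avoiding topology would be induction on $c$: if $c = 0$ then $G = H$ and $\widehat{G}$ is a single vertex, and otherwise, for any chord $e$ the graph $G - e$ remains $2$-connected and outerplanar with $c - 1$ chords (its Hamiltonian cycle survives), so $\widehat{G - e}$ is a tree by induction; then $\widehat{G}$ is obtained from $\widehat{G - e}$ by splitting the vertex that represents the face of $G - e$ subdivided by $e$ into two adjacent vertices, which preserves treeness.
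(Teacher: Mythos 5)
The paper does not prove this statement at all: it is imported as a black box with a citation to Fleischner, Geller, and Harary~\cite{FleischnerGH74}, so there is no in-paper argument to compare against. Your proposal is the standard self-contained proof and its two main ingredients are sound: the Euler count showing $|V(\widehat{G})| = |E(\widehat{G})| + 1$, and acyclicity via the cycle--cut correspondence (a simple cycle in the dual crossing only chords would yield a nontrivial edge cut containing no edge of the Hamiltonian cycle $H$, which is impossible since $H$ visits every vertex and hence crosses any nontrivial vertex bipartition at least twice). The backup induction on the number of chords is also a legitimate route.

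One step deserves explicit justification in either variant: the claim $|E(\widehat{G})| = c$. With the paper's definition the dual is a \emph{simple} graph on pairs of distinct faces, so if two interior faces shared two chords they would contribute only one dual edge, the identity $|V(\widehat{G})| = |E(\widehat{G})| + 1$ would fail, and acyclicity alone would only give a forest, not a tree. You need the fact that two distinct interior faces of a biconnected outerplanar plane graph share at most one edge. This follows from the same Jordan-curve reasoning you already use: a closed curve through the two faces crossing the two shared chords traps a vertex of $G$ strictly inside a region disjoint from the outer face, contradicting outerplanarity (equivalently, it exhibits a two-element cut consisting only of chords, which $H$ cannot avoid). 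The induction variant has the analogous wrinkle: when the face of $G - e$ is split by reinserting $e$, you must argue that no former neighbor of the split dual vertex becomes adjacent to \emph{both} new vertices, which would create a triangle; the same argument rules this out. With that observation added, the proof is complete.
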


It is justified to speak of \emph{the} dual of a biconnected outerplanar graph, since all embeddings in which all vertices lie on the outer face have exactly the same set of faces. This can easily be seen by noting that the unique Hamiltonian cycle has a unique outerplanar embedding up to reversing the ordering, and that all remaining edges are chords of this cycle drawn in the interior. For an outerplanar graph~$G$ we can therefore uniquely classify its edges into exterior edges which lie on the outer face of an outerplanar embedding, and interior edges which bound two interior faces and are chords of the Hamiltonian cycle formed by the outer face.

For a biconnected outerplanar graph~$G$ we use~$\widehat{G}$ to denote its weak dual. For an interior edge~$e \in E(G)$ we use~$\widehat{e}$ to denote its dual in~$\widehat{G}$. Note that exterior edges, which lie on the outer face, do not have a dual in~$\widehat{G}$ since we work with the \emph{weak} dual. Each bounded face~$f$ of an outerplanar embedding of~$G$ corresponds to a vertex~$\widehat{f}$ of~$\widehat{G}$. For a vertex~$\widehat{f}$ of~$\widehat{G}$, we use~$V_G(\widehat{f})$ to denote the vertices of~$G$ incident on face~$f$. We extend this notation to vertex sets and subgraphs of~$\widehat{G}$, so that~$V_G(\widehat{G}') = \bigcup _{\widehat{f} \in \widehat{G}'} V_G(\widehat{f})$. Similarly, for an edge~$\widehat{e}$ of~$\widehat{G}$ we use~$V_G(\widehat{e})$ or simply~$V(e)$ to denote the endpoints of the edge in~$G$ for which~$\widehat{e}$ is the dual. For~$\widehat{Y} \subseteq E(\widehat{G})$ we define~$\mic{V_G(\widehat{Y})} := \bigcup _{\widehat{e} \in Y} V_G(\widehat{e})$.

It is insightful to think about the weak dual~$\widehat{G}$ of a biconnected outerplanar graph~$G$ as prescribing a way in which to build graph~$G$ as follows: starting from the disjoint union of the induced cycles forming the interior faces corresponding to~$V(\widehat{G})$, for each edge~$\widehat{f'}\widehat{f''} = \widehat{e} \in E(\widehat{G})$ glue together the induced cycles for~$f'$ and~$f''$ at the edge~$e$ that is common to both cycles. Since each interior face is an induced cycle and therefore biconnected, while gluing biconnected subgraphs along edges preserves biconnectivity, we observe the following.

\begin{observation} \label{obs:weakdual:subtree:biconnected}
If~$G$ is a biconnected outerplanar graph and~$\widehat{R}$ is a nonempty connected subtree of its weak dual~$\widehat{G}$, then~$G[V_G(\widehat{R})]$ is biconnected.
\end{observation}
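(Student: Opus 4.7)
The plan is to proceed by induction on $|V(\widehat{R})|$.

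For the base case $|V(\widehat{R})| = 1$, say $\widehat{R} = \{\widehat{f}\}$, the set $V_G(\widehat{R})$ consists of the vertices incident to the single interior face $f$. Since $f$ is a face in an outerplanar embedding, its boundary is a cycle on at least three vertices, and this cycle is induced in $G$: any chord would subdivide $f$ into smaller faces. Hence $G[V_G(\widehat{f})]$ coincides with this induced cycle and is biconnected.

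In the inductive step with $|V(\widehat{R})| \geq 2$, I would pick a leaf $\widehat{f}$ of $\widehat{R}$ with unique neighbor $\widehat{f'} \in \widehat{R}$ and set $\widehat{R}' := \widehat{R} - \widehat{f}$; this is a nonempty connected subtree to which the induction hypothesis applies, so $G[V_G(\widehat{R}')]$ is biconnected. Let $e = uv$ be the primal edge corresponding to the dual edge $\widehat{f}\widehat{f'}$. The crucial structural claim to establish is
\[
V_G(\widehat{f}) \cap V_G(\widehat{R}') = \{u, v\}.
\]
The inclusion $\supseteq$ is immediate, since $u, v$ lie on both $f$ and $f' \in \widehat{R}'$. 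For the reverse inclusion I would invoke the Hamiltonian cycle $H$ of $G$ given by \cref{obs:biconnected:hamiltonian}: the chord $e$ splits the interior of $H$ into two regions whose boundaries along $H$ are the two vertex-disjoint arcs $A_1, A_2$ of $H - \{u, v\}$, and every interior face of $G$ lies entirely inside one region. Since $\widehat{f}$ is a leaf of $\widehat{R}$, the unique tree path in $\widehat{G}$ from $\widehat{f'}$ to any $\widehat{g} \in \widehat{R}'$ stays inside $\widehat{R}$ and avoids $\widehat{f}$, so it never traverses the dual edge $\widehat{f}\widehat{f'}$. Tracing this path amounts to crossing chords of $H$ other than $e$, so every face of $\widehat{R}'$ lies on the same side of $e$ as $f'$. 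Consequently $V_G(\widehat{R}') \subseteq \{u, v\} \cup V(A_2)$ whereas $V_G(\widehat{f}) \subseteq \{u, v\} \cup V(A_1)$, yielding the claim.

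To conclude, outerplanarity also forbids any edge of $G$ between $V(A_1)$ and $V(A_2)$, since such an edge would cross the chord $e$. Therefore $G[V_G(\widehat{R})]$ is the union of the two biconnected subgraphs $G[V_G(\widehat{f})]$ and $G[V_G(\widehat{R}')]$ which share exactly the edge $uv$. Gluing two biconnected graphs along a common edge preserves biconnectedness: a hypothetical cut vertex $w$ that lies in only one part leaves the other part intact and connected to it via $u$ and $v$, while if $w \in \{u, v\}$ the remaining shared endpoint of $e$ keeps both parts connected. I expect the main obstacle to be the structural intersection claim, as it is the only step that genuinely exploits the planar embedding rather than purely combinatorial properties of the dual tree.
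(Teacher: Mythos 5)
Your proof is correct and follows essentially the same route as the paper: the paper justifies this observation by viewing $G[V_G(\widehat{R})]$ as built by gluing the induced face-cycles along the shared edges prescribed by the dual tree, noting that gluing biconnected graphs along an edge preserves biconnectivity. Your leaf-peeling induction, including the intersection claim $V_G(\widehat{f}) \cap V_G(\widehat{R}') = V(e)$ via the two arcs of the Hamiltonian cycle, is just a careful formalization of that same gluing argument.
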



Suppose we have a connected subtree~$\widehat{R}$ of the weak dual~$\widehat{G}$ of a biconnected outerplanar graph~$G$. In the process of constructing~$G$ from the induced cycles formed by its faces, the subgraph~$G[\widehat{R}]$ only becomes adjacent to vertices outside the subgraph by gluing faces outside~$\widehat{R}$ onto faces of~$\widehat{R}$. Since these are glued along edges whose dual has one endpoint inside and one endpoint outside~$\widehat{R}$, we observe the following.

\begin{observation} \label{obs:weakdual:separate:correct}
Let~$G$ be a biconnected outerplanar graph and let~$\widehat{R}$ be a connected subtree of its weak dual~$\widehat{G}$. Let~$\widehat{Y} \subseteq E(\widehat{G})$ denote those edges which have exactly one endpoint in~$\widehat{R}$ and let~$S = \bigcup _{\widehat{e} \in \widehat{Y}} V_G(\widehat{e})$. \micr{use the introduced notation $S = V_G(\widehat{Y})$?}The only vertices of~$V_G(\widehat{R})$ which have a neighbor outside~$V_G(\widehat{R})$ are those in~$S$.
\end{observation}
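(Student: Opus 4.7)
The plan is to analyze the local structure of faces around a vertex and show that an edge leaving $V_G(\widehat{R})$ must be ``adjacent'' in the dual to both a face in $\widehat{R}$ and a face outside $\widehat{R}$, producing an edge in $\widehat{Y}$ that witnesses membership in $S$.

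First I would dispose of the trivial small cases: if $G$ has at most two vertices then $\widehat{G}$ is empty and no connected subtree $\widehat{R}$ exists, so we may assume $G$ is a biconnected outerplanar graph on at least three vertices. By \cref{obs:biconnected:hamiltonian}, $G$ has a unique Hamiltonian cycle which bounds the outer face, and every vertex $v \in V(G)$ lies on this cycle. Consequently, the interior faces incident to $v$ form a sequence $f_1, f_2, \dots, f_k$ linearly ordered around $v$ (between the two Hamiltonian edges at $v$), where any two consecutive faces $f_i, f_{i+1}$ share an interior edge of $G$ incident to $v$ whose dual is the edge $\widehat{f_i}\widehat{f_{i+1}}$ of $\widehat{G}$.

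Now fix $v \in V_G(\widehat{R})$ having a neighbor $u \in V(G) \setminus V_G(\widehat{R})$; I must show $v \in S$. Let $F_v = \{f_1,\ldots,f_k\}$ denote the interior faces incident to $v$. Since $v \in V_G(\widehat{R})$, at least one $f_i$ satisfies $\widehat{f_i} \in V(\widehat{R})$. On the other hand, the edge $vu$ is incident to at least one interior face $f_j$ (either one, if $vu$ is exterior, or two, if it is interior), and every such $f_j$ also contains $u$; since $u \notin V_G(\widehat{R})$ this forces $\widehat{f_j} \notin V(\widehat{R})$, and $f_j \in F_v$.

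Thus among the linearly ordered faces $f_1,\ldots,f_k$ around $v$, some have dual vertex in $\widehat{R}$ and some do not. Walking along this sequence, there must exist an index $i$ such that exactly one of $\widehat{f_i}, \widehat{f_{i+1}}$ lies in $V(\widehat{R})$. The interior edge $e$ of $G$ shared by $f_i$ and $f_{i+1}$ is incident to $v$, and its dual $\widehat{e} = \widehat{f_i}\widehat{f_{i+1}}$ has exactly one endpoint in $\widehat{R}$, so $\widehat{e} \in \widehat{Y}$ and hence $v \in V_G(\widehat{e}) \subseteq V_G(\widehat{Y}) = S$, as required. The only delicate point is verifying the linear-ordering claim at $v$ and ensuring that both ``in $\widehat{R}$'' and ``not in $\widehat{R}$'' faces genuinely appear in $F_v$; this is the main thing to justify cleanly using \cref{obs:biconnected:hamiltonian} and the fact that every neighbor of $v$ in $G$ is co-incident with $v$ on some interior face.
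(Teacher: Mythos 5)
Your proposal is correct. Note that the paper itself offers no formal proof of this statement: it is labelled an observation and justified only by the informal remark preceding it, namely that $G$ can be assembled by gluing the induced cycles bounding its interior faces along the interior edges prescribed by the weak dual, so the part of $G$ corresponding to $\widehat{R}$ can only touch the rest of the graph at gluing edges whose duals cross out of $\widehat{R}$, i.e.\ at vertices of $S$. Your argument reaches the same conclusion by a different, more local route: you fix a boundary vertex $v$, use \cref{obs:biconnected:hamiltonian} and the outerplanar embedding to linearly order the interior faces around $v$ (consecutive ones separated by chords at $v$), observe that the faces witnessing $v \in V_G(\widehat{R})$ and the faces containing the outside neighbor $u$ force both an $\widehat{R}$-face and a non-$\widehat{R}$-face to appear in this sequence, and extract a transition chord whose dual lies in $\widehat{Y}$ and has $v$ as an endpoint. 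The steps all check out: every edge of a biconnected outerplanar graph on at least three vertices bounds an interior face, a face containing the edge $vu$ cannot have its dual in $\widehat{R}$ since that would put $u$ in $V_G(\widehat{R})$, and the transition pair exists because the two kinds of faces coexist in a finite linear order. The paper's gluing picture is shorter but leaves the key step implicit; your version makes the witness for $v \in S$ fully explicit, at the cost of invoking the rotational structure of faces at a vertex. Either is acceptable; yours is a legitimate rigorous substitute for an argument the paper only sketches.
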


From these observations, we deduce the following lemma stating how a subtree of~$\widehat{G}$ that is attached to the rest of~$\widehat{G}$ at a single edge, represents a connected subgraph.

\begin{lemma} \label{obs:weakdual:pendanttree}
Let~$G$ be a biconnected outerplanar graph, let~$\widehat{e} \in E(\widehat{G})$, and let~$\widehat{R}$ be one of the two trees in~$\widehat{G} \setminus \widehat{e}$. For~$C = V_G(\widehat{R}) \setminus V_G(\widehat{e})$ the graph~$G[C]$ is connected and~$N_G(C) = V_G(\widehat{e})$.
\end{lemma}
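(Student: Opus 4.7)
I would prove $N_G(C) = V_G(\widehat{e})$ and connectedness of $G[C]$ separately. For the set identity, observe that since $\widehat{R}$ is a component of the tree $\widehat{G} \setminus \widehat{e}$, the edge $\widehat{e}$ is the unique edge of $\widehat{G}$ with exactly one endpoint in $\widehat{R}$. Applying \cref{obs:weakdual:separate:correct} with $\widehat{Y} = \{\widehat{e}\}$ yields that the vertices of $V_G(\widehat{R})$ having a neighbor outside $V_G(\widehat{R})$ are precisely those in $V_G(\widehat{e})$. To get $N_G(C) \subseteq V_G(\widehat{e})$, I take any $v \in N_G(C)$ with a neighbor $u \in C \subseteq V_G(\widehat{R})$; if $v \notin V_G(\widehat{R})$ then $u$ would have a neighbor outside $V_G(\widehat{R})$, forcing $u \in V_G(\widehat{e})$ and contradicting $u \in C = V_G(\widehat{R}) \setminus V_G(\widehat{e})$, so $v \in V_G(\widehat{R}) \setminus C = V_G(\widehat{e})$. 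Conversely, letting $f$ denote the endpoint of $\widehat{e}$ lying in $\widehat{R}$, the face $f$ is an induced cycle of length at least three (since $|E(\widehat{G})| \ge 1$ forces $|V(G)| \ge 4$ and every interior face of a biconnected outerplanar graph on at least three vertices is a cycle), so each $v \in V_G(\widehat{e})$ has a neighbor on this cycle different from the opposite endpoint of $e$, and that neighbor lies in $V_G(f) \setminus V_G(\widehat{e}) \subseteq C$, placing $v$ in $N_G(C)$.

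For connectedness of $G[C]$, I would use \cref{obs:weakdual:subtree:biconnected} to see that $G[V_G(\widehat{R})]$ is biconnected, and combined with $|V_G(\widehat{R})| \ge |V_G(f)| \ge 3$, \cref{obs:biconnected:hamiltonian} provides a unique Hamiltonian cycle $H_R$. The core claim is that the two endpoints of $e$ lie consecutively on $H_R$: granting this, deleting them from $H_R$ leaves a Hamiltonian path on $V_G(\widehat{R}) \setminus V_G(\widehat{e}) = C$, certifying that $G[C]$ is connected. I plan to establish consecutiveness by inspecting the outerplanar embedding of $G$ restricted to $V_G(\widehat{R})$: the plane regions associated with the faces in $\widehat{R}$ form a simply connected union whose boundary cycle passes through every vertex of $V_G(\widehat{R})$ (because every such vertex also sits on the outer face of $G$) and therefore equals $H_R$, while an interior edge of $G$ lies on this boundary exactly when it separates an $\widehat{R}$-face from a face outside $\widehat{R}$. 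The edge $e$ has precisely this property, so $e \in E(H_R)$ and its endpoints are consecutive on $H_R$.

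The main obstacle is formalising the planar-region step cleanly without spelling out how the induced subgraph inherits an outerplanar embedding from $G$. A purely combinatorial alternative is induction on $|V(\widehat{R})|$: the base case is a single face whose cycle already places the endpoints of $e$ consecutively, and in the inductive step one peels off a leaf face $f'$ of $\widehat{R}$ glued to the remainder along a shared edge $e'$, applies the inductive hypothesis to $\widehat{R} \setminus \{f'\}$, and observes that attaching the face-cycle of $f'$ along $e'$ to a connected graph preserves connectedness after deleting $V_G(\widehat{e})$, since $V_G(e')$ is disjoint from $V_G(\widehat{e})$ or meets it in exactly one vertex (two distinct edges of the face-cycle of $f$ share at most one vertex).
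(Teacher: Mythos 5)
Your proposal is correct and follows essentially the same route as the paper: biconnectivity of $G[V_G(\widehat{R})]$ via \cref{obs:weakdual:subtree:biconnected}, the unique Hamiltonian cycle from \cref{obs:biconnected:hamiltonian} with the endpoints of $e$ consecutive on it (so that deleting $V_G(\widehat{e})$ leaves a Hamiltonian path on $C$), and \cref{obs:weakdual:separate:correct} to pin down the neighborhood. The paper settles your ``core claim'' more directly than either of your two suggested formalizations: it observes that $\widehat{R}$ is the weak dual of $G[V_G(\widehat{R})]$ and $\widehat{e} \notin E(\widehat{R})$, so $e$ is an exterior edge of that subgraph and therefore lies on its Hamiltonian cycle.
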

\begin{proof}
By \cref{obs:weakdual:subtree:biconnected}, the graph~$G[V_G(\widehat{R})]$ is biconnected, so by \cref{obs:biconnected:hamiltonian} it has a Hamiltonian cycle. It is easy to see that~$\widehat{R}$ is the weak dual of~$G[V_G(\widehat{R})]$. The latter graph contains edge~$e$, since it is incident on both faces represented by the endpoints of~$\widehat{e}$, one of which has its dual vertex in~$\widehat{R}$. Since~$\widehat{e} \notin E(\widehat{R})$, the edge~$e$ is an exterior edge of~$G[\widehat{R}]$. So~$e$ is an edge of the unique Hamiltonian cycle of~$G[\widehat{R}]$, which implies that the removal of~$V(e)$ leaves that subgraph connected. Hence~$V_G(\widehat{R}) \setminus V_G(\widehat{e})$ is a connected subgraph of~$G$. By \cref{obs:weakdual:separate:correct} the only vertices of~$V_G(\widehat{R})$ which have neighbors in~$G$ outside~$V_G(\widehat{R})$ are those in~$V_G(\widehat{e})$, which proves that~$G[\widehat{R}] - V_G(\widehat{e})$ is a connected component of~$G - V_G(\widehat{e})$. Its neighborhood in~$G$ is equal to~$V_G(\widehat{e})$ since the predecessor and successor of the vertices of~$V_G(\widehat{e})$ on the Hamiltonian cycle of~$G[\widehat{R}]$ are contained in~$G[\widehat{R}] - V_G(\widehat{e})$.
\end{proof}

The following lemma gives a condition under which removing the endpoints of a matching of two edges preserves connectivity. Recall the definition of order-respecting matching (\cref{def:order-respecting}).

\begin{lemma} \label{lemma:remove:ordered:matching:connected}
If~$G$ is a biconnected outerplanar graph and~$M = \{e_1, e_2, e_3\}$ is an order-respecting matching in~$G$ such that~$e_1$ and~$e_3$ are exterior edges while~$e_2$ is an interior edge, then the subgraph~$G' := G - V(\{e_1, e_3\})$ is connected and~$N_G(G') = V(\{e_1, e_3\})$.
\end{lemma}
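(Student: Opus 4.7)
The plan is to exploit the Hamiltonian-cycle structure of biconnected outerplanar graphs provided by \cref{obs:biconnected:hamiltonian}. Let $H$ be the unique Hamiltonian cycle of $G$, and label its vertices cyclically as $v_1,\dots,v_n$. Since $e_1,e_3$ are exterior they are edges of $H$, while the interior edge $e_2$ is a chord. After relabeling I may assume $V(e_1)=\{v_1,v_2\}$, and I write $V(e_2)=\{v_a,v_b\}$ with $3\le a<b\le n$ and $V(e_3)=\{v_c,v_{c+1}\}$.

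The first step is to identify the components of $G-V(e_2)$. Outerplanarity forbids crossing chords, so for any chord $v_iv_j$ distinct from $e_2$ the endpoints cannot lie strictly on opposite arcs of $H-\{v_a,v_b\}$. Consequently every edge of $G-V(e_2)$ is contained within one of the two arcs $P_A=v_{a+1},\dots,v_{b-1}$ and $P_B=v_{b+1},\dots,v_n,v_1,\dots,v_{a-1}$, and these arcs are exactly the two components of $G-V(e_2)$. Since $V(e_1)=\{v_1,v_2\}\subseteq P_B$ while the order-respecting property places $V(e_3)$ in the opposite component, I deduce $V(e_3)\subseteq P_A$, i.e.\ $a<c<c+1<b$.

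Given this cyclic arrangement, removing $V(\{e_1,e_3\})=\{v_1,v_2,v_c,v_{c+1}\}$ from $H$ leaves exactly two paths $Q_1=v_3,\dots,v_{c-1}$ and $Q_2=v_{c+2},\dots,v_n$. Because $a\le c-1$ and $b\ge c+2$, the endpoints of $e_2$ satisfy $v_a\in V(Q_1)$ and $v_b\in V(Q_2)$, so the chord $e_2$ reconnects the two paths inside $G'=G-V(\{e_1,e_3\})$, proving connectedness. For the neighborhood condition, each removed vertex still has its other Hamiltonian-cycle neighbor in $G'$: $v_1$ has $v_n$, $v_2$ has $v_3$, $v_c$ has $v_{c-1}$, and $v_{c+1}$ has $v_{c+2}$; all four of these indices lie outside $\{1,2,c,c+1\}$, because $c\ge 4$ and $c+1<b\le n$ (and $n\ge 6$ since $G$ contains the disjoint edges $e_1,e_2,e_3$). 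The reverse inclusion $N_G(G')\subseteq V(\{e_1,e_3\})$ is immediate from $V(G)\setminus V(G')=V(\{e_1,e_3\})$.

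The main obstacle is the chord-nesting step: once outerplanarity is leveraged to show that $G-V(e_2)$ splits exactly along the two Hamiltonian arcs, the order-respecting property pins down the relative positions of $V(e_1)$, $V(e_2)$, and $V(e_3)$ around $H$, after which connectedness and the neighborhood equality follow by direct inspection of the cyclic structure.
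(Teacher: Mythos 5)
Your proof is correct and follows essentially the same route as the paper: both arguments use the unique Hamiltonian cycle, note that deleting $V(e_1)\cup V(e_3)$ leaves two arcs, and use the order-respecting property to place the endpoints of the chord $e_2$ in opposite arcs so that $e_2$ reconnects them, with the neighborhood claim following from the cycle structure. Your index bookkeeping checks out, so no further changes are needed.
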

\begin{proof}
Consider the unique Hamiltonian cycle~$C$ of~$G$. The exterior edges~$e_1$ and~$e_3$ lie on~$C$ while the interior edge~$e_2$ is a chord of~$C$. Since~$M$ is an order-respecting matching, the vertex sets~$V(e_1)$ and~$V(e_3)$ lie in different connected components of~$G - V(e_2)$. Partition the cycle~$C$ into two vertex-disjoint~$(V(e_1), V(e_2))$-paths~$C_1, C_2$, and let~$C'_1, C'_2$ be the paths formed by the interior vertices of~$C_1$ and~$C_2$. Since~$V(e_2)$ separates~$V(e_1)$ and~$V(e_3)$, the paths~$C'_1,C'_2$ are nonempty and both contain a vertex of~$V(e_2)$. This implies that the graph~$G[C'_1 \cup C'_2]$ is connected, since the edge~$e_2$ connects the two paths. Since~$C'_1 \cup C'_2$ span all vertices of~$G$ except~$V(\{e_1, e_3\})$ we have~$G' = G[V(C'_1) \cup V(C'_2)]$, and since each vertex of~$V(\{e_1, e_3\})$ is adjacent to an endpoint of~$C'_1$ or~$C'_2$, the lemma follows.
\end{proof}

The following shows that the order-respecting property of a matching can be deduced from its path-like structure in the dual.

\begin{lemma} \label{lemma:matching:from:path}
Let~$G$ be a biconnected outerplanar graph. If~$\widehat{P}$ is a path in~$\widehat{G}$ and~$M = \{e_1, \ldots, e_\ell\}$ is a matching in~$G$ such that the dual edges~$\widehat{e}_1, \ldots, \widehat{e}_\ell$ appear on~$\widehat{P}$ in the order as given by the indices, but not necessarily consecutively, then~$M$ is an order-respecting matching in~$G$. 
\end{lemma}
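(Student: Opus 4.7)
The plan is to verify the defining property of an order-respecting matching directly, by fixing indices $1 \le i < j < k \le \ell$ and showing that $e_i$ and $e_k$ lie in different connected components of $G - V(e_j)$. The main tool will be \cref{obs:weakdual:pendanttree}, which tells us how removing the endpoints of an interior edge of~$G$ disconnects~$G$ into pieces corresponding to the two subtrees of $\widehat{G}$ separated by the dual edge.

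First I would observe that each $e_h$ in the matching must be an interior edge of~$G$, since $\widehat{e}_h$ appears as an edge of $\widehat{P} \subseteq \widehat{G}$; in particular this holds for $e_j$. Let $\widehat{R}_1$ and $\widehat{R}_2$ denote the two connected subtrees of $\widehat{G} \setminus \widehat{e}_j$. By \cref{obs:weakdual:pendanttree}, the sets $V_G(\widehat{R}_1) \setminus V_G(\widehat{e}_j)$ and $V_G(\widehat{R}_2) \setminus V_G(\widehat{e}_j)$ form two distinct connected components of $G - V_G(\widehat{e}_j) = G - V(e_j)$.

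Next I would use the order-along-$\widehat{P}$ hypothesis. Since $\widehat{e}_i, \widehat{e}_j, \widehat{e}_k$ appear in this order on the path $\widehat{P}$ and $\widehat{e}_j$ is interior to the subpath between $\widehat{e}_i$ and $\widehat{e}_k$, removing $\widehat{e}_j$ separates $\widehat{e}_i$ from $\widehat{e}_k$ in $\widehat{P}$, hence in $\widehat{G}$; assume without loss of generality that $\widehat{e}_i \in E(\widehat{R}_1)$ and $\widehat{e}_k \in E(\widehat{R}_2)$. Both endpoints of $\widehat{e}_i$ (which are faces of $G$) therefore lie in $\widehat{R}_1$, so $V(e_i) \subseteq V_G(\widehat{R}_1)$, and analogously $V(e_k) \subseteq V_G(\widehat{R}_2)$.

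Finally I would combine this with the matching condition: because $M$ is a matching, $V(e_i)$ is disjoint from $V(e_j) = V_G(\widehat{e}_j)$, so in fact $V(e_i) \subseteq V_G(\widehat{R}_1) \setminus V_G(\widehat{e}_j)$, and similarly $V(e_k) \subseteq V_G(\widehat{R}_2) \setminus V_G(\widehat{e}_j)$. These two vertex sets lie in different connected components of $G - V(e_j)$, so every vertex of $e_i$ is separated in $G - V(e_j)$ from every vertex of $e_k$, establishing the required property. No step here presents a real obstacle; the only point to handle carefully is that the matching hypothesis is what lets us move from ``endpoints lie in $V_G(\widehat{R}_t)$'' to ``endpoints lie in the corresponding \emph{component} of $G - V(e_j)$'', and this is what prevents a degenerate case where an endpoint of $e_i$ or $e_k$ would accidentally coincide with a vertex of $e_j$.
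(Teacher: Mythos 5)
Your proposal is correct and follows essentially the same route as the paper's proof: both rest on \cref{obs:weakdual:pendanttree} to identify the two components of $G - V(e_j)$ with the two subtrees of $\widehat{G} \setminus \widehat{e}_j$, and both use the vertex-disjointness of the matching to rule out endpoints of $e_i$ or $e_k$ landing in $V(e_j)$. Your explicit handling of that last disjointness step is in fact slightly more careful than the paper's phrasing of the same point.
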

\begin{proof}
For~$1 < i < \ell$ let~$\widehat{G}_i$ denote the tree in~$\widehat{G} \setminus \widehat{e_i}$ that contains~$\widehat{e_1}$. Since the edges~$\widehat{e}_i$ lie on~$\widehat{P}$ in order of increasing index, tree~$\widehat{G}_i$ contains~$\{\widehat{e_j} \mid j < i\}$ but no edge of~$\{\widehat{e_j} \mid j > i\}$. Since the edges in the matching~$M$ are vertex-disjoint, this implies that~$V_G(\widehat{G}_i)$ contains~$V_G(\widehat{e_j})$ for~$j < i$ but contains no vertex of~$V_G(\widehat{e_j})$ for~$j>i$.\bmpr{This may warrant a small justification.} 

By \cref{obs:weakdual:pendanttree}, we have~$N_G(V_G(\widehat{G_i}) \setminus V_G(\widehat{e_i})) = V_G(\widehat{e_i})$ for all~$1 < i < \ell$, which implies that the only vertices of~$V_G(\widehat{G_i})$ which have neighbors outside that set are those in~$V_G(\widehat{e_i})$. Hence the removal of~$V_G(\widehat{e_i})$ breaks all paths from endpoints of~$V_G(\widehat{e_j})$ to endpoints of~$V_G(\widehat{e_r})$ for~$j < i < r$. This establishes that the matching~$M$ is order-respecting.
\end{proof}

The previous observation leads to the following lemma. Intuitively, it gives a property similar to that of a tree decomposition, saying that the vertices of~$\widehat{G}$ representing faces containing a fixed vertex~$t \in V(G)$ form a connected subtree of~$\widehat{G}$.

\begin{lemma} \label{lem:faces:path}
Let~$G$ be a biconnected outerplanar graph and let~$t \in V(G)$. If interior faces~$f$ and~$f'$ both contain~$t$, then~$t \in V(e)$ for each edge~$\widehat{e}$ on the unique~$(\widehat{f}, \widehat{f'})$-path~$\widehat{P}$ in~$\widehat{G}$.
\end{lemma}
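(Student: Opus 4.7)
The plan is to argue by contradiction using the separation properties of pendant subtrees of the weak dual, as captured by \cref{obs:weakdual:pendanttree}. If $\widehat{f} = \widehat{f'}$ the path $\widehat{P}$ is empty and there is nothing to show, so assume $\widehat{f} \neq \widehat{f'}$. Suppose for contradiction that some edge $\widehat{e}$ on $\widehat{P}$ satisfies $t \notin V_G(\widehat{e})$.

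Since $\widehat{G}$ is a tree by \cref{obs:weakdual:tree}, removing $\widehat{e}$ splits $\widehat{G}$ into two subtrees $\widehat{R_1}$ and $\widehat{R_2}$. Because $\widehat{e}$ lies on the unique $(\widehat{f}, \widehat{f'})$-path $\widehat{P}$, the endpoints $\widehat{f}$ and $\widehat{f'}$ of $\widehat{P}$ belong to different sides of this split; say $\widehat{f} \in \widehat{R_1}$ and $\widehat{f'} \in \widehat{R_2}$. By \cref{obs:weakdual:pendanttree}, the sets $C_i := V_G(\widehat{R_i}) \setminus V_G(\widehat{e})$ for $i \in \{1,2\}$ induce distinct connected components of $G - V_G(\widehat{e})$, and in particular $C_1 \cap C_2 = \emptyset$.

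Now derive the contradiction: since $t$ is incident to the face $f$, we have $t \in V_G(\widehat{f}) \subseteq V_G(\widehat{R_1})$, and analogously $t \in V_G(\widehat{R_2})$. Combined with the assumption $t \notin V_G(\widehat{e})$, this places $t$ in both $C_1$ and $C_2$, contradicting $C_1 \cap C_2 = \emptyset$. Hence every edge $\widehat{e}$ on $\widehat{P}$ must satisfy $t \in V_G(\widehat{e})$, as claimed.

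The only subtlety worth double-checking is that \cref{obs:weakdual:pendanttree} really yields disjoint components $C_1, C_2$ in $G - V_G(\widehat{e})$: the lemma is stated for one of the two trees of $\widehat{G} \setminus \widehat{e}$, but by symmetry it applies to both, and the two component sets it produces are disjoint because the underlying vertex sets $V_G(\widehat{R_1}), V_G(\widehat{R_2})$ of $G$ can share only vertices incident to faces on both sides of $\widehat{e}$, i.e., at most the two vertices of $V_G(\widehat{e})$ themselves, which are excluded from the $C_i$. This makes the separation argument clean and completes the proof.
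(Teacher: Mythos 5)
Your proof is correct and follows essentially the same route as the paper's: both apply the pendant-subtree separation property (the paper cites \cref{obs:weakdual:separate:correct}, you cite the equivalent \cref{obs:weakdual:pendanttree}) to each side of $\widehat{G} \setminus \widehat{e}$ and conclude that a vertex lying in both $V_G(\widehat{R_1})$ and $V_G(\widehat{R_2})$ must belong to $V(e)$; you merely phrase it as a contradiction where the paper argues directly. The only quibble is that your closing justification for why $C_1$ and $C_2$ are disjoint ("can share only vertices incident to faces on both sides, i.e., at most $V_G(\widehat{e})$") is close to restating the lemma itself; the cleaner reason is that $C_1$ and $C_2$ are each connected components of $G - V(e)$ and so are equal or disjoint, and equality is ruled out since the two arcs of the Hamiltonian cycle separated by the chord $e$ lie on opposite sides — but the paper's own proof elides this point in the same way.
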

\begin{proof}
Let~$\widehat{e}$ be an edge on~$\widehat{P}$ and consider the two trees~$\widehat{G}_1,\widehat{G}_2$ of~$\widehat{G} \setminus \widehat{e}$. Since~$\widehat{e}$ lies on the path between~$\widehat{f}$ and~$\widehat{f}'$, with~$t \in V_G(\widehat{f}) \cap V_G(\widehat{f}')$, we have~$t \in V_G(\widehat{G}_1) \cap V_G(\widehat{G}_2)$. By applying \cref{obs:weakdual:separate:correct} twice, once for~$\widehat{G}_1$ and once for~$\widehat{G}_2$, the graph~$G - V_G(\widehat{e})$ has one connected component on vertex set~$V_G(\widehat{G}_1) \setminus V_G(\widehat{e})$ and one connected component on vertex set~$V_G(\widehat{G}_2) \setminus V_G(\widehat{e})$, which implies~$V_G(\widehat{G}_1) \cap V_G(\widehat{G}_2) \subseteq V_G(\widehat{e}) = V(e)$. Since~$t$ belongs to~$V_G(\widehat{G}_1) \cap V_G(\widehat{G}_2)$, we have~$t \in V(e)$.
\end{proof}

The next lemma shows analyzes how a star of edges incident on a common vertex~$x$ are represented in the weak dual.

\begin{lemma} \label{lem:fan:weakdual}
Let~$G$ be a biconnected outerplanar graph and let~$\widehat{P}$ be a path in~$\widehat{G}$ consisting of consecutive vertices and edges~$\widehat{f}_0, \widehat{e}_1, \widehat{f}_1, \widehat{e}_2, \widehat{f}_2, \ldots, \widehat{e}_\ell, \widehat{f}_{\ell}$ of~$\widehat{G}$, such that~$x \in V(e_1) \cap V(e_\ell)$. Then~$x \in V(e_i)$ for all~$i \in [\ell]$, and letting~$v_i$ denote the endpoint of~$e_i$ other than~$x$ for each~$i$, the vertices~$\{v_i \mid i \in [\ell]\}$ lie in order of increasing index on an induced~$(v_1,v_\ell)$-path~$P$ in~$G[V_G(\{\widehat{f}_1, \ldots, \widehat{f}_{\ell-1}\})]-x$ that contains no other neighbors of~$x$.
\end{lemma}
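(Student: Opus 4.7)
My plan is to construct~$P$ explicitly as a concatenation of induced subpaths around the interior faces~$f_1,\ldots,f_{\ell-1}$, relying throughout on the tree structure of~$\widehat{G}$ and repeated applications of~\cref{lem:faces:path}.

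First I would establish that~$x \in V(e_i)$ for all~$i \in [\ell]$. Since~$x \in V(e_1) \cap V(e_\ell)$ and these edges are incident on faces~$f_1$ and~$f_{\ell-1}$, vertex~$x$ lies in both~$V_G(\widehat{f}_1)$ and~$V_G(\widehat{f}_{\ell-1})$. Because~$\widehat{G}$ is a tree, the unique~$(\widehat{f}_1,\widehat{f}_{\ell-1})$-path in~$\widehat{G}$ is the subpath~$\widehat{e}_2,\ldots,\widehat{e}_{\ell-1}$ of~$\widehat{P}$, and~\cref{lem:faces:path} forces~$x$ to lie on each of these edges. Since the~$\widehat{e}_i$ are distinct edges of~$\widehat{G}$, so are the edges~$e_1,\ldots,e_\ell$ at~$x$, and hence~$v_1,\ldots,v_\ell$ are pairwise distinct.

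Next, for each~$i \in [\ell-1]$ both~$e_i$ and~$e_{i+1}$ lie on the induced boundary cycle~$C_i$ of~$f_i$, which contains~$x$ adjacent to both~$v_i$ and~$v_{i+1}$. Let~$Q_i$ be the subpath of~$C_i$ from~$v_i$ to~$v_{i+1}$ avoiding~$x$, and set~$P := Q_1 Q_2 \cdots Q_{\ell-1}$. To see that~$P$ is a simple path traversing~$v_1,\ldots,v_\ell$ in order, I would invoke~\cref{lem:faces:path} once more: any vertex common to~$f_i$ and~$f_j$ with~$i<j$ must lie on every edge on the~$(\widehat{f}_i,\widehat{f}_j)$-subpath of~$\widehat{P}$; this forces the common vertex into~$V(e_{i+1}) = \{x,v_{i+1}\}$ when~$j = i+1$, and into~$V(e_{i+1}) \cap V(e_{i+2}) = \{x\}$ when~$j \geq i+2$, using pairwise distinctness of the~$v_s$'s. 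Since each~$Q_i$ avoids~$x$, consecutive~$Q_i$'s meet only at~$v_{i+1}$ while non-consecutive ones are disjoint.

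The main obstacle will be to prove that~$P$ is \emph{induced} in~$G[V_G(\{\widehat{f}_1,\ldots,\widehat{f}_{\ell-1}\})]-x$. Suppose for contradiction that non-consecutive~$u,w \in V(P)$ satisfy~$uw \in E(G)$, and let~$S_u, S_w \subseteq [\ell-1]$ record which~$Q_i$'s contain them; as shown above, each of~$S_u,S_w$ is a subinterval of~$[\ell-1]$ of size at most~$2$. If~$S_u \cap S_w \neq \emptyset$, some~$C_i$ contains both~$u$ and~$w$, so inducedness of~$C_i$ forces~$uw$ to be an edge of~$C_i$; since~$u,w \neq x$ this must be an edge of~$Q_i$, making~$u,w$ consecutive on~$Q_i \subseteq P$, a contradiction. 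If instead~$S_u \cap S_w = \emptyset$ with~$\max S_u =: b < c := \min S_w$, then~$uw$ is incident on some interior face~$f^*$ of~$G$, and I would consider the median~$\widehat{m} = \widehat{f}_r$ of~$\widehat{f}_b, \widehat{f}_c, \widehat{f}^*$ in the tree~$\widehat{G}$, which necessarily satisfies~$b \leq r \leq c$. Applying~\cref{lem:faces:path} to the subpath from~$\widehat{f}_b$ to~$\widehat{m}$ forces~$u \in V(e_s) \setminus \{x\} = \{v_s\}$ for every~$s \in \{b+1,\ldots,r\}$; distinctness of the~$v_s$'s collapses this to~$r \leq b+1$, and the case~$r = b+1$ would put~$b+1$ into~$S_u$ contradicting~$\max S_u = b$, so~$r = b$. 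A symmetric argument yields~$r = c$, contradicting~$b < c$. Finally, any~$u \in V(P) \cap N_G(x)$ satisfies~$u,x \in V_G(f_i)$ for some~$i$, whence inducedness of~$C_i$ makes~$xu$ an edge of~$C_i$, forcing~$u \in \{v_i,v_{i+1}\} \subseteq \{v_1,\ldots,v_\ell\}$.
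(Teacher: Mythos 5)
Your proof is correct and follows essentially the same route as the paper: both first apply \cref{lem:faces:path} to conclude~$x \in V(e_i)$ for all~$i$ (the paper runs the argument from~$\widehat{f}_0$ to~$\widehat{f}_\ell$ where you use~$\widehat{f}_1,\widehat{f}_{\ell-1}$ together with the hypothesis on~$e_1,e_\ell$), and both then obtain~$P$ from the boundary cycles of~$f_1,\ldots,f_{\ell-1}$ with~$x$ removed. The only difference is one of rigor: where the paper asserts informally that the path-like gluing of these induced cycles along edges incident on~$x$ makes~$G[V_G(\{\widehat{f}_1,\ldots,\widehat{f}_{\ell-1}\})]-x$ an induced path, you verify simplicity and inducedness explicitly — in particular via the median of~$\widehat{f}_b,\widehat{f}_c,\widehat{f}^*$ in the dual tree — which is a correct and welcome elaboration of a step the paper leaves implicit.
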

\begin{proof}
Since~$\widehat{f}_0$ (respectively~$\widehat{f}_{\ell}$) is incident on~$\widehat{e}_1$ ($\widehat{e}_\ell$) and~$x \in V(e_1) \cap V(e_\ell)$, we have~$x \in V_G(\widehat{f}_0) \cap V_G(\widehat{f}_{\ell})$. By \cref{lem:faces:path}, this implies that~$x \in V(e_i)$ for all~$i \in [\ell]$.

Since~$G$ is a simple graph without parallel edges, the fact that~$x \in V(e_i)$ for all~$i \in [\ell]$ implies that the other endpoints of the edges~$e_i$ are all distinct. Let~$\{v_i\} = V(e_i) \setminus \{x\}$ for each~$i$. 
Since~$\widehat{P}$ is a path in~$\widehat{G}$, we can construct the graph~$G' := G[V_G(\{\widehat{f}_1, \ldots, \widehat{f}_{\ell-1}\})]$ from disjoint induced cycles for the faces~$f_1, \ldots, f_{\ell-1}$ by gluing them back-to-back along the edges~$e_1, \ldots, e_\ell$, all of which are incident on~$x$. Note that we do not use the faces~$f_0$ and~$f_\ell$. Since in this construction we glue disjoint cycles together at distinct edges in a path-like sequence, and all edges along which we glue are incident on~$x$, it follows that~$G' - x$ is an induced path. It contains~$v_1, \ldots, v_\ell$ in \mic{this} order and no other neighbors of~$x$.
\end{proof}

The last property of a weak dual we need states how removing two nonadjacent vertices incident on a common interior face separates the graph.

\begin{lemma} \label{lemma:separate:face}
Let~$G$ be a biconnected outerplanar graph and let~$\widehat{G}$ be its weak dual. Let~$\widehat{f} \in V(\widehat{G})$ and let~$u,v \in V(G)$ be nonadjacent vertices incident to~$f$. Let~$E_1, E_2$ be the edge sets of the two $(u,v)$-subpaths along the boundary of~$f$, respectively. Let~$\widehat{G}_i$ for~$i \in [2]$ denote the union of all trees~$\widehat{R}$ of~$\widehat{G} - \widehat{f}$ for which the unique edge connecting~$\widehat{R}$ to~$\widehat{f}$ is the dual of an edge in~$E_i$. Then the connected components of~$G - \{u,v\}$ are~$G[(V(E_i) \cup V_G(\widehat{G}_i)) \setminus \{u,v\}]$ for~$i \in [2]$.
\end{lemma}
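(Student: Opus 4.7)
The plan is to show three properties: (i) the sets $A_i := (V(E_i) \cup V_G(\widehat{G}_i)) \setminus \{u,v\}$ partition $V(G) \setminus \{u,v\}$, (ii) each $G[A_i]$ is connected, and (iii) no edge of $G - \{u,v\}$ runs between $A_1$ and $A_2$. Together these imply $A_1, A_2$ are exactly the connected components of $G - \{u,v\}$.

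For (i), the key identity to establish is $(V(E_1) \cup V_G(\widehat{G}_1)) \cap (V(E_2) \cup V_G(\widehat{G}_2)) = \{u,v\}$, since coverage is immediate from $V(G) = V_G(\widehat{G}) = V_G(\widehat{f}) \cup V_G(\widehat{G}_1) \cup V_G(\widehat{G}_2)$ together with the fact that $\widehat{f}$-incident faces split into $\widehat{G}_1$ and $\widehat{G}_2$ based on which $E_i$ their connecting edge duals to. The intersection reduces to four sub-cases by expanding; each non-trivial case (a vertex $w$ belonging both to something on the $E_1$-side and something on the $E_2$-side) will be dispatched by applying \cref{lem:faces:path} to two faces containing $w$ and observing that the dual path between them must traverse $\widehat{f}$ using two edges dual to edges of $E_1$ and $E_2$ respectively; thus $w \in V(E_1) \cap V(E_2)$, and since the boundary of $f$ is an induced cycle cut at $u,v$, we get $w \in \{u,v\}$.

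For (ii), note that the boundary of $f$ is an induced cycle (biconnected outerplanar faces are induced cycles), so $V(E_i) \setminus \{u,v\}$ induces a (nonempty, since $uv \notin E(G)$) path~$P_i$. Each tree $\widehat{R}$ of $\widehat{G} - \widehat{f}$ contained in $\widehat{G}_i$ is attached to $\widehat{f}$ along a single edge $\widehat{e}$ with $e \in E_i$. By \cref{obs:weakdual:pendanttree}, the set $V_G(\widehat{R}) \setminus V(e)$ is connected in $G$ with neighborhood exactly $V(e)$; since $e \in E_i$ is not $uv$, at least one endpoint of $e$ lies in $V(E_i) \setminus \{u,v\}$, providing the connection to $P_i$. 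Taking the union over all such subtrees yields connectedness of $G[A_i]$.

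For (iii), I would argue that every edge $e' \in E(G)$ lies on at least one interior face $f'$, so $\widehat{f'}$ is either $\widehat{f}$ itself or lies in a unique tree of $\widehat{G} - \widehat{f}$, hence in exactly one of $\widehat{G}_1, \widehat{G}_2$. In the latter case both endpoints of $e'$ lie in $V_G(\widehat{G}_i) \subseteq A_i \cup \{u,v\}$, so $e'$ cannot cross. In the former case both endpoints lie on the induced cycle bounding $f$, and an edge crossing between $V(E_1) \setminus \{u,v\}$ and $V(E_2) \setminus \{u,v\}$ would be a chord of this induced cycle, which is impossible. The main subtlety is keeping track of which case assigns which vertex to which side and ensuring the dichotomy from (i) is applied cleanly; handling the boundary vertices $u,v$ (which live on both sides) requires care but never produces a contradiction because they are removed.
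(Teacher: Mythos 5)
Your proposal is correct and follows essentially the same route as the paper: both arguments rest on the boundary of~$f$ being an induced cycle that splits into two nonempty paths after deleting~$u,v$, with each subtree of~$\widehat{G}-\widehat{f}$ attaching to exactly one of these paths along a single boundary edge. The paper states this informally via the ``gluing'' picture of constructing~$G$ from its face cycles, whereas you make the same decomposition rigorous by separately verifying the partition (via \cref{lem:faces:path}), connectivity (via \cref{obs:weakdual:pendanttree}), and the absence of crossing edges.
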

\begin{proof}
Consider the process of constructing~$G$ from the disjoint cycles bounding its interior faces as dictated by the weak dual~$\widehat{G}$. Since any interior face is an induced cycle, removing the nonadjacent vertices~$u,v$ from the induced cycle bounding~$f$ separates the cycle into exactly two paths~$P_1, P_2$. Since weak dual witnesses that~$G$ can be constructed by gluing the subgraphs~$G[\widehat{R}]$ for~$\widehat{R}$ a tree of~$\widehat{G} - \widehat{f}$ onto an edge of the cycle bounding~$f$, the interior vertices of the paths~$P_1,P_2$ belong to different connected components of~$G - \{u,v\}$. Since each subgraph~$G[\widehat{R}]$ is glued onto at least one interior vertices of a path~$P_1,P_2$, the graph~$G - \{u,v\}$ has exactly two connected components and their contents are as claimed; note that~$V(E_i) \setminus \{u,v\}$ are exactly the interior vertices of path~$P_i$.
\end{proof}

Using these properties we can now prove that a large biconnected outerplanar graph contains a reducible structure. Each of the three types of structures below can be reduced by one of our reduction rules.

\begin{lemma} \label{lem:combinatorial:reducible}
Let~$G$ be a biconnected outerplanar graph and let~$T \subseteq V(G)$ be a \bmp{nonempty} subset of vertices with~$|T| \leq 4$. If~$|V(G)| > 6288$, then in polynomial time we can identify one of the following \emph{reducible structures} in~$G$:
\begin{itemize}
    \item two (possibly adjacent) vertices~$u,v$ such that there is a component~$C$ of~$G- \{u,v\}$ that does not contain any vertex of~$T$, for which~$|V(C)| > 2$,
    \item a matching~$e_1,\ldots,e_7$ in~$G$, such that there is a single connected component~$C$ of~$G - (V(e_1) \cup V(e_7))$ which contains~$\{e_2, \ldots, e_6\}$ but no vertex from~$T$, such that~$N_G(C) = V(e_1) \cup V(e_7)$, the graph $G\brb C$ is biconnected, and~$e_1,\ldots,e_7$ is an {order-respecting matching} in~$G \brb C$, or
    \item a vertex~$x \in V(G)$ and five vertices~$v_1, \ldots, v_5 \in N_G(x)$ that lie, in order of increasing index, on an induced path~$P$ in~$G-x$ from~$v_1$ to~$v_5$, such that~$N_G(x) \cap V(P) = \{v_1, \ldots, v_5\}$, and such that the connected component of~$G - \{v_1, v_5, x\}$ which contains~$P - \{v_1, v_5\}$ contains no vertex from~$T$.
\end{itemize}
\end{lemma}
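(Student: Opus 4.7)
\textbf{Plan for the proof of \cref{lem:combinatorial:reducible}.}

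My approach analyzes the weak dual tree $\widehat{G}$ of $G$. For each $t \in T$, \cref{lem:faces:path} shows the faces containing $t$ form a connected subtree $\widehat{T}_t$ of $\widehat{G}$, and the cyclic angular order of edges at $t$ in the outerplanar embedding in fact forces $\widehat{T}_t$ to be a \emph{path}: consecutive faces in the angular order at $t$ share an edge through $t$ and hence are adjacent in $\widehat{G}$, while any extra adjacency between non-consecutive faces would create a cycle in the tree $\widehat{G}$. Let $\widehat{M} := \bigcup_{t \in T} \widehat{T}_t$ and let $\widehat{S}$ denote the minimum subtree of $\widehat{G}$ containing $\widehat{M}$. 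Every leaf of $\widehat{S}$ must lie in $\widehat{M}$, and since each path $\widehat{T}_t$ contributes only its two endpoints, $\widehat{S}$ has at most $2|T| \le 8$ leaves and hence $\mathcal{O}(1)$ branching vertices and $\mathcal{O}(1)$ maximal ``branches'' (paths through internally degree-$2$ nodes of $\widehat{S}$).

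I first look for a type-1 structure among pendants. Each connected component $\widehat{R}$ of $\widehat{G} - V(\widehat{S})$ is attached to $\widehat{S}$ by exactly one edge $\widehat{e}$ (since $\widehat{S}$ is a subtree of the tree $\widehat{G}$). By \cref{obs:weakdual:pendanttree}, the set $C := V_G(\widehat{R}) \setminus V(e)$ induces a connected component of $G - V(e)$ with $N_G(C) = V(e)$, and because no face of $\widehat{R}$ belongs to any $\widehat{T}_t$, both $C$ and $V(e)$ avoid $T$. If $|C| > 2$ for some pendant, I return it as the first reducible structure. Likewise, if some face $\widehat{f} \in V(\widehat{S})$ has a long arc of consecutive non-$T$ vertices on its Hamiltonian boundary cycle, then two non-adjacent vertices bounding such an arc separate, by \cref{lemma:separate:face}, a $T$-free component consisting of the arc's interior together with its (small and $T$-free) pendants, again yielding a type-1 structure.

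If neither source delivers structure 1, then after absorbing the $\mathcal{O}(1)$ overhead of $\widehat{S}$'s non-branch vertices, the bulk of the $|V(G)| > \bcc$ vertices must come from a long branch $\widehat{B}$ of $\widehat{S}$. Each $\widehat{T}_t$ meets $\widehat{B}$ in a subpath of length $\deg_G(t) - 1$. If some $\deg_G(t)$ is large, I apply \cref{lem:fan:weakdual} to a sufficient sub-stretch of $\widehat{T}_t$ to find an induced path in $G - t$ through many consecutive neighbors of $t$, and pick five of them whose middle faces avoid the remaining $|T| - 1 \le 3$ other terminals, giving the third reducible structure. Otherwise every $\deg_G(t)$ is small, so a long sub-stretch $\widehat{P}$ of $\widehat{B}$ is entirely $T$-free. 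Walking down $\widehat{P}$, if some vertex $v$ appears in many of its dual edges then $v$ has high degree in $G$ and again provides a $T$-free fan via \cref{lem:fan:weakdual}; otherwise a greedy selection with sufficient spacing extracts seven pairwise vertex-disjoint dual edges, which by \cref{lemma:matching:from:path} form an order-respecting matching in the graph obtained by restricting to $\widehat{P}$. The component of $G - (V(e_1) \cup V(e_7))$ containing $e_2, \ldots, e_6$ is the interior of $\widehat{P}$ between $\widehat{e}_1$ and $\widehat{e}_7$ together with its $T$-free pendants; \cref{obs:weakdual:subtree:biconnected,obs:weakdual:separate:correct} certify the biconnectivity and $N_G(C) = V(e_1) \cup V(e_7)$ conditions, giving the second reducible structure.

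The main obstacle is quantitative bookkeeping. Once pendants are restricted to size at most $2$, forcing $|V(G)| > \bcc$ through $\widehat{S}$ requires that either some face of $\widehat{S}$ is wide or some branch is long; combining the pigeonhole slack for avoiding $|T| \le 4$ terminals, the $\le 8$ leaves of $\widehat{S}$, the five consecutive fan neighbors, the seven spaced matching edges, and the $|\widehat{T}_t|$-contribution entering via Euler's formula yields the concrete constant $\bcc = 6288$ stated in the lemma.
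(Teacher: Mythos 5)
Your plan follows essentially the same route as the paper's proof: work in the weak dual tree, take a Steiner subtree over the faces representing the terminals, dispose of large pendants and wide faces as type-1 structures, and extract a fan (type 3) or an order-respecting matching (type 2) from a long terminal-free stretch, with the dichotomy decided by whether consecutive dual edges share an endpoint. The variations (tracking the whole face-path $\widehat{T}_t$ of each terminal instead of one representative face, and handling high-degree terminals as an explicit fan case) are cosmetic and both work.

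One step would fail as literally stated. In your wide-face argument you claim that a long arc of consecutive non-$T$ vertices on the boundary of a face $\widehat{f}\in V(\widehat{S})$ cuts off, via \cref{lemma:separate:face}, a $T$-free component "together with its (small and $T$-free) pendants." The subtrees of $\widehat{G}-\widehat{f}$ attached to edges in the interior of that arc need not be pendant components of $\widehat{G}-V(\widehat{S})$: if $\widehat{f}$ is a branching node or lies on another terminal's face-path, a whole branch of $\widehat{S}$ containing other terminals can hang off an edge of the arc, so the separated component is not $T$-free even though no vertex of $T$ lies on the arc itself. You must additionally require the arc to contain no \emph{portal} edges, i.e., edges $e$ of the face for which the tree of $\widehat{G}\setminus\widehat{e}$ avoiding $\widehat{f}$ contains a terminal face; the paper's Claim~\ref{claim:largeface} introduces exactly this notion and shows there are $O(1)$ portals per face, so the fix costs only constants. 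A milder version of the same issue appears in your fan case at a terminal $t$: you exclude the other terminals from the middle faces, but you also need that nothing of $\widehat{S}$ other than the branch itself hangs off those middle faces, which holds only because you restrict to a stretch whose internal nodes have degree two in $\widehat{S}$ --- worth stating explicitly. With these repairs the argument goes through and matches the paper's.
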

\begin{proof}
\mic{We will refer to vertices from $T$ as terminals.}
For each~$t \in T$, fix an interior face~$f(t)$ of~$G$ incident \mic{to}~$t$ and define~$\widehat{f(T)} := \{ \widehat{f(t)} \mid t \in T\}$. Let~$\widehat{G}_T$ be the minimal subtree of the weak dual~$\widehat{G}$ spanning~$\widehat{f(T)}$. We say a vertex of~$\widehat{G}_T$ is \emph{important} if it belongs to~$\widehat{f(T)}$ or has degree unequal to two in the graph~$\widehat{G}_T$. By minimality of~$\widehat{G}_T$ each leaf of~$\widehat{G}_T$ belongs to~$\widehat{f(T)}$, from which it easily follows that the edges of~$\widehat{G}_T$ can be partitioned into at most five paths between important vertices, such that no interior vertex of such a path is important. The number five corresponds to the fact that any tree on at most four leaves without vertices of degree two has at most two internal vertices, so at most six vertices and hence five edges in total. The following claim shows the relevance of the set of important vertices.

\begin{countclaim} \label{claim:noterminals}
Let~$\widehat{R}$ be a connected subtree of~$\widehat{G}$ and let~$S = \{V(e) \mid \widehat{e} \text{ has exactly one endpoint in }\widehat{R}\}$. If~$\widehat{R}$ contains no important vertex of~$\widehat{G}_T$, then~$V_G(\widehat{R}) \setminus S$ contains no vertex of~$T$.
\end{countclaim}
\begin{claimproof}
Suppose there exists~$t \in T$ with~$t \in V_G(\widehat{R})$, so~$t$ lies on a face~$f$ whose dual~$\widehat{f}$ is in~$\widehat{R}$. Since~$\widehat{R}$ contains no important vertex, some tree~$\widehat{R}'$ of~$\widehat{G} - V(\widehat{R})$ contains the chosen face~$\widehat{f(t)}$ representing~$t$. The edge~$\widehat{e}$ connecting~$\widehat{R}$ to~$\widehat{R}'$ lies on the path between~$\widehat{f}$ and~$\widehat{f(t)}$ in~$\widehat{G}$ and has exactly one endpoint in~$\widehat{R}$. By \cref{lem:faces:path}, we have~$t \in V(e)$ and therefore~$t \in S$.
\end{claimproof}

We derive a number of claims showing how to find a reducible structure under certain conditions. After presenting these claims, we show that at least one of them guarantees the existence of a reducible structure if~$G$ is sufficiently large. The first claim shows that if any subtree of~$\widehat{G}$ attaches to~$\widehat{G}_T$ at a single edge and represents more than two vertices other than the attachments, we find a reducible structure of the first type.

\begin{countclaim} \label{claim:attachedtree:large}
Let~$\widehat{R}$ be a connected component of~$\widehat{G} - V(\widehat{G}_T)$, which is a tree, and let~$\widehat{e}$ be the unique edge of~$\widehat{G}$ connecting~$\widehat{R}$ to~$\widehat{G}_T$. If~$|V_G(\widehat{R}) \setminus V(e)| > 2$, then~$G$ contains a reducible structure.
\end{countclaim}
\begin{claimproof}
By \cref{obs:weakdual:pendanttree} we know that~$V_G(\widehat{R}) \setminus V(e)$ is the vertex set of a connected component of~$G - V(e)$, and by \cref{claim:noterminals} this component contains no terminals since~$\widehat{G}_T$ spans all important vertices. As the number of vertices in the connected component is larger than two, this yields a reducible structure of the first type for~$\{u,v\} = V(e)$.
\end{claimproof}

%
%

Next we show that if~$\widehat{G}_T$ contains a long path of non-important vertices, we find a reducible structure of the second or third type.

\begin{countclaim} \label{claim:longpath}
Let~$\widehat{P}$ be a subpath of~$\widehat{G}_T$ such that no vertex of~$\widehat{P}$ is important. If~$|V(\widehat{P}')| > 6 \cdot 4 + 1$, then~$G$ contains a reducible structure.
\end{countclaim}
\begin{claimproof}
Consider such a subpath~$\widehat{P}$ of~$\widehat{G}_T$. Let~$\widehat{e}_0, \ldots, \widehat{e}_{24}$ be the first~$25$ edges on~$\widehat{P}$. Partition these into sets~$\widehat{E}_1, \ldots, \widehat{E}_6$ of four consecutive edges each, and let~$\widehat{E}_7$ be a singleton set with the next edge. Observe that for any two edges~$\widehat{e_i}, \widehat{e_j}$ on~$\widehat{P}$ with~$i < j-1$, the subtree~$\widehat{R}$ of~$\widehat{G} \setminus \{\widehat{e_i}, \widehat{e_j}\}$ containing~$\widehat{e_{i+1}}$ contains no important vertex since the only vertices of~$\widehat{G}_T$ it contains belong to~$\widehat{P}$; here we exploit the fact that all vertices of degree three or more in~$\widehat{G}_T$ are important. Hence by \cref{claim:noterminals} the set~$V_G(\widehat{R}) \setminus V(\{e_i, e_j\})$ contains no vertex of~$T$. We will use this property below to find a reducible structure of the second or third type via a case distinction.

Suppose first that there exists~$i \in [6]$ such that~$V_G(\widehat{e}_{4i}) \cap V_G(\widehat{e}_{4 (i+1)}) \neq \emptyset$, that is, some vertex~$x \in V(G)$ is a common endpoint of~$e_{4i}$ and~$e_{4 (i+1)}$. Let~$\widehat{E}'_i := \widehat{E}_i \cup \{\widehat{e}_{4(i+1)}\}$ and let~$E'_i := \{e \mid \widehat{e} \in \widehat{E}'_i\}$. Let~$\widehat{P}'$ be the four vertices of~$\widehat{G}$ which are incident to two edges of~$\widehat{E}_i$, that is, the four internal vertices of the path formed by the five edges~$\widehat{E}_i$. By applying \cref{lem:fan:weakdual} to the path formed by~$\widehat{E}_i$ we find~$x \in V(e)$ for each~$e \in E'_i$ while the other endpoints~$v_1, \ldots, v_5$ of the edges in~$E'_i$ are all distinct and lie on an induced~$(v_1,v_5)$-path~$P$ in~$G[V_G(\widehat{P}')] - x$ containing no other neighbors of~$x$. Let~$\widehat{R}$ be the tree of~$\widehat{G} \setminus \{\widehat{e}_{4i}, \widehat{e}_{4(i+1)}\}$ containing~$\widehat{e}_{4i+1}$ and note that~$V_G(\{\widehat{e}_{4i}, \widehat{e}_{4(i+1)}\}) = \{v_1, v_5, x\}$ and~$V_G(\widehat{R}) \supseteq V_G(\widehat{P}')$. By \cref{obs:weakdual:separate:correct}, the only vertices of~$V_G(\widehat{R})$ which have neighbors outside~$V_G(\widehat{R})$ are those in~$V_G(\{\widehat{e}_{4i}, \widehat{e}_{4(i+1)}\}) = \{v_1, v_5, x\}$. Consider the connected component~$C$ of~$G - \{v_1, v_5, x\}$ that contains~$P - \{v_1, v_5\}$. By the previous argument,~$V(C) \subseteq V_G(\widehat{R})$ while the construction ensures~$V(C) \cap \{v_1, v_5, x\} = \emptyset$. By the argument in the first paragraph of this claim, $V_G(\widehat{R}) \setminus V(\{e_i, e_j\}) = V_G(\widehat{R}) \setminus \{v_1, v_5, x\}$ contains no vertex of~$T$, implying that no vertex of~$T$ belongs to~$C$. Hence~$G$ contains a reducible structure of the third type.

Now suppose the previous case does not apply; then the set~$M = \{e_{4i} \mid 0 \leq i \leq 6\}$ is a matching in~$G$. Since~$\widehat{M}$ lies on the path~$\widehat{P}$ in~$\widehat{G}$ in the relative order given by the indices, by \cref{lemma:matching:from:path} the set~$M$ is an order-respecting matching in~$G$, which implies it is order-respecting in all subgraphs containing this matching. Let~$\widehat{R}$ be the tree of~$\widehat{G} \setminus \{\widehat{e}_{0}, \widehat{e}_{24}\}$ containing the rest of~$\widehat{M}$. By \cref{obs:weakdual:subtree:biconnected}, the graph~$G[V_G(\widehat{R})]$ is biconnected. Since for each~$0 \leq i \leq 6$ the tree~$\widehat{R}$ contains a vertex incident on~$\widehat{e}_{4i}$, which corresponds to a face in~$G$ incident on~$e_{4i}$, it follows that~$G[V_G(\widehat{R})]$ contains all edges of~$M$. Since~$\widehat{R}$ is the weak dual of~$G[V_G(\widehat{R})]$, while~$\widehat{R}$ does not contain the edges~$\widehat{e}_0$ and~$\widehat{e}_{24}$, the edges~$e_0, e_{24}$ are exterior edges of~$G[V_G(\widehat{R})]$; since~$\widehat{e}_4 \in E(\widehat{R})$ the edge~$e_4$ is an interior edge of~$G[V_G(\widehat{R})]$. By applying \cref{lemma:remove:ordered:matching:connected} to the order-respecting matching~$\{e_0, e_4, e_{24}\}$ in the biconnected graph~$G[V_G(\widehat{R})]$, we find that~$C = G[V_G(\widehat{R})] - V(\{e_0, e_{24}\})$ is connected and contains a vertex adjacent to each member of~$V(\{e_0, e_{24}\})$, so that~$G \brb C = G[V_G(\widehat{R})]$ is biconnected and contains the order-respecting matching~$M$. By the argument in the first paragraph, the set~$V_G(\widehat{R}) \setminus V(\{e_0, e_{24}\})$ contains no vertex of~$T$, so that~$C$ contains no vertex of~$T$. Since \cref{obs:weakdual:separate:correct} shows that no vertex of~$C$ has a neighbor outside~$V(\{e_0, e_{24}\})$ it follows that~$C$ is a connected component of~$G - V(\{e_0, e_{24}\})$. Hence we find a reducible structure of the second type.
\end{claimproof}

As the last ingredient, we show that if any interior face of~$G$ contains more than~$16$ vertices, we find a reducible structure.

\begin{countclaim} \label{claim:largeface}
If~$G$ contains an interior face~$f$ that is incident to more than~$16$ vertices, then~$G$ contains a reducible structure of the first type.
\end{countclaim}
\begin{claimproof}
Let~$f$ be a bounded face in~$G$ and consider the cycle bounding~$f$ on the edge set~$E(f)$. We call an edge~$e$ that lies on~$f$ a portal if the tree in~$\widehat{G} \setminus \{\widehat{e}\}$ that does not contain~$\widehat{f}$ contains a vertex of~$\widehat{f(T)}$. Since~$|f(T)| \leq 4$ at most four edges on~$f$ are portals. By \cref{lem:faces:path}, if a terminal~$t$ lies on~$f$ but~$f(t) \neq \widehat{f}$, then the edge~$e$ for which~$\widehat{e}$ lies on the path from~$\widehat{f}$ to~$\widehat{f(t)}$ is a portal.

Let~$T' := (T \cap V(f)) \cup \{ V(e) \mid e \in E(f) \text{ is a portal} \}$. Since each terminal~$t$ for which~$\widehat{f(t)} \neq \widehat{f}$ contributes two adjacent vertices to~$T'$, while each terminal~$t$ with~$\widehat{f(t)} = \widehat{f}$ contributes one vertex, it follows that~$T'$ can be partitioned into at most four sets of two vertices each, such that the vertices in each subset are consecutive along the face. Since~$f$ is incident to more than~16 vertices, this implies that there is a subpath~$P$ of the boundary of~$f$ consisting of five vertices, whose three interior vertices do not belong to~$T'$. Let~$\{u,v\}$ be the endpoints of~$P$. Let~$\widehat{G}_P$ denote the union of all trees~$\widehat{R}$ of~$\widehat{G} - \widehat{f}$ for which the unique edge~$\widehat{e}$ connecting~$\widehat{R}$ to~$\widehat{f}$ satisfies~$e \in E(P)$. By \cref{lemma:separate:face} there is a connected component~$C$ of~$G - \{u,v\}$ on vertex set~$(V(P) \cup V_G(\widehat{G}_P)) \setminus \{u,v\}$. This implies that all three interior vertices of~$P$ belong to~$C$. 

We conclude the proof by showing that~$C$ contains no vertex of~$T$. To see this, note first that~$V(P) \setminus \{u,v\}$ contains no vertex of~$T$ by choice of~$P$. Since all endpoints of portals belong to~$T'$, while no interior vertex of~$P$ belongs to~$T'$, it follows that no edge of~$P$ is a portal. Consequently, for each edge~$e \in E(P)$ the tree of~$\widehat{G} \setminus \widehat{e}$ that does not contain~$\widehat{f}$ does not contain any vertex of~$\widehat{f(T)}$. We exploit this in the following argument. Suppose there exists~$t \in T$ with~$t \in V_G(\widehat{G}_P)$, and let~$e$ be the edge of~$P$ such that~$\widehat{G} \setminus \widehat{e}$ contains a tree~$\widehat{R}$ with~$t \in V_G(\widehat{R})$. Since~$e$ is not a portal,~$\widehat{f(t)}$ does not belong to~$\widehat{R}$. Hence~$\widehat{f(t)}$ is either equal to~$\widehat{f}$, or belongs to some tree~$\widehat{R}'$ of~$\widehat{G} - \widehat{f}$ for which the edge~$\widehat{e}'$ connecting~$\widehat{R'}$ to~$\widehat{f}$ satisfies~$e' \notin E(P)$. We consider these cases separately.

\begin{itemize}
	\item If~$\widehat{f(t)} = \widehat{f}$, then~$t \in V(f)$ and therefore~$t \in T'$. By construction, no interior vertex of~$P$ belongs to~$T'$. Since all vertices that belong both to~$V(f)$ and to~$V_G(\widehat{G}_P)$ belong to~$P$, it follows that~$t$ is an endpoint of~$P$, so~$t \in \{u,v\}$. This shows that~$t$ does not belong to the component~$C$ on vertex set~$(V(P) \cup V_G(\widehat{G}_P)) \setminus \{u,v\}$, as required.
	\item If~$\widehat{f(t)} \neq \widehat{f}$, then~$\widehat{f(t)}$ lies in some tree~$\widehat{R}'$ of~$\widehat{G} - \widehat{f}$ for which the edge~$\widehat{e}'$ connecting~$\widehat{R}'$ to~$\widehat{f}$ satisfies~$e' \in E(f) \setminus E(P)$, where~$E(f)$ are the edges of~$G$ bounding face~$f$. Since both~$\widehat{e}$ and ~$\widehat{e}'$ lie on the path in~$\widehat{G}$ between~$\widehat{f(t)}$ and a vertex of~$\widehat{R}$ representing a face containing~$t$, by \cref{lem:faces:path} we have~$t \in V(e') \cap V(e)$. So~$t$ is simultaneously an endpoint of the edge~$e$ that lies on~$P$ and the edge~$e'$ that lies on face~$f$ but not in~$P$. Consequently,~$t$ is an endpoint of~$P$ and therefore~$t \in \{u,v\}$, showing that~$t$ does not belong to the component~$C$ on vertex set~$(V(P) \cup V_G(\widehat{G}_P)) \setminus \{u,v\}$.
\end{itemize}

The preceding argument established that there is a component of~$G - \{u,v\}$ containing at least three vertices and no terminals, which forms a reducible structure of the first kind and completes the proof.
\end{claimproof}

\bmpr{If we wanted to, we could give somewhat better bounds at the cost of a more involved analysis. The argument here says: the tree~$\mathcal{G}_T$ is small, each vertex in there has bounded degree, and each of the remaining parts of the graph attaches onto an edge and is therefore small. The size of the attaching parts is bounded by the number of edges onto which we can attach, which is bounded by the length of a single face. But we can be slightly smarter: for faces which lie on a degree-2 path of~$\mathcal{G}_T$ we can actually say that that the total volume of boring subtrees attaching there is small, because there are two separators of 2 vertices each such that each boring attached tree is in one of the two separated terminal-free subgraphs. So a more refined bound would say that for the six important vertices of~$\mathcal{G}_T$ there can be attachments onto 16 edges of each face, but for the remaining degree-2 faces there are at most 4 boring vertices attached.}

We can now complete the proof of \cref{lem:combinatorial:reducible} by combining the claims above. Recall from the beginning of the proof that~$\widehat{G}_T$ is a tree containing at most six important vertices, such that each vertex of~$\widehat{f(T)}$ and each vertex whose degree in~$\widehat{G}_T$ is unequal to two is important. Removing \mic{the important vertices} from~$\widehat{G}_T$ splits it into at most five paths~$\widehat{P}_1, \ldots, \widehat{P}_\ell$ of non-important vertices. If any of these paths has more than 25 vertices, we find a reducible structure via \cref{claim:longpath}. Assume that this is not the case; then~$\widehat{G}_T$ consists of at most~$5 \cdot 25$ non-important and~$6$ important vertices. If there exists~$\widehat{f} \in \widehat{G}_T$ such that face~$f$ contains more than~$16$ vertices, we find a reducible structure via \cref{claim:largeface}. If not, then the faces represented by~$\widehat{G}_T$ span at most~$(5\cdot25 + 6) \cdot 16$ edges. All remaining vertices of~$G$ lie on faces whose \mic{duals are} not contained in~$\widehat{G}_T$, and therefore \mic{each such vertex lies} on a face~$f$ for which~$\widehat{f}$ belongs to some tree~$\widehat{R}$ of~$\widehat{G} - V(\widehat{G}_T)$. \mic{When~$\widehat{e}$ is the edge connecting~$\widehat{R}$ to~$\widehat{G}_T$ then~$e$ lies on a face represented by~$\widehat{G}_T$.} If for any such tree~$\widehat{R}$ the number of vertices~$|\widehat{R} \setminus V(e)|$ which are not already accounted for is larger than two, then \cref{claim:attachedtree:large} yields a reducible structure. If not, then since the number of attachment edges is bounded by~$(5 \cdot 25+6) \cdot 16$, while each attached tree contributes at most two additional vertices, the total number of vertices in~$G$ is bounded by~$(5 \cdot 25+6) \cdot 16 + 2 \cdot (5 \cdot 25+6) \cdot 16 = 3 (5 \cdot 25 + 6) \cdot 16 = 6288$. Hence any biconnected outerplanar graph with more than this number of vertices contains a reducible structure. The proof above easily turns into a polynomial-time algorithm to find such a structure.
\end{proof}

\section{Wrapping up} \label{sec:wrapup}

Finally, we combine the decomposition from \cref{lem:final-decomposition} with the rules that reduce protrusions.
If $A$ is sufficiently large, $G\brb A$ is \op, and $|N_G(A)| \le 4$, we explicitly detect a reducible structure within $G\brb A$.
First, we use \cref{reduction:replace-component} to reduce the number of biconnected components in $G \brb A$.
Next, we apply \cref{lem:combinatorial:reducible} to a biconnected component $B$ of $G\brb A$ and the set $T = \partial_G(B)$.
It provides us with one of several reducible structures which match our reduction rules.
It is crucial that when $C$ is a subgraph of $B$ and $C \cap \partial_G(B) = \emptyset$, then the neighborhood of $C$ in both $B$ and $G$ is the same.

We remark that the following lemma can be turned into an iterative procedure that maintains the decomposition from \cref{lem:final-decomposition} without the need to recompute the set $L$.
However, we state it in the simplest form as our analysis does not keep track of the polynomial in the running time.

\begin{lemma}\label{lem:protrusion-replacement}
Consider a graph $G$ and a vertex set $A \subseteq {V(G)}$, such that $|A| > 25 \cdot \bcc$, $|N_G(A)| \le 4$, $G[A]$ is connected, and $G\brb A$ is \op.
There is a polynomial-time algorithm that, given $G$ and $A$ satisfying the conditions above, 
outputs a proper minor $G'$ of $G$, so that $\opd(G') = \opd(G)$.
\end{lemma}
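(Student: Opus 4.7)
The plan is a two-stage reduction. First I would apply \cref{lem:replace-component:apply} to the pair $(G, A)$. If the algorithm there outputs an application of \cref{reduction:articulation-point} or \cref{reduction:replace-component}, we are done: each of these rules yields a proper minor of $G$ with the same \op deletion number (by \cref{criterion:articulation-point} and \cref{lem:replace-component:safeness}, respectively, and both strictly reduce the number of vertices in the configurations exposed by \cref{lem:replace-component:apply}). Otherwise, we obtain a block-cut decomposition of $G\brb A$ into at most $25$ biconnected components, each with boundary size at most $4$ in $G$.

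For the second stage, the size bound $|A| > 25 \cdot \bcc$, together with $A \subseteq V(G\brb A)$ and pigeonhole over the biconnected components, guarantees that some biconnected component $B$ of $G\brb A$ has $|V(B)| > \bcc$. I would set $T := \partial_G(B)$, adding an arbitrary vertex of $V(B)$ if this set happens to be empty so that the hypotheses of the next lemma are satisfied; note $|T| \leq 4$. Since $B$ is biconnected outerplanar with more than $\bcc$ vertices, \cref{lem:combinatorial:reducible} applied to $B$ and $T$ returns one of three concrete reducible structures in polynomial time.

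The key observation that lets me use reducible structures in $G$ rather than only in $B$ is that any $C \subseteq V(B)$ with $C \cap T = \emptyset$ satisfies $N_G(C) \subseteq V(B)$, since $\partial_G(B)$ is exactly the set of vertices of $B$ having neighbors outside $B$. Consequently, separations and components in $B$ relative to such $C$ persist in $G$, and $G\brb C$ is outerplanar as a subgraph of $G\brb A$. This lets me dispatch each reducible structure to a matching rule. A Type~1 structure (vertices $u,v$ together with a component of $B - \{u,v\}$ of size more than two, disjoint from $T$) triggers \cref{reduction:contract-bump} when $uv \in E(G)$ and \cref{reduction:replace-component} when $uv \notin E(G)$; in both cases the vertex count strictly decreases, and safeness is given by \cref{lem:contract-bump} and \cref{lem:replace-component:safeness}. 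A Type~2 structure (seven-edge order-respecting matching inside a biconnected subgraph avoiding $T$) matches \cref{reduction:ladder} exactly, which removes the middle edge $e_4$ and preserves $\opd$ by \cref{lem:reduction:ladder}. A Type~3 structure (a five-vertex induced fan at a vertex $x$) matches \cref{reduction:fan}, whose safeness is inherited from \cref{reduction:generalized-fan-rule}.

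The main obstacle will be the bookkeeping needed to verify that every precondition of the chosen reduction rule transfers from $B$ to $G$: outerplanarity of $G\brb C$, the exact neighborhoods $N_G(C) \subseteq \{u,v\}$ (resp.\ $N_G(C) = V(e_1) \cup V(e_7)$) in Types~1 and~2, connectedness of $G[C]$, and the fact that the order-respecting property of the matching in $B\brb C$ survives in $G\brb C$. Once these transfer conditions are assembled from $C \cap T = \emptyset$ together with the outerplanarity of $G\brb A$, the individual safeness lemmas conclude that the returned graph is a proper minor of $G$ with $\opd(G') = \opd(G)$, as required.
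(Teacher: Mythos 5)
Your proposal is correct and follows essentially the same route as the paper's proof: run the algorithm of \cref{lem:replace-component:apply}, pigeonhole over the at most $25$ biconnected components to find one $B$ with more than $\bcc$ vertices, invoke \cref{lem:combinatorial:reducible} with $T = \partial_G(B)$, and dispatch the three reducible structures to \cref{reduction:contract-bump}/\cref{reduction:replace-component}, \cref{reduction:ladder}, and \cref{reduction:fan}, using $C \cap \partial_G(B) = \emptyset$ to transfer neighborhoods and outerplanarity from $B$ to $G$. Your explicit handling of the corner case $\partial_G(B) = \emptyset$ (to keep $T$ nonempty for \cref{lem:combinatorial:reducible}) is a small detail the paper leaves implicit, but otherwise the two arguments coincide.
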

\begin{proof}
Within this proof, we say that replacing graph $G$ with $G'$ is safe when $\opd(G') = \opd(G)$.
With $G$ and $A$ as specified, we can execute the algorithm from \cref{lem:replace-component:apply}.
Suppose that it
outputs a vertex set $C \subseteq S$ to
which either \cref{reduction:articulation-point} or \cref{reduction:replace-component} apply and shrinks the graph.
They are safe due to \cref{criterion:articulation-point} and \cref{lem:replace-component:safeness}.
In this case we can terminate the algorithm.

Otherwise \cref{lem:replace-component:apply} provides us with
 a block-cut tree of $G\brb A$ with at most 25 biconnected components, where each such biconnected component $B$ satisfies $|\partial_G(B)| \le 4$.
By a counting argument we can choose one biconnected component $B$ with more than $\bcc$ vertices.
We execute the algorithm from \cref{lem:combinatorial:reducible} for the graph $B$ and the set $T = \partial_G(B)$.
Note that $B$ is a subgraph of $G \brb A$, which is \op by the assumption.
Depending on the type of returned structure, we select an appropriate reduction rule.
\begin{itemize}
    \item Case 1. We find two (possibly adjacent) vertices~$u,v$ such that there is a component~$C$ of~$B - \{u,v\}$ that does not contain any vertex of~$\partial_G(B)$, for which~$|V(C)| > 2$.
    Clearly $N_G(C) \subseteq \{u,v\}$ and $G \brb C$ is \op as a subgraph of $B$.
    If $uv \in E(G)$, then we apply \cref{reduction:contract-bump} to contract $C$ into a single vertex.
    This reduction is safe due to \cref{lem:contract-bump}.
    
    If $uv \not\in E(G)$, we apply \cref{reduction:replace-component}.
    The criterion in the statement of the rule can be easily checked in polynomial time.
    By \cref{lem:replace-component:minor}, the replacement operation is equivalent to a series of edge contractions.
    The safeness follows from \cref{lem:replace-component:safeness}.
    Since $|V(C)| > 2$, we always perform at least one contraction.
    
    \item Case 2. We obtain a matching~$e_1,\ldots,e_7$ in~$B$, such that there is a single connected component~$C$ of~$B - (V(e_1) \cup V(e_7))$ which contains~$\{e_2, \ldots, e_6\}$ but no vertex from~$\partial_G(B)$, such that~$N_B(C) = V(e_1) \cup V(e_7)$, the graph $B\brb C$ is biconnected, and~$e_1,\ldots,e_7$ is an order-respecting matching in~$B \brb C$.
    Since $C \cap \partial_G(B) = \emptyset$, we get that $N_G(C) = N_B(C) = V(e_1) \cup V(e_7)$.
    This allows us to apply \cref{reduction:ladder} and remove the edge $e_4$.
    This is safe thanks to \cref{lem:reduction:ladder}.
    
    \item Case 3. We find a vertex~$x \in V(B)$ and five vertices~$v_1, \ldots, v_5 \in N_B(x)$ that lie, in order of increasing index, on an induced path~$P$ in~$B-x$ from~$v_1$ to~$v_5$, such that~$N_B(x) \cap V(P) = \{v_1, \ldots, v_5\}$, and such that the connected component $C$ of~$B - \{v_1, v_5, x\}$ which contains~$P - \{v_1, v_5\}$ contains no vertex from~$\partial_G(B)$.
    This means that $C$ is also the connected component of $G - \{v_1, v_5, x\}$ which contains~$P - \{v_1, v_5\}$.
    Furthermore, the path $P$ is also induced in the graph $G-x$ because $B$ is an induces subgraph of $G$.
    The graph $G\brb C$ is \op as a subgraph of $B$ and thus \cref{reduction:fan} applies so we can remove the edge $xv_3$.
    This operation is safe due to \cref{reduction:generalized-fan-rule}.
\end{itemize}

In each case we are able to perform a contraction or removal operation while preserving the \op deletion number of the graph.
The claim follows.
\end{proof}

It is important that the graph is guaranteed to shrink at each step, so after polynomially many invocations of \cref{lem:protrusion-replacement}
we must arrive at an irreducible instance.
We are now ready to prove the main theorem with the final bound on the size of compressed graph $2 \cdot (25 \cdot \bcc + 5) \cdot f_5(c,f_3(c))\cdot(k_2+3)^4$
(see \cref{lem:modulator:summary,lem:final-decomposition}), where $c=40$.
Recall that instances $(G,k)$ and $(G',k')$ are equivalent if $\opd(G) \le k \Leftrightarrow \opd(G') \le k'$.

\begin{thmstar}[\ref{thm:kernel}, restated]
\thmKernelCommand
\end{thmstar}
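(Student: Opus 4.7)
The plan is to stitch together the decomposition machinery of \cref{lem:modulator:summary,lem:final-decomposition} with the protrusion-shrinking routine of \cref{lem:protrusion-replacement} in an iterative loop, and then verify the size bound for an irreducible instance. The first step is to invoke \cref{lem:modulator:summary} on the input $(G,k)$; this either concludes $\opd(G) > k$ (in which case we output a trivial no-instance) or yields an equivalent instance $(G_1, k_1)$ with $k_1 \le k$, $G_1$ a subgraph of $G$, and a $(k_1, 40, f_3(40))$-\op decomposition. As noted in the text following \cref{lem:modulator:summary}, this is the only step that may decrement the parameter, so the promised identity $\opd(G') = \opd(G) - (k - k')$ will follow from the fact that every subsequent rule preserves $\opd$ whenever $\opd \le k$.

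Next, I would apply \cref{lem:final-decomposition} to the current instance. If it fires \cref{reduction:irrelevant-edge} or \cref{reduction:articulation-point}, the graph strictly shrinks (and remains a minor/subgraph of the original), and the whole procedure is restarted from \cref{lem:modulator:summary} on the reduced instance\footnote{Actually only the decomposition step needs to be rerun; but since the parameter does not increase and the graph shrinks, polynomially many restarts suffice.} (note that these rules only remove edges/vertices, hence we remain a minor). Otherwise, \cref{lem:final-decomposition} produces a modulator $L$ of size $\Oh(k^4)$, with $\Oh(k^4)$ outerplanar components in $G - L$, each having at most four neighbors in $L$.

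Third, for each component $C$ of $G - L$ with $|V(C)| > 25 \cdot \bcc$, invoke \cref{lem:protrusion-replacement} on $A = V(C)$: its preconditions hold since $G\brb C$ is \op with $|N_G(C)| \le 4$ and $G[C]$ is connected. This returns a proper minor with the same \op deletion number, so the graph strictly shrinks while $\opd$ is preserved. After polynomially many invocations we reach an instance where every such component has constant size, at which point we output $(G', k') := (G_\text{current}, k_1)$.

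It remains to bound the final size. By \cref{lem:final-decomposition} we have $|L| \le f_5(40, f_3(40)) \cdot (k+3)^4$, and $|E_G(L,L)| \le f_5(40, f_3(40)) \cdot (k+3)^4$, and there are at most $f_5(40, f_3(40)) \cdot (k+3)^4$ components of $G' - L$, each of constant size (bounded by $25 \cdot \bcc$). Hence the total number of vertices is $\Oh(k^4)$; edges within $L$ are bounded directly, edges incident on each component are constant (since each component has constant size), and the number of components is $\Oh(k^4)$, so $|E(G')| = \Oh(k^4)$ as well. Since $G'$ is a minor of $G$ throughout, $G' \preceq_m G$. The main technical obstacle is the bookkeeping to ensure the loop terminates in polynomial time while maintaining the invariant that the current graph is a minor of $G$ and that $\opd$ is preserved; this is handled by observing that each iteration strictly decreases $|V(G_\text{current})| + |E(G_\text{current})|$, and that \cref{lem:modulator:summary} is reapplied at most $|V(G)|$ times in the worst case, each in polynomial time. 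Plugging in $c=40$ and $d = f_3(40) = 3 f_1(40) + 24 f_2(40)$ yields the promised explicit constant in front of $(k+3)^4$ as stated just before the theorem.
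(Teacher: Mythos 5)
Your proposal follows essentially the same route as the paper: bootstrap with \cref{lem:modulator:summary}, build the decomposition of \cref{lem:final-decomposition}, shrink oversized components via \cref{lem:protrusion-replacement}, restart after every strict shrink, and bound the irreducible instance by $|L|$, $|E_G(L,L)|$, and the number of constant-size components. The only detail to tighten is the $\opd(G) > k$ branch: to satisfy the requirement that $G'$ be a minor of $G$, the ``trivial no-instance'' must itself be a minor of $G$, which the paper arranges by testing which of $K_4$ or $K_{2,3}$ occurs as a minor and outputting that graph with $k'=0$.
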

\begin{proof}[Proof of Theorem~\ref{thm:kernel}]
We use \cref{lem:modulator:summary} to either conclude that $\opd(G) > k$ or to find an equivalent instance $(G_1,k_1)$, where $G_1$ is a subgraph of $G$ and $k_1 \le k$.
In the first case, either $K_4$ or $K_{2,3}$ must be a minor of $G$.
We check it in polynomial time and depending on the result
we output an instance $(H, 0)$, where $H = K_4$ or $H = K_{2,3}$, which is then equivalent to $(G,k)$.
Otherwise \mic{we are guaranteed that $\opd(G) \le k$ implies $\opd(G_1) = \opd(G) - (k - k_1)$} and we also obtain a
$(k_1,c,d)$-\op decomposition of $G_1$, where $d = f_3(c)$, which we supply to the algorithm from \cref{lem:final-decomposition}.
In the first scenario, it applies \cref{reduction:irrelevant-edge} or \cref{reduction:articulation-point} to the instance $(G_1,k_1)$.
These rules may remove vertices or edges, but do not change the value of the parameter and transform the instance into an equivalent one due to \cref{lem:irrelevant-edge-safeness} and \cref{criterion:articulation-point}.
\mic{Moreover, when $\opd(G_1) \le k_1$, then the \op deletion number also stays intact.}
If this is the case, we shrink the graph and rerun the algorithm from scratch on the smaller graph, starting from recomputing the \op decomposition.

Since the graph shrinks at each step, at some point the routine from \cref{lem:final-decomposition} terminates with the set $L$ as the outcome.
Let $(G_2, k_2)$ be the instance equivalent to $(G,k)$ obtained so far.
Then for the returned set $L \subseteq V(G_2)$ we have that $f_5(c,d)\cdot(k_2+3)^4$ upper bounds each of: $|L|$, $|E_G(L,L)|$, and the number of connected components in $G_2 - L$.
Furthermore, 
for each connected component $A$ of $G_2-L$ it holds that $G_2 \brb A$ is \op and $|N_{G_2}(A)| \le 4$.
If any of these components has more than $25 \cdot \bcc$ vertices, then \cref{lem:protrusion-replacement} applies and produces an equivalent instance $(G_3, k_2)$, where $G_3$ is a proper minor of $G_2$.
\mic{This reduction does not affect the parameter nor the \op deletion number.}
We can thus again rerun the algorithm from scratch on the smaller graph.

Otherwise, each connected component of $G_2-L$ has bounded size and $(G_2,k_2)$ is the outcome of the kernelization algorithm.
We check that $G_2$ has at most $(25 \cdot \bcc + 1) \cdot f_5(c,d)\cdot(k_2+3)^4$ vertices.
The number of edges in $G_2$ can be upper bounded by $|E_G(L,L)|$ plus the sum of edges in each $G\brb A$, where $A$ is connected component of $G_2-L$.
Note that this also takes into account the edges with just one endpoint in $L$.
We estimate $|V(G \brb A)| \le 25 \cdot \bcc + 4$
and by \cref{lem:prelim:op-edges} we have that $|E(G \brb A)| \le 2 \cdot |V(G \brb A)|$.
Therefore, $|E(G_2)|$ is at most $2 \cdot (25 \cdot \bcc + 5) \cdot f_5(c,d)\cdot(k_2+3)^4$.
\end{proof}

As a consequence of the theorem above, we obtain the first concrete bounds on the sizes of minor-minimal obstructions to having an outerplanar vertex deletion set of size~$k$.

\begin{corstar}[\ref{cor:obstruction:bounds}, restated]
\corollaryBoundsCommand
\end{corstar}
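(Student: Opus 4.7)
The plan is to deduce the bound from Theorem~\ref{thm:kernel} via the standard connection between polynomial kernelizations that reduce to a minor of the input graph and bounds on the sizes of minor-minimal obstructions, originally observed by Fomin, Lokshtanov, Misra, and Saurabh~\cite{FominLMS12}.

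First I would observe that minor-minimality forces $\opd(G) = k+1$ exactly: for every $v \in V(G)$ the proper minor $G-v$ satisfies $\opd(G-v) \leq k$ by hypothesis, and combined with $\opd(G-v) \geq \opd(G)-1$ this gives $\opd(G) \leq k+1$. Next I would apply the kernelization from Theorem~\ref{thm:kernel} to $(G,k)$, obtaining an equivalent NO-instance $(G',k')$ with $G' \preceq G$, $k' \leq k$, $|V(G')|, |E(G')| = \Oh(k^4)$, and $\opd(G') > k'$. The goal is to show $G' = G$, from which the claimed bound follows immediately.

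Suppose for contradiction that $G' \prec G$ is a proper minor; then by minor-minimality $\opd(G') \leq k$, forcing $k' < k$, so the kernelization must have altered the parameter at some point. For every reduction rule that preserves $k$, an intermediate graph $H$ produced by applying the rule to $(G,k)$ would satisfy $\opd(H) \leq \opd(G) = k+1$ (by the minor-monotonicity of $\opd$) and $\opd(H) > k$ (by NO-equivalence), forcing $\opd(H) = k+1$. Thus $H$ would still be a $k$-obstruction, and since it is a proper minor of $G$ this would contradict minor-minimality. Hence none of the $k$-preserving reductions can fire on $G$.

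The main obstacle is therefore to exclude the single $k$-decreasing step of Lemma~\ref{lem:modulator:compute-augmented}. As the paper observes directly after that lemma, this is the only reduction that modifies $k$, and it decreases $\opd$ and $k$ in tandem, so the slack $\opd(G)-k = 1$ is preserved throughout the kernelization. Following the argument of~\cite{FominLMS12}, one tracks this slack alongside the minor-minimality constraint (which forces $\opd(G-v) = k$ for every $v \in V(G)$) to verify that, for every vertex $v$ of the approximate modulator $X_0$, the graph $G_v := G - (X_0 \setminus \{v\})$ admits an outerplanar deletion set of size at most $k$ avoiding $v$. This rules out the $k$-decreasing step as well, yielding $k' = k$ and $G' = G$, and hence $|V(G)|, |E(G)| = \Oh(k^4)$.
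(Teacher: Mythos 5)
Your proposal follows the paper's overall strategy (reduce to showing $G' = G$, rule out parameter-preserving reductions via minor-minimality, and handle the parameter-decreasing step of \cref{lem:modulator:compute-augmented} separately), but it contains a genuine gap: you apply the kernelization to $(G,k)$, whereas the paper applies it to $(G,k+1)$, and this difference is not cosmetic. Since $\opd(G)=k+1>k$, the instance $(G,k)$ does not satisfy the hypothesis of the ``Furthermore'' clause of \cref{thm:kernel}, so you get no control over $\opd(G')$. Worse, on a NO-instance the algorithm may never reach any reduction rule at all: the initial $40$-approximation can return a modulator larger than $40k$, at which point the kernelization simply concludes $\opd(G)>k$ and outputs a trivial constant-size NO-instance such as $(K_4,0)$. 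That output is a minor of $G$ of size $\Oh(k^4)$, so it is consistent with every guarantee of \cref{thm:kernel}, yet it tells you nothing about $|V(G)|$. Your inference that ``$G'\prec G$ forces $k'<k$, so the kernelization must have altered the parameter'' therefore does not hold.

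The final step is also not justified. To block the parameter decrease in \cref{lem:modulator:compute-augmented} one needs, for every $v\in X_0$, an outerplanar deletion set of size at most the \emph{current} parameter in $G_v=G-(X_0\setminus\{v\})$ that avoids $v$. Minor-minimality only gives $\opd(G_v)\le k$ for the proper minor $G_v$; it does not give a solution avoiding the prescribed vertex $v$, and with budget $k$ such a solution need not exist. The paper's proof buys exactly this avoidability with one extra unit of budget: for any edge $uv$, a size-$k$ solution $S'$ of the proper minor $G\setminus uv$ must avoid $v$ (else $G-S'$ would be outerplanar), and then $S'\cup\{u\}$ is a size-$(k+1)$ solution of $G$ avoiding $v$. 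This is why the kernel must be run on $(G,k+1)$, where $\opd(G)=k+1$ makes the ``Furthermore'' clause applicable and the ``Moreover'' clause of \cref{lem:modulator:compute-augmented} guarantees $k'=k+1$; the rest of your argument (a parameter-preserving reduction producing a proper minor would yield a proper minor $H$ with $\opd(H)=k+1>k$, contradicting minor-minimality) then goes through as in the paper.
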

\begin{proof}[Proof of Corollary \ref{cor:obstruction:bounds}]
Let~$p \colon \mathbb{N} \to \mathbb{N}$ be a function such that for each instance~$(G,k)$ of \opvdfull{} there is an equivalent instance~$(G',k')$ where~$G'$ is a minor of~$G$ on at most~$p(k)$ vertices and at most~$p(k)$ edges. Theorem~\ref{thm:kernel} provides such a function with~$p(k) \in \Oh(k^4)$. In the remainder of the proof, we refer to a vertex set~$S$ whose removal makes a graph outerplanar as a \emph{solution}, regardless of its size. 

Let~$(G,k)$ be a pair satisfying the preconditions to the corollary. Note that~$\opd(G) \leq k+1$ since the graph~$G'$ obtained after removing an arbitrary vertex~$v$ satisfies~$\opd(G') \leq k$ and therefore has a solution~$S$ of size at most~$k$, so that~$S \cup \{v\}$ is a solution in~$G$ of size at most~$k+1$.

Next, we argue that for each vertex~$v \in V(G)$ there is a solution of size~$k+1$ in~$G$ without~$v$. If~$v$ is an isolated vertex in~$G$ then this is trivial. Otherwise, let~$uv \in E(G)$ be an arbitrary edge incident on~$v$. Since~$G' := G \setminus uv$ is a proper minor of~$G$, it has a solution~$S'$ of size~$k$ by assumption. We have~$v \notin S'$: if~$v \in S'$, then~$G' - S = G-S$ which would imply that~$G$ has a solution of size~$k$, contradicting~$\opd(G) > k$. Now observe that~$S' \cup \{u\}$ is a solution in~$G$, because the graph~$G - (S' \cup \{u\})$ is a minor of~$G' - S$, as removing vertex~$u$ also removes the edge~$uv$. Hence~$S' \cup \{u\}$ is a solution of size~$k+1$ in~$G$ that does not contain~$v$.

\mic{Consider the effect of applying the kernelization algorithm of Theorem~\ref{thm:kernel} to the instance~$(G,k+1)$, resulting in an instance~$(G',k')$.
Since~$\opd(G) = k+1$, we have that $\opd(G') = \opd(G) - (k + 1 - k')$.
We also know that $G'$ is a minor of $G$ and the number of vertices and edges in $G'$ is at most $p(k+1)$.
We are going to show that $G' = G$.
Suppose otherwise and consider the series of graph modifying reductions applied by the kernelization algorithm, resulting in successive instances $(G,k+1) = (G_1,k_1), (G_2, k_2), \dots, (G_\ell, k_\ell) = (G', k')$.
We consider two cases: $k_2 < k_1$ and $k_2 = k_1$.
In the first case, the only reduction rule that may decrease the value of the parameter is the one from \cref{lem:modulator:compute-augmented}.
However, as $k_1 = k+1$ and for each vertex~$v \in V(G)$ there is a solution of size~$k+1$ in~$G$ without~$v$, this reduction must produce a~new instance with $k_2 = k_1$; a contradiction.
In the second case, the graph $G_2$ is a proper minor of $G_1 = G$ and $\opd(G_2) = \opd(G) = k+1$ because no reduction can change the \op deletion number unless
it is greater than $k_1$ or $k_2 < k_1$.
But each proper minor of~$G$ has a solution of size at most~$k$ by assumption, which again leads to a contradiction.
This implies that $G' = G$ and hence the number of vertices and edges in $G$ is at most $p(k+1) \in \Oh(k^4)$.
}
%
\end{proof}

\section{Conclusion}

We presented a number of elementary reduction rules for \opvdfull that can be applied in polynomial time to obtain a kernel of~$\Oh(k^4)$ vertices and edges. This kernel does not use protrusion replacement and the constants hidden by the $\Oh$-notation can be derived easily. This is the first concrete kernel for \opvdfull, and a step towards more concrete kernelization bounds for {\sc Planar-\F Deletion}. We hope it inspires new kernelization bounds for {\sc Planar Deletion}.

In earlier work Dell and Van Melkebeek~\cite[Theorem 3]{DellM2014} have shown that there is no kernel for \opvdfull of bitsize~$\Oh(k^{2-\varepsilon})$ unless \containment. This naturally leads to the question, can these two bounds be brought closer together?

Another interesting direction for further research is to obtain concrete kernelization bounds for other {\sc Planar-\F Deletion} problems. Our work exploits the fact that $K_{2,3}$-minor-free graphs cannot have many disjoint paths between two vertices. Previous work~\cite{FominLMPS16} used a similar observation to derive a kernel for {\sc $\theta_c$-Minor-Free Deletion}. An interesting next case would be a {\sc Planar-\F Deletion} problem where~\F does not contain~$K_{2,c}$ or~$\theta_c$ for some~$c$, for example {\sc 2-Transversal} which asks whether a graph of treewidth at most~2 can be obtained by deleting~$k$ vertices.

\bibstyle{plainurl}
\bibliography{bib}

\end{document}